\documentclass[reprint,amsmath,amssymb,aps,onecolumn,pra,nofootinbib,longbibliography,floatfix,superscriptaddress]{revtex4-2}

\usepackage{dcolumn}
\usepackage{bm}
\usepackage{relsize}
\usepackage[utf8]{inputenc}
\usepackage[english]{babel}
\usepackage[T1]{fontenc}
\usepackage{graphicx}
\usepackage{xcolor}
\usepackage[normalem]{ulem}
\usepackage{soul}
\usepackage[caption=false]{subfig}
\usepackage{nicematrix}
\usepackage{setspace}

\usepackage{tikz}
\usepackage{tikz-cd}
\usepackage{pgfplotstable}
\usepgfplotslibrary{colorbrewer}
\usetikzlibrary{decorations.pathreplacing, cd, shapes.geometric, arrows, calc, patterns}

\usepackage{amsthm}
\usepackage{amsfonts}
\usepackage{mathtools}
\usepackage{array}
\usepackage{tabulary}
\usepackage{braket}
\usepackage{stackrel}
\usepackage{stmaryrd}
\usepackage{dsfont}
\usepackage{accents}

\usepackage[linesnumbered, ruled, vlined]{algorithm2e}
\SetKwInput{KwInput}{Input}                
\SetKwInput{KwOutput}{Output}              

\usepackage[
    colorlinks=true,
    urlcolor=blue,
    linkcolor=blue,
    citecolor=blue,
    filecolor=blue,
]{hyperref}
\usepackage[capitalise]{cleveref}

\usepackage{datetime}
\usepackage{todonotes}
\setuptodonotes{size=\footnotesize\singlespacing}

\newtheorem{theorem}{Theorem}
\newtheorem{corollary}[theorem]{Corollary}
\newtheorem{lemma}[theorem]{Lemma}
\newtheorem{definition}[theorem]{Definition}
\newtheorem{remark}[theorem]{Remark}
\newtheorem{example}[theorem]{Example}

\newtheorem{proposition}[theorem]{Proposition}

\definecolor{cbDarkGray}{HTML}{404040}    
\definecolor{cbLightGray}{HTML}{D9D9D9}   
\definecolor{cbMediumGray}{HTML}{8C8C8C}  
\definecolor{cbOrange}{HTML}{E69F00}      
\definecolor{cbSkyBlue}{HTML}{56B4E9}     
\definecolor{cbBluishGreen}{HTML}{009E73} 
\definecolor{cbVermillion}{HTML}{D55E00}  
\definecolor{cbBlue}{HTML}{0072B2}        
\definecolor{cbRedPurple}{HTML}{CC79A7}   
\definecolor{lightorange}{HTML}{ffb000}
\definecolor{darkorange}{HTML}{fe6100}
\definecolor{pink}{HTML}{dc267f}
\definecolor{darkblue}{HTML}{785ef0}
\definecolor{lightblue}{HTML}{648fff}

\DeclareMathOperator{\ev}{ev}

\DeclareMathOperator{\ord}{ord}
\DeclareMathOperator{\ann}{ann}

\DeclareMathOperator{\lcm}{lcm}

\DeclareMathOperator{\Aut}{Aut}
\DeclareMathOperator{\BB}{BB}
\DeclareMathOperator{\rank}{rank}
\DeclareMathOperator{\Id}{Id}

\newcommand{\im}{\mathrm{im}\,}
 
\newcommand{\F}{\mathbb{F}}
\newcommand{\Z}{\mathbb{Z}}

\newcommand{\Sp}{\mathrm{Sp}}
\newcommand{\SL}{\mathrm{SL}}
\newcommand{\GL}{\mathrm{GL}}
\newcommand{\idlGens}[1]{\langle#1\rangle} 
\newcommand{\wt}{\mathrm{wt}}
\newcommand{\suppreg}[1]{\widehat{#1}}

\makeatletter
\renewcommand{\todo}[2][]{%
    \@todo[caption={#2}, #1]{\begin{spacing}{0.5}#2\end{spacing}}%
} 
\makeatother 

\makeatletter
\renewcommand*\l@section{\@dottedtocline{1}{0em}{3em}}
\renewcommand*\l@subsection{\@dottedtocline{2}{3em}{3em}}
\renewcommand*\l@subsubsection{\@dottedtocline{3}{6em}{3.5em}}
\makeatother

\begin{document}


\title{Spectral Theory of Semisimple Bivariate Bicycle Codes}

\author{Eric Sabo}
\email{eric.sabo@xanadu.ai}
\thanks{Corresponding author}
\affiliation{Xanadu, Toronto, Ontario M5G 2C8, Canada}

\author{Mahir Bilen Can}
\affiliation{Xanadu, Toronto, Ontario M5G 2C8, Canada}

\author{David Marquis}
\affiliation{Xanadu, Toronto, Ontario M5G 2C8, Canada}

\date{\today}

\begin{abstract}
    Extending the classical theory of two-dimensional cyclic codes, we develop an algebraic approach to bivariate bicycle codes. Using Frobenius-orbit idempotents, formulas for logical dimensions are derived and lower bounds on minimum distances are established. A systematic theory of code symmetries is formulated to construct a structured block-monomial subgroup of coordinate permutations.
    Several explicit examples show how to generate these codes from first principles without relying on numerical searches. 
    An appendix extends the analysis to BCH-based product constructions. 
\end{abstract}

\maketitle
\tableofcontents
\clearpage
\begin{table}[p]
\caption{Principal notation used in the paper.}
\label{tab:notation}
\centering
\renewcommand{\arraystretch}{1.03}
\begin{tabular}{
@{}
>{\begin{minipage}[t]{0.23\textwidth}\raggedright\arraybackslash}l<{\end{minipage}}
>{\begin{minipage}[t]{0.71\textwidth}\raggedright\arraybackslash}l<{\end{minipage}}
@{}
}
\textbf{Notation} & \textbf{Meaning}\\

\multicolumn{2}{@{}l}{\textbf{Ambient algebra}}\\
$p$, $q=p^e$
  & Characteristic and field size; sections over a prime field assume $q=p$.\\
$\ell,m$
  & Orders of the two cyclic directions; $\dim_{\F_q}R=\ell m$.\\
$R$, $R^2$
  & $R=\F_q[x,y]/\idlGens{x^\ell-1,y^m-1}$ and the two-block BB module.\\
$K$; $\alpha,\beta$
  & A splitting field and fixed primitive $\ell$-th and $m$-th roots of unity.\\
$\iota$
  & Coordinate inversion, $c(x,y)\mapsto c(x^{-1},y^{-1})$.\\
$M_c$; $A,B$
  & Multiplication by $c$; $A=M_a$ and $B=M_b$.\\
$\idlGens{a,b}$, $\ann_R(I)$
  & Generated ideal and annihilator $\{r\in R:rI=0\}$.\\
$\idlGens{b:a}$
  & Colon ideal $\{u\in R:au\in\idlGens b\}$.\\
$\wt_R$, $d_R$
  & Hamming weight in the monomial basis and the associated minimum distance.\\
$\mu$
  & Multiplier map $\mu:(i, j) \mapsto (q \cdot i \pmod \ell, q \cdot j \pmod m)$\newline\\[1pt]

\multicolumn{2}{@{}l}{\textbf{Orbits, idempotents, and regions}}\\
$\mathcal Z(f)$, $\mathcal Z(I)$
  & Zero sets of a polynomial or ideal on the root grid.\\
$O_{(i,j)}$, $\Omega$
  & A $q$-Frobenius orbit in $\Z_\ell\times\Z_m$ and the set of all such orbits.\\
$e_O$, $e_{\mathcal S}$
  & Orbit idempotent and selected sum
    $e_{\mathcal S}=\sum_{O\in\mathcal S}e_O$.\\
$J_{\mathcal S}$
  & Support-selected ideal $J_{\mathcal S}=e_{\mathcal S}R$.\\
$c_O$, $S_c$
  & Component $c_O=e_Oc$ and active support
    $S_c=\{O:c_O\ne0\}$.\\
$\suppreg{\mathcal T}_{a,b}$,
$\suppreg{\mathcal U}_{a,b}$,
$\suppreg{\mathcal F}_{a,b}$
  & Active-support regions where respectively both, only $a$, or neither of
    $a,b$ is nonzero; interchange $a,b$ for
    $\suppreg{\mathcal U}_{b,a}$.\\
$\mathcal T_{a,b}$,
$\mathcal U_{a,b}$,
$\mathcal F_{a,b}$
  & Zero-set regions where respectively both, only $a$, or neither of
    $a,b$ vanishes; interchange $a,b$ for $\mathcal U_{b,a}$.\\
Region duality
  & Fourier transformations $\suppreg{\mathcal T}\leftrightarrow\mathcal F$,
    $\suppreg{\mathcal F}\leftrightarrow\mathcal T$, and
    $\suppreg{\mathcal U}_{a,b}\leftrightarrow\mathcal U_{b,a}$.\newline\\[1pt]

\multicolumn{2}{@{}l}{\textbf{BB codes and distance bounds}}\\
$Q=\BB(a,b)$
  & BB CSS code determined by $a,b\in R$.\\
$H_X,H_Z$
  & $H_X=[A\ B]$ and
    $H_Z=[B^\top\ {-}A^\top]
         =[M_{\iota(b)}\ {-}M_{\iota(a)}]$.\\
$\Psi_X,\Psi_Z$
  & $\Psi_X(u,v)=au+bv$ and
    $\Psi_Z(u,v)=\iota(b)u-\iota(a)v$.\\
$H_1,H^1$
  & Logical-$Z$ and $X$ spaces.\\
$[\![2\ell m,k,d]\!]_q$
  & Physical length, logical dimension, and quantum minimum distance.\\
$\mathcal C(S)$
  & 2D cyclic code whose defining zero-orbit set is $S$.\\
$E_a,E_b$
  & Stabilizer-excluded single-block distance terms from
    $\ann\idlGens a$ and $\ann\idlGens b$.\\
$N_{a,b}$, $N^*_{a,b}$
  & Colon-ideal mixed-block term and its stabilizer-excluded refinement.\newline\\[1pt]

\multicolumn{2}{@{}l}{\textbf{Later sections}}\\
$\Aut_{\mathrm{perm}}(Q)$, $\Aut_{\mathrm{LC}}(Q)$
  & Permutation and local-Clifford--permutation automorphism groups.\\
$\lambda_O$
  & The slope on orbit $O$.\\
$\widetilde R,\widetilde Q$; $\rho,\pi,w$
  & Cover ring and code; ring projection, orbit projection, and wrapping
    multiplicity.\\
$\mathsf{Prod}(C_x,C_y)$
  & Product-code ideal
    $\idlGens{g_x(x)g_y(y)}\cong C_x\otimes C_y$.\\
\end{tabular}
\end{table}
\clearpage
\section{Introduction}
The realization of scalable fault-tolerant quantum computing rests on quantum error-correcting codes that combine high error thresholds with low hardware overhead. In this regard, bivariate bicycle (BB) codes have emerged as a leading family of quantum low-density parity-check (QLDPC) codes for near-term architectures, offering encoding rates and distances far beyond the surface code at comparable check weights while retaining a planar, translation-invariant layout.

To date, the discovery of good BB codes has been driven almost entirely by large-scale numerical searches over the pair of defining polynomials. These sweeps have produced an abundance of codes with good parameters, but the algebraic mechanisms that fix those parameters have remained largely obscured. The result is a gap between knowing that a good code exists at a given length and being able to design one deterministically, with its dimension and distance guaranteed from first principles rather than checked after the fact.
In this paper, we close part of that gap by placing BB codes inside the classical theory of two-dimensional (2D) cyclic codes. In the semisimple case $\gcd(\ell m, q) = 1$ over a finite field $\F_q$, the defining ring $R = \F_q[x,y]/\idlGens{x^\ell - 1, y^m - 1}$ decomposes, by the Chinese Remainder Theorem and the Wedderburn structure theorem, into a product of finite fields indexed by the $q$-Frobenius orbits of $\Z_\ell \times \Z_m$. Quantities of interest such as logical dimension, minimum distance bounds, metachecks, and covering-code parameters can be read off from how the two defining polynomials project onto these orbit components.

Our first contribution is an easy-to-compute dimension formula. By partitioning the $q$-Frobenius orbits into regions where both, one, or neither of the defining polynomials vanish, we show that the number of logical qudits is controlled entirely by the common-zero region, the orbits where both defining polynomials vanish. Furthermore, we prove that orbits where only one polynomial vanishes are homologically trivial, and we establish an exact algebraic duality between the $X$- and $Z$-logical spaces via the canonical coordinate involution.

Our second and central contribution concerns the minimum distance. Existing bounds proceed by isolating the logical contribution of each block separately, which can miss low-weight logical operators supported across both blocks. We introduce a colon-ideal bound that accounts for these mixed-block logicals directly. By refining this bound with an alternating stabilizer exclusion, we obtain a sharper, computable lower bound on the quantum distance from the minimum Hamming weights of classical 2D cyclic codes. Combined with the spectral theory, this pinpoints which roots control the logical dimension and which control the distance, making precise the tension between the two.

Beyond static parameters, the spectral picture clarifies the symmetries of BB codes. Specializing to the prime subfield $\F_p$ to analyze local Clifford operations, we show that coordinate-permutation automorphisms are not governed by the zero sets alone; they also depend on the relative ratio of the two check polynomials. To capture this linear dependence, we introduce the notion of a spectral slope on the coupled active-support region and prove that any valid automorphism requires an exact matching of these slopes across orbits. Within this structure, we prove the unconditional existence of the signed block-swapping $ZX$-duality required for fold-transversal logic, characterize phase-type $CZ$ gates, and describe metachecks for the canonical set of redundant stabilizers.

We also apply our algebraic approach to code lifts and projections, which relate a base code to its covering-graph descendants. 
Earlier works extended dimension lower bounds to covering degrees that divide the field characteristic. We analyze the commutative algebra of the check ideals directly. We provide an independent algebraic proof using quotient ring surjections. This approach yields an exact dimension formula based on the strictly new common roots of the cover code over any finite field $\F_q$.

Instead of presenting tables of codes, we provide a method to understand existing codes in the literature. For a fixed grid, the number of BB codes is finite. The spectral approach gives a procedure to enumerate them and determine their properties without ever writing down the stabilizers. A systematic search for new codes based on this procedure will be the subject of a follow-up document. We illustrate these principles throughout with minimal working examples. These examples are designed to demonstrate specific mathematical results rather than to showcase any particular code. We find that working with coprime BB codes allows for easier discussion. We believe that the best codes utilize the full two-dimensional space in nontrivial ways. Combining the various techniques used throughout the paper may lead to codes with numerous desirable features.

This paper assumes a working knowledge of algebra. Nevertheless, we begin by recalling some of the foundational definitions that underlie the important distinctions made throughout the paper. Section~\ref{sec:1dcyclic} reviews the theory of cyclic codes, aligning with standard references such as~\cite{HuffmanPless}. Section~\ref{sec:2dcyclic} extends this to two dimensions. We review definitions and previous results on BB codes in Section~\ref{sec:BBpast} before presenting our spectral approach in Section~\ref{sec:BBnew}. Section~\ref{sec:BBauts} constructs and analyzes a structured block-monomial subgroup of the automorphism group of BB codes over $\F_p$.
Section~\ref{sec:lifts-and-projections} applies the theory to covering codes, and Section~\ref{sec:examples} works out explicit examples. The appendix~\ref{sec:bch_product_codes} contains algebraic constructions for BB codes, in particular a product construction with a BCH-based distance lower bound.
\section{Mathematical Preliminaries \& Notation}
We adopt the following conventions. 
\begin{itemize}
    \item All rings are commutative with unity.
    
    \item $\F_q$ will denote the finite field with $q = p^e$ elements. We will later restrict to the case $q = p$ is prime.
    
    \item The letter $R$ is reserved for the polynomial ring $R = \F_q[x, y] / \langle x^\ell - 1,  y^m - 1 \rangle$, where $\ell$ and $m$ are positive integers. 
    
    \item Unless otherwise stated, we will assume that $\gcd(q, \ell m) = 1$. The reader will be warned when we specialize $p$.
    
    \item As an $\F_q$-vector space, $R \cong \F_q^{\ell m}$ 
  with respect to the standard monomial basis 
  $\{x^i y^j \mid 0 \leq i < \ell,\, 0 \leq j < m\}$. 
  For any element $u = \sum_{i, j} u_{i, j} x^i y^j \in R$ with coefficients $u_{i, j} \in \F_q$, 
  its \emph{Hamming weight} is the number of its nonzero coefficients:
  \[
    \wt_R(u) = |\{(i, j) \mid u_{i, j} \neq 0\}|.
  \]
\end{itemize}

BB codes, and their generalizations, can be independently viewed as ideals in group algebras, ideals in rings, or submodules. While we may map between the different representations, each mathematical structure comes with a different set of tools and techniques that make some results more natural when viewed as one object compared to another. An analogous situation in classical coding theory is the relationship between 2D cyclic codes and quasi-cyclic codes.

For a finite group $G$ and a field $K$, the \emph{group algebra of $G$ over $K$}, denoted $K[G]$, is the $K$-algebra consisting of all formal expressions of the form $\sum_{g \in G} a_g g$ for $g \in G, a_g \in K$, where addition, multiplication, and scalar action are defined as follows:
\begin{itemize}
    \item Addition: $\displaystyle \sum_{g \in G} a_g g + \sum_{g \in G} b_g g = \sum_{g \in G} (a_g + b_g) g$. 
    \item Multiplication: $\displaystyle \left(\sum_{g \in G} a_g g \right) \cdot \left(\sum_{g \in G} b_g g\right) = \sum_{g \in G} \left(\sum_{v w = g} (a_v b_w)\right) g$.
    \item Scalar action of $K$: $\displaystyle \alpha \left(\sum_{g \in G} a_g g\right) = \sum_{g \in G}(\alpha a_g) g$ for $\alpha \in K$.
\end{itemize}
If $G$ is abelian, then $K[G]$ is a commutative ring. 
For example, if $C_\ell$ is the finite cyclic group of order $\ell$, then its group algebra over a field $K$ is given by 
\begin{equation*}
    K[C_\ell] = \{a_0 1_{C_\ell} + a_1 g + \cdots + a_{\ell - 1} g^{\ell - 1} \mid (a_0, \dots, a_{\ell - 1}) \in K^\ell \}.
\end{equation*}
Let $\langle x^\ell - 1 \rangle$ be the ideal generated by the polynomial $x^\ell - 1\in K[x]$.
It is not difficult to check that the map
\begin{equation*}
    C_\ell \to K[x] / \langle x^\ell - 1 \rangle, \quad g \mapsto x \pmod{\langle x^\ell - 1 \rangle},    
\end{equation*}
extended by linearity to $K[C_\ell]$ is a $K$-algebra isomorphism, 
\begin{equation*}
    K[C_\ell] \cong K[x] / \langle x^\ell -1 \rangle. 
\end{equation*}
More generally, if a finite abelian group $G$ is expressed as a product of cyclic groups, $G \cong C_{\ell_1} \times \cdots \times C_{\ell_r}$, then we obtain a $K$-algebra isomorphism
\begin{equation*}
    K[G] \cong K[x_1, \dots, x_r] / \langle x_1^{\ell_1} - 1, \dots, x_{r}^{\ell_r} - 1 \rangle. 
\end{equation*}
The ring $R$ is isomorphic to the group algebra $\F_q[\Z_\ell \times \Z_m]$ independently of the assumption $\gcd(p, \ell m) = 1$. From now on, we will stick with the standard polynomial ring perspective, unless explicitly noted.

In our context, the relevant ambient space is the product space $R^2 = R \times R$. This space carries two distinct algebraic structures and it is crucial to distinguish between them:
\begin{enumerate}
    \item[(1)] We can treat $R \times R$ as a product ring with component-wise multiplication.
    A subset $P \subseteq R \times R$ is an \emph{ideal} if it absorbs multiplication by any pair $(r_1, r_2)$:  
    \begin{equation*}
        \text{for all $(u, v) \in P$ and $(r_1, r_2) \in R^2$, we have $(r_1, r_2) \cdot (u, v) = (r_1 u, r_2 v) \in P$}.
    \end{equation*}

    \item[(2)] We can treat $R \times R$ as a \emph{free module} of rank 2 over $R$.
    A subset $Q \subseteq R \times R$ is a \emph{submodule} if it is closed under scaling by a single element $r \in R$ (acting diagonally):
    \begin{equation*}
        \text{for all $(u, v) \in Q$ and $r \in R$, we have $r \cdot (u, v) = (ru, rv) \in Q$}.
    \end{equation*}
\end{enumerate}

Evidently, every ideal of the product ring is an $R$-submodule, but the converse is \emph{not} true. 
This distinction leads to the definition of quasi-cyclic codes.

\begin{definition}\label{def:QCcodesofindex2}
    A classical linear code $\mathcal{C} \subseteq R \times R$ is called a \emph{quasi-cyclic code of index 2} if it is an $R$-submodule of $R^2$.
\end{definition}
\noindent Intuitively, this means that if a pair of polynomials $(u, v)$ represents a codeword in $R^2$, then shifting both simultaneously (multiplying by $r \in R$) produces another codeword.

\begin{definition}
    A finite dimensional algebra $A$ over a field $K$ is \emph{Frobenius} if there exists a linear functional $\varepsilon : A \to K$ such that the bilinear form $\langle a, b \rangle_F = \varepsilon(ab)$ is nondegenerate. If $\varepsilon(ab) = \varepsilon(ba)$, the algebra is \emph{symmetric}.
\end{definition}

We apply this to our ring $R$, where $K=\F_q$. Define the involution $\iota$ and the trace functional $\varepsilon$:
\begin{equation*}
    \iota(x^i y^j) = x^{-i} y^{-j}, \qquad \varepsilon \left(\sum a_{ij} x^i y^j \right) = a_{00},
\end{equation*}
and two bilinear forms on $R$:
\begin{enumerate}
    \item[(1)] \emph{The Frobenius Pairing:} $\langle a, b \rangle_F = \varepsilon(ab)$.
    \item[(2)] \emph{The Euclidean Pairing:} $\langle a, b\rangle_E = \varepsilon(a \iota(b))$.
\end{enumerate}
The interplay between these two forms gives rise to algebraic properties governing BB codes.

\begin{lemma}
    We maintain our notation from the previous paragraph. Then 
    \begin{enumerate}
        \item[(1)] $R$ is a symmetric Frobenius algebra with respect to $\langle \cdot, \cdot \rangle_F$.
        \item[(2)] The Euclidean pairing $\langle \cdot, \cdot \rangle_E$ is a nondegenerate, symmetric form such that the monomial basis is orthonormal.
    \end{enumerate}
\end{lemma}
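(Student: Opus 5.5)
The plan is to compute both pairings directly on the monomial basis $\{x^iy^j : 0\le i<\ell,\ 0\le j<m\}$ and read off every property from the resulting Gram matrices. Note first that neither part uses the assumption $\gcd(q,\ell m)=1$: everything comes from the group structure of $\Z_\ell\times\Z_m$, in line with the remark that $R\cong \F_q[\Z_\ell\times\Z_m]$ holds unconditionally.

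For part (1), bilinearity of $\langle a,b\rangle_F=\varepsilon(ab)$ is immediate, since $\varepsilon$ is linear and multiplication is bilinear. Symmetry $\varepsilon(ab)=\varepsilon(ba)$ is immediate because $R$ is commutative. The real content is nondegeneracy. Since the monomials multiply by adding exponents modulo $(\ell,m)$, we have
\[
\varepsilon\bigl(x^iy^j\cdot x^{i'}y^{j'}\bigr)=\begin{cases}1 & \text{if } (i+i',j+j')\equiv(0,0),\\ 0 & \text{otherwise.}\end{cases}
\]
So the Gram matrix of $\langle\cdot,\cdot\rangle_F$ in the monomial basis is the permutation matrix of the involution $(i,j)\mapsto(-i,-j)$ on $\Z_\ell\times\Z_m$, and it is therefore invertible. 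Equivalently, if $a=\sum a_{ij}x^iy^j$ satisfies $\varepsilon(ab)=0$ for all $b$, then testing against $b=x^{-i}y^{-j}$ gives $a_{ij}=0$ for each $(i,j)$. I will present it in this second form because it avoids matrix language.

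For part (2), note first that $\iota$ is a well-defined $\F_q$-linear ring automorphism of $R$ of order $2$. It is induced by $x\mapsto x^{-1}=x^{\ell-1}$ and $y\mapsto y^{m-1}$, which preserve the ideal $\langle x^\ell-1,y^m-1\rangle$; equivalently, it is the map induced by inversion on the abelian group. Bilinearity of $\langle\cdot,\cdot\rangle_E$ then follows from linearity of $\iota$ and $\varepsilon$. Next compute
\[
\langle x^iy^j,x^{i'}y^{j'}\rangle_E=\varepsilon\bigl(x^{i-i'}y^{j-j'}\bigr)=\delta_{ii'}\delta_{jj'},
\]
which says exactly that the monomial basis is orthonormal. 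Consequently $\langle a,b\rangle_E=\sum_{i,j}a_{ij}b_{ij}$ is the standard dot product on $\F_q^{\ell m}$. Symmetry and nondegeneracy follow at once, since the Gram matrix is the identity. Alternatively, symmetry follows from $\varepsilon\circ\iota=\varepsilon$ together with the fact that $\iota$ is a ring homomorphism: $\varepsilon(a\iota(b))=\varepsilon(\iota(a\iota(b)))=\varepsilon(\iota(a)b)$.

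There is no genuine obstacle here. The only points needing care are that $\iota$ is well defined on the quotient ring, and that exponents are read modulo $\ell$ and $m$ so that $\varepsilon$ of a product of monomials is correctly identified. I will state these explicitly and leave the rest as the routine Gram-matrix computation above.
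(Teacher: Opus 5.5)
Your proposal is correct and follows essentially the same route as the paper: nondegeneracy of $\langle\cdot,\cdot\rangle_F$ by pairing against the inverse monomial $x^{-i}y^{-j}$, symmetry from commutativity, and orthonormality from $\varepsilon(x^{i-i'}y^{j-j'})=\delta_{i,i'}\delta_{j,j'}$. You also make explicit two points the paper leaves implicit: that $\iota$ is well defined on the quotient ring, and that symmetry and nondegeneracy of $\langle\cdot,\cdot\rangle_E$ follow because its Gram matrix is the identity.
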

\begin{proof}
    (1) Nondegeneracy of $\langle \cdot, \cdot \rangle_F$: For $0 \neq a = \sum a_g g$, if $a_k \neq 0$, then $\varepsilon(a \cdot k^{-1}) = a_k \neq 0$. Symmetry follows from the commutativity of $R$.
    
    (2) Orthonormality: $\langle x^i y^j, x^{i^\prime} y^{j^\prime} \rangle_E = \varepsilon(x^{i - i^\prime} y^{j - j^\prime}) = \delta_{i, i^\prime} \delta_{j, j^\prime}$.
\end{proof}

\begin{definition} 
    For $c \in R$, define the (left) multiplication map $M_c : R \to R$ via $M_c(f) = c f$.
\end{definition}
\noindent Relative to the standard monomial basis of $R$, we will also view $M_c$ as its representing matrix.

\begin{proposition}
\label{prop:algebraic-properties}
For every $c\in R$ and every ideal $I\subseteq R$, the following hold:
    \begin{enumerate}
        \item[(1)] For any ideal $I \subseteq R$, $\ann\idlGens{\ann\idlGens{I}} = I$, and $\dim_{\F_q} I  + \dim_{\F_q} \ann\idlGens{I} = \ell m$.
        
        \item[(2)] The adjoint of the multiplication operator $M_c$ with respect to the Euclidean pairing is $M_{\iota(c)}$. Equivalently, with respect to the standard monomial basis $\{x^i y^j \mid (i,j)\in \Z_\ell\times \Z_m\}$, the matrix transpose is given by $M_c^{\top} = M_{\iota(c)}$, where $\iota(c) = c(x^{-1}, y^{-1})$.
    \end{enumerate}
\end{proposition}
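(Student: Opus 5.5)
For part (1) I will use the Frobenius structure from the preceding lemma, namely the nondegenerate pairing $\langle a,b\rangle_F=\varepsilon(ab)$. The first step is to identify the annihilator with an orthogonal complement. For an ideal $I$, set $I^{\perp_F}=\{r\in R:\varepsilon(rI)=0\}$; I claim $\ann\idlGens{I}=I^{\perp_F}$. If $rI=0$, then clearly $\varepsilon(rI)=0$. Conversely, suppose $\varepsilon(rI)=0$. Since $I$ is an ideal, $rI$ is also an ideal, so $\varepsilon(r i s)=0$ for every $i\in I$ and $s\in R$. Nondegeneracy of $\langle\cdot,\cdot\rangle_F$ then forces $ri=0$ for each $i$.

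With this identification, the dimension formula $\dim I+\dim\ann\idlGens{I}=\ell m$ is the standard dimension count for the orthogonal complement under a nondegenerate bilinear form on the $\ell m$-dimensional space $R$. The double annihilator follows in the same way. Applying the identification twice gives $\ann\idlGens{\ann\idlGens{I}}=(I^{\perp_F})^{\perp_F}$, which contains $I$ because $R$ is commutative. Both sides have dimension $\ell m-(\ell m-\dim I)=\dim I$, so they are equal. This argument does not use $\gcd(q,\ell m)=1$, which matches the statement being made for every ideal.

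For part (2) I will check the adjoint identity $\langle cf,g\rangle_E=\langle f,\iota(c)g\rangle_E$. The left side is $\varepsilon(cf\,\iota(g))$. The right side is $\varepsilon(f\,\iota(\iota(c)g))=\varepsilon(f\,c\,\iota(g))$, because $\iota$ is a ring automorphism: it is induced by the group automorphism $g\mapsto g^{-1}$ of the abelian group $\Z_\ell\times\Z_m$. It is also an involution. Commutativity then gives equality of the two sides.

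The preceding lemma shows the monomial basis is orthonormal for $\langle\cdot,\cdot\rangle_E$. Hence the matrix of the adjoint in that basis is the transpose, so $M_c^\top=M_{\iota(c)}$. As a sanity check I will also compute entries directly: the $(x^{i'}y^{j'},x^iy^j)$ entry of $M_c$ is $c_{i'-i,j'-j}$, and transposing gives $c_{i-i',j-j'}$, which is the corresponding coefficient of $\iota(c)$.

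The only step needing any care is the identification $\ann\idlGens{I}=I^{\perp_F}$, specifically the reverse inclusion. That inclusion relies on $I$ being closed under multiplication by $R$, which is where the ideal hypothesis enters. Everything else is routine linear algebra.
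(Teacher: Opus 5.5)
Your proposal is correct and follows the paper's proof. In part (1) you identify $\ann\idlGens{I}$ with the orthogonal complement of $I$ under the nondegenerate Frobenius pairing and then count dimensions. In part (2) you use the same adjoint computation $\varepsilon(cf\,\iota(g))=\varepsilon(f\,\iota(\iota(c)g))$ together with orthonormality of the monomial basis. Your explicit check that $\varepsilon(rI)=0$ forces $rI=0$, because $I$ absorbs multiplication by $R$, fills in a step the paper only asserts through the module isomorphism $\Phi\colon R\to R^*$.
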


\begin{proof}
(1) This is a standard property of Frobenius algebras. Write 
$R^* = \operatorname{Hom}_{\F_q}(R, \F_q)$ for the $\F_q$-linear dual. The Frobenius 
functional $\varepsilon: R \to \F_q$ induces an $\F_q$-linear isomorphism
\[
  \Phi: R \longrightarrow R^*, \quad \Phi(a)(b) = \varepsilon(ab),
\]
which is an isomorphism of $R$-modules where $R$ acts on $R^*$ by the 
multiplication operators $(r \cdot h)(b) = h(M_r(b))$ for $h\in R^*$ and $r\in R$. 
Consequently, for any ideal $I \subseteq R$, the annihilator $\ann(I)$ 
corresponds to the orthogonal complement of $I$ under the non-degenerate 
Frobenius pairing $\langle \cdot, \cdot \rangle_F$, yielding 
$\ann\langle \ann\langle I \rangle \rangle = I$ and 
\[
  \dim_{\F_q} I + \dim_{\F_q} \ann\langle I \rangle = \dim_{\F_q} R = \ell m.
\]

    (2) We first compute the adjoint directly using the Euclidean pairing:
    \[
    \langle M_c f, g \rangle_E = \varepsilon( (cf) \iota(g) ) = \varepsilon( f (c \iota(g)) ).
    \]
    Note that $c \iota(g) = \iota( \iota(c) g )$. Thus,
    \[
    \varepsilon( f \iota( \iota(c) g ) ) = \langle f, \iota(c) g \rangle_E = \langle f, M_{\iota(c)} g \rangle_E.
    \]
    Since the standard monomial basis is orthonormal with respect to the Euclidean pairing $\langle \cdot, \cdot \rangle_E$, the adjoint $M_{\iota(c)}$ corresponds exactly to the matrix transpose $M_c^{\top}$. 
\end{proof}

\section{1D Cyclic Codes}
\label{sec:1dcyclic}
We assume the reader is familiar with classical cyclic codes and include a brief refresher to introduce notation and terminology which we align with the suggested reference~\cite{HuffmanPless} throughout the paper, as feasible. Throughout this section, we fix $n \in \mathbb{N}$, and let $R_x := \F_q[x] /  \idlGens{x^n - 1}$. The images of the monomials $1, x, \dots, x^{n-1}$ under the canonical map $\pi : \F_q[x] \to R_x$ are referred to as the \emph{standard basis} of $R_x$.

Let $f = (f_0, f_1, \dots, f_{n - 1}) \in \F_q^n$. Using the standard isomorphism between $\F^n_q$ and the ring $R_x$, we identify the vector $f$ with the polynomial $f(x) = f_0 + f_1 x + \dots + f_{n - 1} x^{n - 1} \in R_x$.

\begin{definition} 
	A linear code $\mathcal{C} \subseteq \F_q^n$ is \emph{cyclic} if for any codeword $(f_0, f_1, \dots, f_{n - 1}) \in \mathcal{C}$, it also contains its cyclic right-shift $(f_{n - 1}, f_0, \dots, f_{n - 2})$.
\end{definition}
\noindent Multiplication by $x$, $x f(x) = f_0 x + f_1 x^2 + \dots f_{n - 1} x^n$, corresponds to a shift in the codeword if and only if $x^n = 1$. Hence, cyclic codes are naturally viewed algebraically with respect to the ring $\F_q[x]/\idlGens{x^n - 1}$.
\begin{proposition}
    \label{bi}
	Cyclic codes of length $n$ are in bijection with the ideals of the ring $\F_q[x]/\idlGens{x^n - 1}$.
\end{proposition}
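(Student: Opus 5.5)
The plan is to exhibit the bijection explicitly via the standard identification $\phi:\F_q^n \to R_x$, $(f_0,\dots,f_{n-1}) \mapsto f_0 + f_1x + \cdots + f_{n-1}x^{n-1}$. This is an $\F_q$-linear isomorphism, since the images of $1,x,\dots,x^{n-1}$ form the standard basis of $R_x$. The claimed bijection is then $\mathcal{C} \mapsto \phi(\mathcal{C})$, with inverse $I \mapsto \phi^{-1}(I)$. Because $\phi$ is bijective on the level of sets, the only thing to check is that $\phi$ carries cyclic codes exactly onto ideals.

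The first step is to record the key identity
\[
x\,\phi(f_0,\dots,f_{n-1}) = \phi(f_{n-1},f_0,\dots,f_{n-2}).
\]
This follows from the computation in the remark preceding the statement: the term $f_{n-1}x^n$ reduces to $f_{n-1}$ modulo $x^n-1$. So multiplication by $x$ on $R_x$ is the cyclic right-shift transported through $\phi$.

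Next, suppose $\mathcal{C}$ is a cyclic code. Then $\phi(\mathcal{C})$ is an additive subgroup, and it is closed under multiplication by scalars in $\F_q$ because $\mathcal{C}$ is linear. By the identity above, it is also closed under multiplication by $x$. Iterating, it is closed under multiplication by every $x^i$. Any $r \in R_x$ is an $\F_q$-linear combination $r = \sum r_i x^i$, so for $c \in \phi(\mathcal{C})$ the product $rc = \sum r_i (x^i c)$ is a linear combination of elements of $\phi(\mathcal{C})$ and hence lies in $\phi(\mathcal{C})$. Thus $\phi(\mathcal{C})$ is an ideal.

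Conversely, suppose $I$ is an ideal of $R_x$. It is closed under addition and under multiplication by the constants $\F_q \subseteq R_x$, so $\phi^{-1}(I)$ is a linear subspace. It is also closed under multiplication by $x$, so by the identity $\phi^{-1}(I)$ is closed under the cyclic shift, and hence it is a cyclic code.

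Since the two assignments are restrictions of the mutually inverse maps $\phi$ and $\phi^{-1}$, they are mutually inverse bijections. This also shows the bijection preserves inclusions. There is no real obstacle here. The only point needing care is the reduction $x^n \equiv 1$, and the passage from closure under $x$ to closure under all of $R_x$. The latter uses that $R_x$ is spanned over $\F_q$ by powers of $x$, together with the linearity of the code.
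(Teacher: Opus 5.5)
Your proof is correct and follows the same approach as the paper. The paper gives no formal proof, only the preceding remark that multiplication by $x$ realizes the cyclic right-shift because $x^n \equiv 1$ in $\F_q[x]/\idlGens{x^n-1}$; you fill in the routine remaining steps (linearity plus closure under $x$ gives closure under all of $R_x$, and conversely).
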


Let $p$ be the characteristic of $\F_q$. If $p$ divides $n$, we write $n = n^\prime p^a$ where $\gcd(n^\prime, p) = 1$. It follows from the Frobenius map that
\begin{equation}\label{repeated}
    x^n - 1 = (x^{n^\prime} - 1)^{p^a}.
\end{equation}
Consequently, the irreducible factors of $x^n - 1$ are identical to those of $x^{n^\prime} - 1$, occurring with multiplicity $p^a$. Codes of this form are called \emph{repeated-root} cyclic codes. We restrict to the case $\gcd(q, n) = 1.$ 

Ideals of this ring are generated by divisors of $x^n - 1$. To factor $x^n - 1$ over $\F_q[x]$, recall that over the splitting field, $\displaystyle x^n - 1 = \prod_{\alpha^n = 1} (x - \alpha)$, where the product is taken over all $n$-th roots of unity. While not over the splitting field, some of these terms need to be grouped together into irreducible factors, $\displaystyle x^n - 1 = \prod \mathrm{min}_\alpha(x)$, where $\mathrm{min}_\alpha(x)$ is the minimal polynomial for $\alpha$ over the appropriate base field. It follows from the binomial theorem that for $f(x) \in \F_q[x]$, $f(x^q) = f(x)^q$. Hence, for $\alpha$ a root of $f(x)$ in some extension field of $\F_q$, $f(\alpha^q) = f(\alpha)^q = 0$, implying $\alpha, \alpha^q, \alpha^{q^2}, \dots$ are all roots of $f(x)$. This sequence stops when $\alpha^{q^r} = \alpha$ for some natural number $r$. Let $K$ be the splitting field of $x^n - 1$ with $\mathrm{gcd}(q, n) = 1$, and let $\beta$ be a primitive element of $K$. Then $\alpha = \beta^d$ is a primitive $n$-th root of unity with $d = (|K| - 1)/n$, where $|K| = q^{\mathrm{ord}_n(q)}$. Then $\alpha^{q^r} = \alpha \to \beta^{dq^r - d} = 1$, or $dq^r \equiv d \mod (|K| - 1)$.
\begin{definition}
    The \emph{$q$-cyclotomic coset of $i$ modulo $n$} is the set of exponents defined by
    \begin{equation*}
        C_i = \{i, iq, iq^2, \dots, iq^{r - 1} \} \pmod{n},
    \end{equation*}
    where $r$ is the smallest positive integer such that $iq^r \equiv i \pmod{n}$.
\end{definition}
\noindent The minimal polynomial for $\alpha$ thus must also contain these other powers of $\alpha$,
\begin{equation*}
	\mathrm{min}_{\alpha^i}(x) = \prod_{j \in C_i} (x - \alpha^j).
\end{equation*}
If the minimal polynomial contained any other roots, it would also need to contain all of its $q$ powers and we could separate all of these new terms into a polynomial which divides $\mathrm{min}_\alpha(x)$, contradicting the irreducibility of the minimal polynomial. Hence, over $\F_q$, we have
\begin{equation}\label{overFq}
	x^n - 1 = \prod_{C_d} \mathrm{min}_{\alpha^d}(x).
\end{equation}
The assumption $\mathrm{gcd}(q, n) = 1$ ensures there are no repeated roots in the factorization.

Let $x^n - 1 = g(x)h(x)$ for some $g(x), h(x) \in \F_q[x]$. By Proposition \ref{bi}, $\mathcal{C} = \idlGens{g(x)}$ is a cyclic code with \emph{generator polynomial} $g(x)$ viewed as an ideal of $\F_q[x]/\idlGens{x^n - 1}$. Let $c(x) = a(x) g(x) \in \mathcal{C}$. Then, $c(x) h(x) = a(x) g(x) h(x) = a(x) (x^n - 1) \equiv 0$, earning $h(x)$ the name \emph{check polynomial}. By the same logic, the ideal generated by $h(x)$ defines another cyclic code, $\mathcal{C}^\prime = \idlGens{h(x)}$. If $\deg g(x) = r$, then $k = \deg h(x) = n - r$. Let $c^\prime(x) = b(x) h(x) \in \mathcal{C}^\prime$. 
Then $\deg a(x) \leq k - 1$ 
and $\deg b(x) \leq r - 1$, so the product 
$c(x) c^\prime(x) = a(x) g(x) b(x) h(x) = a(x) b(x) (x^n - 1)$ satisfies 
$\deg(a(x) b(x)) < n - 1$. In particular, the coefficient of $x^{n - 1}$ 
in $a(x) b(x)$ is zero. Directly expanding the product in terms of their 
coefficients $c_j$ and $c^\prime_j$, the coefficient of $x^{n - 1}$ in 
$c(x) c^\prime(x)$ is $0 = \sum_{j = 0}^{n - 1} c_j c^\prime_{n - 1 - j}$, 
which is the Euclidean dot product of the codeword $c$ and the reverse 
of $c^\prime$. It follows that codewords of $\mathcal{C}$ are orthogonal 
to the reverse of codewords in $\mathcal{C}^\prime$, meaning $\mathcal{C}^\perp$ 
is contained in this latter space. A further dimensional argument shows 
that these two spaces are in fact equal. Recalling that the reciprocal 
of a polynomial $f(x)$ of degree $m$ is $f^*(x) = x^m f(x^{-1})$, we have 
$\mathcal{C}^\perp = \idlGens{h^*(x)/h(0)}$, where we introduce the 
normalization factor $1/h(0)$ to ensure the generator polynomial is monic.

\begin{lemma}
	If $h(x) \mid x^n - 1$, then $h^*(x)/h(0)$ also divides $x^n - 1$. In particular, the dual code of a cyclic code is cyclic.
\end{lemma}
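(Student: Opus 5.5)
The plan is to substitute $x^{-1}$ into the factorization $x^n - 1 = g(x)h(x)$ and then clear denominators. Write $\deg h = k$ and $\deg g = r = n-k$, so that $h^*(x) = x^k h(x^{-1})$ and $g^*(x) = x^{r} g(x^{-1})$. Replacing $x$ by $x^{-1}$ in $x^n - 1 = g(x)h(x)$ (an identity of Laurent polynomials) gives $x^{-n} - 1 = g(x^{-1})h(x^{-1})$. Multiplying by $x^n = x^r x^k$ yields
\[
1 - x^n = g^*(x)\,h^*(x),
\qquad\text{that is,}\qquad
x^n - 1 = \bigl(-g^*(x)\bigr)\,h^*(x).
\]
Hence $h^*(x)$ divides $x^n - 1$ in $\F_q[x]$.

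For the normalization, I will note that $h(0)\neq 0$. Evaluating $x^n-1=g(x)h(x)$ at $x=0$ gives $g(0)h(0)=-1$. Consequently $h(0)$ is a unit of $\F_q$, and $h^*(x)/h(0)$ is a unit multiple of $h^*(x)$, so it still divides $x^n-1$. The same fact gives $\deg h^* = k$: the leading coefficient of $h^*$ is $h(0)\neq 0$, so $h^*(x)/h(0)$ is monic of degree $k$. This matches the dimension count $n-k$ for $\mathcal{C}^\perp$ in the preceding paragraph.

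For the ``in particular,'' I will combine the preceding discussion, which showed $\mathcal{C}^\perp = \idlGens{h^*(x)/h(0)}$, with Proposition~\ref{bi}. Since $h^*(x)/h(0)$ divides $x^n-1$, the ideal it generates in $R_x$ is a cyclic code of dimension $n-\deg h^* = n-k$, which is $\dim \mathcal{C}^\perp$. Alternatively, and more directly, one can avoid the generator description altogether. If $c'\perp \mathcal{C}$, then the shift of $c'$ is orthogonal to $\mathcal{C}$ as well, because the inner product of $xc'$ with $c$ equals the inner product of $c'$ with the inverse shift of $c$, and $\mathcal{C}$ is closed under shifts in both directions since $x^{n}=1$.

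I expect no step to be a real obstacle. The only point needing care is the degree bookkeeping: the constant term $h(0)$ must be nonzero so that $\deg h^* = \deg h$. That is exactly what the evaluation at $0$ supplies.
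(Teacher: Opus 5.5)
Your proof is correct, and it makes explicit what the paper leaves implicit. The paper gives no separate proof of this lemma. It rests on the preceding discussion, which argues as follows:
\begin{itemize}
\item Since $c(x)c'(x)=a(x)b(x)(x^n-1)$ has vanishing $x^{n-1}$-coefficient, $\mathcal{C}$ is orthogonal to the reversals of the codewords of $\mathcal{C}'=\idlGens{h(x)}$.
\item A dimension count then gives $\mathcal{C}^\perp=\idlGens{h^*(x)/h(0)}$.
\item The divisibility is simply asserted. The root-inversion remark that follows would yield it through a splitting field, using the separability that $\gcd(q,n)=1$ provides.
\end{itemize}
Calling $h^*(x)/h(0)$ the monic generator polynomial of $\mathcal{C}^\perp$ already presupposes this divisibility. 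Your identity $x^n-1=-g^*(x)h^*(x)$ establishes it directly in $\F_q[x]$, with no hypothesis on $\gcd(q,n)$. Your evaluation at $0$ also supplies the fact $h(0)\neq 0$, which the normalization uses without comment.

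For the cyclicity of $\mathcal{C}^\perp$, your first route is the paper's. If you rely on it, note that the orthogonality computation places the reversal of $\mathcal{C}'$ inside $\mathcal{C}^\perp$; the text states the inclusion the other way round. The dimension count then gives equality.

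Your second route is shorter and needs no generator computation: the cyclic shift is a Euclidean isometry, and $\mathcal{C}$ is closed under multiplication by $x^{-1}=x^{n-1}$, so $\mathcal{C}^\perp$ is shift-invariant. However, it proves only that $\mathcal{C}^\perp$ is cyclic, not that $h^*(x)/h(0)$ generates it.
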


Following the above discussion, we get a handful of basic results.
\begin{lemma}
	Let $\mathcal{C}$ be a cyclic code over $\F_q$ with generator polynomial $g(x)$ of degree $r$ and check polynomial $h(x)$ of degree $k$. Then generator and parity-check matrices for $\mathcal{C}$ are given by
	\begin{equation}\label{cycgenmat}
    		G = \begin{pmatrix}
    				g_0 & g_1 & \dots & \dots & \dots & \dots & g_r & & & \\
    				 & g_0 & g_1 & \dots & \dots & \dots & \dots & g_r & & \\
    				 & & \ddots & \ddots & \ddots & \ddots & \ddots & \ddots & \ddots & \\
    				 & & & g_0 & g_1 & \dots & \dots & \dots & \dots & g_r
    			\end{pmatrix},
    	\end{equation}
    	and
    	\begin{equation}
    		H = \begin{pmatrix}
    				h_k & h_{k - 1} & \dots & \dots & \dots & \dots & h_0 & & & \\
    				 & h_k & h_{k - 1} & \dots & \dots & \dots & \dots & h_0 & & \\
    				 & & \ddots & \ddots & \ddots & \ddots && \ddots & \ddots &  \\
    				 & & & h_k & h_{k - 1} & \dots & \dots & \dots & \dots & h_0
    			\end{pmatrix},
    	\end{equation}
    	respectively. 
\end{lemma}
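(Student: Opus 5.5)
The plan is to prove the two claims separately. For the generator matrix, use the ideal description $\mathcal{C} = \idlGens{g(x)}$. For the parity-check matrix, use the duality $\mathcal{C}^\perp = \idlGens{h^*(x)/h(0)}$ established in the discussion preceding the statement.

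\textbf{Generator matrix.} First I show that $g(x), xg(x), \dots, x^{k-1}g(x)$ form an $\F_q$-basis of $\mathcal{C}$. Their coefficient vectors are exactly the rows of $G$ in \eqref{cycgenmat}. Since $\deg(x^{k-1}g) = k-1+r = n-1 < n$, no reduction modulo $x^n-1$ occurs, so each row is literally a shifted copy of $(g_0,\dots,g_r)$.

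\emph{Linear independence.} The matrix is in echelon form with leading entries $g_0$ down the diagonal. We need $g_0 \neq 0$. This holds because $g \mid x^n-1$ and $x \nmid x^n-1$. Alternatively, the distinct degrees of the rows already give independence.

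\emph{Spanning.} Any $c \in \mathcal{C}$ can be written as $c \equiv a(x)g(x) \pmod{x^n-1}$. Reduce $a$ modulo $h$: write $a = sh + a'$ with $\deg a' < k$. Then $ag = s(x^n-1) + a'g \equiv a'g$, and $a'g$ is a combination of the listed rows. In particular $\dim \mathcal{C} = k$.

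\textbf{Parity-check matrix.} Apply the same argument to the generator $h^*(x)/h(0)$ of $\mathcal{C}^\perp$, which has degree $k$ by the preceding lemma. Its coefficient vector is $(h_k, h_{k-1}, \dots, h_0)$ up to the nonzero scalar $1/h(0)$. The $n-k = r$ shifts of this vector therefore span $\mathcal{C}^\perp$ and have the displayed form. Rescaling rows by $h(0)$ does not change the row space, so $H$ is a generator matrix of $\mathcal{C}^\perp$, i.e.\ a parity-check matrix of $\mathcal{C}$.

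As a check independent of the duality claim, I would verify $GH^\top = 0$ directly. The $(i,j)$ entry is a coefficient of $x^{i-j}$-shifted $g(x)h(x) = x^n - 1$ in a middle degree between $1$ and $n-1$, hence zero. Then count ranks: $\rank G + \rank H = k + r = n$, and both matrices are echelon of full rank since $h_k \neq 0$.

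\textbf{Main obstacle.} The only delicate point is the index bookkeeping that identifies the entries of $GH^\top$ with coefficients of $gh$ strictly between degrees $0$ and $n$. These coefficients vanish because $gh = x^n - 1$ has nonzero coefficients only in degrees $0$ and $n$. I would handle this by writing the entry as $\sum_t g_{t-i}h_{k-(t-j)}$, which is the coefficient of $x^{k+j-i}$ in $gh$. Since $0 \le i < k$ and $0 \le j < r$, the exponent lies in $[1, n-1]$.
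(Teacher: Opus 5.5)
Your proposal is correct and follows essentially the paper's route: $G$ comes from the shifts $x^i g(x)$, $0\le i<k$, spanning $\idlGens{g(x)}$, and $H$ comes from the shifts of the dual generator $h^*(x)/h(0)$; your direct check $GH^\top=0$ together with the rank count $k+r=n$ is the paper's orthogonality-plus-dimension argument applied to basis rows. One small attribution point: $\deg h^*=k$ follows from $h(0)\neq 0$, which holds because $h\mid x^n-1$, rather than from the preceding lemma itself.
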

\noindent 
The $k$ rows of $G$ and the $r$ rows of $H$ are linearly independent, 
although we could perform an extra $r$ (respectively, $k$) cyclic shifts to 
complete either into a full $n \times n$ circulant matrix.

\begin{lemma}
	Let $\mathcal{C}$ be a cyclic code of length $n$ with check polynomial $h(x)$ of degree $k$. Then $\mathcal{C}$ is an $[n, k]_q$-linear code.\footnote{When $q = 2$, we leave off the subscript.}
\end{lemma}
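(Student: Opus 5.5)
The plan is to show that $\mathcal{C} = \idlGens{g(x)}$, with $x^n - 1 = g(x)h(x)$, $\deg g = r$ and $\deg h = k = n - r$, has $\F_q$-dimension exactly $k$. The length is immediate, since $\mathcal{C} \subseteq R_x \cong \F_q^n$.

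\emph{Lower bound via the generator matrix.} Use the generator matrix \eqref{cycgenmat}. Its rows are the standard-basis coordinate vectors of $g(x), xg(x), \dots, x^{k-1}g(x)$, and each of these is an element of $\mathcal{C}$. Since $\deg(x^i g) = r + i \le n - 1$ for $i \le k-1$, no reduction modulo $x^n-1$ takes place. Because $g$ is monic, the leading coefficients sit in distinct positions $r, r+1, \dots, r+k-1$. The matrix is therefore in echelon form, the $k$ rows are linearly independent, and $\dim \mathcal{C} \ge k$.

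\emph{Upper bound via a division argument.} Take any $c \in \mathcal{C}$, so $c \equiv a(x)g(x) \pmod{x^n-1}$ for some $a \in \F_q[x]$. Divide $a$ by $h$: write $a = s h + t$ with $\deg t < k$. Then
\[
ag = s\,gh + tg = s(x^n-1) + tg \equiv tg \pmod{x^n - 1}.
\]
Hence every codeword equals $t(x)g(x)$ with $\deg t \le k-1$, so it lies in the span of $g, xg, \dots, x^{k-1}g$. This gives $\dim \mathcal{C} \le k$.

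Combining the two bounds, $\mathcal{C}$ is an $[n,k]_q$ code. Alternatively, the upper bound follows from the fact that $h(x)$ annihilates $\mathcal{C}$, together with Proposition~\ref{prop:algebraic-properties}(1) applied in one variable, but the division argument above is more elementary. I do not expect a genuine obstacle. The only point needing care is the reduction $ag \equiv tg$, which uses the factorization $gh = x^n-1$ rather than any coprimality assumption; this is the step that makes the spanning set finite of size $k$.
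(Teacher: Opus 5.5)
Your proof is correct and follows the paper's route. The paper states this lemma without a separate proof, relying on the preceding discussion (where codewords are written as $a(x)g(x)$ with $\deg a \le k-1$) and on the generator matrix \eqref{cycgenmat}, whose $k$ rows $g, xg, \dots, x^{k-1}g$ are linearly independent; your division-by-$h$ step makes the spanning half explicit, and your echelon argument gives independence (for which a nonzero leading coefficient of $g$ suffices, so monicity is not needed).
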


There are a couple of useful bookkeeping tools commonly used to describe cyclic codes.
\begin{definition} 
    Let $\mathcal{C} = \idlGens{g(x)}$ be a cyclic code of length $n$ over $\F_q$. 
    Then $\mathcal{T} = \bigcup_d C_d$, where the union is over the $q$-cyclotomic 
    cosets $C_d$ modulo $n$ corresponding to the roots of $g(x)$, is called the 
    \emph{defining set} of $\mathcal{C}$.
\end{definition}
\noindent Let $\alpha$ be a primitive $n$-th root of unity in the splitting field $K$. 
As is clear from the definition, $\mathcal{T}$ completely determines $g(x)$ and 
vice versa, with $g(x) = \prod_{j \in \mathcal{T}} (x - \alpha^j)$.
\begin{definition} 
    \label{def:zset}
    The set of powers of $\alpha$ that are roots of $g(x)$ is called the 
    \emph{zero set} of $\mathcal{C}$, denoted 
    $\mathcal{Z}(\mathcal{C}) = \{\alpha^j \mid j \in \mathcal{T} \}$, and the 
    elements of this set are called \emph{zeros} of the code.
\end{definition}

If $x^n - 1 = g(x) h(x)$, then the roots of $h(x)$ are the roots of $x^n - 1$ 
not contained in $g(x)$. Since $f^*(x) = x^{\deg f} f(x^{-1})$, the roots of 
$f^*(x)$ are the inverses of the roots of $f(x)$.
\begin{lemma}
    Let $\mathcal{T}_{\mathcal{C}}$ be the defining set of a cyclic code $\mathcal{C}$ 
    of length $n$. Then
    \begin{equation*}
        \mathcal{T}_{\mathcal{C}^\perp} = \{ -i \pmod n \mid i \in \{0, \dots, n - 1\} \setminus \mathcal{T}_{\mathcal{C}} \}.
    \end{equation*}
\end{lemma}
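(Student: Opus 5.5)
The plan is to identify the generator polynomial of $\mathcal{C}^\perp$ from the discussion preceding the statement, and then read off its zeros.

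\textbf{Step 1 (generator of the dual).} Write $x^n - 1 = g(x)h(x)$ with $g$ the generator polynomial of $\mathcal{C}$. By the preceding discussion, $\mathcal{C}^\perp = \idlGens{h^*(x)/h(0)}$. Since $\gcd(q,n)=1$, every root of $x^n-1$ is simple, so $h(0)\neq 0$ and this normalization makes sense. Hence the defining set of $\mathcal{C}^\perp$ is exactly the set of exponents $j$ with $h^*(\alpha^j)=0$, where $\alpha$ is the fixed primitive $n$-th root of unity in $K$.

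\textbf{Step 2 (zeros of $h$).} Over $K$ we have $x^n - 1 = \prod_{i=0}^{n-1}(x-\alpha^i)$ with all roots distinct, and $g(x) = \prod_{i\in\mathcal{T}_{\mathcal{C}}}(x-\alpha^i)$. Unique factorization in $K[x]$ then gives
\[
h(x) = \prod_{i \in \{0,\dots,n-1\}\setminus \mathcal{T}_{\mathcal{C}}} (x-\alpha^i).
\]

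\textbf{Step 3 (zeros of $h^*$).} Let $k=\deg h$. Using $h^*(x) = x^{k}h(x^{-1})$ and $h(0)\neq 0$, we get $\deg h^* = k$ and $0$ is not a root of $h^*$. For $\beta\neq 0$, $h^*(\beta)=0$ if and only if $h(\beta^{-1})=0$. Since $(\alpha^i)^{-1} = \alpha^{-i \bmod n}$, the roots of $h^*$ are exactly $\alpha^{-i}$ for $i\notin\mathcal{T}_{\mathcal{C}}$. Concretely,
\[
h^*(x) = \prod_{i\notin \mathcal{T}_{\mathcal{C}}}(1-\alpha^i x) = \Big(\prod_{i\notin\mathcal{T}_{\mathcal{C}}}(-\alpha^i)\Big)\prod_{i\notin\mathcal{T}_{\mathcal{C}}}(x-\alpha^{-i}),
\]
and the leading constant equals $h(0)$ up to the expected sign. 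The map $i\mapsto -i \bmod n$ is injective, so these $k$ roots are distinct. This gives the claimed set $\{-i \bmod n : i\notin \mathcal{T}_{\mathcal{C}}\}$.

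\textbf{Remaining check.} The defining set should be a union of $q$-cyclotomic cosets. This is automatic: the complement of a union of cosets is again such a union, and negation maps $C_i$ onto $C_{-i}$. The only real obstacle is the careful justification of $\mathcal{C}^\perp = \idlGens{h^*(x)/h(0)}$, which the text establishes only by a dimension argument. If needed, I would make that argument explicit. The codewords of $\idlGens{h^*}$ are orthogonal to those of $\mathcal{C}$, by the coefficient-of-$x^{n-1}$ computation. The dimension of $\idlGens{h^*}$ is $n-\deg h^* = n-k$. Since $\dim \mathcal{C} = k$, we have $\dim\mathcal{C}^\perp = n-k$, which forces equality.
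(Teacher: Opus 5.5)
Your proposal is correct and follows the same route as the paper, which derives the lemma from three facts: $\mathcal{C}^\perp = \langle h^*(x)/h(0)\rangle$, the roots of $h$ are the roots of $x^n-1$ not belonging to $g$, and the roots of a reciprocal polynomial are the inverses of the original roots. One small slip: $h(0)\neq 0$ follows from $g(0)h(0) = -1$ (that is, $0$ is not a root of $x^n-1$), not from the roots being simple; also, the leading coefficient of $h^*$ is exactly $h(0)=\prod_{i\notin\mathcal{T}_{\mathcal{C}}}(-\alpha^i)$, not merely equal up to sign.
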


\begin{example}\label{ex:1dcycpart1}
    Consider the ring $\F_2[x] / \idlGens{x^7 - 1}$. The polynomial $x^7 - 1$ factors over $\F_2$ as
    \begin{equation*}
        x^7 - 1 = (x + 1)(x^3 + x + 1)(x^3 + x^2 + 1).
    \end{equation*}
    The 2-cyclotomic cosets modulo 7 are $C_0 = \{0\}$, $C_1 = \{1, 2, 4\}$, and $C_3 = \{3, 5, 6\}$. There are eight possible cyclic codes:
    \begin{gather*}
        \mathcal{T}_0 = \emptyset, \quad \mathcal{T}_1 = C_0, \quad \mathcal{T}_2 = C_0 \cup C_1, \quad \mathcal{T}_3 = C_0 \cup C_3,\\
        \mathcal{T}_4 = C_0 \cup C_1 \cup C_3, \quad \mathcal{T}_5 = C_1, \quad \mathcal{T}_6 = C_1 \cup C_3, \quad \mathcal{T}_7 = C_3.
    \end{gather*}
    The corresponding generator polynomials are
    \begin{gather*}
        g_0(x) = 1, \quad g_1(x) = 1 + x, \quad g_2(x) = 1 + x^2 + x^3 + x^4, \quad g_3(x) = 1 + x + x^2 + x^4,\\
        g_4(x) = 0, \quad g_5(x) = 1 + x + x^3, \quad g_6(x) = 1 + x + x^2 + x^3 + x^4 + x^5 + x^6,\\
        g_7(x) = 1 + x^2 + x^3.
    \end{gather*}
\end{example}

Let $s$ be an integer such that $\gcd(s, n) = 1$. The \emph{multiplier} 
$\mu_s : \{0, 1, \dots, n - 1\} \to \{0, 1, \dots, n - 1\}$ is a permutation 
of the coordinate indices defined by $\mu_s(i) = s \cdot i \pmod n$. Under 
the natural isomorphism $\F_q^n \cong \F_q[x]/\idlGens{x^n - 1}$, 
the action of $\mu_s$ on the coordinate indices induces an action on the 
polynomials. For any codeword $c(x) \in \F_q[x]/\idlGens{x^n - 1}$, 
this is given by $\mu_s(c(x)) \equiv c(x^s) \pmod{x^n - 1}$. The map $\mu_s$ 
is called a \emph{multiplier automorphism} of a cyclic code $\mathcal{C}$ 
if $\mu_s(\mathcal{C}) = \mathcal{C}$. This property can be characterized 
entirely by the code's defining set: $\mu_s \in \mathrm{Aut}(\mathcal{C})$ 
if and only if $\mu_s(\mathcal{T}_{\mathcal{C}}) = \mathcal{T}_{\mathcal{C}}$. 
Since $\mu_s$ purely permutes the coordinates, it preserves the Hamming weight.

One of the most successful families of cyclic codes is the class of BCH codes, 
initiated by Bose, Ray-Chaudhuri, and Hocquenghem. BCH codes are hand-designed 
to guarantee a chosen minimum distance. In short, a high minimum weight can be 
guaranteed by choosing a defining set with a large number of consecutive elements. 
On the other hand, $\dim \mathcal{C} = n - | \mathcal{T} |$, implying that smaller 
defining sets are preferable. BCH codes provide a way to construct cyclic codes 
with both high minimum distance and high dimension by choosing $\mathcal{T}$ as 
small as possible such that it is a union of cyclotomic cosets containing $\delta - 1$ 
consecutive elements.
\begin{definition} 
    A BCH code $\mathcal{C}$ over $\F_q$ of length $n$ and \emph{designed distance} 
    $2 \leq \delta < n$ is a cyclic code with defining set 
    $\mathcal{T} = C_b \cup C_{b + 1} \cup \dots \cup C_{b + \delta - 2}$ whose zeros 
    are generated by a primitive $n$-th root of unity $\alpha \in \F_{q^m}$, where 
    $m = \mathrm{ord}_n(q)$.
\end{definition}
\noindent The dual of a BCH code is, in general, not a BCH code, as the remaining 
cyclotomic cosets defining $h(x)$ need no longer be consecutive.

\begin{theorem} 
    \label{thm:BCH-bound}
    Let $\mathcal{C}$ be a cyclic code of length $n$ over $\F_q$ with defining set 
    $\mathcal{T}(\mathcal{C})$. If $\mathcal{T}(\mathcal{C})$ contains $\delta - 1$ 
    consecutive exponents modulo $n$, given by 
    $b, b + 1, \dots, b + \delta - 2 \pmod n$, then $d(\mathcal{C}) \geq \delta$.
\end{theorem}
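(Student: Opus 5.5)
The plan is the standard Vandermonde argument. Let $\alpha$ be the primitive $n$-th root of unity in the splitting field $K$ fixing the defining set. Since $\mathcal{T}(\mathcal{C})$ is the defining set, every codeword $c(x)=\sum_{i=0}^{n-1} c_i x^i\in\mathcal{C}$ is a multiple of $g(x)$ in $R_x$. Hence $c(\alpha^j)=0$ for all $j\in\mathcal{T}(\mathcal{C})$. This is well defined modulo $x^n-1$ because $(\alpha^j)^n=1$. In particular,
\[
  c(\alpha^{b+t})=\sum_{i=0}^{n-1} c_i\,\alpha^{i(b+t)}=0,\qquad t=0,1,\dots,\delta-2 .
\]
I would argue by contradiction. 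Suppose $c\neq 0$ has weight $w\le\delta-1$, with support $\{i_1,\dots,i_w\}$. Keep only the first $w$ equations, $t=0,\dots,w-1$. This gives a homogeneous $w\times w$ linear system in the unknowns $c_{i_1},\dots,c_{i_w}$ over $K$.

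The coefficient matrix has entries $\alpha^{i_s(b+t)}$. Factoring $\alpha^{i_s b}$ out of column $s$ leaves the Vandermonde matrix $\bigl(\beta_s^{\,t}\bigr)_{t,s}$ with nodes $\beta_s=\alpha^{i_s}$. Its determinant is therefore
\[
  \prod_{s=1}^{w}\alpha^{i_s b}\cdot\prod_{1\le s<s'\le w}\bigl(\alpha^{i_{s'}}-\alpha^{i_s}\bigr).
\]
The nodes $\alpha^{i_s}$ are pairwise distinct, because $\alpha$ has order exactly $n$ and the $i_s$ are distinct residues modulo $n$. So the determinant is nonzero.

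It follows that $c_{i_1}=\dots=c_{i_w}=0$. This contradicts the choice of the support, so every nonzero codeword has weight at least $\delta$, i.e.\ $d(\mathcal{C})\ge\delta$.

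The only point needing care is the wraparound "modulo $n$" in the hypothesis. It causes no difficulty because exponents of $\alpha$ are read modulo $n$ anyway. The real content, and the only potential obstacle, is the nonvanishing of the Vandermonde determinant. That rests on $\alpha$ being primitive of order $n$, which uses the standing assumption $\gcd(q,n)=1$ that guarantees such an $\alpha$ exists in $K$.
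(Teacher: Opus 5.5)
Your proof is correct: it is the standard Vandermonde argument, and each step holds, namely evaluating codewords at $\alpha^{b+t}$, keeping the first $w\le\delta-1$ equations, and using that the scaled Vandermonde determinant is nonzero because $\alpha$ has order exactly $n$. The paper gives no proof of its own, since it quotes the BCH bound as classical background from Huffman--Pless, so there is nothing to compare beyond noting that your argument is the usual textbook one.
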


\begin{corollary}
    A BCH code with designed distance $\delta$ has minimum distance $d \geq \delta$.
\end{corollary}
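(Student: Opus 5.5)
The statement to prove is the corollary: a BCH code with designed distance $\delta$ has minimum distance $d \geq \delta$. The plan is to reduce it immediately to Theorem~\ref{thm:BCH-bound}, so only one hypothesis needs checking.

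First I check that a BCH code is a cyclic code with a well-defined defining set in the sense used by the theorem. By definition, a BCH code $\mathcal{C}$ of length $n$ and designed distance $\delta$ has defining set
\[
\mathcal{T} = C_b \cup C_{b+1} \cup \dots \cup C_{b+\delta-2},
\]
relative to a primitive $n$-th root of unity $\alpha \in \F_{q^m}$ with $m = \ord_n(q)$. Each cyclotomic coset $C_{b+i}$ contains its representative $b+i \pmod n$, since $r=0$ appears in $\{(b+i)q^r\}$. Hence $\mathcal{T} \supseteq \{b, b+1, \dots, b+\delta-2\} \pmod n$. Because $\delta < n$, these are $\delta - 1$ consecutive exponents modulo $n$. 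The only point that needs care is that they are distinct residues, which follows from $\delta - 1 < n$.

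Next I invoke the theorem. The generator polynomial of $\mathcal{C}$ is $g(x) = \prod_{j\in\mathcal{T}}(x-\alpha^j)$, which is a divisor of $x^n-1$ over $\F_q$ because $\mathcal{T}$ is a union of cyclotomic cosets. So $\mathcal{C}$ satisfies the hypotheses of Theorem~\ref{thm:BCH-bound}, with $\delta-1$ consecutive exponents in its defining set, and the theorem gives $d(\mathcal{C}) \geq \delta$.

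I expect no real obstacle; the argument amounts to unpacking definitions. If one wanted a self-contained argument instead of citing the theorem, the main step would be the Vandermonde argument. Suppose a nonzero codeword $c$ had weight $w \leq \delta - 1$. Then $c(\alpha^{b+i}) = 0$ for $i = 0, \dots, \delta-2$ gives a $w \times w$ subsystem with matrix $\bigl(\alpha^{(b+i)j_s}\bigr)$. Its determinant is $\prod_s \alpha^{b j_s}$ times a Vandermonde determinant in the distinct values $\alpha^{j_s}$, so it is nonzero. This forces the coefficients of $c$ to vanish, a contradiction.
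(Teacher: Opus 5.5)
Your proposal is correct and follows the same route as the paper, which states the corollary without proof as an immediate consequence of Theorem~\ref{thm:BCH-bound}, since the defining set $C_b \cup \dots \cup C_{b+\delta-2}$ contains the $\delta-1$ consecutive exponents $b,\dots,b+\delta-2$. Your optional Vandermonde sketch is a valid self-contained proof of that theorem, which the paper does not supply.
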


\begin{example}\label{ex:1dcycpart2}
    Continuing Example~\ref{ex:1dcycpart1}, applying the BCH bound to these defining sets gives
    \begin{gather*}
        \delta_0 = 1, \quad \delta_1 = 2, \quad \delta_2 = 4, \quad \delta_3 = 4,\\
        \delta_4 = \infty, \quad \delta_5 = 3, \quad \delta_6 = 7, \quad \delta_7 = 3.
    \end{gather*}
    Checking degrees, the resulting codes have parameters
    \begin{gather*}
        \mathcal{C}_0: [7, 7, 1], \quad \mathcal{C}_1: [7, 6, 2], \quad \mathcal{C}_2: [7, 3, 4], \quad \mathcal{C}_3: [7, 3, 4],\\
        \mathcal{C}_4: [7, 0, -], \quad \mathcal{C}_5: [7, 4, 3], \quad \mathcal{C}_6: [7, 1, 7], \quad \mathcal{C}_7: [7, 4, 3],
    \end{gather*}
    where we have computed the true minimum distances exactly for comparison against the BCH bound.
\end{example}

Consider the factorization of $x^n - 1$ in~\eqref{overFq}. The minimal 
polynomials are pairwise coprime and irreducible over $\F_q$. The Chinese 
Remainder Theorem gives
\begin{equation}\label{CRT1D}
    R = \F_q[x] / \idlGens{x^n - 1} \cong \bigoplus_{C_d} \F_q[x]/ \idlGens{\min_{\alpha^i}(x)},
\end{equation}
where the sum is taken over all cyclotomic cosets $C_d$, and $\alpha^i$ 
is a root representative for $C_d$. Each summand is a field, so 
$R \cong \bigoplus_{C_d} \F_{q^{|C_d|}}$. This is the Wedderburn decomposition 
of the ring. Let $\hat{e}_i$ denote the canonical primitive idempotents of the 
direct sum, where $\hat{e}_i$ has the identity element $1$ in the $i$th field 
component and $0$ in all others. These satisfy $\hat{e}_i^2 = \hat{e}_i$, 
$\hat{e}_i \hat{e}_j = 0$ for $i \neq j$, and $\sum_i \hat{e}_i = 1$. 
Furthermore, $\hat{e}_i R$ is isomorphic to the $i$th field, giving 
$R \cong \bigoplus_i \hat{e}_i R$.

It remains to find the inverse image of $\hat{e}_i$ under the map in~\eqref{CRT1D}. 
We will find it useful to associate each idempotent with its corresponding 
(minimal polynomial) cyclotomic coset rather than its index in some fixed 
ordering of~\eqref{CRT1D}, and we correspondingly denote the pullbacks by 
$e_{C_d}(x)$. There are two ways to compute $e_{C_d}(x)$. First, solve 
$1 = a_{C_d}(x) h_{C_d}(x) + b_{C_d}(x) g_{C_d}(x)$ for $a_{C_d}(x)$ using 
the extended Euclidean algorithm, where $g_{C_d}$ and $h_{C_d}$ are the generator 
and parity-check polynomials of the code with defining set $\mathcal{T} = C_d$, 
respectively. Then, $e_{C_d}(x) = a_{C_d}(x) h_{C_d}(x) \pmod{x^n - 1}$. 
Alternatively, over the splitting field with root $\alpha$,
\begin{equation}
    e_{C_d}(x) = \frac{1}{n} \sum^{n - 1}_{j = 0} \left( \sum_{i \in C_d} \alpha^{-ij} \right) x^j.
\end{equation}
The generator idempotent of $C$ is $e=\sum_{C_{d}\notin \mathcal{T}}e_{C_{d}}$. Its complementary ideal is $ann(C)=(1-e)R$. Since Euclidean duality inverts the root indices, the generator idempotent of $C^{\perp}$ is $e^{\perp}=1-\iota(e)=\iota(1-e)$, where $\iota(f)(x)=f(x^{-1})$. Independently, we always have $R=eR\oplus(1-e)R$.

It follows from the Euclidean algorithm derivation that
\begin{equation}\label{idemind}
    e(\alpha^i) = \begin{cases}
        0 & \text{if $\alpha^i$ is a zero of the code,}\\
        1 & \text{otherwise.}
    \end{cases}
\end{equation}
As such, $e(x)$ acts as an indicator function for roots, and therefore for 
$q$-cyclotomic cosets, of the code. If we think of each cyclotomic coset as a 
frequency component of the code, then $g(x)$ and $e(x)$ generate the code in 
the time and frequency domains, respectively.

\begin{example}\label{ex:1dcycpart3}
    Continuing Example~\ref{ex:1dcycpart2}, the generator polynomials for the ideals corresponding to $C_0$, $C_1$, and $C_3$ are
    \begin{equation*}
        g_{C_0}(x) = 1 + x, \quad g_{C_1}(x) = 1 + x + x^3, \quad g_{C_3}(x) = 1 + x^2 + x^3.
    \end{equation*}
    The main idea in the algorithm to construct the idempotents of the ideals generated by the polynomials $g_{C_0}, g_{C_1}, g_{C_3}$ is to compute the parity-check polynomials $h_{C_i}(x) = (x^7 - 1) / g_{C_i}(x)$. By the extended Euclidean algorithm, there exists a polynomial $a_i(x) \in \F_2[x]$ such that $a_i(x) h_{C_i}(x) \equiv 1 \pmod{g_{C_i}(x)}$. If we set $e_{C_i} = a_i(x) h_{C_i}(x)$, then we obtain a polynomial that satisfies~\eqref{idemind}.

    Explicitly, $h_{C_1}(x) = (x + 1)(x^3 + x^2 + 1) = x^4 + x^2 + x + 1$. Dividing $h_{C_1}(x)$ by $g_{C_1}(x) = x^3 + x + 1$ over $\F_2$ yields $x^4 + x^2 + x + 1 = x(x^3 + x + 1) + 1$. Rearranging this gives $1 = 1 \cdot h_{C_1}(x) + x \cdot g_{C_1}(x)$. Thus, $a_1(x) = 1$, and our idempotent is
    \begin{equation*}
        e_{C_1}(x) = 1 \cdot h_{C_1}(x) = 1 + x + x^2 + x^4.
    \end{equation*}
    
    Next, $h_{C_3}(x) = (x + 1)(x^3 + x + 1) = x^4 + x^3 + x^2 + 1$. The extended Euclidean algorithm with $g_{C_3}(x) = x^3 + x^2 + 1$ yields $1 = (1 + x^2) h_{C_3}(x) + x^3 g_{C_3}(x)$. Thus, $a_3(x) = 1 + x^2$, and the idempotent is
    \begin{equation*}
        e_{C_3}(x) = (1 + x^2)(x^4 + x^3 + x^2 + 1) = 1 + x^3 + x^5 + x^6.
    \end{equation*}

    Similarly, $h_{C_0}(x) = \sum_{j = 0}^6 x^j$. Evaluating at the root $x = 1$ gives $7 \equiv 1 \pmod 2$, meaning $a_0(x) = 1$ and
    \begin{equation*}
        e_{C_0}(x) = 1 + x + x^2 + x^3 + x^4 + x^5 + x^6.
    \end{equation*}

    The generator idempotents of the codes in the running example, given by $e_i(x) = \sum_{C \notin \mathcal{T}_i} e_C(x)$, are therefore:
    \begin{gather*}
        e_0(x) = e_{C_0} + e_{C_1} + e_{C_3} = 1 \\
        e_1(x) = e_{C_1} + e_{C_3} = x + x^2 + x^3 + x^4 + x^5 + x^6 \\
        e_2(x) = e_{C_3} = 1 + x^3 + x^5 + x^6 \\
        e_3(x) = e_{C_1} = 1 + x + x^2 + x^4 \\
        e_4(x) = 0 \\
        e_5(x) = e_{C_0} + e_{C_3} = x + x^2 + x^4 \\
        e_6(x) = e_{C_0} = 1 + x + x^2 + x^3 + x^4 + x^5 + x^6 \\
        e_7(x) = e_{C_0} + e_{C_1} = x^3 + x^5 + x^6.
    \end{gather*}
\end{example}

As seen in the examples, all possible cyclic codes are constructed by including or not including cyclotomic cosets (idempotents). This induces a lattice of containment of codes. Recall that in a commutative ring, for principal ideals $\idlGens{a}$ and $\idlGens{b}$, the inclusion $\idlGens{a} \subseteq \idlGens{b}$ implies $b \mid a$.
\begin{corollary}
    Let $\mathcal{C}_1$ and $\mathcal{C}_2$ be cyclic codes over $\F_q$ with generator polynomials $g_1(x)$ and $g_2(x)$ and defining sets $\mathcal{T}_{\mathcal{C}_1}$ and $\mathcal{T}_{\mathcal{C}_2}$, respectively. Then $\mathcal{C}_1 \subseteq \mathcal{C}_2$ if and only if $g_2(x) \mid g_1(x)$ if and only if $\mathcal{T}_{\mathcal{C}_2} \subseteq \mathcal{T}_{\mathcal{C}_1}$.
\end{corollary}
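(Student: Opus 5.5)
We prove the two equivalences separately, using that every ideal of $R_x=\F_q[x]/\idlGens{x^n-1}$ is principal and generated by a unique monic divisor of $x^n-1$ (Proposition~\ref{bi} and the discussion following it).

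\textbf{First equivalence.} Suppose $g_2(x)\mid g_1(x)$. Then $g_1\in\idlGens{g_2}$, so $\idlGens{g_1}\subseteq\idlGens{g_2}$, i.e.\ $\mathcal C_1\subseteq\mathcal C_2$.

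Conversely, suppose $\mathcal C_1\subseteq\mathcal C_2$. Then $g_1\in\mathcal C_2$, so $g_1(x)\equiv a(x)g_2(x)\pmod{x^n-1}$ for some $a(x)\in\F_q[x]$. Lifting to $\F_q[x]$ gives $g_1=a g_2+t(x^n-1)$. Since $g_2\mid x^n-1$, we get $g_2\mid g_1$ in $\F_q[x]$. This step is the only subtle point: divisibility must be read in $\F_q[x]$ rather than in the quotient ring, and it is precisely the fact that $g_2$ divides the modulus that makes the lift work.

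\textbf{Second equivalence.} Since $\gcd(q,n)=1$, the polynomial $x^n-1$ is squarefree with distinct roots $\alpha^j$ in the splitting field $K$. Each $g_i$ divides $x^n-1$, so it is monic and squarefree, and
\[
g_i(x)=\prod_{j\in\mathcal T_{\mathcal C_i}}(x-\alpha^j).
\]
Divisibility between two such products in $K[x]$ is equivalent to containment of their root sets. Moreover, divisibility of polynomials in $\F_q[x]$ is the same whether checked over $\F_q$ or over $K$, because the quotient, if it exists, is unique and hence lies in $\F_q[x]$. Therefore $g_2\mid g_1$ if and only if $\mathcal T_{\mathcal C_2}\subseteq\mathcal T_{\mathcal C_1}$.

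Combining the two equivalences proves the corollary.
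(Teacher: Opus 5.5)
Your proof is correct and follows the paper's (implicit) argument: containment of principal ideals $\idlGens{g_1}\subseteq\idlGens{g_2}$ is equivalent to $g_2\mid g_1$, and the factorization $g_i(x)=\prod_{j\in\mathcal{T}_{\mathcal{C}_i}}(x-\alpha^j)$ over the splitting field turns divisibility into containment of defining sets. One step goes beyond the paper. The paper justifies the first equivalence with a single line (``$\idlGens{a}\subseteq\idlGens{b}$ implies $b\mid a$''), which only gives divisibility in the quotient ring $R_x$; your explicit lift of that divisibility to $\F_q[x]$, using $g_2\mid x^n-1$, supplies the missing detail.
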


\begin{example}
    Continuing Example~\ref{ex:1dcycpart3}, the relationships of containment between the cyclic codes form a Boolean lattice with eight elements; the Hasse diagram is shown below.
    \begin{center}
        \begin{tikzpicture}[
            >=stealth,
            thick,
            box/.style={align=center, inner sep=6pt, draw, rounded corners, fill=white}
        ]
        
            \node[box] (C0) at (0, 4.5) {$\mathcal{C}_0 = \F_2[x]/\idlGens{x^7-1}$ \\ $\mathcal{T}_0 = \emptyset$ \\ $[7, 7, 1]$};
        
            \node[box] (C1) at (-4, 2) {$\mathcal{C}_1$ \\ $\mathcal{T}_1 = C_0$ \\ $[7, 6, 2]$};
            \node[box] (C5) at (0, 2) {$\mathcal{C}_5$ \\ $\mathcal{T}_5 = C_1$ \\ $[7, 4, 3]$};
            \node[box] (C7) at (4, 2) {$\mathcal{C}_7$ \\ $\mathcal{T}_7 = C_3$ \\ $[7, 4, 3]$};
        
            \node[box] (C2) at (-4, -1) {$\mathcal{C}_2$ \\ $\mathcal{T}_2 = C_0 \cup C_1$ \\ $[7, 3, 4]$};
            \node[box] (C6) at (0, -1) {$\mathcal{C}_6$ \\ $\mathcal{T}_6 = C_1 \cup C_3$ \\ $[7, 1, 7]$};
            \node[box] (C3) at (4, -1) {$\mathcal{C}_3$ \\ $\mathcal{T}_3 = C_0 \cup C_3$ \\ $[7, 3, 4]$};
        
            \node[box] (C4) at (0, -3.5) {$\mathcal{C}_4 = \{0\}$ \\ $\mathcal{T}_4 = C_0 \cup C_1 \cup C_3$ \\ $[7, 0, -]$};
        
            \draw[->] (C4) -- (C2);
            \draw[->] (C4) -- (C3);
            \draw[->] (C4) -- (C6);
        
            \draw[->] (C2) -- (C1);
            \draw[->] (C2) -- (C5);
        
            \draw[->] (C3) -- (C1);
            \draw[->] (C3) -- (C7);
        
            \draw[->] (C6) -- (C5);
            \draw[->] (C6) -- (C7);
        
            \draw[->] (C1) -- (C0);
            \draw[->] (C5) -- (C0);
            \draw[->] (C7) -- (C0);
        
        \end{tikzpicture}
    \end{center}
\end{example}

We close the review by noting that there are other, often tighter, lower bounds on the minimum distance. The BCH bound can be strengthened when the defining set contains several parallel strings of consecutive elements. One standard generalization is the Hartmann--Tzeng bound~\cite{HartmannTzeng1972}.
\begin{theorem}[Hartmann--Tzeng Bound]
    \label{thm:HT-bound}
    Let $\mathcal{C}$ be a cyclic code of length $n$ over $\F_q$ with defining set $\mathcal{T}(\mathcal{C})$. Suppose that there exist integers $b, m_1, m_2, \delta$, and $s$ such that $\delta \geq 2$, $s \geq 0$, $\gcd(m_1, n) = 1$, and $b + i m_1 + j m_2 \in \mathcal{T}(\mathcal{C})$ for all $0 \leq i \leq \delta - 2$ and $0 \leq j \leq s$. If $\gcd(m_2, n) < \delta$, then $d(\mathcal{C}) \geq \delta + s$.
\end{theorem}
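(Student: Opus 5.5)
The plan is to argue through syndromes over the splitting field, using a Vandermonde rank argument, and to induct on $s$. The base case $s=0$ is the BCH bound of Theorem~\ref{thm:BCH-bound}, applied with the step $m_1$ in place of $1$. Let $\alpha$ be the primitive $n$-th root of unity fixed in the definition of $\mathcal{T}(\mathcal{C})$.

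Suppose $c \in \mathcal{C}$ is nonzero with weight $w \le \delta + s - 1$. Let its support be $\{p_1,\dots,p_w\}$ and its nonzero coefficients be $c_1,\dots,c_w$. Put $X_k = \alpha^{p_k}$, $Y_k = X_k^{m_1}$ and $Z_k = X_k^{m_2}$. Since $c(\alpha^t)=0$ for every $t \in \mathcal{T}(\mathcal{C})$, the vector $u = (u_k)$ with $u_k = c_k X_k^{b}$ is nonzero and satisfies
\[
\sum_{k=1}^{w} u_k\, Y_k^{i} Z_k^{j} = 0 \qquad \text{for all } 0\le i\le \delta-2,\ 0\le j\le s.
\]
Because $\gcd(m_1,n)=1$, the map $t\mapsto t m_1$ is a bijection mod $n$, so the $Y_k$ are pairwise distinct. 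I will prove the following auxiliary claim by induction on $s$.

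\emph{Claim.} If $Y_1,\dots,Y_w$ are distinct $n$-th roots of unity, each $Z_k$ is the $m_2$-th power of the corresponding $X_k$, and $w \le \delta+s-1$, then the only vector $u$ satisfying the displayed system is $u=0$.

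For $s=0$ the equations for $i=0,\dots,\delta-2$ form a $(\delta-1)\times w$ Vandermonde system in the distinct $Y_k$ with $w\le\delta-1$. Its columns are independent, so $u=0$.

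For the inductive step, take $u\neq 0$ and discard the zero entries, so that every $u_k\ne 0$. There are two cases.
\begin{enumerate}
\item[(a)] All $Z_k$ are equal. Then $X_{k}^{m_2}=X_{k'}^{m_2}$ forces $p_k \equiv p_{k'} \pmod{n/\gcd(m_2,n)}$. Hence $w \le \gcd(m_2,n) \le \delta-1$. The $j=0$ equations then form a Vandermonde system of full column rank, so $u=0$, a contradiction.
\item[(b)] Some $Z_{k_0}$ differs from some other $Z_k$. Define $u'_k = u_k(Z_k - Z_{k_0})$. For $j\le s-1$ we have
\[
\sum_k u'_k Y_k^i Z_k^j = \sum_k u_k Y_k^i Z_k^{j+1} - Z_{k_0}\sum_k u_k Y_k^i Z_k^{j} = 0,
\]
so $u'$ satisfies the system with $s-1$ in place of $s$. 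It vanishes at $k_0$ and is nonzero wherever $Z_k\ne Z_{k_0}$. Its weight is therefore at most $w-1 \le \delta+(s-1)-1$, yet $u'\neq 0$. This contradicts the induction hypothesis.
\end{enumerate}
Either case is a contradiction, so no such $c$ exists and $d(\mathcal{C}) \ge \delta+s$.

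The main obstacle is case (a), where the elimination trick cannot lower the weight. It is exactly here that the hypothesis $\gcd(m_2,n)<\delta$ is needed. The point to verify carefully is the count of support positions sharing a common value of $X^{m_2}$. The equation $X^{m_2}=\zeta$ has at most $\gcd(m_2,n)$ solutions among the $n$-th roots of unity, because it is a coset of the kernel of $t\mapsto m_2 t$ on $\Z_n$. This counting step is what makes the induction close.
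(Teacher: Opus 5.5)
Your proof is correct. The paper does not prove this statement: Theorem~\ref{thm:HT-bound} appears only in the background review of 1D cyclic codes and is quoted from \cite{HartmannTzeng1972}, so there is no in-paper argument to compare against.

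Your argument is the standard short elimination proof. It has three ingredients:
\begin{itemize}
\item the syndrome identities $\sum_k u_k Y_k^i Z_k^j=0$;
\item multiplication by $(Z_k-Z_{k_0})$, which trades one value of $j$ for one support position and drives the induction on $s$;
\item a Vandermonde rank argument at the bottom.
\end{itemize}
The same mechanism underlies the van Lint--Wilson shift bound (Theorem~\ref{thm:shift-bound}), from which the Hartmann--Tzeng bound can also be recovered. What your route buys is that it makes explicit where each hypothesis enters. The condition $\gcd(m_1,n)=1$ is used only to make the $Y_k$ distinct, so that the Vandermonde step has full column rank. The condition $\gcd(m_2,n)<\delta$ is used only in case (a), as the bound $\gcd(m_2,n)$ on the size of a fiber of $X\mapsto X^{m_2}$ among the $n$-th roots of unity. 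It also needs no restriction on whether the exponents $b+im_1+jm_2$ wrap around modulo $n$.

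Three small points of bookkeeping, none of which is a gap:
\begin{itemize}
\item Your Claim should explicitly state that the $X_k$ are distinct $n$-th roots of unity with $Y_k=X_k^{m_1}$, since the count in case (a) is over the $X_k$, not the $Y_k$.
\item In case (b), you are applying the inductive hypothesis to the subsystem indexed by the support of $u'$. That subsystem still satisfies all hypotheses of the Claim, now with at most $\delta+(s-1)-1$ unknowns.
\item Your base case is not literally Theorem~\ref{thm:BCH-bound}, which is stated for step $1$. Since you prove it directly by the Vandermonde argument, nothing is missing.
\end{itemize}
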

\noindent When $s = 0$, the Hartmann--Tzeng bound reduces to the BCH bound. Instead of requiring only one run of $\delta - 1$ consecutive roots, it uses several shifted runs of such roots. Thus, the Hartmann--Tzeng bound can be viewed as a multi-string BCH bound.

A further generalization is the Roos bound~\cite{Roos1982, Roos1983}. While there are several equivalent formulations in the literature, it can be conceptually framed as a product-set bound for cyclic codes.

\begin{theorem}[Roos Bound]
    \label{thm:roos-bound}
    Let $\mathcal{C}$ be a cyclic code of length $n$ over $\F_q$ with zero set $\mathcal{Z}(\mathcal{C})$, and let $\alpha$ be a primitive $n$-th root of unity. Suppose $\mathcal{Z}(\mathcal{C})$ contains the product set $MN = \{\mu\nu \mid \mu \in M, \nu \in N\}$, where $M = \{\alpha^b, \alpha^{b+1}, \dots, \alpha^{b+\delta-2}\}$ consists of $\delta - 1$ consecutive powers of $\alpha$, and $N = \{\alpha^{i_0}, \alpha^{i_1}, \dots, \alpha^{i_s}\}$ is a set of $s + 1$ powers of $\alpha$ with $0 = i_0 < i_1 < \dots < i_s$.
    Let $\overline{N}$ be the smallest consecutive set of roots containing $N$. If the number of missing consecutive roots (``holes'') in $N$ satisfies
    \begin{equation}
        |\overline{N}| - |N| \le \delta - 2 \quad \Longleftrightarrow \quad i_s - i_0 \le \delta + s - 2,
    \end{equation}
    then the minimum distance of $\mathcal{C}$ is bounded below by $d(\mathcal{C}) \ge |M| + |N| = \delta + s$.
\end{theorem}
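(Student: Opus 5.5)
We prove the Roos bound with the standard ``rank of a product of Vandermonde-type matrices'' argument. Suppose, for contradiction, that $0\neq c\in\mathcal C$ has weight $w\le \delta+s-1$. Let its support be $\{j_1,\dots,j_w\}$ with nonzero coefficients $c_{j_1},\dots,c_{j_w}$, and set $X_t=\alpha^{j_t}$. These are distinct, since $\alpha$ is primitive of order $n$. Since $c$ vanishes on $MN$, for every $0\le u\le\delta-2$ and every $0\le v\le s$ we have
\begin{equation*}
\sum_{t=1}^{w} c_{j_t} X_t^{\,b+u+i_v}=0 .
\end{equation*}

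\textbf{The matrix identity.} Define:
\begin{itemize}
\item $A$, the $(\delta-1)\times w$ matrix with entries $X_t^{u}$;
\item $D$, the diagonal matrix with entries $c_{j_t}X_t^{b}$;
\item $B$, the $w\times(s+1)$ matrix with entries $X_t^{i_v}$.
\end{itemize}
The vanishing conditions say exactly that $A D B=0$. The heart of the proof is to show that this is impossible when $w\le\delta+s-1$. We use Sylvester's rank inequality:
\begin{equation*}
0=\rank(ADB)\ \ge\ \rank(A)+\rank(DB)-w .
\end{equation*}

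\textbf{The two rank estimates.}
\begin{itemize}
\item $A$ consists of $\delta-1$ consecutive powers of distinct nodes, so it is a Vandermonde matrix. Hence $\rank A=\min(\delta-1,w)$.
\item $D$ is invertible, so $\rank(DB)=\rank B$.
\end{itemize}
It therefore suffices to show $\rank B\ge \min(s+1,\, w-\delta+2)$. Granting this, the inequality gives a contradiction in each case:
\begin{itemize}
\item If $w\le\delta-1$, then $\rank A=w$, so $0\ge\rank B\ge 1$, which is false.
\item If $\delta-1<w\le\delta+s-1$, then $0\ge (\delta-1)+\rank B-w\ge(\delta-1)+(w-\delta+2)-w=1$, again false. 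Here we use $w-\delta+2\le s+1$.
\end{itemize}

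\textbf{The main obstacle: bounding $\rank B$.} This is where the hole condition $i_s-i_0\le\delta+s-2$ enters. First, for $w\le\delta-1$ we only need $\rank B\ge1$, which is immediate from the first column $i_0=0$ (all ones). For larger $w$, we plan to argue via the column space. Suppose $\rank B=r<\min(s+1,w-\delta+2)$. Then there is a nonzero vector $y\in K^w$ orthogonal to all columns of $B$ and to all columns of the Vandermonde block $(X_t^{u})$, $0\le u\le \delta-2$, except that we must embed this in the consecutive range. Concretely, consider the full Vandermonde matrix $V$ of columns $X_t^{k}$ for $0\le k\le i_s$. It has $i_s+1\le\delta+s-1$ consecutive columns, so its rank is $\min(w,i_s+1)$. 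The columns of $B$ form a subset of $s+1$ of these, and the missing columns number at most $\delta-2$. Deleting at most $\delta-2$ columns from a matrix of rank $\min(w,i_s+1)$ leaves rank at least $\min(w,i_s+1)-(\delta-2)$. Since $i_s\ge s$, this is at least $\min(w-\delta+2,s+1)$. This column-deletion estimate is the step I expect to need the most care, in particular handling the cases $w$ versus $i_s+1$. With it in hand, the contradiction above shows every nonzero codeword has weight at least $\delta+s=|M|+|N|$.
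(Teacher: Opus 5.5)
The paper only cites this result to Roos and gives no proof of its own. Your strategy is the standard van Lint--Wilson argument: you write $ADB=0$, apply Sylvester's inequality with $\rank A=\min(\delta-1,w)$, and bound $\rank B$ from below by embedding $B$ in the Vandermonde matrix $V$ on the consecutive exponents $0,\dots,i_s$. That strategy is sound up to the last step, but the last step is false as written.

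Deleting at most $\delta-2$ columns correctly gives
\[
\rank B\ge\min(w,i_s+1)-(\delta-2)=\min(w-\delta+2,\ i_s-\delta+3).
\]
You then claim this is at least $\min(w-\delta+2,s+1)$ ``since $i_s\ge s$.'' That would require $i_s\ge s+\delta-2$, which is the reverse of the hole condition. Concretely, take a hole-free $N$ with $s=1$, $i_1=1$, $\delta=4$, and $w=\delta+s-1=4$. Your estimate gives only $\rank B\ge 0$, while your second case needs $\rank B\ge w-\delta+2=2$. So the case $\delta-1<w\le\delta+s-1$ is not closed whenever $w>i_s+1$.

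The fix is not to weaken the deletion count before comparing with $\rank V$. Exactly $|\overline N|-|N|=i_s-s$ columns of $V$ are missing from $B$, so
\[
\rank B\ \ge\ \min(w,\,i_s+1)-(i_s-s)\ =\ \min\bigl(w-(i_s-s),\ s+1\bigr)\ \ge\ \min(w-\delta+2,\ s+1).
\]
Here the hole condition $i_s-s\le\delta-2$ is used only on the first entry. With this estimate your Sylvester contradiction goes through unchanged: in the example above, $\rank B=2$ and $3+2-4=1>0$.

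You should also delete the sentence about a vector $y$ ``orthogonal to all columns of $B$ \dots except that we must embed this.'' It is never used and is not coherent as stated.
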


\noindent Under suitable consecutiveness hypotheses on $M$ and the size constraint on $N$, this condition yields a stronger lower bound on the minimum distance. Both the BCH bound and the Hartmann--Tzeng bound can be recovered through specific choices of these subsets.

Another important lower bound is the shift bound of van Lint and Wilson~\cite{vanLintWilson1986}. This bound is often stronger than the BCH, Hartmann--Tzeng, and Roos bounds, but it is somewhat less elementary to state because it is formulated in terms of independent sets with respect to the zero set of a polynomial.
\begin{theorem}[Shift Bound]
    \label{thm:shift-bound}
    Let $0 \neq f(x) \in K[x]$, where $K$ is a field, and let $S = \{\theta \in K \mid f(\theta) = 0 \}$. If $A$ is an independent set with respect to $S$ in the sense of van Lint and Wilson, then $\wt(f) \geq |A|$. Consequently, if every nonzero codeword $c(x) \in \mathcal{C}$ has a zero set containing a set $S$ for which one can construct an independent set $A$ of size $D$, then $d(\mathcal{C}) \geq D$.
\end{theorem}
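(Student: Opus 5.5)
The plan is to prove the bound by linear algebra in $K^w$, where $w=\wt(f)$. I will attach to every $\theta\in K$ a vector built from the exponents of $f$, and show by induction on the recursive definition of independent sets that the vectors attached to an independent set $A$ are linearly independent. That immediately gives $|A|\le w$.

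For the setup, recall the van Lint--Wilson definition. The family of sets independent with respect to $S$ is the smallest family of finite subsets of $K$ with three properties. First, $\emptyset$ is independent. Second, if $A$ is independent with $A\subseteq S$ and $b\in K\setminus S$, then $A\cup\{b\}$ is independent. Third, if $A$ is independent and $c\in K^\times$, then $cA$ is independent. Write
\[
  f(x)=\sum_{k=1}^{w} c_k x^{e_k},
\]
with distinct exponents $e_k$ and all $c_k\neq 0$. For $\theta\in K$ set
\[
  v(\theta)=(\theta^{e_1},\dots,\theta^{e_w})\in K^w .
\]
With the bilinear pairing against ${\bf c}=(c_1,\dots,c_w)$ we then have $f(\theta)=\langle {\bf c},v(\theta)\rangle$. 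So $\theta\in S$ exactly when $v(\theta)$ lies in the hyperplane ${\bf c}^\perp$.

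The induction claims that for every independent $A$ the vectors $\{v(\theta):\theta\in A\}$ are pairwise distinct and linearly independent, so they span a space of dimension $|A|$. The empty set is trivial.

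In the extension step we have $A\subseteq S$ and $b\notin S$. All $v(a)$ with $a\in A$ lie in ${\bf c}^\perp$, while $v(b)\notin{\bf c}^\perp$ because $f(b)\neq0$. Hence $v(b)$ is not in the span of $v(A)$, and $v(A\cup\{b\})$ is independent of size $|A|+1$. This also shows $b\notin A$, so the set genuinely grows.

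In the scaling step, $v(c\theta)=\Lambda_c\,v(\theta)$ with $\Lambda_c=\mathrm{diag}(c^{e_1},\dots,c^{e_w})$. This matrix is invertible since $c\neq 0$, so it carries independent families to independent families. Multiplication by $c$ is also injective on $K$, so $|cA|=|A|$.

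Since $K^w$ cannot contain more than $w$ linearly independent vectors, we conclude $|A|\le w=\wt(f)$. The second sentence of the theorem then follows at once. Let $c(x)\in\mathcal C$ be nonzero, viewed as a polynomial of degree $<n$. Its zero set contains $S$, and the set $A$ of size $D$ is constructed to be independent with respect to that zero set. Then $\wt(c)\ge D$, and so $d(\mathcal C)\ge D$.

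I expect the main obstacle to be bookkeeping, not a deep step. The statement of the theorem is loose about the exact definition of independence, so the plan must fix the recursive definition above and check that every clause is covered by the induction. One point needs particular care: independence is defined relative to the zero set of the particular codeword. For the code-level statement we must use that enlarging $S$ to the full zero set of $c(x)$ preserves independence, which is clear because each clause only requires membership or non-membership in $S$ of the relevant points. If the intended definition instead allows $A\subseteq S$ together with $b\notin S$ in a weaker form, the same hyperplane argument still applies, because only $f|_A=0$ and $f(b)\ne0$ are used.
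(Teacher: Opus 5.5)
Your proof of the polynomial bound $\wt(f)\ge|A|$ is correct. It is essentially van Lint and Wilson's original argument; the paper does not prove this theorem and only cites it. The structural induction over the three clauses is sound. So is the hyperplane step: every $v(a)$ with $a\in A$ lies in $\mathbf{c}^{\perp}$, while $v(b)$ does not. The invertible diagonal scaling $\Lambda_c$ is also fine.

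The gap is in the passage to codes. You claim that enlarging $S$ to the full zero set of $c(x)$ preserves independence, because each clause only tests membership or non-membership in $S$. That is exactly why it fails. The extension clause needs $b\notin S$, and this condition can be destroyed when $S$ grows. Your hyperplane step needs $c(b)\neq0$, and $b\notin S$ does not give this when $b\in\mathcal{Z}(c)\setminus S$. Independence is not monotone in $S$.

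Here is a concrete counterexample. Let $K=\F_{16}$ with primitive element $\alpha$, and let $c(x)=1+x^5$. Its zeros are $\alpha^0,\alpha^3,\alpha^6,\alpha^9,\alpha^{12}$. Take $S=\{1,\alpha^3\}\subseteq\mathcal{Z}(c)$. Alternately adjoin $\alpha^6\notin S$ and scale by $\alpha^{-3}$. This produces $\{\alpha^6\}$, $\{\alpha^3\}$, $\{\alpha^3,\alpha^6\}$, $\{1,\alpha^3\}$, and finally $\{1,\alpha^3,\alpha^6\}$. So a set of size $3$ is independent with respect to $S$, while $\wt(c)=2$. The adjoined point $\alpha^6$ is a zero of $c$, which is where your argument breaks.

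Every codeword of the binary cyclic code $\langle 1+x^5\rangle$ of length $15$ vanishes on $S$, and this code has $d=2$. So the code-level statement, read with one set $S$ common to all codewords, is false, and no argument can establish it in that form.

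The correct consequence is the form in which the shift bound is actually used for cyclic codes. For each nonzero codeword $c$, one needs an independent set of size $D$ with respect to $\mathcal{Z}(c)$ itself, or its intersection with the $n$-th roots of unity. Then every point adjoined by the extension clause is a genuine non-zero of $c$. Equivalently, $d(\mathcal{C})$ is at least the minimum, over all possible zero sets $T$ with the defining set $\subseteq T\subsetneq\{n\text{-th roots of unity}\}$, of the largest set independent with respect to $T$. With that reading your first part applies directly and no monotonicity is needed. In BCH-type applications the construction depends on $c$ through a point outside $\mathcal{Z}(c)$. Such a point exists because $c\neq0$ has degree $<n$.
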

\noindent The shift bound provides a general method for proving lower bounds on the minimum distance of cyclic codes directly from their zero structure.

Finally, for the duals of BCH codes, we can use character-sum estimates. A BCH code of length $n$ over $\F_q$ is called \emph{primitive} if $n = q^m - 1$ for some integer $m \geq 1$. The classical Carlitz--Uchiyama bound~\cite{Carlitz1957} provides estimates for exponential sums and yields lower bounds on the weights of dual BCH codewords. Specifically, for a primitive binary BCH code of length $n = 2^m - 1$ and designed distance $\delta = 2t + 1$, the Carlitz--Uchiyama theorem restricts the weights of the nonzero codewords in the dual code, which immediately gives a lower bound on its minimum distance.
\begin{theorem}[Carlitz--Uchiyama Bound]
    \label{thm:carlitz-uchiyama}
    Let $\mathcal{C}$ be a primitive binary BCH code of length $n = 2^m - 1$ and designed distance $\delta = 2t + 1$. If $w$ is the weight of any nonzero codeword in the dual code $\mathcal{C}^\perp$, then
    \begin{equation*}
        2^{m - 1} - (t - 1) 2^{m / 2} \leq w \leq 2^{m - 1} + (t - 1) 2^{m / 2}.
    \end{equation*}
\end{theorem}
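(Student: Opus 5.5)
The plan is to realize the dual code $\mathcal{C}^\perp$ as a trace code and bound each weight by a character sum. The Weil bound for curves then gives the two inequalities in the statement.

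First, the dual description. Since $\mathcal{C}$ is primitive BCH with designed distance $2t+1$, its defining set is the union of the cyclotomic cosets of $1,\dots,2t$. Over $\F_2$ every even index lies in the coset of an odd one, so these are the cosets $C_1, C_3, \dots, C_{2t-1}$. By Delsarte's theorem, the dual of the subfield subcode $\mathcal{C}$ is the trace of the dual of the corresponding Reed--Solomon-type code. Concretely, every codeword of $\mathcal{C}^\perp$ has the form
\[
c_f = \bigl(\mathrm{Tr}(f(\gamma))\bigr)_{\gamma\in\F_{2^m}^*}, \qquad f(X)=\sum_{i=1}^{t} a_i X^{2i-1},\quad a_i\in\F_{2^m},
\]
where $\mathrm{Tr}$ is the absolute trace $\F_{2^m}\to\F_2$. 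This uses the lemma on $\mathcal{T}_{\mathcal{C}^\perp}$ together with the trace representation of cyclic codes. I would sign-normalize the exponents using $\iota$, replacing $-i$ by $i$; this does not change the weight set.

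Second, the weight in terms of a character sum. Set $S(f)=\sum_{\gamma\in\F_{2^m}}(-1)^{\mathrm{Tr}(f(\gamma))}$. Since $f(0)=0$, the point $\gamma=0$ contributes $+1$ and is not a coordinate. For the nonzero coordinates we have $\wt(c_f) = \#\{\gamma\neq0 : \mathrm{Tr} f(\gamma)=1\}$. Counting gives $S(f) = 1 + (n - \wt(c_f)) - \wt(c_f)$, hence
\[
\wt(c_f) = 2^{m-1} - \tfrac12 S(f).
\]
The claim therefore reduces to $|S(f)|\le (2t-2)\,2^{m/2}$ whenever $c_f\neq 0$.

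Third, the Weil estimate, which I expect to be the main obstacle. I would invoke the Weil bound for additive character sums: if $f$ has odd degree $D$ and is not of the form $g^2+g+\text{const}$, then $|S(f)|\le (D-1)\,2^{m/2}$. Here $D\le 2t-1$, which gives exactly the bound $(2t-2)2^{m/2}$, matching the statement.

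This step hides two technical points.

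1. If the leading coefficient $a_t$ vanishes, the degree is smaller and the bound only improves.
2. We need $c_f\neq0$ to imply that $f$ is not an Artin--Schreier trivial polynomial. Because $f$ involves only odd exponents, it cannot be of the form $g^2+g+c$ with $g$ nonconstant unless it reduces to a lower odd-degree form. Reducing $f$ modulo $X^{2k}\mapsto X^k$-type substitutions preserves $\mathrm{Tr}$, so we may assume $f$ is in reduced form. Then $\mathrm{Tr}\circ f\equiv 0$ forces $f=0$, that is, $c_f=0$.

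The Weil bound itself comes from the Artin--Schreier curve $Y^2+Y=f(X)$, which has genus $(D-1)/2$, together with the Hasse--Weil bound; I would cite this rather than reprove it. Combining the three steps yields $2^{m-1}-(t-1)2^{m/2}\le w\le 2^{m-1}+(t-1)2^{m/2}$.
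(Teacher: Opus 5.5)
Your argument is correct and is the standard proof of this result. The paper does not prove it; it cites Carlitz--Uchiyama and notes that the bound rests on exponential sums and Hasse--Weil. Your ingredients are exactly the right ones: the trace representation of $\mathcal{C}^\perp$ for the narrow-sense code, the identity $w=2^{m-1}-\tfrac12 S(f)$, and Weil's bound $|S(f)|\le(\deg f-1)2^{m/2}$ with $\deg f\le 2t-1$. The narrow-sense reading is needed, since with starting index $b=0$ the all-ones word lies in $\mathcal{C}^\perp$ and already breaks the bound for $t=1$.

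Your technical point 2 is unnecessary, and its last claim is false in general. Over $\F_{16}$ one has $\mathrm{Tr}(\gamma^5)=0$ for every $\gamma$, so $f=X^5\neq 0$ gives $c_f=0$ once $t\ge 3$. All you need is that $c_f\neq 0$ forces $f\neq 0$. A nonzero $f$ with only odd exponents has odd degree, which is coprime to $2$ and never of the form $g^2+g+c$, so Weil's bound applies directly.
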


\section{2D Cyclic Codes}
\label{sec:2dcyclic}
The theory of two-dimensional cyclic codes is more complex than the one-dimensional case.
\begin{definition}
    A two-dimensional (2D) cyclic code is an ideal in $R = \F_q[x, y] / \idlGens{x^\ell - 1, y^m - 1}$.
\end{definition}
\noindent The ring $\F_q[x, y]$ is not a Euclidean domain and ideals are not principal. A single code may therefore be defined by a number of ``generator'' polynomials $\idlGens{g_1, \dots, g_r}$. The majority of the existing 2D cyclic code literature is concentrated on establishing canonical generators and determining the code parameters through polynomial algorithms such as Gröbner basis \cite{cox2015ideals}. In the special case that $R$ is semisimple, we are able to mimic the one-dimensional case. Relevant results on idempotent generators may be found sprinkled throughout the classical group algebra codes literature. Given the relationship between two-dimensional cyclic codes ($R$) and BB codes ($R \times R$), we include a self-contained, comprehensive theory of semisimple 2D cyclic codes that mimics the 1D case as closely as possible. This theory will be used in the following sections to analyze and construct BB codes.

By the Chinese Remainder Theorem, if $\gcd(\ell, m) = 1$, the underlying coordinate group $\Z_\ell \times \Z_m$ is isomorphic to the cyclic group $\Z_{\ell m}$. As a result, the ring $R$, which is isomorphic to the group algebra $\F_q[\Z_\ell \times \Z_m]$, is isomorphic to the 1D group algebra $\F_q[\Z_{\ell m}] \cong \F_q[z] / \idlGens{z^{\ell m} - 1}$.  In this scenario, the 2D grid effectively ``unrolls'' into a single line, and the study of 2D cyclic codes reduces entirely to the classical theory of 1D cyclic codes of length $\ell m$. Therefore, a genuinely two-dimensional theory only emerges when $\gcd(\ell, m) \neq 1$.

Rigorously, let $R = \F_q[x, y] / \idlGens{x^\ell - 1, y^m - 1}$ and consider the case where $\gcd(\ell, m) = 1$. Set $s = \ell m$, define a ring homomorphism $\Phi:\F_q[x, y] \longrightarrow \F_q[z]/ \idlGens{z^s - 1}$ by its action on generators,
    $\Phi(x) = z^m$, $\Phi(y) = z^\ell$,
and extend $\F_q$-linearly and multiplicatively. For a polynomial $f(x,y) = \sum_{i, j} a_{i, j }x^i y^j$, this gives
\begin{equation*}
    \Phi(f) = \sum_{i, j} a_{i, j} z^{i m + j \ell} \pmod{z^s - 1}.
\end{equation*}
Since
\begin{equation*}
    \Phi(x^\ell - 1) = (z^m)^\ell - 1 = z^{\ell m} - 1 = z^s - 1 \equiv 0,
\end{equation*}
\begin{equation*}
    \Phi(y^m - 1) = (z^{\ell})^m - 1 = z^{\ell m} - 1 = z^s - 1 \equiv 0,
\end{equation*}
the ideal $\idlGens{x^\ell - 1, y^m - 1}$ is contained in $\ker\Phi$. Hence $\Phi$ induces a well-defined ring homomorphism
\begin{equation*}
    \overline{\Phi}: \F_q[x, y] / \idlGens{x^\ell - 1, y^m - 1} \longrightarrow \F_q[z]/\idlGens{z^s - 1}
\end{equation*}
given by
\begin{equation*}
    \overline{\Phi}\big(\overline{f(x, y)}\big) = f\big(z^{m}, z^{\ell}\big) \pmod{z^s - 1},
\end{equation*}
where $\overline{f(x,y)}$ stands for the image of $f(x, y)$ in $R$.

Since $\gcd(\ell, m) = 1$, there exist integers $u, v \in \Z$ such that $u \ell + v m = 1$. In $\F_q[z] / \idlGens{z^s - 1}$, the element $z$ is a unit, so we have 
\begin{equation*}
    z = (z^m)^v (z^\ell)^u = \overline{\Phi}(x)^v \, \overline{\Phi}(y)^u.
\end{equation*}
Therefore $z \in \im \overline{\Phi}$. It follows that $\overline{\Phi}$ is surjective since $\F_q[z] / \idlGens{z^s - 1}$ is generated as a ring by $z$. In this case, all 2D cyclic codes in $R$ are essentially 1D cyclic codes in $\F_q[z] / \idlGens{z^s - 1}$.

Assuming that $\gcd(\ell, m) \neq 1$, the structure of the ring $R = \F_q[x, y] / \idlGens{x^\ell - 1, y^m - 1}$ fundamentally bifurcates based on whether it is semisimple. By examining the formal derivatives, $x^\ell - 1$ and $y^m - 1$ have distinct roots over the algebraic closure of $\F_q$ if and only if the characteristic of $\F_q$ ($p$) divides neither $\ell$ nor $m$. Since $q$ is a power of the characteristic, $R$ is semisimple if and only if both $\gcd(\ell, q) = 1$ and $\gcd(m, q) = 1$. This is concisely  captured by a single requirement: $\gcd(\ell m, q) = 1$. Consequently, the study of 2D cyclic codes naturally splits into two distinct cases: the semisimple case where $\gcd(\ell m, q) = 1$, and the non-semisimple (or repeated-root) case where $\gcd(\ell m, q) \neq 1$. For the rest of the paper, we will assume $R$ is semisimple and leave the non-semisimple case to future work.

\subsection{Orbits}
Let $\alpha$ and $\beta$ be primitive $\ell$-th and $m$-th roots of unity, respectively, in a suitable extension field of $\F_q$.
\begin{definition}
    The common zero set of the polynomials $x^\ell - 1$ and $y^m - 1$ is the \emph{root grid}
    \begin{equation*}
        \mathcal{G}_{\alpha, \beta} = \{(\alpha^i, \beta^j) \mid 0 \leq i < \ell, \ 0 \leq j < m\}.
    \end{equation*}
    For an ideal $I \subseteq R$, the \emph{zero set} of $I$ is
    \begin{equation*}
        \mathcal{Z}(I) = \{(\mu, \nu) \in \mathcal{G}_{\alpha,\beta} \mid f(\mu, \nu) = 0 \ \text{for all } f \in I\}.
    \end{equation*}
    Equivalently, $(\mu, \nu) \in \mathcal{Z}(I)$ if and only if every codeword polynomial vanishes when evaluated at $(\mu, \nu)$. 
\end{definition}

\begin{remark}
    In general, an ideal of a polynomial ring is not uniquely determined by its set of common roots. Hilbert's  Nullstellensatz says this happens if and only if the ideal is radical.
\end{remark}

Orbits will generalize $q$-cosets to two dimensions.
\begin{definition}[$q$-Frobenius Orbits]
    \label{def:Frobenius-orbits}
    The \emph{$q$-Frobenius orbit} of $(i, j) \in \Z_\ell \times \Z_m$ is
    \begin{equation*}
    O_{(i, j)} = \{(q^t i, q^t j) \in \Z_\ell \times \Z_m \mid t \geq 0\}.
    \end{equation*}
    The set of all $q$-Frobenius orbits in $\Z_\ell \times \Z_m$ will be denoted by $\Omega = \mathrm{Orb}_q(\Z_\ell \times \Z_m)$.
\end{definition}
\noindent Similar to cyclotomic cosets, we will often shorten ``$q$-Frobenius orbit'' as appropriate to avoid repetition when the meaning is apparent from context.

\begin{example}\label{example:5x7grid}
    Figure~\ref{fig:orbit-grid-4-5-7-colored} shows the 4-Frobenius orbits of the grid $\Z_5 \times \Z_7$.
  
    \begin{figure}[htp]
        \centering
        \begin{tikzpicture}[x=0.78cm,y=0.78cm]

            \tikzset{
              orb0/.style={fill=cbDarkGray},    
              orbA/.style={fill=cbLightGray},   
              orbB/.style={fill=cbMediumGray},  
              orbC/.style={fill=cbOrange},      
              orbD/.style={fill=cbSkyBlue},     
              orbE/.style={fill=cbBluishGreen}, 
              orbF/.style={fill=cbVermillion},  
              orbG/.style={fill=cbBlue},        
              orbH/.style={fill=cbRedPurple}    
            }
            
            \newcommand{\cell}[3]{\fill[#3] (#1,#2) rectangle ++(1,1);}
            
            \cell{0}{0}{orb0}
            
            \foreach \i/\j in {0/1,0/4,0/2} { \cell{\i}{\j}{orbA} }
            
            \foreach \i/\j in {0/3,0/5,0/6} { \cell{\i}{\j}{orbB} }
            
            \foreach \i/\j in {1/0,4/0} { \cell{\i}{\j}{orbC} }
            
            \foreach \i/\j in {2/0,3/0} { \cell{\i}{\j}{orbD} }
            
            \foreach \i/\j in {1/1,4/4,1/2,4/1,1/4,4/2} { \cell{\i}{\j}{orbE} }
            
            \foreach \i/\j in {1/3,4/5,1/6,4/3,1/5,4/6} { \cell{\i}{\j}{orbF} }
            
            \foreach \i/\j in {2/1,3/4,2/2,3/1,2/4,3/2} { \cell{\i}{\j}{orbG} }
            
            \foreach \i/\j in {2/3,3/5,2/6,3/3,2/5,3/6} { \cell{\i}{\j}{orbH} }
            
            \draw[step=1,gray!50] (0,0) grid (5,7);
            \draw[gray!80,thick] (0,0) rectangle (5,7);
            
            \foreach \i in {0,...,4} {
              \node[below] at (\i+0.5,-0.15) {\small $\i$};
            }
            \foreach \j in {0,...,6} {
              \node[left] at (-0.15,\j+0.5) {\small $\j$};
            }
            \node[below] at (2.5,-0.55) {\small $i\in\Z_5$};
            \node[left,rotate=90] at (-1,3.5) {\small $j\in\Z_7$};
            
            \begin{scope}[shift={(5.35,0.35)}]
            \node[anchor=west] at (0,6.6) {\small Orbit key and size};
            
            \fill[orb0] (0,5.95) rectangle (0.35,6.30);
            \node[anchor=west] at (0.45,6.13) {\scriptsize $|O_{(0,0)}|=1$};
            
            \fill[orbA] (0,5.45) rectangle (0.35,5.80);
            \node[anchor=west] at (0.45,5.63) {\scriptsize $|O_{(0,1)}|=3$};
            
            \fill[orbB] (0,4.95) rectangle (0.35,5.30);
            \node[anchor=west] at (0.45,5.13) {\scriptsize $|O_{(0,3)}|=3$};
            
            \fill[orbC] (0,4.45) rectangle (0.35,4.80);
            \node[anchor=west] at (0.45,4.63) {\scriptsize $|O_{(1,0)}|=2$};
            
            \fill[orbD] (0,3.95) rectangle (0.35,4.30);
            \node[anchor=west] at (0.45,4.13) {\scriptsize $|O_{(2,0)}|=2$};
            
            \fill[orbE] (0,3.45) rectangle (0.35,3.80);
            \node[anchor=west] at (0.45,3.63) {\scriptsize $|O_{(1,1)}|=6$};
            
            \fill[orbF] (0,2.95) rectangle (0.35,3.30);
            \node[anchor=west] at (0.45,3.13) {\scriptsize $|O_{(1,3)}|=6$};
            
            \fill[orbG] (0,2.45) rectangle (0.35,2.80);
            \node[anchor=west] at (0.45,2.63) {\scriptsize $|O_{(2,1)}|=6$};
            
            \fill[orbH] (0,1.95) rectangle (0.35,2.30);
            
            \node[anchor=west] at (0.45,2.13) {\scriptsize $|O_{(2,3)}|=6$};
            \end{scope}
        \end{tikzpicture}
        \caption{The $4$-Frobenius orbit decomposition of $\Z_5 \times \Z_7$. Cells with the same shading lie in the same orbit.}
        \label{fig:orbit-grid-4-5-7-colored}
    \end{figure}
\end{example}

As before, we can say something about the parameters of the cyclic code by counting roots. We collect a few results in this direction. For $i \in \Z_\ell$ and $j \in \Z_m$, define 
\begin{equation*}
    \mathbf{t}_\ell (i) = \frac{\ell}{\gcd(\ell, i)}, \qquad \mathbf{t}_m(j) = \frac{m}{\gcd(m, j)}.
\end{equation*}
\begin{lemma}\label{lem:orbit-length-rigorous}
    Let $(i, j)\in \Z_\ell \times \Z_m$. Then, $|O_{(i, j)}| = \mathrm{lcm} \bigl( \mathrm{ord}_{\mathbf{t}_\ell (i)}(q), \mathrm{ord}_{\mathbf{t}_m(j)}(q)\bigr)$.
\end{lemma}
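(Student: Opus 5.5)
The plan is to identify the orbit size with the period of $t \mapsto (q^t i, q^t j)$ in $\Z_\ell \times \Z_m$. Multiplication by $q$ is a bijection of $\Z_\ell\times\Z_m$ because $\gcd(q,\ell m)=1$. So the sequence $(q^t i, q^t j)$ is purely periodic. Hence $|O_{(i,j)}|$ equals the least $r\geq 1$ with $q^r i \equiv i \pmod \ell$ and $q^r j\equiv j \pmod m$: the points $(q^t i,q^t j)$ for $0\le t<r$ are pairwise distinct, since a coincidence at $t<t'$ would, after multiplying by the inverse power of $q$, give a smaller period.

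Next I would decouple the two coordinates. The condition $q^r i\equiv i \pmod \ell$ is equivalent to $\ell \mid (q^r-1)i$. Dividing by $g=\gcd(\ell,i)$, this becomes $\mathbf{t}_\ell(i) \mid (q^r-1)\,(i/g)$. Since $\gcd(\ell/g, i/g)=1$, this holds if and only if $\mathbf{t}_\ell(i)\mid q^r-1$, i.e. $q^r\equiv 1 \pmod{\mathbf{t}_\ell(i)}$. Here $q$ is invertible modulo $\mathbf{t}_\ell(i)$, because $\mathbf{t}_\ell(i)$ divides $\ell$, which is coprime to $q$. So the set of such $r$ is exactly the set of multiples of $\mathrm{ord}_{\mathbf{t}_\ell(i)}(q)$. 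The same argument in the second coordinate shows the valid $r$ are the multiples of $\mathrm{ord}_{\mathbf{t}_m(j)}(q)$.

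The least $r\ge1$ lying in both sets of multiples is the least common multiple, which gives the formula. The degenerate cases $i=0$ or $j=0$ need a separate remark. There $\gcd(\ell,0)=\ell$, so $\mathbf{t}_\ell(0)=1$, and $\mathrm{ord}_1(q)=1$ under the convention that everything is $\equiv 1 \pmod 1$. The formula then reduces to the one-dimensional cyclotomic coset length, consistent with the $4$-orbits of Example~\ref{example:5x7grid}.

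The only step needing care is the cancellation $\ell \mid (q^r-1)i \iff \mathbf{t}_\ell(i)\mid q^r-1$. This is a standard coprimality argument, so I expect no real obstacle. The main point to state explicitly is that the orbit is a cycle rather than merely eventually periodic, which is where the semisimplicity hypothesis $\gcd(q,\ell m)=1$ is used.
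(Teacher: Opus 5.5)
Your proposal is correct and follows essentially the same route as the paper: reduce $\ell \mid (q^r-1)i$ to $\mathbf{t}_\ell(i) \mid q^r-1$ by cancelling the unit $i/\gcd(\ell,i)$, do the same in the second coordinate, and take the least common return time. Your extra remarks make explicit two steps the paper leaves implicit: invertibility of $q$ modulo $\ell m$ makes the orbit a genuine cycle whose size is the least return time, and each coordinate's admissible return times are exactly the multiples of the corresponding order, which is what justifies the $\mathrm{lcm}$.
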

\begin{proof}
    Let $g = \gcd(\ell, i)$ so that $i = g u$ for some integer $u$ with $\gcd(u, \ell / g) = 1$. Write $d = \mathbf{t}_\ell (i) = \ell / g$.  Then
    \begin{equation*}
        q^t i\equiv i \pmod \ell \iff \ell \mid (q^t - 1) i \iff \ell \mid (q^t - 1) g u \iff d \mid (q^t - 1) u.
    \end{equation*}
    Since $\gcd(u, d) = 1$, $u$ is a unit modulo $d$, so $d \mid (q^t - 1) u$ is equivalent to $d \mid (q^t - 1)$, or $q^t \equiv 1 \pmod d$. Therefore the least positive $t$ with $q^t i \equiv i \pmod \ell$ is $\mathrm{ord}_d(q)$.
    
    Denoting $e=\mathbf{t}_m(j)$, 
    the same argument shows that the least positive $t^\prime$ with $q^{t^\prime} j \equiv j \pmod m$ is
    $\mathrm{ord}_{e}(q)$. The orbit length is the least $t$ satisfying both congruences. But this is the least
    common multiple of these two orders.
\end{proof}

This result generalizes the size of the $q$-cosets above. If $i = 0$, then $\mathbf{t}_\ell (i) = 1$ and $\mathrm{ord}_{\mathbf{t}_\ell (i)}(q) = 1$, hence $|O_{(0, j)}| = \mathrm{ord}_{ \mathbf{t}_m(j)}(q)$. Similarly, $|O(i, 0)| = \mathrm{ord}_{\mathbf{t}_\ell (i)}(q)$. 
\medskip

To systematically count the roots and determine the size of the orbits, it is helpful to partition the index set into strata based on the divisors defined above. 
For $d \mid \ell, \, e \mid m$ we define 
\begin{equation*}
    \Lambda_{d, e} = \{(i, j) \in \Z_\ell \times \Z_m \mid \mathbf{t}_\ell(i) = d, \mathbf{t}_m(j) = e\}.
\end{equation*} 
By grouping together the coordinate pairs that share the same respective additive orders, we can count the elements in each stratum directly.

\begin{lemma}\label{lem:strata-sizes-rigorous}
    For each $d \mid \ell$ and $e \mid m$, we have $|\Lambda_{d, e}| = \phi(d) \phi(e)$, where $\phi(\cdot)$ is Euler's totient function.
\end{lemma}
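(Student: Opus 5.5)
The stratum $\Lambda_{d,e}$ is a Cartesian product of two one-dimensional sets. Indeed, the condition $\mathbf{t}_\ell(i) = d$ involves only $i$, and $\mathbf{t}_m(j) = e$ involves only $j$. So
\[
\Lambda_{d,e} = \{ i \in \Z_\ell \mid \mathbf{t}_\ell(i) = d\} \times \{ j \in \Z_m \mid \mathbf{t}_m(j) = e\},
\]
and it suffices to show that, for $d \mid \ell$, the set $\{ i \in \Z_\ell \mid \ell/\gcd(\ell,i) = d\}$ has exactly $\phi(d)$ elements. The same statement then holds for $m$ and $e$, and the product gives $|\Lambda_{d,e}| = \phi(d)\phi(e)$.

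For the one-dimensional count, I would note that $\mathbf{t}_\ell(i) = \ell/\gcd(\ell,i)$ is the additive order of $i$ in $\Z_\ell$. The condition $\mathbf{t}_\ell(i) = d$ is the same as $\gcd(\ell,i) = \ell/d$. Set $g = \ell/d$. Every such $i$ is a multiple of $g$, so I would write $i = g u$ with $0 \le u < d$. Then $\gcd(\ell, g u) = g \gcd(d, u)$, and this equals $g$ exactly when $\gcd(u,d) = 1$. This is the same manipulation already used in the proof of Lemma~\ref{lem:orbit-length-rigorous}. The map $u \mapsto g u$ is a bijection from $\{u \in \Z_d \mid \gcd(u,d)=1\}$ onto the target set, so the set has $\phi(d)$ elements.

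Equivalently, one can say that the elements of order $d$ in the cyclic group $\Z_\ell$ are precisely the generators of its unique subgroup of order $d$, and there are $\phi(d)$ of them. I expect no real obstacle here. The only point needing care is the convention at $i = 0$: there $\gcd(\ell, 0) = \ell$, so $\mathbf{t}_\ell(0) = 1$. This matches $\phi(1) = 1$, because the only $u \in \Z_1$ is $u = 0$, which is coprime to $1$. As a sanity check, summing over $d \mid \ell$ recovers the identity $\sum_{d \mid \ell} \phi(d) = \ell$, confirming that the strata partition $\Z_\ell \times \Z_m$.
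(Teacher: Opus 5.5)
Your proposal is correct and follows essentially the same route as the paper: rewrite $\mathbf{t}_\ell(i)=d$ as $\gcd(\ell,i)=\ell/d$, parametrize such $i$ as $(\ell/d)u$ with $\gcd(u,d)=1$ to get $\phi(d)$ choices, and multiply by the independent count $\phi(e)$ for $j$. Your explicit check that $u\mapsto (\ell/d)u$ is a bijection, and your handling of the $i=0$ case, are slightly more careful than the paper's wording but add nothing essentially new.
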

\begin{proof}
    The condition $\mathbf{t}_\ell (i) = d$ is equivalent to $\gcd(\ell, i) = \ell / d$, so, $i = \frac{\ell}{d}u$ with $\gcd(u, d) = 1$. Since distinct $u \in (\Z / d \Z)^\times$ yield distinct residues $i \in \Z_\ell$, there are exactly $\phi(d)$ choices for $i$. Similarly, there are $\phi(e)$ choices for $j$, and the choices are independent. The result follows.
\end{proof}

\begin{proposition}\label{prop:orbit-census-by-divisors}
    We follow the notation of Lemma~\ref{lem:strata-sizes-rigorous}. Then, all points in $\Lambda_{d, e}$ have the same orbit length $L(d, e) = \lcm(\ord_d(q), \ord_e(q))$.
    Consequently, the number of $q$-Frobenius orbits contained in $\Lambda_{d,e}$ is
    \begin{equation*}
        \frac{|\Lambda_{d, e}|}{L(d, e)} = \frac{\phi(d) \phi(e)}{\lcm(\ord_d(q), \ord_e(q))}.
    \end{equation*}
    In particular, the multiset of orbit sizes in $\Z_\ell \times \Z_m$ is determined by the values $\ord_d(q)$ for $d \mid \ell$ and $\ord_e(q)$ for $e \mid m$.
\end{proposition}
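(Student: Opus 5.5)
The plan is to combine Lemma~\ref{lem:orbit-length-rigorous} with Lemma~\ref{lem:strata-sizes-rigorous}. The key observation is that the length formula in Lemma~\ref{lem:orbit-length-rigorous} depends on $(i,j)$ only through the pair $(\mathbf{t}_\ell(i), \mathbf{t}_m(j))$. For $(i,j) \in \Lambda_{d,e}$ this pair is exactly $(d,e)$ by definition. Hence $|O_{(i,j)}| = \lcm(\ord_d(q), \ord_e(q)) = L(d,e)$ for every point of the stratum, which is the first claim.

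Next I will check that each stratum is a union of whole orbits, so that the count makes sense. Since $\gcd(q,\ell m)=1$, $q$ is a unit modulo $\ell$. Multiplying by a unit does not change $\gcd(\ell, i)$, so $\mathbf{t}_\ell(q^t i) = \mathbf{t}_\ell(i)$, and likewise $\mathbf{t}_m(q^t j) = \mathbf{t}_m(j)$. Thus the Frobenius action preserves each $\Lambda_{d,e}$, and the orbits partition it. I expect this invariance step to be the only point needing care, since it is where the semisimplicity hypothesis enters.

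Each orbit in $\Lambda_{d,e}$ has exactly $L(d,e)$ elements, so the number of orbits is $|\Lambda_{d,e}|/L(d,e)$. Lemma~\ref{lem:strata-sizes-rigorous} gives $|\Lambda_{d,e}| = \phi(d)\phi(e)$, which yields the displayed formula.

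For the final statement, note that the strata $\Lambda_{d,e}$ with $d\mid\ell$ and $e\mid m$ partition $\Z_\ell\times\Z_m$, because every $i$ has some additive order $\mathbf{t}_\ell(i)$ dividing $\ell$ and every $j$ has some $\mathbf{t}_m(j)$ dividing $m$. The multiset of orbit sizes is therefore the union over $(d,e)$ of $\phi(d)\phi(e)/L(d,e)$ copies of $L(d,e)$. This depends only on the divisors of $\ell$ and $m$ (through $\phi$, which is fixed) and on the values $\ord_d(q)$ and $\ord_e(q)$.
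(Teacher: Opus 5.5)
Your proof is correct and follows the paper's argument. You apply Lemma~\ref{lem:orbit-length-rigorous} to see that the orbit length depends only on $(\mathbf{t}_\ell(i),\mathbf{t}_m(j))=(d,e)$, then divide $|\Lambda_{d,e}|=\phi(d)\phi(e)$ by that common length; you additionally verify that each $\Lambda_{d,e}$ is Frobenius-invariant (via $\gcd(\ell,q^t i)=\gcd(\ell,i)$) and that the strata partition $\Z_\ell\times\Z_m$, both of which the paper uses without comment.
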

\begin{proof}
    If $(i, j) \in \Lambda_{d, e}$ then $d(i) = d$ and $e(j) = e$, so by Lemma~\ref{lem:orbit-length-rigorous} we have 
    \begin{equation*}
        |O_{(i, j)}| = \lcm(\ord_d(q), \ord_e(q)) = L(d, e),
    \end{equation*}
    which depends only on $(d, e)$ and not on $(i, j)$ within the stratum. Since $\Lambda_{d, e}$ is a disjoint union of orbits all of the same size $L(d, e)$, the number of orbits in $\Lambda_{d, e}$ is $|\Lambda_{d, e}| / L(d, e)$, and Lemma~\ref{lem:strata-sizes-rigorous} gives $|\Lambda_{d, e}| = \phi(d) \phi(e)$.
\end{proof}

Let $\varphi : \Z_\ell \times \Z_m \to \Z_\ell \times \Z_m$ be the permutation corresponding to the Frobenius automorphism so that $\varphi(i, j) = (qi, qj)$. Let $\mathrm{Fix}(\varphi^t)$ denote the number of points of $\Z_\ell \times \Z_m$ fixed by $\varphi^t$, 
\begin{equation*}
    \mathrm{Fix}(\varphi^t) = \#\{(i, j) \in \Z_\ell \times \Z_m \mid \varphi^t(i, j) = (i, j)\}.
\end{equation*}

\begin{lemma}\label{lem:fixed-points-rigorous}
    Let $t$ be a nonnegative integer. Then we have 
    \begin{align*}
        \mathrm{Fix}(\varphi^t) &= \#\{(i, j) \in \Z_\ell \times \Z_m \mid q^t i \equiv i \!\!\! \pmod \ell, \, q^t j \equiv j \!\!\! \pmod m \}\\
        &= \gcd(\ell, q^t - 1) \gcd(m, q^t - 1).
    \end{align*}
\end{lemma}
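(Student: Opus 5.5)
The first equality holds by definition: $\varphi^t(i,j) = (q^t i, q^t j)$, and equality with $(i,j)$ in $\Z_\ell \times \Z_m$ means exactly that both coordinate congruences hold. So the substance is the count. Because the two congruence conditions involve separate coordinates, the solution set is the Cartesian product
\[
\{i \in \Z_\ell : (q^t-1)i \equiv 0 \pmod \ell\} \times \{j \in \Z_m : (q^t-1)j \equiv 0 \pmod m\}.
\]
Hence $\mathrm{Fix}(\varphi^t)$ is the product of the sizes of the two factors. It therefore suffices to prove the one-dimensional claim: for any integer $c$, the number of $i \in \Z_\ell$ with $ci \equiv 0 \pmod \ell$ equals $\gcd(\ell, c)$.

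For that claim I would look at the kernel of the group endomorphism $\Z_\ell \to \Z_\ell$, $i \mapsto ci$. Set $g = \gcd(\ell, c)$ and write $\ell = g\ell'$, $c = gc'$ with $\gcd(\ell', c') = 1$. Then
\[
\ell \mid ci \iff g\ell' \mid gc'i \iff \ell' \mid c'i \iff \ell' \mid i,
\]
where the last step uses that $c'$ is invertible modulo $\ell'$. This is the same coprimality argument used in the proof of Lemma~\ref{lem:orbit-length-rigorous}. The multiples of $\ell'$ in $\Z_\ell$ are $0, \ell', 2\ell', \dots, (g-1)\ell'$, so there are exactly $g$ of them.

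Now apply this with $c = q^t - 1$ for both moduli $\ell$ and $m$ and multiply the two counts to get $\gcd(\ell, q^t-1)\gcd(m, q^t-1)$.

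One degenerate case needs checking. When $t = 0$ we have $c = 0$, every point is fixed, and the formula correctly gives $\gcd(\ell, 0)\gcd(m, 0) = \ell m$. The general argument covers this as long as we use the convention $\gcd(\ell, 0) = \ell$, under which $\ell' = 1$.

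The proof is elementary throughout. The only step that needs care is the coprimality reduction $\ell' \mid c'i \Rightarrow \ell' \mid i$, and that is standard.
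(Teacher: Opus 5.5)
Your proof is correct and follows the paper's argument: you split the fixed-point condition into the independent congruences $(q^t-1)i\equiv 0 \pmod{\ell}$ and $(q^t-1)j\equiv 0 \pmod{m}$, count the solutions of each as a gcd, and multiply. The only difference is that you prove the one-variable count $\gcd(\ell,c)$ explicitly and check the case $t=0$, whereas the paper simply cites that count as a standard fact.
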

\begin{proof}
    The condition $\varphi^t(i, j) = (i, j)$ is equivalent to
    \begin{equation*}
        (q^t - 1) i \equiv 0 \pmod \ell, \quad (q^t - 1) j \equiv 0 \pmod m.
    \end{equation*}
    For a general modulus $n$ and integer $a$, the congruence $a x \equiv 0 \pmod n$ has exactly $\gcd(n, a)$ solutions modulo $n$. Applying this with $(n, a) =(\ell, q^t - 1)$ gives $\gcd(\ell, q^t - 1)$ choices for $i$, and similarly $\gcd(m, q^t - 1)$ choices for $j$. Independence of the coordinates yields the product.
\end{proof}

Define $s = \lcm(\ord_\ell(q), \ord_m(q))$. Then $\varphi^s = \mathrm{id}_{\Z_\ell \times \Z_m}$, so the cyclic group $\langle \varphi \rangle \cong \Z_s$ acts on $\Z_\ell \times \Z_m$. The following lemma is well-known.
\begin{lemma}[Burnside's Lemma]
\label{lem:burnside}
    If a finite group $G$ acts on a finite set $X$, the number of orbits is
    \begin{equation*}
        \#(X / G) = \frac{1}{|G|} \sum_{g \in G} \mathrm{Fix}(g).
    \end{equation*}
\end{lemma}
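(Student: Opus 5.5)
The plan is the standard double-counting argument, phrased so that it can be applied directly with $G=\langle\varphi\rangle$ and $X=\Z_\ell\times\Z_m$ in the next step. Consider the incidence set
\begin{equation*}
    \mathcal{S} = \{(g, x) \in G \times X \mid g\cdot x = x\}
\end{equation*}
and count its elements in two ways. Summing over the first coordinate gives $|\mathcal{S}| = \sum_{g\in G}\mathrm{Fix}(g)$ immediately, where $\mathrm{Fix}(g)$ is the number of points fixed by $g$, exactly as $\mathrm{Fix}(\varphi^t)$ was defined above. Summing over the second coordinate gives $|\mathcal{S}| = \sum_{x\in X}|G_x|$, where $G_x$ is the stabilizer subgroup of $x$.

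The key step is the orbit--stabilizer theorem, $|G_x|\,|G\cdot x| = |G|$. I would prove it by checking that the map $gG_x \mapsto g\cdot x$ from left cosets of $G_x$ onto the orbit $G\cdot x$ is well defined and injective. Both properties follow from the equivalence
\begin{equation*}
    g\cdot x = h\cdot x \iff h^{-1}g \in G_x .
\end{equation*}
The map is surjective by the definition of the orbit. Lagrange's theorem then gives $|G|/|G_x| = |G\cdot x|$.

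With this in hand, $\sum_{x\in X}|G_x| = |G|\sum_{x\in X} 1/|G\cdot x|$. Now split $X$ into its disjoint orbits. Each orbit $O$ contributes $\sum_{x\in O}1/|O| = 1$, so the total is $|G|\cdot\#(X/G)$. Dividing by $|G|$ gives the formula.

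No step is a real obstacle; the only point needing care is the well-definedness and injectivity of the coset map in the orbit--stabilizer step, which is the equivalence displayed above. For the intended application, $G\cong\Z_s$ with $s=\lcm(\ord_\ell(q),\ord_m(q))$, and combining the result with Lemma~\ref{lem:fixed-points-rigorous} will give
\begin{equation*}
    |\Omega| = \frac1s\sum_{t=0}^{s-1}\gcd(\ell,q^t-1)\gcd(m,q^t-1).
\end{equation*}
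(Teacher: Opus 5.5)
Your proof is correct. The double count of $\{(g,x) : g\cdot x = x\}$, combined with orbit--stabilizer, is the standard argument and every step is sound. In particular, well-definedness and injectivity of $gG_x\mapsto g\cdot x$ both follow from $g\cdot x=h\cdot x\iff h^{-1}g\in G_x$, as you say.

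The paper itself gives no proof of this lemma. It quotes the result as well known and uses it only to derive Proposition~\ref{prop:burnside} with $G=\langle\varphi\rangle$, exactly as you anticipate. Your write-up therefore just supplies the textbook argument the paper omits.
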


\begin{proposition}\label{prop:burnside}
    The number of $q$-Frobenius orbits in $\Z_\ell \times \Z_m$ is
    \begin{equation*}
        \#\Omega = \frac{1}{s} \sum_{t = 0}^{s - 1} \mathrm{Fix}(\varphi^t) = \frac{1}{s} \sum_{t = 0}^{s - 1} \gcd(\ell, q^t - 1) \gcd(m, q^t - 1).
    \end{equation*}
\end{proposition}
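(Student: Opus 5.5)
The plan is to apply Burnside's Lemma (Lemma~\ref{lem:burnside}) to the cyclic group $G=\langle\varphi\rangle$ acting on $X=\Z_\ell\times\Z_m$. Then I will evaluate each fixed-point count with Lemma~\ref{lem:fixed-points-rigorous}.

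The first step is to check that the action is well defined and that $|G|=s$. Since $\gcd(q,\ell m)=1$, the maps $i\mapsto qi$ on $\Z_\ell$ and $j\mapsto qj$ on $\Z_m$ are bijections, so $\varphi$ is a permutation of $X$. Its orbits under iteration are exactly the $q$-Frobenius orbits $O_{(i,j)}$ of Definition~\ref{def:Frobenius-orbits}. The set in that definition uses $t\ge 0$, but because $\varphi$ has finite order, forward iterates already exhaust the $\langle\varphi\rangle$-orbit. Hence $X/G=\Omega$.

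Next I determine the order of $\varphi$. The condition $\varphi^t=\mathrm{id}$ holds iff $q^t\equiv1\pmod\ell$ and $q^t\equiv1\pmod m$. This is the case $i=j=1$ of the fixed-point condition. It is equivalent to $\ord_\ell(q)\mid t$ and $\ord_m(q)\mid t$, that is, $s\mid t$. So $\varphi$ has order exactly $s=\lcm(\ord_\ell(q),\ord_m(q))$, and the elements of $G$ are $\varphi^0,\dots,\varphi^{s-1}$, all distinct. This point matters: if $s$ were only a multiple of the true order, the Burnside average would still give the right answer, but the argument is cleanest when $|G|=s$ is exact.

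Finally, Burnside gives $\#\Omega=\frac1s\sum_{t=0}^{s-1}\mathrm{Fix}(\varphi^t)$. Substituting Lemma~\ref{lem:fixed-points-rigorous} gives $\mathrm{Fix}(\varphi^t)=\gcd(\ell,q^t-1)\gcd(m,q^t-1)$. For $t=0$ the convention $\gcd(n,0)=n$ yields $\ell m$, which is correct. No step is a real obstacle. The only point needing care is the identification $|G|=s$ above, together with the fact that $t\ge0$ orbits coincide with group orbits.
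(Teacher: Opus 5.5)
Your proposal is correct and follows the paper's proof, which applies Burnside's Lemma to $\langle\varphi\rangle$ acting on $\Z_\ell\times\Z_m$ and substitutes the fixed-point count of Lemma~\ref{lem:fixed-points-rigorous}. Your extra checks make explicit details the paper leaves implicit: that $\varphi$ has order exactly $s$, that forward iterates give the group orbits, and the $t=0$ convention $\gcd(n,0)=n$.
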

\begin{proof}
    Apply Burnside's Lemma (Lemma~\ref{lem:burnside}) to the cyclic group $G = \langle \varphi \rangle$ acting on $\Z_\ell \times \Z_m$.
\end{proof}

Since $\varphi^s = \mathrm{id}_{\Z_\ell \times \Z_m}$, every orbit length divides $s$. For each divisor $d\mid s$ set
\begin{equation*}
    A(d) = \mathrm{Fix}(\varphi^d) = \gcd(\ell, q^d - 1) \gcd(m, q^d - 1),
\end{equation*}
and let $B(d)$ be the number of points whose orbit has exactly length $d$.
\begin{lemma}\label{lem:AdBd}
    For every $d \mid s$ one has
        $A(d) = \sum_{r \mid d} B(r)$.
\end{lemma}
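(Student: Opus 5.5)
The plan is to partition the fixed-point set of $\varphi^d$ according to the exact orbit length of each point. First, for a point $P \in \Z_\ell \times \Z_m$, let $r(P) = |O_P|$. The set $\{t \ge 0 : \varphi^t(P) = P\}$ is the stabilizer of $P$ in the cyclic group $\langle \varphi \rangle \cong \Z_s$. It is therefore a subgroup, so it consists exactly of the multiples of the least positive such $t$.

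Second, we identify that least positive $t$ with $|O_P|$. The orbit $\{P, \varphi(P), \dots\}$ consists of the distinct points $\varphi^0(P), \dots, \varphi^{r-1}(P)$, where $r$ is the first return time. So $r(P)$ is both the orbit size and the generator of the stabilizer. The paper's definition of $O_{(i,j)}$, together with Lemma~\ref{lem:orbit-length-rigorous}, already gives orbit size equal to this minimal period. Consequently, $\varphi^d(P) = P$ holds if and only if $r(P) \mid d$.

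Third, we count. The set $\mathrm{Fix}(\varphi^d)$ is the disjoint union, over $r \mid d$, of the sets of points with $r(P) = r$. Taking cardinalities gives
\begin{equation*}
A(d) = \mathrm{Fix}(\varphi^d) = \sum_{r \mid d} B(r).
\end{equation*}
Since every orbit length divides $s$, all divisors $r$ of $d$ are themselves divisors of $s$, so $B(r)$ is defined for each of them. The closed form $A(d) = \gcd(\ell, q^d - 1)\gcd(m, q^d - 1)$ follows directly from Lemma~\ref{lem:fixed-points-rigorous}.

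The only point needing care is the second step, namely that $\varphi^t(P) = P$ forces the minimal period to divide $t$. The plan handles this by division with remainder: write $t = ar + b$ with $0 \le b < r$. Then $\varphi^b(P) = \varphi^t(P) = P$, and minimality of $r$ forces $b = 0$. The rest is bookkeeping.
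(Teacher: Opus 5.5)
Your proof is correct and follows the same route as the paper: you partition $\mathrm{Fix}(\varphi^d)$ by exact orbit length, using that $\varphi^d$ fixes a point exactly when its orbit length divides $d$. The division-with-remainder argument makes explicit a step that the paper asserts without proof.
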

\begin{proof}
    A point $x \in \Z_\ell \times \Z_m$ is fixed by $\varphi^d$ if and only if $\varphi^d(x) = x$, which holds if and only if the orbit length of $x$ divides $d$. Thus, among the points fixed by $\varphi^d$, the orbit length can be any divisor $r$ of $d$, and summing the number of points of each exact orbit length $r$ gives $A(d) = \sum_{r \mid d}B(r)$.
\end{proof}

\begin{proposition}\label{prop:mobius-rigorous}
    For each $d \mid s$,
    \begin{equation*}
        B(d) = \sum_{r \mid d} \mu(r) A \left(\frac{d}{r} \right), \quad\text{and hence}\quad \#\{\text{orbits of length } d\} = \frac{B(d)}{d},
    \end{equation*}
    where $\mu$ is the M\"obius function.
\end{proposition}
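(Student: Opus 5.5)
The plan is to apply M\"obius inversion on the divisor lattice of $s$ to the identity of Lemma~\ref{lem:AdBd}, and then to divide by the orbit length.

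First, observe that $B(r)=0$ unless $r\mid s$: every orbit length divides $s$ because $\varphi^s=\mathrm{id}$. Hence the relation $A(d)=\sum_{r\mid d}B(r)$ of Lemma~\ref{lem:AdBd} holds for every $d\mid s$, and only divisors of $s$ appear on the right-hand side. If $d\mid s$ then every divisor of $d$ also divides $s$, so the divisor-closed set $\{d : d\mid s\}$ is stable under the operations used in inversion. We may therefore treat $A$ and $B$ as arithmetic functions on this set, or extend them by $B(n)=0$ and $A(n)=\sum_{r\mid n}B(r)$ for $n\nmid s$; the values at divisors of $s$ are unaffected. Lemma~\ref{lem:AdBd} then says exactly that $A=B*\mathbf{1}$ in Dirichlet convolution.

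Next, apply the M\"obius inversion formula. Convolving both sides with $\mu$ and using $\mu*\mathbf{1}=\varepsilon$ (the identity for Dirichlet convolution) gives $B=\mu*A$. Concretely, we expand
\[
\sum_{r\mid d}\mu(r)A(d/r)=\sum_{r\mid d}\mu(r)\sum_{t\mid d/r}B(t)=\sum_{t\mid d}B(t)\sum_{r\mid d/t}\mu(r),
\]
and the inner sum equals $1$ if $t=d$ and $0$ otherwise. This swap of summation is the only computation needed. It is legitimate here because every $A(d/r)$ that appears has $d/r\mid s$, so Lemma~\ref{lem:AdBd} applies to it directly and no extension is actually required.

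Finally, the points of exact orbit length $d$ form a disjoint union of orbits, each of size exactly $d$. The number of such orbits is therefore $B(d)/d$; this is the same counting principle used in Proposition~\ref{prop:orbit-census-by-divisors}. The only point that needs care is the one already addressed: Lemma~\ref{lem:AdBd} must hold on a divisor-closed set before inversion can be applied. With that checked, the argument presents no real obstacle.
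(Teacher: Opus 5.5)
Your proposal is correct and takes the same route as the paper: M\"obius inversion of Lemma~\ref{lem:AdBd} over the divisors of $d$, followed by dividing the number of points of exact orbit length $d$ by $d$. The paper simply cites M\"obius inversion in one line; your proposal additionally spells out that the divisors of $s$ form a divisor-closed set and carries out the interchange of summation explicitly.
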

\begin{proof}
    Lemma~\ref{lem:AdBd} gives $A(d )= \sum_{r \mid d} B(r)$ on the divisor poset of $d$. M\"obius inversion on the divisor lattice implies
    \begin{equation*}
        B(d) = \sum_{r \mid d} \mu(r) A(d / r).
    \end{equation*}
    Finally, each orbit of length $d$ contains exactly $d$ points, so the number of such orbits is $B(d) / d$.
\end{proof}

In the semisimple case, the ring $R$ decomposes into a direct sum of finite fields. The structure of this decomposition is entirely dictated by the $q$-Frobenius orbits of the index set. There is a one-to-one correspondence between the distinct $q$-Frobenius orbits and the minimal idempotent generators of $R$, and the length of each orbit determines the degree of the corresponding field extension. Since BB codes are defined by polynomials in $R$, their dimensions and minimum distances depend on how these polynomials project onto the ring's irreducible components.

\begin{example}
    Let $q = 2$, $\ell = 3$, and $m = 3$. These choices satisfy the semisimplicity condition $\gcd(\ell m, q) = \gcd(9, 2) = 1$, but since $\gcd(\ell, m) \neq 1$, the ring does not trivially collapse into a 1D structure. The total number of points in our index set is $9$. We will demonstrate the three propositions above by applying them to $R$ to count the $q$-Frobenius orbits and their lengths.

    \begin{enumerate} 
        \item Proposition~\ref{prop:orbit-census-by-divisors}: The divisors of both $\ell$ and $m$ are $\{1, 3\}$. The multiplicative orders of $q = 2$ modulo these divisors are $\ord_1(2) = 1$ and $\ord_3(2) = 2$. We evaluate the strata $\Lambda_{d, e}$ for all combinations of divisors:
        \begin{itemize}
            \item For $d = 1, e = 1$: $L(1, 1) = \lcm(1, 1) = 1$. The number of points is $\phi(1)\phi(1) = 1$. This yields $1 / 1 = 1$ orbit.
            \item For $d = 1, e = 3$: $L(1, 3) = \lcm(1, 2) = 2$. The number of points is $\phi(1)\phi(3) = 2$. This yields $2 / 2 = 1$ orbit.
            \item For $d = 3, e = 1$: $L(3, 1) = \lcm(2, 1) = 2$. The number of points is $\phi(3)\phi(1) = 2$. This yields $2 / 2 = 1$ orbit.
            \item For $d = 3, e = 3$: $L(3, 3) = \lcm(2, 2) = 2$. The number of points is $\phi(3)\phi(3) = 4$. This yields $4 / 2 = 2$ orbits.
        \end{itemize}
        Summing the orbits from each stratum, we find exactly $1 + 1 + 1 + 2 = 5$ distinct $q$-Frobenius orbits.
        \item Proposition~\ref{prop:burnside}: The maximum orbit length is bounded by $s = \lcm(\ord_3(2), \ord_3(2)) = 2$. We count the fixed points $\mathrm{Fix}(\varphi^t) = \gcd(3, 2^t - 1) \gcd(3, 2^t - 1)$ for $0 \leq t < 2$:
        \begin{itemize}
            \item For $t = 0$: $\gcd(3, 0) \gcd(3, 0) = 3 \times 3 = 9$.
            \item For $t = 1$: $\gcd(3, 1) \gcd(3, 1) = 1 \times 1 = 1$.
        \end{itemize}
        Applying Burnside's formula, the total number of orbits is $\frac{1}{2}(9 + 1) = 5$.
        \item Proposition~\ref{prop:mobius-rigorous}: We evaluate $A(d) = \gcd(3, 2^d - 1)\gcd(3, 2^d - 1)$ for the divisors of $s = 2$, which are $d \in \{1, 2\}$:
        \begin{itemize}
            \item $A(1) = \gcd(3, 1) \gcd(3, 1) = 1$.
            \item $A(2) = \gcd(3, 3) \gcd(3, 3) = 9$.
        \end{itemize}
        We apply M\"obius inversion to find $B(d)$, the exact number of points belonging to orbits of length $d$:
        \begin{itemize}
            \item $B(1) = \mu(1)A(1) = (1)(1) = 1$. This means there is $1 / 1 = 1$ orbit of length $1$.
            \item $B(2) = \mu(1)A(2) + \mu(2)A(1) = (1)(9) + (-1)(1) = 8$. This means there are $8 / 2 = 4$ orbits of length $2$.
        \end{itemize}
        Therefore, we have one orbit of length $1$ (the zero point) and four orbits of length $2$, totaling 5 orbits. 
    \end{enumerate}
\end{example}

\subsection{Idempotents}
Let $K\supseteq \F_q$ be a splitting field containing a primitive $\ell$-th root of unity $\alpha$
and a primitive $m$-th root of unity $\beta$, and set
\begin{equation*}
    R_K = K \otimes_{\F_q} R \cong K[x, y] / \idlGens{x^\ell - 1, y^m - 1}.
\end{equation*}
Since $\gcd(q, \ell) = \gcd(q, m) = 1$, both $x^\ell - 1$ and $y^m - 1$ are squarefree in characteristic
$\mathrm{char}(K)$ and split completely over $K$ into distinct linear factors. Hence the evaluation map
\begin{equation*}
    \operatorname{ev}: R_K \longrightarrow K^{\ell m}, \qquad f \longmapsto \bigl(f(\alpha^i, \beta^j)\bigr)_{(i, j) \in \Z_\ell \times \Z_m}
\end{equation*}
is a $K$-algebra isomorphism, identifying $R_K$ with a direct product of $\ell m$ copies of $K$.

For $(i, j) \in \Z_\ell \times \Z_m$, we denote by $e_{i, j}$ the element of $R_K$ that corresponds to the function defined by
\begin{equation}
    \label{2d_idemp_prop}
    e_{i, j}(\alpha^{i^\prime}, \beta^{j^\prime}) =
    \begin{cases}
        1, & (i^\prime, j^\prime) = (i, j),\\
        0, & \text{otherwise.}
    \end{cases}
\end{equation}
Then the set $\{e_{i, j}\}_{(i, j) \in \Z_\ell \times \Z_m}$ is a \emph{complete set of primitive orthogonal idempotents} in $R_K$. This means that the following four properties hold:
\begin{enumerate}
    \item $e_{i, j}^2 = e_{i, j}$ for every $(i, j)\in \Z_\ell \times \Z_m$,
    \item for every $(i, j) \neq (i^\prime, j^\prime)$ from $\Z_\ell \times \Z_m$, we have $e_{i, j} e_{i^\prime, j^\prime} = 0$,
    \item for every $(i, j) \in \Z_\ell \times \Z_m$, $e_{i, j}$ cannot be written as a sum of two other nonzero orthogonal idempotents, 
    \item $\displaystyle \sum_{(i, j) \in \Z_\ell \times \Z_m} e_{i, j} = 1$.
\end{enumerate}

Let $\sigma : K \to K$ be the $q$-Frobenius field automorphism defined by $\sigma(a) = a^q$, extended coefficient-wise to $R_K$ (still denoted $\sigma$). Applying $\sigma$ to the roots of unity yields $\sigma(\alpha) = \alpha^q$ and $\sigma(\beta) = \beta^q$. Consequently, $\sigma$ acts on the primitive idempotents by shifting their indices exactly according to the permutation $\varphi$ defined earlier: $\sigma(e_{i,j}) = e_{\varphi(i,j)} = e_{qi,qj}$ (with indices reduced modulo $l$ and $m$, respectively). Thus, the field automorphism $\sigma$ permutes the primitive idempotents precisely along the $q$-Frobenius orbits governed by $\varphi$.

There are two explicit formulas for the idempotents; the first is based on Lagrange interpolation and the second on the discrete Fourier transform.  
\begin{lemma}
\label{lem:lagrange-eij}
    For each $(i, j) \in \Z_\ell \times \Z_m$, the primitive idempotent is given by
    \begin{equation}\label{eq:lagrange-eij}
        e_{i, j}(x, y) = L_i(x) M_j(y) \in R_K, 
    \end{equation}
    where
    \begin{equation*}
        L_i(x) = \prod_{\substack{u \in \Z_\ell\\ u \neq i}} \frac{x - \alpha^u}{\alpha^i - \alpha^u} \in K[x] / \idlGens{x^\ell - 1}, \qquad M_j(y) = \prod_{\substack{v \in \Z_m\\ v \neq j}} \frac{y - \beta^v}{\beta^j - \beta^v} \in K[y] / \idlGens{y^m - 1}.
    \end{equation*}
    In particular,
    \begin{equation*}
    e_{i, j}\left(\alpha^{i^\prime}, \beta^{j^\prime}\right) = \delta_{i, i^\prime} \delta_{j, j^\prime} \qquad \forall (i^\prime, j^\prime) \in \Z_\ell \times \Z_m.
    \end{equation*}
\end{lemma}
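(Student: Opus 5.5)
The plan is to check directly that $L_i(x)M_j(y)$, viewed in $R_K$, has exactly the evaluation vector that defines $e_{i,j}$ in \eqref{2d_idemp_prop}. Since $\ev$ is a $K$-algebra isomorphism, an element of $R_K$ is determined by its values on the root grid $\mathcal{G}_{\alpha,\beta}$. So it suffices to prove $L_i(\alpha^{i'})M_j(\beta^{j'})=\delta_{i,i'}\delta_{j,j'}$ for all $(i',j')\in\Z_\ell\times\Z_m$.

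First, $L_i$ and $M_j$ are well-defined polynomials over $K$. The denominators $\alpha^i-\alpha^u$ for $u\neq i$ are nonzero because $\alpha$ is a primitive $\ell$-th root of unity, so $\alpha^i\neq\alpha^u$ whenever $u\not\equiv i \pmod \ell$. The same holds for $\beta$. Then $L_i(x)M_j(y)\in K[x,y]$ has a well-defined image in $R_K$. Evaluation at a grid point $(\alpha^{i'},\beta^{j'})$ factors through the quotient, because $x^\ell-1$ and $y^m-1$ vanish there. This is exactly why $\ev$ is defined on $R_K$.

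Second, compute the one-variable Lagrange values. If $i'\neq i$, the factor $u=i'$ in the product for $L_i$ vanishes at $x=\alpha^{i'}$, so $L_i(\alpha^{i'})=0$. If $i'=i$, every factor equals $1$, so $L_i(\alpha^i)=1$. Hence $L_i(\alpha^{i'})=\delta_{i,i'}$, and likewise $M_j(\beta^{j'})=\delta_{j,j'}$. Evaluation is a ring homomorphism, so
\[
(L_iM_j)(\alpha^{i'},\beta^{j'})=L_i(\alpha^{i'})\,M_j(\beta^{j'})=\delta_{i,i'}\delta_{j,j'}.
\]

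Finally, injectivity of $\ev$ gives $L_iM_j=e_{i,j}$ in $R_K$, and the displayed "in particular" statement is the evaluation identity just shown. There is no real obstacle. The only point needing care is that $\ev$ is an isomorphism, which the paper has already stated using $\gcd(q,\ell m)=1$ (squarefreeness and full splitting, so $\ell m$ distinct grid points). Without that, the interpolating denominators could vanish and evaluation would not determine elements of $R_K$.
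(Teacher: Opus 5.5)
Your proposal is correct and follows essentially the same route as the paper: you verify $L_i(\alpha^{i'})M_j(\beta^{j'})=\delta_{i,i'}\delta_{j,j'}$ factor by factor, and then use that $\ev$ is an isomorphism to identify $L_iM_j$ with $e_{i,j}$ in $R_K$. Your added remarks, that the denominators are nonzero and that evaluation factors through the quotient, make explicit what the paper leaves implicit.
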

\begin{proof}
    Since $x^\ell - 1 = \prod_{u \in \Z_\ell} (x - \alpha^u)$ has distinct roots, the polynomials $\{L_i(x)\}_{i \in \Z_\ell}$ form the standard (Lagrange) basis for functions on $\{\alpha^u\}_{u \in \Z_\ell}$. Indeed, we have $L_i\left(\alpha^{i^\prime}\right) = \delta_{i, i^\prime}$ for all $i^\prime \in \Z_\ell$. Similarly, $M_j\left(\beta^{j^\prime}\right) = \delta_{j, j^\prime}$ for all $j^\prime \in \Z_m$. Therefore, we have
    \begin{equation*}
        (L_iM_j)(\alpha^{i^\prime}, \beta^{j^\prime}) = L_i(\alpha^{i^\prime}) M_j(\beta^{j^\prime}) = \delta_{i, i^\prime} \delta_{j, j^\prime}.
    \end{equation*}
    Hence $L_i(x) M_j(y)$ has the defining evaluation property of $e_{i, j}$ under $\ev$. By the uniqueness of the coordinate idempotents in $K^{\ell m}$ and the fact that $\ev$ is an isomorphism, it follows that $e_{i, j} = L_i(x) M_j(y)$ in $R_K$.
\end{proof}

\begin{lemma}
\label{lem:fourier-eij}
    Let $\alpha$ be a primitive $\ell$-th root of unity and $\beta$ a primitive $m$-th root of unity in $K$. Let $(i, j) \in \Z_\ell \times \Z_m$. Then, 
    \begin{equation}\label{eq:fourier-eij}
        e_{i, j} = \frac{1}{\ell m} \sum_{a \in \Z_\ell} \sum_{b \in \Z_m} \alpha^{-ia} \beta^{-jb} x^a y^b \in R_K.
    \end{equation}
\end{lemma}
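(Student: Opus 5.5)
The plan is to use the uniqueness argument from the proof of Lemma~\ref{lem:lagrange-eij}. Since $\ev: R_K \to K^{\ell m}$ is a $K$-algebra isomorphism, an element of $R_K$ is determined by its values on the root grid $\mathcal{G}_{\alpha,\beta}$. So it suffices to show that the right-hand side of \eqref{eq:fourier-eij}, call it $f_{i,j}$, satisfies $f_{i,j}(\alpha^{i'}, \beta^{j'}) = \delta_{i,i'}\delta_{j,j'}$ for all $(i',j') \in \Z_\ell \times \Z_m$. Before doing so, I will check that $f_{i,j}$ is well defined. The hypothesis $\gcd(q,\ell m)=1$ means $\ell m$ is invertible in $\F_q \subseteq K$, so the factor $1/(\ell m)$ makes sense. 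The exponents $a, b$ are taken modulo $\ell$ and $m$, and $\alpha^{-ia}$ depends only on $a \bmod \ell$, so the sum is a genuine element of $R_K$.

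Evaluating at $(\alpha^{i'}, \beta^{j'})$, the double sum factors as a product of two one-dimensional sums:
\[
f_{i,j}(\alpha^{i'},\beta^{j'}) = \frac{1}{\ell m}\Bigl(\sum_{a\in\Z_\ell}\alpha^{(i'-i)a}\Bigr)\Bigl(\sum_{b\in\Z_m}\beta^{(j'-j)b}\Bigr).
\]
Each factor is a geometric sum of roots of unity. The key step is the orthogonality relation $\sum_{a\in\Z_\ell}\zeta^a = \ell$ if $\zeta = 1$ and $0$ otherwise, for any $\ell$-th root of unity $\zeta$. If $\zeta \neq 1$, then $(\zeta - 1)\sum_a \zeta^a = \zeta^\ell - 1 = 0$, and $\zeta - 1$ is invertible in the field $K$, so the sum vanishes. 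If $\zeta = 1$, the sum is $\ell \cdot 1_K$.

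Now take $\zeta = \alpha^{i'-i}$. Because $\alpha$ is primitive, $\zeta = 1$ exactly when $i' \equiv i \pmod \ell$, so the first factor is $\ell\,\delta_{i,i'}$. The same argument with $\beta$ gives $m\,\delta_{j,j'}$ for the second factor. Multiplying and dividing by $\ell m$ yields $\delta_{i,i'}\delta_{j,j'}$. Therefore $\ev(f_{i,j}) = \ev(e_{i,j})$, and injectivity of $\ev$ gives $f_{i,j} = e_{i,j}$.

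There is no serious obstacle here. The only delicate point is the characteristic: we need $\ell m \ne 0$ in $K$, both to divide by it and so that the value $\ell\cdot m$ obtained at the matching point is nonzero and cancels. This is exactly where the semisimplicity hypothesis $\gcd(\ell m, q) = 1$ is used.
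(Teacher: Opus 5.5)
Your proposal is correct and takes the same approach as the paper: evaluate the Fourier sum at each root-grid point, factor it into two geometric sums over roots of unity, use orthogonality to get $\delta_{i,i'}\delta_{j,j'}$, and conclude via injectivity of $\ev$. You also spell out two points the paper only asserts: the vanishing argument $(\zeta-1)\sum_a\zeta^a=\zeta^\ell-1=0$, and the need for $\ell m\neq 0$ in $K$.
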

\begin{proof}
    We will evaluate the element 
    \begin{equation*}
    E = \frac{1}{\ell m} \sum_{a \in \Z_\ell} \sum_{b \in \Z_m} \alpha^{-ia} \beta^{-jb} x^a y^b \in R_K
    \end{equation*}
    at an arbitrary point $(\alpha^{i^\prime},\beta^{j^\prime})$:
    \begin{equation*}
    E(\alpha^{i^\prime}, \beta^{j^\prime}) = \frac{1}{\ell m} \sum_{a \in \Z_\ell} \sum_{b \in \Z_m} \alpha^{-ia} \beta^{-jb} \alpha^{i^\prime a} \beta^{j^\prime b} = \frac{1}{\ell m} \left(\sum_{a \in \Z_\ell} \alpha^{(i^\prime - i)a} \right) \left(\sum_{b \in \Z_m} \beta^{(j^\prime - j)b}\right).
    \end{equation*}
    Each inner sum is $\ell$ ($m$) if the exponent is $0$ modulo $\ell$ ($m$), and $0$ otherwise. 
    This follows from the fact that these inner sums are geometric series over all roots of unity. 
    Hence, we see that  $E\left(\alpha^{i^\prime}, \beta^{j^\prime}\right) = \delta_{i, i^\prime} \delta_{j, j^\prime}$, so $E$ has the defining evaluation property of $e_{i, j}$.  Therefore $E = e_{i, j}$ in $R_K$.
\end{proof}

\begin{example}
\label{ex:orbit-idempotents-lm-3-A}
Let $q=2$, let $\ell=m=3$, and let
\[
K=\F_4=\F_2(\omega),
\qquad
\omega^2+\omega+1=0.
\]
Then $\omega^3=1$ and $\omega$ is a primitive third root of unity. We set 
\[
\alpha=\beta=\omega.
\]
Since $\operatorname{char}(K)=2$, the scalar $1/9$ appearing in the Fourier formula is equal to $1$ in $K$. Hence Lemma~\ref{lem:fourier-eij} gives
\[
e_{i,j}
=
\sum_{a=0}^{2}\sum_{b=0}^{2}
\omega^{-ia}\omega^{-jb}x^ay^b.
\]
For example,
\[
e_{0,0}
=
(1+x+x^2)(1+y+y^2).
\]
Indeed,
\[
e_{0,0}(\omega^{i^\prime},\omega^{j^\prime})
=
(1+\omega^{i^\prime}+\omega^{2i^\prime})
(1+\omega^{j^\prime}+\omega^{2j^\prime}).
\]
The factor $1+\omega^{i^\prime}+\omega^{2i^\prime}$ is equal to $1$ if $i^\prime=0$ and is equal to $0$ if $i^\prime=1,2$. The same holds for the $y$-factor. 
Therefore, 
\[
e_{0,0}(\omega^{i^\prime},\omega^{j^\prime})
=
\delta_{0,i^\prime}\delta_{0,j^\prime}.
\]
Similarly, for $(i,j)=(1,2)$, the Fourier formula gives
\[
e_{1,2}
=
\left(1+\omega^{-1}x+\omega^{-2}x^2\right)
\left(1+\omega^{-2}y+\omega^{-4}y^2\right).
\]
Using $\omega^{-1}=\omega^2$, $\omega^{-2}=\omega$, and $\omega^{-4}=\omega^2$, this becomes
\[
e_{1,2}
=
(1+\omega^2x+\omega x^2)(1+\omega y+\omega^2y^2).
\]
Evaluating at $(\omega^{i^\prime},\omega^{j^\prime})$, we obtain
\[
e_{1,2}(\omega^{i^\prime},\omega^{j^\prime})
=
\left(\sum_{a=0}^{2}\omega^{(i^\prime-1)a}\right)
\left(\sum_{b=0}^{2}\omega^{(j^\prime-2)b}\right).
\]
Each geometric sum is zero unless the corresponding exponent is $0$ modulo $3$. 
Hence, we have 
\[
e_{1,2}(\omega^{i^\prime},\omega^{j^\prime})
=
\delta_{1,i^\prime}\delta_{2,j^\prime}.
\]
This agrees with the Lagrange interpolation formula from Lemma~\ref{lem:lagrange-eij}. For instance,
\[
L_1(x)
=
\prod_{\substack{u\in\Z_3\\u\neq 1}}
\frac{x-\omega^u}{\omega-\omega^u}
=
1+\omega^2x+\omega x^2,
\]
and
\[
M_2(y)
=
\prod_{\substack{v\in\Z_3\\v\neq 2}}
\frac{y-\omega^v}{\omega^2-\omega^v}
=
1+\omega y+\omega^2y^2.
\]
Thus
\[
L_1(x)M_2(y)
=
(1+\omega^2x+\omega x^2)(1+\omega y+\omega^2y^2)
=
e_{1,2}.
\]
\end{example}

The formulas \eqref{eq:lagrange-eij} and \eqref{eq:fourier-eij} exhibit coefficients such as $(\alpha^i - \alpha^u)^{-1}$ and $\alpha^{-ia} \beta^{-jb}$, which generally lie in $K$ but not in $\F_q$. However, some primitive idempotents are contained in $R$.
\begin{lemma}\label{lem:fixed-points-RK}
    Let $\sigma$ denote the $q$-Frobenius on $K$ extended coefficientwise to $R_K$. Then the $\sigma$-invariant of $R_K$ is precisely the $\F_q$-algebra $R$,
    \begin{equation*}
        R = (R_K)^{\langle\sigma\rangle} = \{f \in R_K \mid  \sigma(f) = f\}.
    \end{equation*}
\end{lemma}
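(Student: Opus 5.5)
The plan is to work in the monomial basis and reduce the claim to the fixed field of $\sigma$ on $K$. First, $R_K = K\otimes_{\F_q} R$ is a free $K$-module with basis $\{x^a y^b \mid 0\le a<\ell,\ 0\le b<m\}$, the images of $1\otimes x^ay^b$. This is immediate from $R\cong\F_q^{\ell m}$ in this basis and base change. Hence every $f\in R_K$ has a unique expression
\[
f=\sum_{a,b} c_{a,b}\,x^a y^b,\qquad c_{a,b}\in K,
\]
and $R$ embeds in $R_K$ as exactly those $f$ with all $c_{a,b}\in\F_q$.

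The key preliminary point is that $\sigma$ is well defined on $R_K$. Extend $\sigma$ coefficientwise to $K[x,y]$, giving $\sum c_{a,b}x^ay^b\mapsto\sum c_{a,b}^q x^ay^b$. This map is a ring automorphism of $K[x,y]$, since it is induced by a field automorphism acting on coefficients. It fixes the generators $x^\ell-1$ and $y^m-1$, because their coefficients lie in $\F_q$. It therefore preserves the ideal $\idlGens{x^\ell-1,y^m-1}$ and descends to $R_K$. On the normal-form basis above it acts by $\sum c_{a,b}x^ay^b\mapsto\sum c_{a,b}^q x^ay^b$: the monomials $x^ay^b$ with $a<\ell$, $b<m$ are fixed, and reduction modulo the ideal commutes with $\sigma$.

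The two inclusions then follow. If $f\in R$, all its coefficients lie in $\F_q$ and satisfy $c^q=c$, so $\sigma(f)=f$. Conversely, suppose $\sigma(f)=f$. By uniqueness of coordinates in the $K$-basis, $c_{a,b}^q=c_{a,b}$ for every $(a,b)$. Each $c_{a,b}$ is then a root of $T^q-T$, which has at most $q$ roots in the field $K$, and all elements of $\F_q$ are roots. So $c_{a,b}\in\F_q$, and hence $f\in R$.

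The only step needing care is the one flagged above: $\sigma$ is defined on representatives in $K[x,y]$, so one must check that it is independent of the representative and acts diagonally on normal forms. This holds because the defining ideal is generated by polynomials over $\F_q$. Everything else reduces to the elementary fact that $\F_q$ is the fixed field of $c\mapsto c^q$ in any extension $K\supseteq\F_q$.
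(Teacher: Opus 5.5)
Your proof is correct and follows essentially the same route as the paper, which reduces the claim to the observation that $f\in R_K$ is $\sigma$-fixed exactly when all its coefficients in the monomial basis are fixed by Frobenius, and hence lie in $\F_q$ (the paper also phrases this as $(K\otimes_{\F_q}R)^{\sigma}=\F_q\otimes_{\F_q}R$, using the Galois extension $K/\F_q$). Two of your steps are not in the paper but only make its implicit points explicit: your check that $\sigma$ descends to the quotient because $x^\ell-1$ and $y^m-1$ have coefficients in $\F_q$, and your root count for $T^q-T$ in place of the Galois fixed-field fact.
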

\begin{proof}
    The extension $K / \F_q$ is Galois with group generated by $\sigma$.  Since $R_K = K \otimes_{\F_q} R$ and $\sigma$ acts trivially on $R$ and as Frobenius on $K$, the fixed subring is exactly $\F_q \otimes R \cong R$. Equivalently, $f \in R_K$ is fixed by $\sigma$ if and only if all its coefficients are fixed by Frobenius, so they all lie in $\F_q$.
\end{proof}

\begin{corollary}\label{cor:eij-in-R}
    For $(i, j) \in \Z_\ell \times \Z_m$, we have 
    \begin{equation*}
        e_{i, j} \in R \iff \ell \mid i(q - 1) \quad \text{and} \quad m \mid j(q - 1).
    \end{equation*}
    (Equivalently $e_{i, j}\in R$ if and only if $O_{(i, j)}$ has size $1$.)
\end{corollary}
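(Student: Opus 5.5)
By Lemma~\ref{lem:fixed-points-RK}, $e_{i,j}\in R$ if and only if $\sigma(e_{i,j})=e_{i,j}$. The plan is to turn this fixed-point condition into a statement about indices using the permutation action of $\sigma$ on the primitive idempotents, and then to read off the divisibility conditions.

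First, recall the computation preceding the lemma: $\sigma(e_{i,j})=e_{qi,qj}$. To justify it directly, apply $\sigma$ to the Fourier formula of Lemma~\ref{lem:fourier-eij}. The scalar $1/(\ell m)$ lies in $\F_q$ and is therefore fixed. Each coefficient $\alpha^{-ia}\beta^{-jb}$ is sent to $\alpha^{-qia}\beta^{-qjb}$. The resulting expression is exactly the Fourier formula for $e_{qi,qj}$. Alternatively, one can evaluate: $\sigma(f)(\alpha^{u},\beta^{v})=\sigma\bigl(f(\alpha^{uq^{-1}},\beta^{vq^{-1}})\bigr)$, which shows the same thing.

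Second, since the $e_{i,j}$ for distinct indices are distinct elements of $R_K$ (they have different evaluation vectors under the isomorphism $\ev$), the equality $e_{qi,qj}=e_{i,j}$ holds if and only if $(qi,qj)=(i,j)$ in $\Z_\ell\times\Z_m$. This is the same as saying $\varphi(i,j)=(i,j)$, which in turn means $|O_{(i,j)}|=1$; this gives the parenthetical equivalence. Finally, $qi\equiv i \pmod \ell$ is precisely $\ell\mid i(q-1)$, and likewise $qj\equiv j\pmod m$ is precisely $m\mid j(q-1)$.

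There is no real obstacle in this argument. The only point that needs care is verifying $\sigma(e_{i,j})=e_{qi,qj}$, rather than the inverse action $e_{q^{-1}i,q^{-1}j}$. Either direction yields the same fixed-point condition, so even an ambiguity here would not affect the conclusion.
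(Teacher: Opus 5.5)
Your proof is correct and follows the paper's own argument. You reduce membership in $R$ to $\sigma$-invariance via Lemma~\ref{lem:fixed-points-RK}, use $\sigma(e_{i,j})=e_{qi,qj}$ together with the distinctness of the primitive idempotents to obtain $(qi,qj)=(i,j)$, and then read off the divisibility conditions. Your Fourier-formula check of $\sigma(e_{i,j})=e_{qi,qj}$ also justifies a step that the paper only asserts in the text before the lemma.
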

\begin{proof}
    If $e_{i, j} \in R$, then $e_{i, j}$ is fixed by $\sigma$ by Lemma~\ref{lem:fixed-points-RK}. But $\sigma(e_{i, j}) = e_{qi, qj}$, so $e_{i, j} = e_{qi, qj}$, which forces $(qi, qj) \equiv (i, j)$ since the primitive idempotents in $K^{\ell m}$ are distinct coordinate vectors.
    
    Conversely, if $(qi, qj) \equiv (i, j)$ then $\sigma(e_{i, j}) = e_{i, j}$, hence $e_{i, j}\in R$ again by Lemma~\ref{lem:fixed-points-RK}.
\end{proof}

The situation for orbit idempotents is simpler.
\begin{proposition}\label{prop:orbit-idempotents-in-R}
    For every Frobenius orbit $O \in \Omega$, the orbit idempotent
    \begin{equation*}
        e_O = \sum_{(i, j) \in O} e_{i, j} \in R_K
    \end{equation*}
    lies in $R$ and is fixed by Frobenius: $\sigma(e_O) = e_O$. Moreover, the family $\{e_O\}$ consists of pairwise orthogonal idempotents in $R$ with
    \begin{equation*}
        1 = \sum_{O \in \Omega} e_O.
    \end{equation*}
\end{proposition}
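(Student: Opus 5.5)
The plan is to pass through the evaluation isomorphism $\ev: R_K \to K^{\ell m}$ and the Frobenius action $\sigma(e_{i,j}) = e_{qi,qj}$ established above. Then every claim reduces to a statement about indicator functions on the root grid, together with the descent criterion of Lemma~\ref{lem:fixed-points-RK}.

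\emph{Frobenius invariance and membership in $R$.} Since $\sigma$ is additive, we have
\[
\sigma(e_O) = \sum_{(i,j)\in O} \sigma(e_{i,j}) = \sum_{(i,j)\in O} e_{\varphi(i,j)}.
\]
By Definition~\ref{def:Frobenius-orbits}, $\varphi$ maps $O$ into itself, and it is injective because $q$ is a unit modulo $\ell$ and modulo $m$ (here we use $\gcd(q,\ell m)=1$). Hence $\varphi$ permutes $O$ bijectively. Reindexing the sum gives $\sigma(e_O) = e_O$, and Lemma~\ref{lem:fixed-points-RK} then yields $e_O \in R$. This is the only step with real content: the point to check is that $\varphi$ restricted to $O$ is a bijection, so the sum is merely reordered and no terms are duplicated or dropped. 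I expect this to be the main (if mild) obstacle. Everything else is bookkeeping.

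\emph{Idempotence and orthogonality.} These are checked inside $R_K$; since $R \subseteq R_K$ is a subring, they then hold in $R$. First, $e_O^2 = \sum_{(i,j),(i',j')\in O} e_{i,j}e_{i',j'} = \sum_{(i,j)\in O} e_{i,j} = e_O$, using properties (1) and (2) of the primitive idempotents. Second, for distinct orbits $O \neq O'$ we have $O \cap O' = \emptyset$, because orbits of a group action partition $\Z_\ell\times\Z_m$. Every cross term $e_{i,j}e_{i',j'}$ with $(i,j)\in O$ and $(i',j')\in O'$ therefore has distinct indices and vanishes, so $e_Oe_{O'}=0$. Equivalently, under $\ev$, $e_O$ is the indicator function of $O$ on the grid, and products of indicators of disjoint sets are zero.

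\emph{Completeness.} Since $\Omega$ partitions $\Z_\ell\times\Z_m$, we get
\[
\sum_{O\in\Omega} e_O = \sum_{(i,j)\in\Z_\ell\times\Z_m} e_{i,j} = 1
\]
by property (4). Finally, each $e_O$ is nonzero because its evaluation at any point of $O$ equals $1$. This shows that the family is genuinely a family of nonzero orthogonal idempotents, which completes the argument.
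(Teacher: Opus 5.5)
Your proposal is correct and matches the paper's proof: Frobenius permutes the summands $e_{i,j}$ along $O$, so $\sigma(e_O)=e_O$ and Lemma~\ref{lem:fixed-points-RK} gives $e_O\in R$, while orthogonality and $\sum_O e_O=1$ follow from the primitive-idempotent properties and the fact that the orbits partition $\Z_\ell\times\Z_m$. You additionally spell out what the paper leaves implicit, namely that $\varphi$ is a bijection on $O$, the explicit idempotence computation, and nonvanishing of each $e_O$.
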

\begin{proof}
    Since $\sigma$ permutes the summands $e_{i, j}$ along the orbit, the sum is invariant: $\sigma(e_O) = e_O$.  By Lemma~\ref{lem:fixed-points-RK} this implies $e_O \in R$. Orthogonality and the sum-to-one property follow from the corresponding statements for the $\{e_{i, j}\}$ together with the fact that the orbits form a partition of $\Z_\ell \times \Z_m$.
\end{proof}

\begin{example} 
\label{ex:orbit-idempotents-lm-3-B}
We continue with Example~\ref{ex:orbit-idempotents-lm-3-A}. 

The Frobenius automorphism $\sigma:z\mapsto z^2$ acts on the indices by
\[
(i,j)\longmapsto (2i,2j)
\]
in $\Z_3\times \Z_3$. Hence the Frobenius orbits are
\[
O_{0,0}=\{(0,0)\},
\]
\[
O_x=\{(1,0),(2,0)\},
\qquad
O_y=\{(0,1),(0,2)\},
\]
\[
O_{+}=\{(1,1),(2,2)\},
\qquad
O_{-}=\{(1,2),(2,1)\}.
\]
The corresponding orbit idempotents are
\[
e_{O_{0,0}}=e_{0,0}
=
(1+x+x^2)(1+y+y^2),
\]
\[
e_{O_x}=e_{1,0}+e_{2,0}
=
(x+x^2)(1+y+y^2),
\]
\[
e_{O_y}=e_{0,1}+e_{0,2}
=
(1+x+x^2)(y+y^2),
\]
\[
e_{O_{+}}=e_{1,1}+e_{2,2}
=
x+x^2+y+y^2+xy+x^2y^2,
\]
and
\[
e_{O_{-}}=e_{1,2}+e_{2,1}
=
x+x^2+y+y^2+xy^2+x^2y.
\]
Each of these polynomials has coefficients in $\F_2$, and therefore lies in
\[
R=\F_2[x,y]/\langle x^3-1,y^3-1\rangle.
\]
For example, using Lemma~\ref{lem:fourier-eij}, we have
\[
e_{1,1}+e_{2,2}
=
\sum_{a,b\in\Z_3}
\left(\omega^{-a-b}+\omega^{-2a-2b}\right)x^ay^b.
\]
If $a+b\equiv 0\pmod 3$, then the coefficient is $1+1=0$. If
$a+b\not\equiv 0\pmod 3$, then the coefficient is $\omega+\omega^2=1$.
Thus
\[
e_{1,1}+e_{2,2}
=
\sum_{\substack{a,b\in\Z_3\\a+b\not\equiv 0}}
x^ay^b
=
x+x^2+y+y^2+xy+x^2y^2.
\]
The same calculation gives the other orbit idempotents above.

Finally, these orbit idempotents are pairwise orthogonal and sum to $1$:
\[
1
=
e_{O_{0,0}}+e_{O_x}+e_{O_y}+e_{O_+}+e_{O_-}.
\]
This illustrates Proposition~\ref{prop:orbit-idempotents-in-R}: although the individual
primitive idempotents $e_{i,j}$ usually live only over the splitting field $K$ (see Example~\ref{ex:orbit-idempotents-lm-3-A}), their
Frobenius-orbit sums descend to the base ring $R$.
\end{example}

Proposition~\ref{prop:orbit-idempotents-in-R} induces a product decomposition of $R$ by the orbit idempotents.
\begin{proposition}[Wedderburn]
    \label{prop:product-decomp}
    The map
    \begin{equation*}
        \Psi : R \longrightarrow \prod_{O \in \Omega} e_OR, \qquad r \longmapsto (e_Or)_{O}
    \end{equation*}
    is an isomorphism of rings. In particular,
    \begin{equation*}
        R \cong \prod_{O \in \Omega} e_O R.
    \end{equation*}
\end{proposition}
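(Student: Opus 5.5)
The proof only uses the algebraic properties of the orbit idempotents from Proposition~\ref{prop:orbit-idempotents-in-R}: they lie in $R$, are pairwise orthogonal idempotents, and sum to $1$. First I check that each factor $e_O R$ is a ring in its own right, with identity $e_O$. It is closed under addition and multiplication, since $(e_O r)(e_O s) = e_O^2 rs = e_O rs$. For every $e_O r$ we have $e_O \cdot e_O r = e_O r$, so $e_O$ acts as the unit. This ring is not a unital subring of $R$, but that is fine: the product $\prod_O e_O R$ carries componentwise operations with identity $(e_O)_O$.

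Next I show that $\Psi$ is a unital ring homomorphism. Additivity is clear. For multiplicativity, idempotency gives $e_O(rs) = e_O^2 rs = (e_O r)(e_O s)$. Finally, $\Psi(1) = (e_O)_O$, which is the identity of the product.

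For bijectivity I write down an explicit inverse
\[
\Theta : \prod_{O\in\Omega} e_O R \to R, \qquad (r_O)_O \mapsto \sum_{O} r_O .
\]
Two checks are needed.
\begin{itemize}
\item $\Theta\circ\Psi = \mathrm{id}$: we have $\sum_O e_O r = \bigl(\sum_O e_O\bigr) r = r$, using $\sum_O e_O = 1$.
\item $\Psi\circ\Theta = \mathrm{id}$: if $r_{O'} = e_{O'} s_{O'}$, then
\[
e_O \sum_{O'} r_{O'} = \sum_{O'} e_O e_{O'} s_{O'} = e_O s_O = r_O ,
\]
using orthogonality $e_O e_{O'} = 0$ for $O \neq O'$ together with $e_O^2 = e_O$.
\end{itemize}
Hence $\Psi$ is a ring isomorphism and $R \cong \prod_O e_O R$.

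There is no real obstacle in this argument; the only point needing care is the non-unital subring convention above. If one also wants to identify each factor as the field $\F_{q^{|O|}}$, the natural route is to base change to $R_K$ and use the evaluation isomorphism $\ev$. Under $\ev$, the ring $e_O R_K$ corresponds to $K^{O}$, and taking $\sigma$-invariants as in Lemma~\ref{lem:fixed-points-RK} gives a field of degree $|O|$ over $\F_q$. That refinement is not required for the statement as given.
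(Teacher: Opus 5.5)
Your proof is correct and takes essentially the same approach as the paper. Both arguments use only that the orbit idempotents are central, orthogonal, idempotent, and sum to $1$: your inverse $\Theta$ is exactly the paper's surjectivity construction, and your check $\Theta\circ\Psi=\mathrm{id}$ is its injectivity argument, while your explicit verification that $\Psi(1)=(e_O)_O$ (with $e_O$ the unit of $e_OR$) spells out the paper's one-line remark that $\Psi$ is a ring homomorphism.
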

\begin{proof}
    Since $\sum_O e_O = 1$ and $e_O e_{O^\prime} = 0$ for $O \neq O^\prime$, we have for every $r \in R$
    \begin{equation*}
        r = r \cdot 1 = r \sum_O e_O = \sum_O e_Or,
    \end{equation*}
    so $\Psi$ is injective: if $\Psi(r) = 0$, then $e_Or = 0$ for all $O$, hence $r = \sum_O e_Or = 0$. It is surjective: given $(x_O)_O$ with $x_O \in e_O R$, write $x_O = e_O r_O$ and set $r = \sum_O x_O \in R$. Then $e_O r = e_O x_O = x_O$ (and $e_O x_{O^\prime} = 0$ for $O \neq O^\prime$), so $\Psi(r) = (x_O)_O$. Finally, $\Psi$ is a ring homomorphism since the $e_O$ are central idempotents.
\end{proof}

\begin{proposition}
\label{prop:field-factor-degree}
    For each orbit $O \in \Omega$, the ring $e_O R$ is a finite field and
    \begin{equation*}
        |e_O R| = q^{|O|} \qquad \text{(equivalently, } \dim_{\F_q}(e_O R) = |O|\text{)}.
    \end{equation*}
    Consequently,
    \begin{equation*}
        R \cong \prod_{O \in \Omega} \F_{q^{|O|}} \qquad \text{as $\F_q$-algebras.}
    \end{equation*}
\end{proposition}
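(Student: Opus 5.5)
The plan is to work over the splitting field $K$ and then descend to $R$ with Lemma~\ref{lem:fixed-points-RK}. Fix an orbit $O\in\Omega$ and consider $e_O R_K = e_O(K\otimes_{\F_q}R)$. Under the evaluation isomorphism $\ev:R_K\to K^{\ell m}$, the element $e_O$ is the indicator of the coordinates indexed by $O$. Hence $e_O R_K\cong K^{O}$, which has $K$-dimension $|O|$. Since $e_O\in R$ by Proposition~\ref{prop:orbit-idempotents-in-R}, we have $e_O R_K = K\otimes_{\F_q} e_O R$, because tensoring with $K$ commutes with multiplication by an element of $R$. Comparing dimensions gives $\dim_{\F_q} e_O R = \dim_K e_O R_K = |O|$, and therefore $|e_O R|=q^{|O|}$.

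Next I will show that $e_O R$ is a field. It is a commutative ring with identity $e_O$, so I only need to check that every nonzero $f\in e_O R$ is invertible. Write $\ev(f)=(f(\alpha^i,\beta^j))$. These values vanish off $O$. On $O$ they satisfy $f(\alpha^{qi},\beta^{qj}) = f(\alpha^i,\beta^j)^q$, because $f$ has coefficients in $\F_q$. Moving along the orbit is therefore a Frobenius iteration of a single value $f(\alpha^{i_0},\beta^{j_0})$, so either every value on $O$ is zero or every value on $O$ is nonzero. If all of them were zero, then $\ev(f)=0$ and so $f=0$. Thus for $f\neq0$ all values on $O$ are nonzero.

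Define $g\in R_K$ by $\ev(g)$ equal to the reciprocal of $\ev(f)$ on $O$ and equal to $0$ off $O$. Then $g$ is $\sigma$-fixed, since $\sigma$ permutes coordinates along the orbit and commutes with inversion. By Lemma~\ref{lem:fixed-points-RK}, $g\in R$, and since $\ev(g)$ is supported on $O$ we have $g=e_O g\in e_O R$. Finally $fg=e_O$, so $f$ is invertible in $e_O R$, and $e_O R$ is a finite field with $q^{|O|}$ elements. Such a field is $\F_{q^{|O|}}$, and it contains $\F_q e_O\cong\F_q$, so it is an $\F_q$-algebra.

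The consequence $R\cong\prod_O\F_{q^{|O|}}$ then follows immediately from Proposition~\ref{prop:product-decomp}. The only delicate point is making the claimed $\sigma$-equivariance of $\ev$ precise: the coordinate at $(i,j)$ of $\sigma(h)$ equals $\sigma$ applied to the coordinate of $h$ at $(q^{-1}i,q^{-1}j)$. This is a one-line check from $\sigma(\alpha)=\alpha^q$ and $\sigma(\beta)=\beta^q$, and it is exactly what makes both the constant-zero-pattern argument and the $\sigma$-invariance of $g$ work.
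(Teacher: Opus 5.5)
Your proof is correct. It rests on the same ingredients as the paper's proof: the identification $e_OR_K\cong K^{O}$ under $\ev$, the action of $\sigma$ as entrywise Frobenius composed with the shift along the orbit, and descent through Lemma~\ref{lem:fixed-points-RK}. Where you differ is in how you extract the conclusion.

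The paper computes the $\sigma$-fixed part of $K^{O}$ directly. The fixed tuples are exactly $(a,a^q,\dots,a^{q^{r-1}})$ with $a\in\F_{q^r}\subseteq K$, so $a\mapsto\ev^{-1}(a,a^q,\dots,a^{q^{r-1}})$ is a ring isomorphism $\F_{q^r}\to e_OR$. Cardinality and field structure come out of that one computation.

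You instead get the dimension from the base change $e_OR_K=K\otimes_{\F_q}e_OR$; this is the same descent the paper only invokes later, in Lemma~\ref{lem:orbit-decomp}. You get the field property intrinsically, from the all-or-nothing nonvanishing of $\ev(f)$ on $O$ together with a descended inverse. This route never needs to know which subfield of $K$ the orbit values lie in.

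What the paper's explicit map buys is that it visibly sends $c\in\F_q$ to $c\,e_O$. It is therefore an isomorphism of $\F_q$-algebras, which is what the final clause asserts. Your remark that $e_OR$ contains $\F_q e_O\cong\F_q$ only makes $e_OR$ an $\F_q$-algebra. For non-prime $q$, an abstract ring isomorphism $e_OR\cong\F_{q^{|O|}}$ may restrict to a nontrivial automorphism of $\F_q$.

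You can close this in one line. $e_OR$ is an extension of degree $|O|$ of $\F_q e_O$, and any two degree-$|O|$ extensions of $\F_q$ are $\F_q$-isomorphic, since both are splitting fields of $x^{q^{|O|}}-x$ over $\F_q$.
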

\begin{proof}
    Fix an orbit $O = \{(i_0, j_0), (i_1, j_1), \dots, (i_{r - 1}, j_{r - 1})\}$ of size $r = |O|$, ordered so that $(i_{t + 1}, j_{t + 1}) \equiv (q i_t, q j_t)$ and indices are taken modulo $r$. Via $\ev$, the ideal $e_O R_K$ identifies with the coordinate subproduct
    \begin{equation*}
        e_O R_K \cong K^{O} \cong K^r,
    \end{equation*}
    consisting of tuples supported on the coordinates in $O$. Under this identification, the Frobenius $\sigma$ acts by applying $a \mapsto a^q$ entrywise and cyclically permuting the coordinates along the orbit:
    \begin{equation*}
        \sigma(a_0, a_1, \dots, a_{r - 1}) = (a_{r - 1}^q, a_0^q, \dots, a_{r - 2}^q).
    \end{equation*}
    It is easy to see that $e_O R$ is precisely the $\sigma$-fixed subring of $e_O R_K$ (since $R = (R_K)^\sigma$ and $e_O \in R$). We claim that the fixed points are exactly the tuples of the form
    \begin{equation*}
        (a, a^q, a^{q^2}, \dots, a^{q^{r - 1}}) \quad \text{with} \quad a^{q^r} = a.
    \end{equation*}
    Indeed, a tuple $(a_0, \dots, a_{r - 1})$ is fixed by $\sigma$ if and only if
    \begin{equation*}
        (a_0, \dots, a_{r - 1}) = (a_{r - 1}^q, a_0^q, \dots, a_{r - 2}^q),
    \end{equation*}
    equivalently if and only if $a_1 = a_0^q$, $a_2 = a_1^q = a_0^{q^2}$, \dots, $a_{r - 1} = a_0^{q^{r - 1}}$ and also $a_0 = a_{r - 1}^q = a_0^{q^r}$. Hence, the claim follows.
    
    The condition $a^{q^r} = a$ means $a$ lies in the finite field $\F_{q^r}\subseteq K$. Therefore the map
    \begin{equation*}
        \F_{q^r} \longrightarrow (e_O R_K)^\sigma, \qquad a \longmapsto (a, a^q, \dots, a^{q^{r - 1}})
    \end{equation*}
    is a ring isomorphism (it is injective and multiplicative/additive componentwise, and every fixed tuple arises uniquely from its first coordinate). Hence $e_O R \cong \F_{q^r}$ is a field of size
    $q^r = q^{|O|}$, proving the claim.
\end{proof}

\begin{example}
    Applying Proposition~\ref{prop:field-factor-degree} to Example~\ref{example:5x7grid} gives
        $R \cong (\F_{4^6})^4 \times (\F_{4^2} )^2 \times (\F_{4^3})^2 \times \F_{4}$.
\end{example}

Each orbit $O$ contributes one simple component $e_O R$ in the product decomposition (Proposition~\ref{prop:product-decomp}), and Proposition~\ref{prop:field-factor-degree} shows that this component is a field extension of $\F_q$ of degree $|O|$:
\begin{equation*}
    \dim_{\F_q}(e_O R) = |O|.
\end{equation*}
Thus the multiset of orbit sizes $\{|O|\}$ is exactly the multiset of extension degrees in the Wedderburn decomposition of $R$.

The previous result shows that $R$ decomposes as an $\F_q$-algebra into direct sums of fields indexed with the orbit idempotents (Proposition~\ref{prop:field-factor-degree}). The next result shows that this is also an $R$-submodule decomposition.
\begin{lemma}\label{lem:orbit-decomp}
    As an $R$-module, $R$ decomposes as
    \begin{equation*}
        R = \bigoplus_{O \in \Omega} e_O R.
    \end{equation*}
    Moreover, each summand $e_O R$ has $\F_q$-dimension
    \begin{equation*}
        \dim_{\F_q}(e_O R) = |O|.
    \end{equation*}
\end{lemma}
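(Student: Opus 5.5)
The plan is to deduce the statement almost directly from Proposition~\ref{prop:orbit-idempotents-in-R} and Proposition~\ref{prop:field-factor-degree}. The only new content is that the decomposition is internal, direct, and compatible with the $R$-action. So I will first show that each $e_O R$ is an $R$-submodule. Because $R$ is commutative, $e_O R$ is an ideal: for $r,s \in R$ we have $s(e_O r) = e_O(sr) \in e_O R$. Hence each summand is an $R$-submodule of $R$.

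Next, I will establish the internal direct sum. For spanning, any $r \in R$ can be written as
\[
r = r\cdot 1 = \sum_{O\in\Omega} e_O r,
\]
using $\sum_O e_O = 1$ from Proposition~\ref{prop:orbit-idempotents-in-R}. For directness, suppose $\sum_O e_O r_O = 0$. Multiplying by $e_{O'}$ and using the relations $e_{O'}e_O = \delta_{O,O'}e_O$ gives $e_{O'} r_{O'} = 0$ for each $O'$. Thus the representation is unique, and we obtain $R = \bigoplus_O e_O R$ as $R$-modules. This is the same computation as in the proof of Proposition~\ref{prop:product-decomp}, now read additively rather than as a ring product.

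The dimension claim is taken directly from Proposition~\ref{prop:field-factor-degree}, which gives $e_O R \cong \F_{q^{|O|}}$ and so $\dim_{\F_q} e_O R = |O|$. As a consistency check, summing over orbits gives $\sum_O |O| = \ell m = \dim_{\F_q} R$, because the orbits partition $\Z_\ell\times\Z_m$.

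I expect no real obstacle here. The only point needing care is that directness must use orthogonality of the $e_O$; spanning by the $e_O R$ alone would not suffice.
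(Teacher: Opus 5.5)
Your proposal is correct and follows essentially the same route as the paper: the direct sum comes from the orthogonality and sum-to-one relations of the $e_O$ (Proposition~\ref{prop:orbit-idempotents-in-R}), and the dimension comes from Proposition~\ref{prop:field-factor-degree}, which the paper briefly re-derives via scalar extension to $K$ and Frobenius descent while you cite it directly. Your uniqueness argument, multiplying $\sum_O e_O r_O = 0$ by $e_{O'}$, is a more explicit version of the paper's one-line remark that the sum is direct because the idempotents are orthogonal.
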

\begin{proof}
This is essentially a restatement of Proposition~\ref{prop:field-factor-degree}. For completeness, we summarize the proof. Over $K$, the evaluation isomorphism identifies $R_K$ with $K^{\ell m}$ and the primitive idempotents with coordinate projections. Summing primitive idempotents over an orbit $O$ produces an idempotent $e_O$ fixed by Frobenius, hence lying in $R$. Orthogonality and the identity decomposition follow from the coordinate decomposition of $K^{\ell m}$.

The $R$-submodule $e_O R$ corresponds (after scalar extension to $K$) to the $K$-subspace of $K^{\ell m}$ supported on the coordinates in $O$, whose $K$-dimension is $|O|$. Frobenius descent implies the same dimension over $\F_q$, giving $\dim_{\F_q}(e_O R) = |O|$. Finally, the sum over all $O$ is direct since the idempotents are orthogonal.
\end{proof}

For any subset $\mathcal{T} \subseteq \Omega$, define the idempotent and corresponding ideal
\begin{equation}
    \label{eq:e_T&I_T}
    e_{\mathcal{T}} = \sum_{O \in \mathcal{T}} e_O \in R, \qquad I_{\mathcal{T}} = e_{\mathcal{T}} R \subseteq R.
\end{equation}
Equivalently, $I_{\mathcal{T}} = \bigoplus_{O \in \mathcal{T}} e_O R$.
\begin{corollary}\label{cor:orbit-additive-dim}
    For every $\mathcal{T} \subseteq \Omega$ (hence for every ideal in $R$), we have
    \begin{equation*}
        \dim_{\F_q}(I_{\mathcal{T}}) = \sum_{O \in \mathcal{T}} |O|.
    \end{equation*}
\end{corollary}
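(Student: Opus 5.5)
The dimension formula follows directly from Lemma~\ref{lem:orbit-decomp}. The only substantive point is the parenthetical claim that \emph{every} ideal of $R$ has the form $I_{\mathcal T}$, so that the formula covers all ideals. I would do the computation first and then establish that classification.

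For the dimension: the $e_O$ are pairwise orthogonal idempotents, so the sum $\sum_{O\in\mathcal T} e_O R$ is direct. Indeed, multiplying a relation $\sum_{O} e_O r_O=0$ by $e_{O'}$ isolates the single term $e_{O'}r_{O'}$. Next I check that $e_{\mathcal T}R=\bigoplus_{O\in\mathcal T}e_OR$:
\begin{itemize}
\item[$\subseteq$] Any element satisfies $e_{\mathcal T}r=\sum_{O\in\mathcal T}e_Or$.
\item[$\supseteq$] Since $e_{\mathcal T}e_O=e_O$ for $O\in\mathcal T$, each $e_O r$ equals $e_{\mathcal T}(e_Or)$.
\end{itemize}
Taking $\F_q$-dimensions and using $\dim_{\F_q}e_OR=|O|$ from Lemma~\ref{lem:orbit-decomp} gives $\dim I_{\mathcal T}=\sum_{O\in\mathcal T}|O|$.

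For the claim that every ideal $I\subseteq R$ equals some $I_{\mathcal T}$, I would use Proposition~\ref{prop:product-decomp} and Proposition~\ref{prop:field-factor-degree}. These identify $R$ with a finite product of fields $\prod_O e_OR$. In any such product, an ideal decomposes as $I=\bigoplus_O e_OI$, because $x=\sum_O e_Ox$ and each $e_Ox\in I$. Moreover, $e_OI$ is an ideal of the field $e_OR$, so it is either $0$ or all of $e_OR$. Setting $\mathcal T=\{O : e_OI\neq 0\}$ then gives $I=\bigoplus_{O\in\mathcal T}e_OR=I_{\mathcal T}$, and the dimension formula applies.

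The main obstacle is the step $e_OI\in\{0,e_OR\}$. It requires that $e_OI$ is an ideal of $e_OR$, which holds because $e_OR\cdot e_OI=e_O(RI)\subseteq e_OI$. It also requires that $e_OR$ is a field, which is exactly the content of Proposition~\ref{prop:field-factor-degree}. With those two facts in hand, the rest is bookkeeping.
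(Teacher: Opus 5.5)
Your proof is correct, and the dimension count matches the paper's. The paper reads off $I_{\mathcal T}=\bigoplus_{O\in\mathcal T}e_OR$ from Lemma~\ref{lem:orbit-decomp} and adds the dimensions $|O|$. You additionally check the equality $e_{\mathcal T}R=\bigoplus_{O\in\mathcal T}e_OR$, which the paper only asserts when it defines $I_{\mathcal T}$.

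The genuine difference is the parenthetical claim that every ideal has the form $I_{\mathcal T}$. The paper's proof of the corollary does not argue this. The classification is proved only later, in Theorem~\ref{thm:ideals-by-orbits}: it extends scalars to $K$, passes to the Frobenius-closed zero set $Z(J)$, and identifies the vanishing ideal of $Z(J)$ with $e_{\mathcal S}R$. Your componentwise argument stays over $\F_q$ and uses only Propositions~\ref{prop:product-decomp} and~\ref{prop:field-factor-degree}, which precede the corollary, so it removes that forward dependence. It also yields both inclusions at once. In the zero-set route, $J\subseteq I(Z(J))$ is automatic, but the reverse inclusion needs an extra step. That step is either descent from the classification of ideals of $R_K\cong K^{\ell m}$, or exactly your observation that a nonzero $e_OJ$ must be the whole field $e_OR$. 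The paper's wording ``$I(Z(J))$ is the largest such ideal'' does not by itself supply it. In exchange, the paper's route describes ideals by zero sets on the root grid, which is the form used later in the distance bounds. Your support set $\{O : e_OI\neq 0\}$ is exactly the complement of the zero-orbit set, so the two descriptions agree.
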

\begin{proof}
    By Lemma~\ref{lem:orbit-decomp}, $I_{\mathcal{T}} = \bigoplus_{O \in \mathcal{T}} e_O R$ as an $\F_q$-direct sum, so dimensions add:
    \begin{equation*}
        \dim_{\F_q}(I_{\mathcal{T}}) = \sum_{O \in \mathcal{T}} \dim_{\F_q}(e_O R) = \sum_{O \in \mathcal{T}} |O|.
    \qedhere
    \end{equation*}
\end{proof}

\subsection{Lattice Theory}
\begin{definition}
    For $\mathcal{S} \subseteq \Omega$, define the \emph{$\mathcal{S}$-selected idempotent} by $e_{\mathcal{S}} = \sum_{O \in \mathcal{S}} e_O \in R$.    The corresponding \emph{$\mathcal{S}$-selected ideal} is $J_{\mathcal{S}} = e_{\mathcal{S}} R \lhd R$.
\end{definition}

\begin{theorem}
\label{thm:ideals-by-orbits}
    Let $X$ be a set and $\mathcal{P}(X)$ its power set. Then the map
    \begin{equation*}
    \mathcal{P}(\Omega) \to {\text{ ideals of } R},
    \qquad
    \mathcal{S} \mapsto J_{\mathcal{S}},
    \end{equation*}
    is a bijection. More precisely, we have 
    \begin{enumerate}
        \item Every ideal $J \lhd R$ is of the form $J = J_{\mathcal{S}}$ for a unique subset $\mathcal{S} \subseteq \Omega$.
        \item Under this correspondence,
        \begin{itemize}
            \item $J_{\mathcal{S} \cap \mathcal{T}} = J_{\mathcal{S}} \cap J_{\mathcal{T}}$,
            \item $J_{\mathcal{S} \cup \mathcal{T}} = J_{\mathcal{S}} + J_{\mathcal{T}}$,
            \item $\ann_R(J_{\mathcal{S}}) = J_{\mathcal{S}^c}$.
        \end{itemize}
    \end{enumerate}
    In particular, the number of ideals of $R$ is $2^{|\Omega|}$.
\end{theorem}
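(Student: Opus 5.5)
The plan is to use the product decomposition of Proposition~\ref{prop:product-decomp} together with the fact that each factor $e_O R$ is a field (Proposition~\ref{prop:field-factor-degree}). The central step is showing that every ideal $J \lhd R$ equals $J_{\mathcal{S}}$ with
\[
\mathcal{S} = \{O \in \Omega : e_O J \neq 0\}.
\]

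\textbf{Existence.} Fix an ideal $J$.

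\begin{itemize}
\item For each orbit $O$, the set $e_O J$ is an ideal of the field $e_O R$, since $e_O J \subseteq J \cap e_O R$ and it is closed under multiplication by $e_O R$. Hence $e_O J$ is either $0$ or all of $e_O R$.
\item If $e_O J = e_O R$, then $e_O = e_O j$ for some $j \in J$, so $e_O \in J$.
\item Every $j \in J$ satisfies $j = \sum_O e_O j$. This gives $J \subseteq \bigoplus_{O\in\mathcal{S}} e_O R = J_{\mathcal{S}}$.
\item Conversely, $e_O \in J$ for each $O \in \mathcal{S}$, so $e_{\mathcal{S}} \in J$ and $J_{\mathcal{S}} \subseteq J$.
\end{itemize}

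\textbf{Uniqueness.} Note that $e_O J_{\mathcal{S}} = e_O R \neq 0$ exactly when $O \in \mathcal{S}$, by orthogonality of the $e_O$. So $\mathcal{S}$ is recovered from $J_{\mathcal{S}}$, and the map is injective. Combined with existence, this gives the bijection and the count $2^{|\Omega|}$.

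\textbf{Lattice identities.} Each follows from the description $J_{\mathcal{S}} = \{r \in R : e_O r = 0 \text{ for all } O \notin \mathcal{S}\}$. This description holds because $r = \sum_O e_O r$ and $e_O r \in e_O R$.

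\begin{itemize}
\item \emph{Intersection.} The condition defining $J_{\mathcal{S}} \cap J_{\mathcal{T}}$ is $e_O r = 0$ for all $O \notin \mathcal{S}\cap\mathcal{T}$, which describes $J_{\mathcal{S}\cap\mathcal{T}}$.
\item \emph{Sum.} We have $e_{\mathcal{S}\cup\mathcal{T}} = e_{\mathcal{S}} + e_{\mathcal{T}\setminus\mathcal{S}}$. This gives $J_{\mathcal{S}\cup\mathcal{T}} \subseteq J_{\mathcal{S}} + J_{\mathcal{T}}$, and the reverse inclusion is clear from monotonicity.
\item \emph{Annihilator.} If $r J_{\mathcal{S}} = 0$, then $r e_O = 0$ for every $O \in \mathcal{S}$, so $r \in J_{\mathcal{S}^c}$. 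Conversely, $e_{\mathcal{S}^c} e_{\mathcal{S}} = 0$, so $J_{\mathcal{S}^c}$ annihilates $J_{\mathcal{S}}$.
\item \emph{Consistency check.} The annihilator identity agrees with Proposition~\ref{prop:algebraic-properties}(1) via Corollary~\ref{cor:orbit-additive-dim}, since the dimensions of $J_{\mathcal{S}}$ and $J_{\mathcal{S}^c}$ sum to $\ell m$.
\end{itemize}

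\textbf{Main obstacle.} The only substantive step is the dichotomy that $e_O J$ is $0$ or $e_O R$. It uses that $e_O R$ is a field rather than merely a ring, and so it relies on Proposition~\ref{prop:field-factor-degree}, that is, on semisimplicity. Beyond this, care is needed to check that $e_O J$ is an ideal of $e_O R$ and that it lies inside $J$. Both follow because $J$ absorbs multiplication by $e_O$.
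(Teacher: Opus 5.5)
Your proof is correct. For part (1) it takes a genuinely different route from the paper's.

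\textbf{The paper's route.} The paper works over the splitting field. It identifies the ideals of $R_K\cong K^{\ell m}$ with subsets of the root grid. It then notes that the zero set $Z(J)$ of an ideal $J\lhd R$ is Frobenius-closed, lets $\mathcal{S}$ be the set of orbits not contained in $Z(J)$, and concludes $J=I(Z(J))=e_{\mathcal{S}}R$.

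\textbf{Your route.} You stay over $\F_q$ and use only that the Wedderburn factors $e_OR$ are fields. This gives $e_OJ\in\{0,e_OR\}$, hence $e_O\in J$ whenever $e_OJ\neq 0$, and so both inclusions $J\subseteq J_{\mathcal{S}}\subseteq J$ follow directly.

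\textbf{Comparison.} Your argument is more elementary, and it is also tighter at the decisive step. In the paper, the inclusion $J\subseteq I(Z(J))$ together with the maximality of $I(Z(J))$ does not by itself yield $J=I(Z(J))$. Closing this requires two further steps:
\begin{itemize}
\item apply the coordinate-ideal description of $R_K$ to $J_K=K\otimes_{\F_q}J$, so that $J_K$ contains every $e_{i,j}$ with $(i,j)\notin Z(J)$;
\item descend back to $R$, e.g.\ via $J_K\cap R=J$ or a dimension count.
\end{itemize}
Your dichotomy makes both unnecessary.

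\textbf{What the paper's route buys.} It identifies $\mathcal{S}$ explicitly with the complement of the zero set. This is the form used later, for the defining sets $\mathcal{C}(S)$ and the regions $\mathcal{T}_{a,b}$, $\mathcal{U}_{a,b}$, $\mathcal{F}_{a,b}$. You can recover it in one line. Since $e_O$ takes the value $1$ on $O$, $e_O\in J$ gives $O\not\subseteq Z(J)$. Conversely, $e_OJ=0$ means every $j\in J$ equals $\sum_{O'\neq O}e_{O'}j$, which vanishes on $O$.

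\textbf{Part (2).} The two proofs are the same idempotent bookkeeping. Your description $J_{\mathcal{S}}=\{r\in R : e_Or=0 \text{ for all } O\notin\mathcal{S}\}$ organizes it a little more compactly than the paper's use of $e_{\mathcal{S}}+e_{\mathcal{T}}-e_{\mathcal{S}}e_{\mathcal{T}}=e_{\mathcal{S}\cup\mathcal{T}}$.
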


\begin{proof}
Recall that $\ev$ is a ring isomorphism between $R_K$ and $K^{\ell m}$. 
The ideals of the latter ring are exactly
products of coordinate ideals. 
We claim that 
\[
I_T=\{f\in R_K:\ f(\alpha^i,\beta^j)=0\ \text{ for every } (i,j)\in T\}
=\ann_{R_K}\!\left(\sum_{(i,j)\in T}e_{i,j}\right).
\]
Indeed, under the evaluation isomorphism $\ev:R_K\to K^{\ell m }$, the primitive idempotent $e_{i,j}$
corresponds to the coordinate vector $\delta_{(i,j)}\in K^{\ell m }$ which is $1$ in the $(i,j)$
component and $0$ elsewhere. 
Hence the evaluation of the sum of $e_{i,j}$'s over $T$ gives 
\begin{equation*}
\ev\left( \sum_{(i,j)\in T} e_{i,j} \right)=\mathbf{1}_T,
\end{equation*}
the indicator vector of the subset $T\subseteq \Z_{ell}\times\Z_m$.
Since multiplication in $K^{\ell m }$ is component-wise, for any $f\in R_K$ we have
\begin{equation*}
\ev(f)\cdot \mathbf{1}_T=0
\quad\Longleftrightarrow\quad
\ev(f)_{(i,j)}=0\ \text{for all }(i,j)\in T
\quad\Longleftrightarrow\quad
f(\alpha^i,\beta^j)=0\ \text{for all }(i,j)\in T.
\end{equation*}
Transporting this statement back along $\ev$ gives
\begin{equation*}
\ann_{R_K}\!\left(\sum_{(i,j)\in T}e_{i,j}\right)
=\{f\in R_K:\ f(\alpha^i,\beta^j)=0\ \text{for all }(i,j)\in T\}
= I_T,
\end{equation*}
as claimed.
Hence, we established a natural bijection between the subsets of $\Z_\ell \times \Z_m$ and the ideals of $R_K$ (via $\ev$).

Now let $J\lhd R$ be any ideal. By extending scalars, we obtain an ideal $J_K=K\otimes_{\F_q}J\lhd R_K$.
The zero set of $J$, that is, 
\begin{equation*}
Z(J)=\{(i,j)\in \Z_\ell \times \Z_m:\ f(\alpha^i,\beta^j)=0\ \text{ for every }f\in J\}
\end{equation*}
is Frobenius-closed since $J$ is defined over $\F_q$.
Indeed, if $(i,j)\in Z(J)$ then applying $\sigma$ to $f(\alpha^i,\beta^j)=0$ gives
$f(\alpha^{qi},\beta^{qj})=0$, so $(qi,qj)\in Z(J)$, and hence $Z(J)$ is a union of orbits.

Recall that $\Omega$ denotes the set of all $q$-Frobenius orbits in $\Z_\ell \times \Z_m$. 
Let $\mathcal{S}$ be the set of orbits \emph{not} contained in $Z(J)$:
\begin{equation*}
\mathcal{S}=\{O\in \Omega:\ O\not\subseteq Z(J)\}.
\end{equation*}
Equivalently, $Z(J)$ is the union of the complementary orbits $\mathcal{S}^c$.
Then the indicator idempotent of $Z(J)$ is
\begin{equation*}
c_{Z(J)}=\sum_{(i,j)\in Z(J)}e_{i,j}=\sum_{O\in\mathcal{S}^c} e_O\in R,
\end{equation*}
and (over $R_K$) the ideal of functions vanishing on $Z(J)$ is exactly
\begin{equation*}
I\bigl(Z(J)\bigr)=\ann_{R}\!\bigl(c_{Z(J)}\bigr)=\ann_R\!\Bigl(\sum_{O\in\mathcal{S}^c}e_O\Bigr)
=\Bigl(\sum_{O\in\mathcal{S}}e_O\Bigr)R
=e_{\mathcal{S}}R.
\end{equation*}
But by definition $J\subseteq I(Z(J))$, and since $I(Z(J))$ is the \emph{largest} ideal with zero set
containing $Z(J)$, we must have $J=I(Z(J))$.  Hence $J=e_{\mathcal{S}}R$.
Hence, we associated a set of orbits $\mathcal{S}$ to the ideal $J$.
We claim that the uniqueness of $\mathcal{S}$ follows from orthogonality of the idempotents $\{e_O\}$. 
Indeed, 
if $e_{\mathcal{S}}R=e_{\mathcal{T}}R$, then multiplying by $e_O$ isolates each orbit component and
forces $\mathcal{S}=\mathcal{T}$.
This finishes the proof of our claim, so (1) follows.

\medskip 
We now prove the three identities in (2).
First note that since the idempotents
$\{e_O\}_{O\in\Omega}$ are pairwise orthogonal and sum to $1$, we have for all
$\mathcal{S},\mathcal{T}$:
\begin{equation}
\label{eq:eSeT}
e_{\mathcal{S}}e_{\mathcal{T}}
=
\Bigl(\sum_{O\in\mathcal{S}}e_O\Bigr)\Bigl(\sum_{O^\prime\in\mathcal{T}}e_{O^\prime}\Bigr)
=
\sum_{O\in\mathcal{S}\cap\mathcal{T}} e_O
=
e_{\mathcal{S}\cap\mathcal{T}}.
\end{equation}
Now let $x\in J_{\mathcal{S}}\cap J_{\mathcal{T}}$. Then $x=e_{\mathcal{S}}r=e_{\mathcal{T}}s$ for some
$r,s\in R$. Multiplying the equality $x=e_{\mathcal{S}}r$ by $e_{\mathcal{T}}$ and using
\eqref{eq:eSeT} yields
\begin{equation*}
x=e_{\mathcal{T}}x=e_{\mathcal{T}}e_{\mathcal{S}}r=e_{\mathcal{S}\cap\mathcal{T}}r\in J_{\mathcal{S}\cap\mathcal{T}}.
\end{equation*}
Hence $J_{\mathcal{S}}\cap J_{\mathcal{T}}\subseteq J_{\mathcal{S}\cap\mathcal{T}}$.
Conversely, if $x=e_{\mathcal{S}\cap\mathcal{T}}r$, then
$x=e_{\mathcal{S}}(e_{\mathcal{T}}r)\in J_{\mathcal{S}}$ and similarly $x\in J_{\mathcal{T}}$,
so $J_{\mathcal{S}\cap\mathcal{T}}\subseteq J_{\mathcal{S}}\cap J_{\mathcal{T}}$.
Therefore
\begin{equation*}
J_{\mathcal{S}\cap\mathcal{T}}=J_{\mathcal{S}}\cap J_{\mathcal{T}}.
\end{equation*}

We claim that
\begin{equation}
\label{eq:sum-generated}
J_{\mathcal{S}}+J_{\mathcal{T}}=(e_{\mathcal{S}}+e_{\mathcal{T}}-e_{\mathcal{S}}e_{\mathcal{T}})R.
\end{equation}
Indeed, since $e_{\mathcal{S}}e_{\mathcal{T}}=e_{\mathcal{S}\cap\mathcal{T}}$, we can rewrite
\begin{equation*}
e_{\mathcal{S}}+e_{\mathcal{T}}-e_{\mathcal{S}}e_{\mathcal{T}}
=
\sum_{O\in\mathcal{S}}e_O+\sum_{O\in\mathcal{T}}e_O-\sum_{O\in\mathcal{S}\cap\mathcal{T}}e_O
=
\sum_{O\in\mathcal{S}\cup\mathcal{T}}e_O
=
e_{\mathcal{S}\cup\mathcal{T}}.
\end{equation*}
To see \eqref{eq:sum-generated} directly, note that $e_{\mathcal{S}}$ and $e_{\mathcal{T}}$ both lie
in the ideal $(e_{\mathcal{S}}+e_{\mathcal{T}}-e_{\mathcal{S}}e_{\mathcal{T}})R$, since
\begin{equation*}
e_{\mathcal{S}}=(e_{\mathcal{S}}+e_{\mathcal{T}}-e_{\mathcal{S}}e_{\mathcal{T}})\,e_{\mathcal{S}},
\qquad
e_{\mathcal{T}}=(e_{\mathcal{S}}+e_{\mathcal{T}}-e_{\mathcal{S}}e_{\mathcal{T}})\,e_{\mathcal{T}}.
\end{equation*}
Hence $J_{\mathcal{S}}+J_{\mathcal{T}}\subseteq e_{\mathcal{S}\cup\mathcal{T}}R$.
Conversely, if $x=e_{\mathcal{S}\cup\mathcal{T}}r$, then using
$e_{\mathcal{S}\cup\mathcal{T}}=e_{\mathcal{S}}+e_{\mathcal{T}}-e_{\mathcal{S}\cap\mathcal{T}}$ we get
\begin{equation*}
x=e_{\mathcal{S}}r+e_{\mathcal{T}}r-e_{\mathcal{S}\cap\mathcal{T}}r.
\end{equation*}
But $e_{\mathcal{S}\cap\mathcal{T}}r\in J_{\mathcal{S}}\cap J_{\mathcal{T}}\subseteq J_{\mathcal{S}}$.
Similarly, $e_{\mathcal{S}\cap\mathcal{T}}r\in J_{\mathcal{T}}$, so the right-hand side lies in $J_{\mathcal{S}}+J_{\mathcal{T}}$.
Thus $e_{\mathcal{S}\cup\mathcal{T}}R\subseteq J_{\mathcal{S}}+J_{\mathcal{T}}$, proving
\begin{equation*}
J_{\mathcal{S}\cup\mathcal{T}}=J_{\mathcal{S}}+J_{\mathcal{T}}.
\end{equation*}

It remains to show that $\ann_R(J_{\mathcal{S}})=J_{\mathcal{S}^c}$.
Let $r\in R$. Then $r\in \ann_R(J_{\mathcal{S}})$ if and only if $r(e_{\mathcal{S}}R)=0$, or, equivalently, if and only if 
$re_{\mathcal{S}}=0$. Now decompose $r$ using $1=\sum_O e_O$:
\begin{equation*}
r=1 \cdot r = \left(\sum_{O} e_O\right) r=\sum_{O} (e_O r).
\end{equation*}
If $e_{\mathcal{S}}r=0$, then for each $O\in\mathcal{S}$ we have
\begin{equation*}
e_O r=(e_{\mathcal{S}} r) e_O=0,
\end{equation*}
so $r=\sum_{O\notin\mathcal{S}} e_O r \in \sum_{O\in\mathcal{S}^c} e_OR = e_{\mathcal{S}^c}R=J_{\mathcal{S}^c}$.
Conversely, if $r\in e_{\mathcal{S}^c}R$, say $r=e_{\mathcal{S}^c}t$, then
\begin{equation*}
e_{\mathcal{S}} r = e_{\mathcal{S}^c} t \, e_{\mathcal{S}}=(e_{\mathcal{S}^c}e_{\mathcal{S}})t=0
\end{equation*}
since $e_{\mathcal{S}^c}e_{\mathcal{S}}=0$ by orthogonality. Hence $r\in\ann_R(J_{\mathcal{S}})$.
Therefore $\ann_R(J_{\mathcal{S}})=J_{\mathcal{S}^c}$.
This finishes the proof of our theorem.
\end{proof}

Consider the set function
\begin{equation*}
    f:\mathcal{P}(\Omega) \to \Z_{\ge 0}, \qquad f(\mathcal{T}) = \sum_{O \in \mathcal{T}} |O|.
\end{equation*}
Equivalently, by Corollary~\ref{cor:orbit-additive-dim}, $f(\mathcal{T}) = \dim_{\F_q} (I_{\mathcal{T}})$,
where $I_\mathcal{T}$ is the ideal corresponding to the zero set $\mathcal{T}$. We proceed to show the (poset/lattice theoretic) modularity property of $f$~\cite[Chapter 3]{StanleyV1}.

\begin{proposition}
\label{prop:modularity}
    For all $\mathcal{T}, \mathcal{U} \subseteq \Omega$ we have
    \begin{equation*}
        f(\mathcal{T}) + f(\mathcal{U}) = f(\mathcal{T} \cup \mathcal{U}) + f(\mathcal{T} \cap \mathcal{U}).
    \end{equation*}
    Equivalently, $\dim_{\F_q}(I_{\mathcal{T}}) + \dim_{\F_q}(I_{\mathcal{U}}) = \dim_{\F_q}(I_{\mathcal{T} \cup \mathcal{U}}) + \dim_{\F_q}(I_{\mathcal{T} \cap \mathcal{U}})$.
\end{proposition}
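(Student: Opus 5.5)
The identity is inclusion--exclusion for a finite additive set function. I would prove it directly from the definition $f(\mathcal{T})=\sum_{O\in\mathcal{T}}|O|$, and then transfer it to dimensions using Corollary~\ref{cor:orbit-additive-dim}.

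First, write the weight function as a sum of indicators:
\[
f(\mathcal{T})=\sum_{O\in\Omega}|O|\,\mathbf{1}_{\mathcal{T}}(O).
\]
For every $O\in\Omega$ the pointwise identity
\[
\mathbf{1}_{\mathcal{T}}(O)+\mathbf{1}_{\mathcal{U}}(O)=\mathbf{1}_{\mathcal{T}\cup\mathcal{U}}(O)+\mathbf{1}_{\mathcal{T}\cap\mathcal{U}}(O)
\]
holds. One checks this by the four cases: $O$ lies in both sets, in exactly one, or in neither; both sides give $2,1,1,0$ respectively. Multiplying by $|O|$ and summing over the finite set $\Omega$ yields the first equation.

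Alternatively, one can split $\mathcal{T}\cup\mathcal{U}$ as the disjoint union $(\mathcal{T}\setminus\mathcal{U})\sqcup(\mathcal{T}\cap\mathcal{U})\sqcup(\mathcal{U}\setminus\mathcal{T})$ and use additivity of $f$ over disjoint unions.

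For the equivalent dimension statement, Corollary~\ref{cor:orbit-additive-dim} gives $\dim_{\F_q}(I_{\mathcal{S}})=f(\mathcal{S})$ for each of the four sets $\mathcal{S}\in\{\mathcal{T},\mathcal{U},\mathcal{T}\cup\mathcal{U},\mathcal{T}\cap\mathcal{U}\}$. Substituting these into the identity for $f$ finishes the proof.

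As a consistency check, Theorem~\ref{thm:ideals-by-orbits} gives $I_{\mathcal{T}\cup\mathcal{U}}=I_{\mathcal{T}}+I_{\mathcal{U}}$ and $I_{\mathcal{T}\cap\mathcal{U}}=I_{\mathcal{T}}\cap I_{\mathcal{U}}$. The statement is then also an instance of Grassmann's dimension formula $\dim(V+W)+\dim(V\cap W)=\dim V+\dim W$.

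There is no real obstacle here. The only point needing care is the bookkeeping of which ideal $I_{\mathcal{T}}$ denotes. The paper uses $I_{\mathcal{T}}=e_{\mathcal{T}}R$ in \eqref{eq:e_T&I_T}, but it also describes $I_{\mathcal{T}}$ loosely as "the ideal corresponding to the zero set $\mathcal{T}$", which would have dimension $\ell m-f(\mathcal{T})$ instead. In that reading the identity still holds: the function $\mathcal{T}\mapsto \ell m-f(\mathcal{T})$ is also modular, because constants cancel on both sides. So I would remark that the result holds under either convention.
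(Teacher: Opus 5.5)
Your proof is correct and is essentially the paper's argument. The paper splits $\mathcal{T}\cup\mathcal{U}$ as $(\mathcal{T}\setminus\mathcal{U})\sqcup(\mathcal{U}\setminus\mathcal{T})\sqcup(\mathcal{T}\cap\mathcal{U})$, uses additivity of $f$, and then reads off the dimension form from Corollary~\ref{cor:orbit-additive-dim}; this is exactly your alternative, and your indicator-function version is the same computation done pointwise, while your Grassmann-formula check and your remark on the loose ``zero set'' phrasing are accurate but not needed.
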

\begin{proof}
    Write the disjoint decomposition
    \begin{equation*}
        \mathcal{T} = (\mathcal{T} \setminus \mathcal{U}) \dot \cup (\mathcal{T} \cap \mathcal{U}), \qquad \mathcal{U} = (\mathcal{U} \setminus \mathcal{T}) \dot \cup (\mathcal{T} \cap \mathcal{U}),
    \end{equation*}
    and similarly
    \begin{equation*}
        \mathcal{T} \cup \mathcal{U} =(\mathcal{T} \setminus \mathcal{U}) \dot \cup (\mathcal{U} \setminus \mathcal{T}) \dot \cup (\mathcal{T} \cap \mathcal{U}).
    \end{equation*}
    Using additivity of sums over disjoint unions, we obtain 
    \begin{align*}
        f(\mathcal{T}) + f(\mathcal{U}) &= \sum_{O \in \mathcal{T} \setminus \mathcal{U}} |O| + \sum_{O \in \mathcal{T} \cap \mathcal{U}} |O| + \sum_{O \in \mathcal{U} \setminus \mathcal{T}} |O| + \sum_{O \in \mathcal{T} \cap \mathcal{U}} |O|\\
        &= \sum_{O \in \mathcal{T} \setminus \mathcal{U}} |O| + \sum_{O \in \mathcal{U} \setminus \mathcal{T}} |O| + \sum_{O \in \mathcal{T} \cap \mathcal{U}} |O| + \sum_{O \in \mathcal{T} \cap \mathcal{U}} |O|\\
        &= \sum_{O \in \mathcal{T} \cup \mathcal{U}} |O| + \sum_{O \in \mathcal{T} \cap \mathcal{U}} |O|\\
        &= f(\mathcal{T} \cup \mathcal{U}) + f(\mathcal{T} \cap \mathcal{U}),
    \end{align*}
    as claimed.
\end{proof}

There is a natural extension of multiplier automorphisms to the 2D cyclic code setting. Let $a \in [\ell]$ and $b \in [m]$ be integers such that $\gcd(a, \ell) = 1$ and $\gcd(b, m) = 1$. 
Then the mapping $\mu_{a, b} : [\ell] \times [m] \to [\ell] \times [m]$ defined by 
\begin{equation*}
    \mu_{a, b}(i, j) = (a \cdot i \!\!\!\! \mod \ell, \ b \cdot j \!\!\!\! \mod m)
\end{equation*}
is a bijection. Its action on the bivariate polynomials is given by 
\begin{equation*}
  \mu_{a, b}(x^i y^j) = x^{ai} y^{bj} \!\!\!\! \mod{\langle x^\ell - 1, y^m - 1 \rangle}.
\end{equation*}

\begin{remark}
In the semisimple case (where $\gcd(q, \ell m) = 1$), the $q$-Frobenius mapping defined by $\mu:(i, j) \mapsto (q \cdot i \!\!\! \mod \ell, \ q \cdot j \!\!\! \mod m)$ is a specific diagonal multiplier mapping.
\end{remark}

Let $\mathcal{C}$ be a 2D cyclic code in $R$. The mapping $\mu_{a, b}$ is a \emph{multiplier automorphism of $\mathcal{C}$} if $\mu_{a, b}(\mathcal{C}) = \mathcal{C}$. As in the 1D case, this can be stated in terms of defining sets. Let $\mathcal{T} = O_{(i_1, j_1)} \cup \cdots \cup O_{(i_r, j_r)}$ be the union of Frobenius orbits that make up the defining set for $\mathcal{C}$. The mapping $\mu_{a, b}$ is a multiplier automorphism of $\mathcal{C}$ if and only if $\mu_{a, b}(\mathcal{T}) = \mathcal{T}$.

\begin{lemma}
    Let $\mu_{a, b}$ be a multiplier mapping on $R$ and let $\Omega$ denote the set of $q$-Frobenius orbits in $\Z_\ell \times \Z_m$. Then $\mu_{a, b}$ induces a permutation of $\Omega$ and satisfies $\mu_{a, b}(e_O) = e_{\mu_{a, b}^{-1}(O)}$ for every $O \in \Omega$.
\end{lemma}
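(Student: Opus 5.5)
The plan is to work over the splitting field $K$ and use the evaluation isomorphism $\ev:R_K\to K^{\ell m}$. Since $\gcd(a,\ell)=\gcd(b,m)=1$, the substitution $x\mapsto x^{a}$, $y\mapsto y^{b}$ is a well-defined ring endomorphism of $K[x,y]/\idlGens{x^\ell-1,y^m-1}$: it sends $x^\ell-1$ to $x^{a\ell}-1$, which lies in $\idlGens{x^\ell-1}$, and similarly for $y$. It is bijective on the monomial basis, hence a $K$-algebra automorphism of $R_K$. It commutes with the coefficientwise Frobenius $\sigma$, so it restricts to $R$, and on $R$ it agrees with the coordinate permutation $\mu_{a,b}$ defined above. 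To avoid clashing with the multiplier parameters, I will rename the summation indices in Lemma~\ref{lem:fourier-eij} to $(s,t)$.

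\textbf{Step 1: $\mu_{a,b}$ permutes $\Omega$.} Multiplication by $(a,b)$ commutes with multiplication by $q$ on $\Z_\ell\times\Z_m$. Therefore
\[
\mu_{a,b}(O_{(i,j)})=\{(q^t a i,\,q^t b j)\}=O_{(ai,bj)},
\]
so $\mu_{a,b}$ maps orbits to orbits. Since it is a bijection on points, the induced map on $\Omega$ is a permutation, and its inverse is induced by $\mu_{a^{-1},b^{-1}}$, with inverses taken modulo $\ell$ and $m$.

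\textbf{Step 2: the action on primitive idempotents.} For $f\in R_K$ we have
\[
\mu_{a,b}(f)(\alpha^{i'},\beta^{j'})=f(\alpha^{ai'},\beta^{bj'}),
\]
because evaluation is a ring homomorphism and $\mu_{a,b}$ is substitution. Applying this to $e_{i,j}$ and using \eqref{2d_idemp_prop}, the element $\mu_{a,b}(e_{i,j})$ takes the value $1$ at $(i',j')$ exactly when $(ai',bj')=(i,j)$, that is, when $(i',j')=\mu_{a,b}^{-1}(i,j)$, and $0$ otherwise. Since $\ev$ is injective, this gives $\mu_{a,b}(e_{i,j})=e_{\mu_{a,b}^{-1}(i,j)}$.

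As a cross-check, the Fourier formula gives the same result. Reindexing $s'=as$ and $t'=bt$ turns the coefficient $\alpha^{-is}\beta^{-jt}$ into $\alpha^{-ia^{-1}s'}\beta^{-jb^{-1}t'}$.

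\textbf{Step 3: sum over an orbit.} Summing the identity from Step 2 over $(i,j)\in O$ and using linearity gives
\[
\mu_{a,b}(e_O)=\sum_{(i,j)\in O}e_{\mu_{a,b}^{-1}(i,j)}=e_{\mu_{a,b}^{-1}(O)}.
\]
Here Step 1 guarantees that $\mu_{a,b}^{-1}(O)$ is again an orbit, so the right-hand side is a genuine orbit idempotent in $R$ by Proposition~\ref{prop:orbit-idempotents-in-R}.

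\textbf{Main obstacle.} There is no deep difficulty here. The points requiring care are, first, justifying that the substitution is a ring automorphism of $R_K$ compatible with $\sigma$, and second, getting the direction of the inverse right. The action on functions is a pullback, which is why the index moves by $\mu_{a,b}^{-1}$ rather than $\mu_{a,b}$. The evaluation-based argument in Step 2 settles that direction cleanly.
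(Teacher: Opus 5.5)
Your proposal is correct and follows the same route as the paper. Both arguments use the commutation of the multiplier with multiplication by $q$ to show $\Omega$ is permuted, then the identity $\mu_{a,b}(e_{i,j})=e_{\mu_{a,b}^{-1}(i,j)}$, then summation over the orbit. The one difference is that you justify the action on primitive idempotents by evaluating on the root grid, which correctly identifies the pullback direction, whereas the paper simply asserts that identity.
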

\begin{proof}
    Multiplication by $a$ modulo $\ell$ and by $b$ modulo $m$ commutes with multiplication by $q$. Thus $\mu_{a,b}(q^t i, q^t j) = (q^t a i, q^t b j)$ for every $t \ge 0$.
    This implies $\mu_{a, b}$ maps each $q$-Frobenius orbit to a $q$-Frobenius orbit, so it permutes $\Omega$.
    
    Now write
    \begin{equation*}
        e_O = \sum_{(i, j) \in O} e_{i, j}.
    \end{equation*}
    The multiplier $\mu_{a, b}$ sends the primitive spectral idempotent $e_{i, j}$ to $e_{\mu_{a, b}^{-1}(i, j)}$. We obtain
    \begin{equation*}
        \mu_{a, b}(e_O) = \sum_{(i, j) \in O} e_{\mu_{a, b}^{-1}(i, j)} = \sum_{(r, s) \in \mu_{a, b}^{-1}(O)} e_{r, s} = e_{\mu_{a, b}^{-1}(O)}.
    \end{equation*}
\end{proof}

\subsection{Distance Bounds}
In the broader algebraic context, ideals in $R$ are often referred to as abelian codes since they correspond to ideals in an abelian group ring. Unlike 1D cyclic codes, bounding the minimum distance of 2D abelian codes requires handling multiple indeterminates. The two primary methods for establishing distance lower bounds are Camion's generalized abelian BCH bound~\cite{Camion1971} and Jensen's concatenated approach~\cite{Jensen1985}. A good review of these works can be found in~\cite{Sabin1992}.
Sakata later provided alternative bounds and locator decoding algorithms using independent point sets and Gröbner bases for doubly periodic arrays~\cite{Sakata1978, Sakata1981}. 
Other relevant work in this area was done by Imai~\cite{Imai1977}.

The foundational lower bound for 2D cyclic codes relies on the concept of \emph{apparent distance}, originally introduced by Camion (1971) for abelian codes. 
Unlike the 1D BCH bound that looks for a single linear sequence of consecutive roots, this bound requires the code's defining set to contain specific sets of complete rows and columns, referred to as zero hypercolumns. In 2D, these collections of zero hypercolumns correspond to unions of consecutive index strips.

Building on Camion's work~\cite{Camion1971}, Bernal, Bueno-Carre{\~{n}}o, and Sim{\'{o}}n (2016) formalized in~\cite{Bernal2016} the notion of a \emph{strong apparent distance}, denoted $\mathrm{sd}(\mathcal{C})$, to establish a ``Multivariate BCH Bound'' (see~\cite[Theorem 30]{Bernal2016}). While we omit the rather involved algorithmic definition of $\mathrm{sd}(\mathcal{C})$ here for the sake of brevity, the interested reader is referred to the original paper for its full formulation. 

For practical code design, the 2D BCH bound is most easily expressed in terms of index sets. The following formulation of the bound is from~\cite[Corollary 26]{Bernal2016}. 

\begin{theorem}[2D BCH Bound]
\label{thm:2D-BCH-bound}
Let $q$ be a power of the prime $p$, let $\mathcal{C}$ be a nonzero 2D cyclic (abelian) code of length $\ell\times m$ over $\F_q$, and assume that $\gcd(p,\ell m)=1$. Write $r_1=\ell$, $r_2=m$, and $I=\Z_\ell\times\Z_m$. Let $\emptyset \neq\Gamma\subseteq\{1,2\}$. For each $k\in\Gamma$, choose a starting shift $b_k\geq 0$ and a designed distance $2\leq\delta_k\leq r_k$, and define the consecutive index set $J_k=\{\overline{b_k},\overline{b_k+1},\dots,\overline{b_k+\delta_k-2}\}\subseteq\Z_{r_k}$. Let $A_k=\{(i_1,i_2)\in I\mid i_k\in J_k\}$, so that $A_k$ is a union of $\delta_k-1$ consecutive full zero hypercolumns in the $k$th coordinate direction.

If the defining set of $\mathcal{C}$ contains $\bigcup_{k\in\Gamma}A_k$, then $d(\mathcal{C})\geq\prod_{k\in\Gamma}\delta_k$. In particular, when $\Gamma=\{k\}$, a run of $\delta_k-1$ consecutive full zero hypercolumns gives the one-sided bound $d(\mathcal{C})\geq\delta_k$.
\end{theorem}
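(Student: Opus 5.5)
We prove the bound by reducing each nonzero codeword to univariate polynomials and applying the one-dimensional BCH bound, Theorem~\ref{thm:BCH-bound}, once in each direction in $\Gamma$. A codeword $c=\sum_{a,b}c_{a,b}x^ay^b\in\mathcal{C}$ lies in the defining set exactly when $c(\alpha^{i},\beta^{j})=0$ for every $(i,j)$ in that set. We treat the one-sided case $\Gamma=\{1\}$ first.

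Write $c=\sum_{b}c_b(x)y^b$, with rows $c_b(x)\in\F_q[x]/\idlGens{x^\ell-1}$. For a fixed $i\in J_1$, the hypothesis gives $\sum_b c_b(\alpha^i)\beta^{jb}=0$ for every $j\in\Z_m$. This says the discrete Fourier transform in the $y$-direction of the vector $(c_b(\alpha^i))_{b\in\Z_m}$ vanishes. Since $m$ is invertible in $\F_q$ by semisimplicity, the inverse transform (the same orthogonality computation as in Lemma~\ref{lem:fourier-eij}) gives $c_b(\alpha^i)=0$ for all $b$. So every row $c_b(x)$ is a codeword of the 1D cyclic code of length $\ell$ whose zero set contains the $\delta_1-1$ consecutive powers $\alpha^{b_1},\dots,\alpha^{b_1+\delta_1-2}$. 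One concern is that this set may not be a union of cyclotomic cosets. This is harmless, because the BCH bound holds for any polynomial vanishing at consecutive powers of $\alpha$ (Vandermonde argument), and $c_b$ has coefficients in $\F_q$ anyway. Hence each nonzero row has weight at least $\delta_1$. Since $c\neq0$, some row is nonzero, so $\wt(c)\ge\delta_1$. The case $\Gamma=\{2\}$ is symmetric, using columns.

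For $\Gamma=\{1,2\}$ we concatenate the two directions. Let $c\neq 0$ and let $B=\{b: c_b\neq 0\}$ be the set of nonzero rows. By the previous paragraph, $|\mathrm{supp}(c_b)|\ge\delta_1$ for each $b\in B$, so it suffices to show $|B|\ge\delta_2$. Rather than looking at the row indicator, we look at column-wise data. Using the $A_2$ hypothesis in the same way, for each fixed $a$ the column polynomial $\gamma_a(y)=\sum_b c_{a,b}y^b$ vanishes at $\beta^{b_2},\dots,\beta^{b_2+\delta_2-2}$. Hence every nonzero column has weight at least $\delta_2$.

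The remaining step is to combine the row and column estimates into the product $\delta_1\delta_2$, and this is the main obstacle. The two facts proved so far, that nonzero rows have weight at least $\delta_1$ and nonzero columns have weight at least $\delta_2$, only give $\max(\delta_1,\delta_2)$ together with a rectangle-counting lower bound. That is not the product: a cross-shaped support, for instance, satisfies both facts with fewer than $\delta_1\delta_2$ points. The plan is therefore to use the stronger structure that $A_1\cup A_2$ is a union of full hypercolumns, following Camion's and Jensen's concatenation argument~\cite{Jensen1985}. I would pick a nonzero evaluation $\hat c(i_0,y)=\sum_b c_b(\alpha^{i_0})y^b$ with $i_0\notin J_1$; one must exist, since otherwise all rows would vanish on all of $\Z_\ell$ and $c$ would be zero. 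This polynomial is a combination of rows over $K$ and vanishes at the $\delta_2-1$ consecutive $\beta$-powers, so it has at least $\delta_2$ nonzero coefficients $b$. Each such $b$ has $c_b\neq0$, and that row has weight at least $\delta_1$. This gives $|B|\ge\delta_2$ and hence $\wt(c)\ge\sum_{b\in B}\wt(c_b)\ge\delta_1\delta_2$.

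Before relying on this, I must check that the vanishing of $\hat c(i_0,\cdot)$ at $\beta^{j}$ for $j\in J_2$ genuinely follows from $A_2$. It does: $\hat c(i_0,\beta^j)=c(\alpha^{i_0},\beta^j)$ and $(i_0,j)\in A_2$. If that check goes through, the remainder of the argument is routine.
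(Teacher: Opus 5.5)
Your proof is correct, and it takes a different route from the paper, which does not prove this statement. The paper quotes it from \cite[Corollary 26]{Bernal2016}, where it is a special case of a multivariate BCH bound expressed through the strong apparent distance $\mathrm{sd}(\mathcal{C})$, in the tradition of Camion. That route buys generality: it covers arbitrary defining sets and any number of variables, and the full-hypercolumn form is just its easily stated special case. The cost is an involved combinatorial definition.

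Your argument is self-contained. Invertibility of the length-$m$ Fourier transform (from $\gcd(p,m)=1$) turns the $A_1$ hypothesis into vanishing of every row at $\alpha^{J_1}$. A nonzero partial evaluation $\hat c(i_0,\cdot)$ vanishes at $\beta^{J_2}$ because $(i_0,j)\in A_2$, and this forces at least $\delta_2$ nonzero rows. Both steps use only the Vandermonde form of the 1D BCH bound over $K$, so no cyclotomic closure of $J_k$ is needed. The argument also shows exactly why full strips are required (compare Remark~\ref{rem:2D-BCH-rectangle-warning}): $A_1$ is used on every row, and $A_2$ is used at an index $i_0$ that you cannot choose in advance.

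One correction: your ``main obstacle'' paragraph asserts something false. You had already shown that every nonzero row has weight at least $\delta_1$ and every nonzero column has weight at least $\delta_2$. These two facts do give $\wt(c)\ge\delta_1\delta_2$:
a nonzero row has nonzero entries in at least $\delta_1$ distinct columns; each of those columns is nonzero and so carries at least $\delta_2$ nonzero entries; and distinct columns are disjoint. This is the same product-code count the paper uses for Proposition~\ref{prop:prod-distance}.

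A cross-shaped support is not a counterexample. The rows that meet the cross only in its vertical arm have weight $1<\delta_1$, and likewise the columns that meet it only in its horizontal arm have weight $1<\delta_2$. So your concatenation step is valid but not needed to overcome anything, and you should remove the incorrect justification from the write-up.
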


The following example illustrates both the one-sided and two-sided forms of Theorem~\ref{thm:2D-BCH-bound}. We work with two ideals in the same ring, in the same manner as the classical annihilator and colon ideals that occur in our BB distance bounds.

\begin{example}
\label{ex:2D-BCH-one-two-sided}
Let $R=\F_2[x,y]/\langle x^3-1,\ y^3-1\rangle$, and fix primitive third roots of unity $\alpha,\beta\in\F_4$. We consider the ideals $\mathcal{C}_1=\langle g_1\rangle$ and $\mathcal{C}_2=\langle g_2\rangle$ generated by $g_1=1+x+x^2$ and $g_2=(1+x)(1+y)=1+x+y+xy$, respectively. Since each ideal is principal, its defining set is the zero set of its generator.

For the first ideal, $g_1(\alpha^i,\beta^j)=0$ exactly when $i\in\{1,2\}$, independently of $j$. Hence $\mathcal{Z}(\mathcal{C}_1)=\{1,2\}\times\Z_3$. Taking $\Gamma=\{1\}$, $b_1=1$, and $\delta_1=3$ gives $J_1=\{1,2\}$ and $A_1=\{1,2\}\times\Z_3$. The one-sided form of Theorem~\ref{thm:2D-BCH-bound} therefore gives $d(\mathcal{C}_1)\geq 3$. 

For the second ideal, $g_2(\alpha^i,\beta^j)=(1+\alpha^i)(1+\beta^j)$, which vanishes exactly when $i=0$ or $j=0$. Thus $\mathcal{Z}(\mathcal{C}_2)=(\{0\}\times\Z_3)\cup(\Z_3\times\{0\})$. Taking $\Gamma=\{1,2\}$, $b_1=b_2=0$, and $\delta_1=\delta_2=2$ gives $J_1=J_2=\{0\}$, $A_1=\{0\}\times\Z_3$, and $A_2=\Z_3\times\{0\}$. The two-sided form of Theorem~\ref{thm:2D-BCH-bound} therefore gives $d(\mathcal{C}_2)\geq 2\cdot2=4$. 
\end{example}

\begin{remark}
\label{rem:2D-BCH-rectangle-warning}
The two-sided hypothesis in Theorem~\ref{thm:2D-BCH-bound} concerns a union of full zero hypercolumns, not merely a rectangular block of zeros. For $\Gamma=\{1,2\}$, the theorem requires the defining set to contain $A_1\cup A_2=(J_1\times\Z_m)\cup(\Z_\ell\times J_2)$. This union necessarily contains the rectangular block $A_1\cap A_2=J_1\times J_2$, which has $(\delta_1-1)(\delta_2-1)$ points, but the converse is false: containing the rectangle $J_1\times J_2$ does not imply that the defining set contains either family of full strips.

The first ideal in Example~\ref{ex:2D-BCH-one-two-sided} gives a concrete counterexample. Its defining set $\mathcal{Z}(\mathcal{C}_1)=\{1,2\}\times\Z_3$ contains the consecutive rectangle $\{1,2\}\times\{0\}$ associated with $\delta_1=3$ and $\delta_2=2$. However, it does not contain the required union $(\{1,2\}\times\Z_3)\cup(\Z_3\times\{0\})$, since $(0,0)$ is not a zero of $g_1$. Treating the rectangle as sufficient would incorrectly predict the product bound $d(\mathcal{C}_1)\geq 3\cdot2=6$, whereas the true distance is $3$.

Thus, when applying the theorem, one must check that every point in each required hypercolumn belongs to the defining set. It is not enough to find a consecutive rectangular block. We also recall that consecutiveness is cyclic: the index sets $J_k$ are taken modulo $r_k$ and may wrap around from $r_k-1$ to $0$.
\end{remark}

For consistency with the source, we use the term hypercolumn in Theorem~\ref{thm:2D-BCH-bound}. In two dimensions, collections of zero hypercolumns correspond to unions of full strips (parallel to the coordinate axes). This description lends itself to visualization on a 2D grid, so we use the language of strips going forward. In the setting considered here, $\prod_{k\in\Gamma}\delta_k$ equals $\delta_k$ in the one-sided case $\Gamma=\{k\}$ and equals $\delta_1\delta_2$ in the two-sided case $\Gamma=\{1,2\}$. We call this value a \emph{BCH distance certificate}.

Much like the 1D case, this bound can be strengthened. The Hartmann-Tzeng and Roos bounds have natural multivariate generalizations for abelian codes, typically leveraging product sets of independent sequences across both the $x$ and $y$ dimensions to establish higher minimum distances. 

The shift bound was formally generalized to multivariate polynomial rings by Pellikaan~\cite{Pellikaan1996}, making it directly applicable to 2D cyclic (abelian) codes.
\begin{theorem}[Multivariate Shift Bound]
    \label{thm:2D-shift-bound}
    Let $0 \neq f(x, y) \in K[x, y]$, where $K$ is a field, and let
    \begin{equation*}
        S = \{ (\theta_1, \theta_2) \in K \times K \mid f(\theta_1, \theta_2) = 0 \}.
    \end{equation*}
    If $A \subset K \times K$ is an independent set with respect to $S$ in the sense of the generalized shift bound, then $\wt(f) \geq |A|$. Consequently, if every nonzero codeword $c(x, y) \in \mathcal{C}$ has a zero set containing a set $S$ for which one can construct an independent set $A$ of size $D$, then $d(\mathcal{C}) \geq D$.
\end{theorem}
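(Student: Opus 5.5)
The plan is to reduce the bound to a rank statement about evaluation vectors indexed by the support of $f$. Since the precise notion of independence matters, I would first fix the recursive definition of van Lint--Wilson as generalized by Pellikaan. The family of sets independent with respect to $S$ is the smallest family with the following three properties:
\begin{enumerate}
    \item[(i)] $\emptyset$ is independent.
    \item[(ii)] If $A$ is independent, $A\subseteq S$ and $b\in (K^\times)^2\setminus S$, then $A\cup\{b\}$ is independent.
    \item[(iii)] If $A$ is independent and $c=(c_1,c_2)\in (K^\times)^2$, then $cA=\{(c_1\theta_1,c_2\theta_2):\theta\in A\}$ is independent.
\end{enumerate}
All points are taken in $(K^\times)^2$, which is harmless for us since the relevant points lie on the root grid $\mathcal{G}_{\alpha,\beta}$.

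Write $f=\sum_{u\in U} f_u x^{u_1}y^{u_2}$ with support $U$, so $|U|=\wt(f)$ and $f_u\neq 0$ for all $u\in U$. For a point $\theta=(\theta_1,\theta_2)$, set $v(\theta)=(\theta_1^{u_1}\theta_2^{u_2})_{u\in U}\in K^{U}$. The key observation is that, with the coefficient vector $\mathbf f=(f_u)_{u\in U}$, the standard bilinear pairing gives
\begin{equation*}
    \langle v(\theta),\mathbf f\rangle=f(\theta).
\end{equation*}
I would prove by structural induction over the recursive definition that for every independent set $A$, the vectors $\{v(\theta)\}_{\theta\in A}$ are linearly independent in $K^U$. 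This immediately yields $|A|\le |U|=\wt(f)$.

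The induction has three cases.
\begin{enumerate}
    \item[(i)] The empty set is trivial.
    \item[(ii)] Suppose $A\subseteq S$ has independent vectors and $b\notin S$. Take a relation $\sum_{a\in A}\lambda_a v(a)+\lambda_b v(b)=0$ and pair it with $\mathbf f$. Since $f(a)=0$ for every $a\in A\subseteq S$, this gives $\lambda_b f(b)=0$. As $f(b)\neq0$, we get $\lambda_b=0$, and the induction hypothesis then forces all $\lambda_a=0$.
    \item[(iii)] For a shift, $v(c\theta)=D_c\,v(\theta)$, where $D_c=\operatorname{diag}(c_1^{u_1}c_2^{u_2})_{u\in U}$ is invertible because $c\in(K^\times)^2$. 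An invertible linear map preserves linear independence.
\end{enumerate}
The point to watch is the inductive hypothesis. In case (ii) it must be the bare linear independence of the $v(\theta)$, with no dependence on $S$. This is necessary because case (iii) moves points out of $S$, and it is consistent because $f$ itself, and hence $U$, never changes along the recursion.

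The coding consequence then follows directly. Let $c\in\mathcal C$ be any nonzero codeword. By hypothesis its zero set contains $S$, and $A$ is independent with respect to $S$. Independence is monotone in $S$: enlarging $S$ only loosens the membership conditions $A\subseteq S$ and $b\notin S$ in a way that does not invalidate a construction sequence, provided each adjoined $b$ is a nonzero of $c$. To avoid this issue, I would phrase the hypothesis as $A$ being independent with respect to the zero set of each codeword, which is what the statement intends. Applying the first part to $c$ then gives $\wt(c)\ge |A|=D$, and hence $d(\mathcal C)\ge D$. I do not expect a real obstacle beyond stating the definition of independence carefully so that the pairing argument applies verbatim.
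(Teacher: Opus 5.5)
The paper does not prove this statement; it quotes it from Pellikaan. Your argument for the first assertion is the classical van Lint--Wilson rank argument in two variables, and it is correct. The invariant that $\{v(\theta)\}_{\theta\in A}$ is linearly independent in $K^U$ survives adjoining $b$: pair with $\mathbf f$ and use $f|_A=0$, $f(b)\neq0$. It also survives shifts, via the invertible diagonal $D_c$. Hence $|A|\le|U|=\wt(f)$. Keep one point consistent: the adjoining step uses $b\notin S\Rightarrow f(b)\neq0$, so $S$ must be the full zero set of $f$ on whatever set the adjoined points range over. If you later work with the root-grid zero set $\mathcal Z(c)$ of a codeword, restrict adjoined points and shifts to $\mathcal G_{\alpha,\beta}$.

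The sentence ``Independence is monotone in $S$'' is false, and your proviso (each adjoined $b$ is a nonzero of $c$) is exactly what can fail. Enlarging $S$ makes $b\notin S$ harder to satisfy. A construction relative to a proper subset $S\subsetneq\mathcal Z(c)$ may therefore adjoin zeros of $c$, and then the pairing step gives nothing.

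In fact the second assertion, read literally, is false. Take $R=\F_2[x,y]/\idlGens{x^3-1,y^3-1}$ with $\alpha=\beta=\omega$ and $\mathcal C=\idlGens{1+x}$. Every nonzero codeword vanishes on $\{(1,1),(1,\beta),(1,\beta^2)\}$, which contains $S=\{(1,1),(1,\beta)\}$. Build an independent set as follows:
\begin{enumerate}
\item Start from $\{(1,\beta^2)\}$.
\item Shift by $(1,\beta)$ to get $\{(1,1)\}\subseteq S$.
\item Adjoin $(1,\beta^2)\notin S$.
\item Shift by $(1,\beta)$ onto $S$.
\item Adjoin $(1,\beta^2)$ again.
\end{enumerate}
This gives an independent set of size $3$ with respect to $S$, while $\wt(1+x)=2$.

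So your reformulation is necessary, not merely a reading of intent: $A$ must be independent with respect to the \emph{entire} zero set of each nonzero codeword. That set is only known to contain $\mathcal Z(\mathcal C)$, so the usable code-level statement is this. If for every Frobenius-closed $T$ with $\mathcal Z(\mathcal C)\subseteq T\subsetneq\mathcal G_{\alpha,\beta}$ there is an independent set of size at least $D$ with respect to $T$, then $d(\mathcal C)\ge D$. Under that hypothesis your proof goes through verbatim. Drop the monotonicity claim and state explicitly that the literal version fails.
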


Bounds for the duals of 2D cyclic codes also exist, but they are considerably more complex to evaluate than the classical Carlitz-Uchiyama theorem. While the 1D Carlitz-Uchiyama bound relies on Hasse-Weil estimates for curves, bounding the exponential sums associated with bivariate dual codes requires multi-dimensional character sum estimates.

Camion generalized the classical BCH bound to the abelian case~\cite{Camion1971}. An abelian codeword can be associated with a polynomial representing its discrete Fourier transform. An abelian code is then identified with a specific polynomial called its support. The generalized bound is determined by viewing this support polynomial successively as a polynomial in each single indeterminate:
\begin{itemize}
    \item By isolating one indeterminate, the coefficients of the support polynomial can be viewed as polynomials in the remaining variables.
    \item The standard 1D cyclic BCH bound is then applied to these coefficients to estimate their minimum number of nonzeros.
    \item Maximizing these nonzero estimates across the different variables establishes a lower bound on the minimum distance of the entire abelian code, known as the abelian BCH bound.
\end{itemize}

An alternative approach to establishing a lower bound was presented by Jensen~\cite{Jensen1985}. As Sabin notes in~\cite{Sabin1992}, this method avoids the complex mathematical analysis required for a direct generalization of a cyclic bound to the multivariate case. Instead, it views the 2D abelian code as a direct sum of concatenations of cyclic outer and inner codes.
\begin{itemize}
    \item An abelian code in the group ring corresponding to our grid $\mathbb{Z}_\ell \times \mathbb{Z}_m$ can be decomposed into minimal cyclic inner codes concatenated with cyclic outer codes.
    \item Once the 2D code is decomposed in this manner, the bound on generalized concatenated codes can be applied.
    \item A minimum distance bound is given by applying the standard 1D cyclic BCH bound to both the inner and outer cyclic codes of each concatenation.
\end{itemize}

In recent decades the classical bounds for multidimensional cyclic and abelian codes have seen significant modern refinements. Jensen's concatenated approach has been heavily optimized. By utilizing trace representations and the Chinese Remainder Theorem, G\"uneri and \"Ozbudak provided an explicit improvement over Jensen's bound for quasi-cyclic codes~\cite{GuneriOzbudak2012}, which they showed are linked to 2D cyclic codes~\cite{GuneriOzbudak2012}. This concatenated framework was also recently extended to quasi-abelian codes, allowing general minimum distance bounds to be transferred to the abelian setting~\cite{Guneri2021}. Furthermore, they estimated the weights of codewords in multidimensional cyclic codes by expressing them via multi-dimensional character sums and bounding them using the Hasse-Weil-Igusa bounds on Artin--Schreier type hypersurfaces~\cite{GuneriOzbudak2008_Artin}. This approach provides a powerful algebraic alternative to the standard 2D BCH and concatenated bounds.

\section{Bivariate Bicycle Codes}
\label{sec:BBpast}
In this section, we assume that $q$ is an arbitrary prime power unless otherwise noted. Let $\ell$ and $m$ be positive integers and consider the ring $R = \F_q[x, y] / \idlGens{x^\ell - 1, y^m - 1}$.

\begin{definition} 
\label{def:BBcodes}
    A \emph{bivariate bicycle code}, denoted $\mathrm{BB}(a, b)$, is the CSS code defined by two polynomials $a(x, y)$ and $b(x, y)$ in $R$. Let $A$ and $B$ be the linear transformations on $R^{\ell m}$ corresponding to multiplication by $a(x, y)$ and $b(x, y)$, respectively. The code has parity-check matrices
    \begin{equation}
        H_X = \begin{pmatrix} A & B \end{pmatrix} \quad , \quad H_Z = \begin{pmatrix} B^T & -A^T \end{pmatrix}.
    \end{equation}
\end{definition}

\begin{proposition}[\cite{LinPryadko2024} for $q=2$]
    \label{prop:equivalentcodes}
    Let $P\in \F_q^{\ell m \times \ell m}$ denote the permutation matrix implementing inversion on $\mathbb{Z}_\ell \times \mathbb{Z}_m$. Consider the CSS code defined by the (BB-type) check matrices
    \begin{equation*}
        H_X = \begin{pmatrix} A & B\end{pmatrix}, \qquad H_Z = \begin{pmatrix}B^{\top} & -A^{\top}\end{pmatrix}.
    \end{equation*}
    Then, using qudit permutations, stabilizer relabeling, global Hadamard,
    and monomial column scaling,
    the following are equivalent:
    \begin{enumerate}
        \item $H_X = \begin{pmatrix}A & -B\end{pmatrix}$, \quad $H_Z = \begin{pmatrix}B^\top & A^\top\end{pmatrix}$;
        \item $H_X = \begin{pmatrix}A^\top & -B^\top\end{pmatrix}$, \quad $H_Z = \begin{pmatrix}B & A\end{pmatrix}$;
        \item $H_X = \begin{pmatrix}-B^\top & A^\top\end{pmatrix}$, \quad $H_Z = \begin{pmatrix}A & B\end{pmatrix}$;
        \item $H_X = \begin{pmatrix}A & B\end{pmatrix}$, \quad $H_Z = \begin{pmatrix}-B^\top & A^\top\end{pmatrix}$.
    \end{enumerate}
    In particular, up to code equivalence, all transposes in the BB presentation
    may be removed and replaced by conjugation with $P$.
\end{proposition}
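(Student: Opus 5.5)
The plan is to start from the base presentation $H_X=\begin{pmatrix}A & B\end{pmatrix}$, $H_Z=\begin{pmatrix}B^\top & -A^\top\end{pmatrix}$ and reach each of the four listed pairs by a short chain of the allowed moves. First I would fix what each move does to a pair $(H_X,H_Z)$:

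\begin{enumerate}
\item \emph{Stabilizer relabeling} replaces $H_X\mapsto UH_X$ and $H_Z\mapsto VH_Z$ for invertible $U,V$. In particular it allows an overall sign on either matrix, and left multiplication by a block-diagonal copy of $P$.
\item \emph{A qudit permutation} replaces both matrices by $H\mapsto H\Pi$ for one common permutation matrix $\Pi$. This includes swapping the two blocks, or applying $P$ inside each block.
\item \emph{Monomial column scaling} by $-1$ on a block means conjugating those qudits by $X\mapsto X^{-1}$, $Z\mapsto Z^{-1}$. It therefore negates the corresponding columns of both $H_X$ and $H_Z$ simultaneously.
\item \emph{Global Hadamard} exchanges the roles of $H_X$ and $H_Z$. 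For $q>2$ it introduces a sign ($Z\mapsto X^{-1}$), which is absorbed by a further column scaling or relabeling.
\end{enumerate}

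I would also record the algebraic identity $PM_cP=M_{\iota(c)}=M_c^\top$, which follows from Proposition~\ref{prop:algebraic-properties}(2) because $P$ implements $\iota$ on the monomial basis. As a sanity check on every intermediate pair, the orthogonality $H_XH_Z^\top=\pm(AB-BA)=0$ holds because $R$ is commutative.

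The chain of moves is then as follows.

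\begin{enumerate}
\item Item (4) comes from the base code by relabeling $H_Z\mapsto -H_Z$.
\item Item (1) comes from the base code by scaling the second block by $-1$. This gives $H_X=\begin{pmatrix}A&-B\end{pmatrix}$ and $H_Z=\begin{pmatrix}B^\top&A^\top\end{pmatrix}$.
\item Item (2) comes from (1) by applying the qudit permutation $\mathrm{diag}(P,P)$ on columns and then relabeling rows by $P$. This sends each block $M$ to $PMP$, that is, $A\leftrightarrow A^\top$ and $B\leftrightarrow B^\top$. The result is $H_X=\begin{pmatrix}A^\top&-B^\top\end{pmatrix}$ and $H_Z=\begin{pmatrix}B&A\end{pmatrix}$.
\item Item (3) comes from (4) by a global Hadamard. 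This swaps the two check matrices, giving $H_X=\begin{pmatrix}-B^\top&A^\top\end{pmatrix}$ and $H_Z=\begin{pmatrix}A&B\end{pmatrix}$.
\end{enumerate}

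Since every move is invertible, all four presentations are equivalent to the base one and hence to each other. The final sentence of the statement follows from the identity $M_c^\top=PM_cP$ combined with the permutation/relabeling move in step 3.

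The only delicate point is sign bookkeeping for $q>2$. Column scaling must act on both $X$- and $Z$-check columns of the same qudits; otherwise it is not a legitimate code equivalence and would break $H_XH_Z^\top=0$. Likewise, the qudit Hadamard $Z\mapsto X^{-1}$ produces an extra sign on one matrix. In each step I would verify that any stray sign is either a global sign on a whole check matrix, removable by relabeling, or a consistent sign on a whole block, removable by monomial scaling. For $q=2$ all signs disappear and the statement reduces to that of \cite{LinPryadko2024}.
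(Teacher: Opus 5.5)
Your proposal is correct and uses essentially the same moves as the paper: scaling the second block by $-1$, the conjugation $M_c^\top = PM_cP$ implemented by $\mathrm{diag}(P,P)$ on columns together with a row permutation by $P$, and the $X/Z$ exchange. The only difference is the order of the steps. The paper runs the single chain base $\to(1)\to(2)\to(3)\to(4)$, using a block swap for $(2)\to(3)$; you instead obtain (4) directly from the base by negating $H_Z$ and then get (3) from (4) by the exchange, which is equally valid.
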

\begin{proof}
    We use the following standard equivalence operations for CSS codes:
    \begin{itemize}
        \item[\textbf{(E1)}] Simultaneous permutation of columns of $H_X$ and $H_Z$.
        \item[\textbf{(E2)}] Permutation of rows of $H_X$ and $H_Z$.
        \item[\textbf{(E3)}] Exchange of $H_X$ and $H_Z$.
        \item[\textbf{(E4)}] Multiplication of columns by $c \in \F_q^*$ where $c^2 = 1$. (If $\operatorname{char}\F_q \neq 2$.) 
    \end{itemize}
    By Proposition~\ref{prop:algebraic-properties}(2), transposition corresponds to inversion in the group algebra, which at the matrix level is implemented by conjugation with the
    permutation matrix $P$ which is given by $P=M_\iota$ (in the notation of Proposition~\ref{prop:algebraic-properties}). Hence we see that 
    \begin{equation*}
        A^{\top} = M_\iota A M_\iota, \qquad B^{\top} = M_\iota B M_\iota.
    \end{equation*}
    Applying the column scaling $\operatorname{diag}(I_N, -I_N)$ (which is a particular case of operation \textbf{(E4)}) to the second block removes the minus sign in $H_Z$, but introduces it in $H_X$. This gives the equivalent
    presentation
    \begin{equation*}
        H_X = \begin{pmatrix}A & -B\end{pmatrix}, \qquad H_Z = \begin{pmatrix}M_\iota B M_\iota & M_\iota A M_\iota\end{pmatrix},
    \end{equation*}
    which is exactly presentation (1).
    
    Next, we analyze the effect of conjugation by $M_\iota$. Applying the column permutation $\operatorname{diag}(M_\iota, M_\iota)$ followed by the row permutation
    $M_\iota$ (operations \textbf{(E1)} and \textbf{(E2)}) to presentation (1) yields 
    \begin{equation*}
        H_X = \begin{pmatrix}M_\iota A M_\iota & -M_\iota B M_\iota\end{pmatrix}, \qquad H_Z = \begin{pmatrix}M_\iota M_\iota B M_\iota M_\iota & M_\iota M_\iota A M_\iota M_\iota\end{pmatrix}.
    \end{equation*}
    Since $M_\iota^2 = I_N$, this simplifies to presentation (2).
    
    Next, we interchange the two blocks. This is done by permuting the columns. Hence, applying operation \textbf{(E1)} transforms presentation (2) into presentation (3).
    
    Finally, we interchange $X$ and $Z$ checks by applying \textbf{(E3)} to presentation (3). This simply swaps the two matrices, yielding presentation (4).
\end{proof}

\begin{remark}
    It is important to note that while the equivalent presentations in Proposition~\ref{prop:equivalentcodes} share the same code parameters $[\![N,K,D]\!]$, their physical implementations could vary widely. 
    The equivalence operations yield isomorphic abstract Tanner graphs but may alter the chosen geometric embedding or hardware layout.
\end{remark}

The BB code literature is dominated by large-scale numerical searches over code polynomials $a(x, y)$ and $b(x, y)$ leading to an abundance of interesting codes with good parameters
~\cite{wang_coprime_2025},~\cite{PostemaKokkelmans2025},~\cite{EberhardtSteffan2025},~\cite{symons2025sequences}.
Several properties of these codes can be determined from their polynomials without needing to explicitly construct the stabilizers.

Consider the coprime case where $\gcd(\ell, m) = 1$ and $\gcd(\ell m, q) = 1$ and define $R_z = \F_2[z]/\langle z^{\ell m} - 1 \rangle$. 
\begin{proposition}[\cite{wang_coprime_2025}]
\label{prop:wang_coprime}
Let $q=2$. 
    The code dimension of a coprime BB code is given by $k = 2 \deg g(z)$, where $g(z) = \gcd(a(z), b(z), z^{\ell m} - 1)$ is computed as an element of $\F_2[z]$.
\end{proposition}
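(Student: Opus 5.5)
We transfer the whole computation to the one-dimensional ring $R_z$ through the isomorphism $\overline{\Phi}:R\to R_z$, $x\mapsto z^m$, $y\mapsto z^\ell$, constructed in Section~\ref{sec:2dcyclic}. The map $\overline{\Phi}$ is a ring isomorphism: it is surjective by the B\'ezout argument given there, and both rings have dimension $\ell m$. Multiplication by $a$ and $b$ on $R$ therefore becomes multiplication by $a(z):=\overline{\Phi}(a)$ and $b(z):=\overline{\Phi}(b)$ on $R_z$, with matrices conjugate by a permutation of coordinates. The map $\overline{\Phi}$ also intertwines the involution $\iota$ with $z\mapsto z^{-1}$. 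Hence $\BB(a,b)$ is equivalent, by a coordinate permutation, to the CSS code with $H_X=[M_{a(z)}\ M_{b(z)}]$ and $H_Z=[M_{\iota b}\ M_{\iota a}]$ over $R_z$. We are in characteristic $2$, so signs play no role.

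Next we compute $k=2\ell m-\rank H_X-\rank H_Z$. The image of $H_X$ is the ideal $\idlGens{a,b}\subseteq R_z$. Since $R_z$ is a principal ideal ring with $\gcd(\ell m,2)=1$, this ideal equals $\idlGens{g(z)}$, where $g=\gcd(a,b,z^{\ell m}-1)$. By the standard cyclic code dimension count, it has dimension $\ell m-\deg g$, so $\rank H_X=\ell m-\deg g$. Because $M_c^\top=M_{\iota(c)}$ by Proposition~\ref{prop:algebraic-properties}(2), the matrix $H_Z$ has image $\idlGens{\iota a,\iota b}=\idlGens{\iota g}=\idlGens{g^*}$. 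This is a cyclic code with the same number of zeros, namely the inverted zero set, so $\rank H_Z=\ell m-\deg g$ as well. This gives $k=2\ell m-2(\ell m-\deg g)=2\deg g$.

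It remains to check that the row space of $H_X$ has dimension equal to the dimension of the ideal $\idlGens{a,b}$. The rows of $H_X$ span the column space of $H_X^\top$, whose rank equals that of $H_X$. The column space of $H_X$ is exactly $\{au+bv\}=\idlGens{a,b}$, so the two dimensions agree.

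The main obstacle is the identification $\idlGens{a,b}=\idlGens{g}$ with $\dim=\ell m-\deg g$. This uses that ideals of $R_z$ correspond to divisors of $z^{\ell m}-1$. I would prove it through the zero set: $\idlGens{a,b}$ consists of the elements vanishing on the common zeros of $a$, $b$ and $z^{\ell m}-1$, and these common zeros are exactly the roots of $g$. Here squarefreeness of $z^{\ell m}-1$ ensures that the roots of $g$ are simple, so the zero count is $\deg g$ and Corollary~\ref{cor:orbit-additive-dim} gives the dimension.
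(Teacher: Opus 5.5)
Your proof is correct. The paper does not prove this proposition itself; it quotes it from \cite{wang_coprime_2025}. The in-paper route would be Theorem~\ref{thm:BB-dimension} transported through $\overline{\Phi}$, which gives $k=2|\mathcal{Z}_a\cap\mathcal{Z}_b|$. For a primitive $\ell m$-th root of unity $\gamma$ one has $\overline{\Phi}(a)(\gamma^k)=a(\gamma^{mk},\gamma^{\ell k})$. The CRT bijection $k\mapsto(k \bmod \ell, k\bmod m)$ then identifies the common zeros on the root grid with the roots of $g$, which are simple because $z^{\ell m}-1$ is squarefree.

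You share the first half of that argument: both check ranks equal $\dim_{\F_2}\idlGens{a,b}$, with $\iota$ preserving dimension. Your main count of $\dim\idlGens{a,b}$ is different and more elementary. Since $\F_2[z]$ is a PID, the ideal $(a,b,z^{\ell m}-1)\subseteq\F_2[z]$ is generated by $g$, so $\idlGens{a,b}=\idlGens{g}$ in $R_z$. Moreover $\idlGens{g}\cong\F_2[z]/\idlGens{(z^{\ell m}-1)/g}$ has dimension $\ell m-\deg g$.

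That step uses neither $\gcd(\ell m,2)=1$ nor the ideal--zero-set correspondence, so your closing ``main obstacle'' paragraph is not needed. It is really the paper-style fallback: Theorem~\ref{thm:ideals-by-orbits} plus Corollary~\ref{cor:orbit-additive-dim} applied to the complementary orbits. Only there does squarefreeness genuinely enter.

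So your route yields $k=2\deg g$ for any coprime $\ell,m$, even when $\ell m$ is even. The paper's route instead yields the intrinsically two-dimensional formula $k=2|\mathcal{Z}_a\cap\mathcal{Z}_b|$. That formula still makes sense when $\gcd(\ell,m)\neq 1$, where no univariate gcd is available.
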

\noindent Note that this formula is identical to that for generalized bicycle codes, as the coprime construction can be viewed as a special case.

By characterizing polynomials which divide the trinomials
\begin{equation}\label{BBansatz}
    x^{i_1} + y^{i_2} + y^{i_3} \quad \textrm{ or } \quad y^{i_1} + x^{i_2} + x^{i_3},
\end{equation}
\cite{PostemaKokkelmans2025} classified the possible lengths of coprime BB codes.
\begin{theorem}[\cite{PostemaKokkelmans2025}]
Let $q=2$. 
    Any coprime BB code of length $2 \ell m$ with polynomials of the form~\eqref{BBansatz} and $g(z) = \gcd(a(z), b(z)$, $z^{\ell m} - 1)$ is non-trivial ($k \geq 2$) if and only if
    \begin{itemize}
        \item $g(z) \neq 1$
        \item $\ell m$ is divisible by a Mersenne prime or an outlier prime (73, 121369, 178481, 262657, 599479).
    \end{itemize}
\end{theorem}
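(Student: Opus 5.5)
By Proposition~\ref{prop:wang_coprime}, the dimension of a coprime BB code over $\F_2$ is $k=2\deg g(z)$ with $g=\gcd(a(z),b(z),z^{\ell m}-1)$. So $k\ge 2$ if and only if $g\neq 1$, which gives the first bullet immediately. The rest of the plan is to show that, for polynomials of the form~\eqref{BBansatz}, a nontrivial $g$ can occur exactly when $\ell m$ has a Mersenne or outlier prime factor.

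\emph{Translate to a statement about roots of unity.} Under $\overline{\Phi}$ we have $x\mapsto z^m$ and $y\mapsto z^\ell$, so each defining polynomial becomes a trinomial $z^{u}+z^{v}+z^{w}$ in $R_z$. Then $g\neq 1$ means there is an $N$-th root of unity $\zeta$ with $N\mid \ell m$ and $a(\zeta)=b(\zeta)=0$. Since $a(1)=b(1)=1$ in $\F_2$, we have $N>1$. Dividing by a monomial, $\zeta^{s}+\zeta^{t}=1$ for some $s,t$. Equivalently, the cyclic subgroup $\langle\zeta\rangle\subseteq\F_{2^{r}}^{*}$, with $r=\ord_N(2)$, contains an element $\omega\neq 0,1$ such that $1+\omega\in\langle\zeta\rangle$. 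The whole problem is thereby reduced to deciding for which $N$ the order-$N$ subgroup $\mu_N$ is ``closed under $\omega\mapsto 1+\omega$ at least once''.

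\emph{Sufficiency.} Let $p=2^r-1$ be a Mersenne prime dividing $\ell m$. Then $\mu_p=\F_{2^r}^{*}$, so every $\omega\neq 0,1$ works. By symmetry, say $p\mid\ell$; since $\gcd(\ell,m)=1$, we have $p\nmid m$. I then choose exponents so that $z^{mi_1}$, $z^{\ell i_2}$, $z^{\ell i_3}$ realize $\omega^{-1}$-scaled values summing to zero at a primitive $p$-th root $\zeta$. Multiplication by $m$ is invertible mod $p$, so $z^{mi}$ reaches every power of $\zeta$. The pure-$y$ terms are the delicate part, since $\ell\equiv 0\pmod p$ makes $\zeta^{\ell i_2}=1$. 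I will therefore use the opposite ansatz $y^{i_1}+x^{i_2}+x^{i_3}$, whose $x$-terms vary freely mod $p$, or mix the two forms as the theorem allows. With that choice I can make both $a$ and $b$ vanish at $\zeta$. For an outlier prime the same construction works, using the defining property that $\mu_p$ contains some pair $\omega,1+\omega$.

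\emph{Necessity, the main obstacle.} Suppose $\mu_N$ contains $\omega,1+\omega$ with $\omega\notin\{0,1\}$. I must show that some prime $p\mid N$ is Mersenne or outlier. Two things are needed.
\begin{enumerate}
\item A reduction from composite $N$ to a prime factor. I would try raising the relation $\zeta^{s}+\zeta^{t}=1$ to suitable powers $N/p^{e}$, or projecting along the CRT decomposition $\mu_N\cong\prod\mu_{p^e}$. This is not obviously valid, because the additive relation does not respect the multiplicative factorization; I expect to need a norm or field-of-definition argument instead.
\item A classification of the primes $p$ for which $\mu_p$ meets $\mu_p+1$. This is number-theoretic, and the finite outlier list is presumably established computationally in \cite{PostemaKokkelmans2025}.
\end{enumerate}
I would quote the classification in the second item rather than reprove it. The genuinely hard part, and where I expect the argument may need the full strength of the source, is the reduction in the first item together with the check that the list of outliers is complete.
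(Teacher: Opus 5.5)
The paper contains no proof of this statement; it is quoted from Postema and Kokkelmans. So your proposal has to stand on its own, and it does not.

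\textbf{What the proposal actually establishes.} Your first step is correct: by Proposition~\ref{prop:wang_coprime}, $k\ge 2$ if and only if $g\neq 1$. As the statement is written, $g\neq 1$ is one of the hypotheses of the ``if'' direction, so that direction is already finished. The only substantive content is ``nontrivial $\Rightarrow$ $\ell m$ has a Mersenne or listed outlier prime factor.'' For this you prove nothing, since both the descent to a prime factor and the prime classification are deferred.

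Your ``sufficiency'' paragraph argues an existence claim instead. It works only because you give both polynomials the form $y^{i_1}+x^{i_2}+x^{i_3}$. In the standard ansatz $a=x^{i_1}+y^{i_2}+y^{i_3}$, $b=y^{j_1}+x^{j_2}+x^{j_3}$, a root $\zeta$ of prime order $p$ is never a common zero: if $p\mid\ell$ then $a(\zeta)=\zeta^{mi_1}\neq 0$, and symmetrically for $b$. The existence claim is then false. For $\ell=7$, $m=5$, every $x^{i_1}+y^{i_2}+y^{i_3}$ is a unit of $R$. A zero $(\xi,\eta)$ would give $\xi^{i_1}\in\F_8\cap\F_{16}=\F_2$, hence $\eta^{i_2}(1+\eta^{c})=1$ with $\eta^{c}\neq 1$, i.e.\ $1+\eta^{c}\in\mu_5$, which fails in $\F_{16}$. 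So $k=0$ even though $7$ is a Mersenne prime.

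\textbf{The necessity route cannot be completed.} The descent in your item 1 is false:
\begin{itemize}
\item[(i)] In $\F_{256}$ one has $\mu_{85}=\{y:\mathrm{Nm}_{\F_{256}/\F_4}(y)=1\}$. Let $\beta\in\F_{16}$ have order $5$ and let $\omega$ be a root of $X^2+\beta X+1$. This polynomial is irreducible over $\F_{16}$ because $\operatorname{Tr}_{\F_{16}/\F_2}(\beta^{-2})=1$, so $\omega^{16}=\omega^{-1}$ and $\omega\in\mu_{17}$.
\item[(ii)] Then $\mathrm{Nm}_{\F_{256}/\F_4}(1+\omega)=(\omega+\omega^{-1})^{5}=\beta^{5}=1$, so $\omega$ and $1+\omega$ both lie in $\mu_{85}$.
\item[(iii)] Neither $\mu_5$ nor $\mu_{17}$ contains such a pair.
\end{itemize}

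Moreover, passing to $z$ discards the structure of the ansatz. A grid point $(\xi,\eta)$ of orders $N_1\mid\ell$, $N_2\mid m$ is a common zero of some admissible $a,b$ exactly when all three hold: $N_1,N_2>1$; some $\omega\in\mu_{N_2}\setminus\{1\}$ has $1+\omega\in\mu_{N_1N_2}$; and some $\omega'\in\mu_{N_1}\setminus\{1\}$ has $1+\omega'\in\mu_{N_1N_2}$.

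With this criterion the statement as quoted fails outright. Take $\ell=23$, $m=89$ and primitive roots $\xi,\eta$. Since $\mu_{23}\mu_{89}=\F_{2^{11}}^{*}$, you can write $1+\eta=\xi^{u}\eta^{v}$ and $1+\xi=\xi^{u'}\eta^{v'}$. Then $a=x^{u}+y^{-v}+y^{1-v}$ and $b=y^{v'}+x^{-u'}+x^{1-u'}$ both vanish at $(\xi,\eta)$, so Theorem~\ref{thm:BB-dimension} gives $k\ge 22$. Yet $\ell m=2047$ has no Mersenne prime factor, and neither $23$ nor $89$ is on the outlier list.

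So the ``only if'' direction as quoted cannot be proved by any method. The prime-factor condition has to be replaced, for example by the coupled criterion above.
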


Now consider the case $\ell$ is odd and $m$ is even: $m = 2^r m^\prime$, $r \geq 1$, $2 \nmid m^\prime$. Then, in characteristic two, $(y^m - 1) = (y^{m^\prime} - 1)^{2^r} = \prod [g_i(y)]^{2^r}$, where $g_i$ is an irreducible factor of $y^{m^\prime} - 1$. Let $x^\ell - 1 = \prod_{j = 1}^\eta f_j(x)$ be the factorization into monic irreducible polynomials in $\F_2[x]$ and define $S_j = \F_2[x, y]/\idlGens{f_j(x), y^m - 1}$. Then
\begin{equation}
    \psi = (\psi_1, \dots, \psi_\eta) \, : \, \F_2[x, y]/\idlGens{x^\ell - 1, y^m - 1} \to \bigoplus^\eta_{j = 1} S_j,
\end{equation}
where $\psi_j(p(x, y)) = p(x, y) \, \mathrm{mod} \, f_j(x)$, is a valid ring isomorphism.

\begin{lemma}
    Codes defined over the ring $\F_2[x, y]/\idlGens{x^\ell - 1, y^m - 1}$ with parity-check matrix $H = \begin{pmatrix} A & B\end{pmatrix}$ are generated by the ideal
    \begin{equation}
        H = \langle g_1(x, y) \prod_{j \neq 1} f_j(x), \dots, g_\eta(x, y) \prod_{j \neq \eta} f_j(x) \rangle,
    \end{equation}
    where $g_j(x, y) = \gcd(\psi_j(a(x, y)), \psi_j(b(x, y)), y^m - 1)$ in $(\F_2[x]/\idlGens{f_j(x)}[y]$.
\end{lemma}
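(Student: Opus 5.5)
The plan is to identify the object generated by the rows of $H=\begin{pmatrix} A & B\end{pmatrix}$ with the ideal $\idlGens{a,b}\lhd R$, transport it through the isomorphism $\psi$, compute it componentwise in principal ideal rings, and then pull the componentwise generators back to $R$. I read the lemma's statement that the code is \emph{generated by the ideal} $H=\idlGens{\dots}$ as asserting that $\idlGens{a,b}$ equals the ideal displayed there, since $\idlGens{a,b}$ is the row space of $H_X$ projected through $\Psi_X(u,v)=au+bv$. This interpretive step is the one I expect to be the main obstacle: I would state precisely which subspace is meant and check that it really is $\idlGens{a,b}$.

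\textbf{Step 1 (CRT and principality).} Since $\ell$ is odd, $x^\ell-1$ is squarefree over $\F_2$, so the $f_j$ are pairwise coprime and $\psi$ is a ring isomorphism. Ring isomorphisms carry ideals generated by $a,b$ to ideals generated by their images, so
\[
\psi(\idlGens{a,b})=\bigoplus_{j=1}^{\eta}\idlGens{\psi_j(a),\psi_j(b)}\lhd \bigoplus_j S_j .
\]
Each $S_j\cong F_j[y]/\idlGens{y^m-1}$ with $F_j=\F_2[x]/\idlGens{f_j(x)}$ a field. Hence $S_j$ is a quotient of the principal ideal domain $F_j[y]$ and every ideal of $S_j$ is principal. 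Lifting to $F_j[y]$, the ideal $\idlGens{\psi_j(a),\psi_j(b),y^m-1}$ equals $\idlGens{\gcd(\psi_j(a),\psi_j(b),y^m-1)}=\idlGens{g_j}$. Therefore $\idlGens{\psi_j(a),\psi_j(b)}=g_jS_j$. This step is valid even though $y^m-1$ has repeated factors, because only the Euclidean property of $F_j[y]$ is used.

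\textbf{Step 2 (pulling back).} Choose a representative of $g_j$ in $\F_2[x,y]$ by lifting its coefficients from $F_j$ to polynomials in $x$. Set $u_j=\prod_{i\neq j}f_i(x)$. Then $\psi_i(g_ju_j)=0$ for $i\neq j$, because $f_i$ divides $u_j$. Also $\psi_j(g_ju_j)=g_j\cdot\bar u_j$, where $\bar u_j=u_j \bmod f_j$ is nonzero in $F_j$ by pairwise coprimality, and hence is a unit of $S_j$. Thus $\psi(g_ju_j)$ generates the summand $0\oplus\cdots\oplus g_jS_j\oplus\cdots\oplus0$. The choice of lift does not matter, since different lifts differ by multiples of $f_j$, and these are killed by $u_j$ modulo $x^\ell-1$.

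\textbf{Step 3 (assembly).} A direct sum of ideals in a product ring is generated by generators of the individual summands, each placed in its own coordinate. So $\psi^{-1}\bigl(\bigoplus_j g_jS_j\bigr)=\idlGens{g_1u_1,\dots,g_\eta u_\eta}$. Combined with Step 1, this gives $\idlGens{a,b}=\idlGens{g_1(x,y)\prod_{j\neq1}f_j(x),\dots,g_\eta(x,y)\prod_{j\neq\eta}f_j(x)}$, which is the claim.
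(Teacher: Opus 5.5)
Your argument is correct and complete. Note that the paper itself gives no proof of this lemma. It appears in the literature-review section as a result imported, together with the dimension corollary that follows it, from Postema--Kokkelmans, so there is no in-paper argument to match against.

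Your route is well suited to this setting. You apply the CRT in $x$ only, then use the Euclidean property of $F_j[y]$ to make each component ideal principal. By contrast, the paper's own orbit-idempotent machinery (Theorem~\ref{thm:ideals-by-orbits}, which would give $\idlGens{a,b}=e_{S_a\cup S_b}R$ at once) needs $\gcd(q,\ell m)=1$. It therefore does not apply here, since $m$ is even. Your Step~1 absorbs the repeated roots of $y^m-1$ with no semisimplicity in $y$. In fact it works verbatim over any $\F_q$ with $\gcd(q,\ell)=1$ and arbitrary $m$.

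Your decomposition also gives the subsequent corollary immediately. Indeed, $\dim_{\F_2} R/\idlGens{a,b}=\sum_j \dim_{\F_2} S_j/g_jS_j=\sum_j \deg_y g_j\,\deg f_j$. Moreover, $k=2\dim_{\F_2}R/\idlGens{a,b}$ by the CSS count in the proof of Theorem~\ref{thm:unconditional-cover-dimension}, which uses no semisimplicity.

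One small correction to your interpretive paragraph. The ideal $\idlGens{a,b}$ is the \emph{image} (column space) $\Psi_X(R^2)=\im\Psi_X$ of $H_X$. It is not obtained by pushing the row space of $H_X$ through $\Psi_X$. The rows of $H_X$ span the index-two module $R\cdot(\iota(a),\iota(b))\subseteq R^2$, which is not an ideal of $R$. Applying $\Psi_X$ to it would give $\idlGens{a\iota(a)+b\iota(b)}$, not $\idlGens{a,b}$.

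With the image reading, the step you flagged as the main obstacle is not really one. This reading matches the paper's usage $\im\Psi_X=\idlGens{a,b}$ and is the only one consistent with the dimension corollary.
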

\begin{corollary}[\cite{PostemaKokkelmans2025}]
Let $q=2$.
    The dimension of the resulting BB code is given by
    \begin{equation}
        k = 2 \sum_{j = 1}^\eta \deg_y g_j(x, y) \deg_x f_j(x).
    \end{equation}
\end{corollary}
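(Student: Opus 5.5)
The plan is to reduce $k$ to the codimension of the ideal $\idlGens{a,b}$ in $R$, and then compute that codimension one CRT component at a time, using the isomorphism $\psi$ introduced just before the lemma.

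First I will express $k$ through ranks: $k = 2\ell m - \rank H_X - \rank H_Z$. The column space of $H_X = (A\ B)$ is the image of $\Psi_X(u,v) = au+bv$, which is exactly the ideal $aR + bR = \idlGens{a,b}$. So $\rank H_X = \dim_{\F_2}\idlGens{a,b}$.

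By Proposition~\ref{prop:algebraic-properties}(2), $H_Z = (M_{\iota(b)}\ {-M_{\iota(a)}})$. Its column space is the ideal $\idlGens{\iota(b),\iota(a)} = \iota(\idlGens{a,b})$. Since $\iota$ is a ring automorphism and a coordinate permutation, this ideal has the same dimension. Hence
\[
k = 2\bigl(\ell m - \dim_{\F_2}\idlGens{a,b}\bigr) = 2\dim_{\F_2} R/\idlGens{a,b}.
\]
This step uses no semisimplicity, which matters because $m$ is even here.

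Next I will transport the quotient through $\psi$. Since $\ell$ is odd, $x^\ell-1$ is squarefree over $\F_2$, so the $f_j$ are pairwise coprime. Then $\F_2[x]/\idlGens{x^\ell-1} \cong \prod_j \F_2[x]/\idlGens{f_j}$, and tensoring with $\F_2[y]/\idlGens{y^m-1}$ gives $\psi$. A ring isomorphism onto a product carries an ideal generated by $a,b$ to the product of the ideals $\idlGens{\psi_j(a),\psi_j(b)} \subseteq S_j$. Therefore
\[
R/\idlGens{a,b} \cong \bigoplus_j S_j/\idlGens{\psi_j(a),\psi_j(b)}.
\]

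Now each $F_j := \F_2[x]/\idlGens{f_j}$ is a field with $[F_j:\F_2] = \deg_x f_j$, because $f_j$ is irreducible. So $S_j \cong F_j[y]/\idlGens{y^m-1}$ is a quotient of the PID $F_j[y]$. The ideal generated by $\psi_j(a)$, $\psi_j(b)$ and $y^m-1$ in $F_j[y]$ is principal, generated by $g_j = \gcd(\psi_j(a),\psi_j(b),y^m-1)$. Hence $S_j/\idlGens{\psi_j(a),\psi_j(b)} \cong F_j[y]/\idlGens{g_j}$. This has $F_j$-dimension $\deg_y g_j$, so its $\F_2$-dimension is $\deg_y g_j \cdot \deg_x f_j$. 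Summing over $j$ and doubling gives the formula.

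The repeated factors of $y^m-1$ cause no difficulty, since only the PID structure of $F_j[y]$ is used and not semisimplicity. The main point needing care is the rank of $H_Z$. In this non-semisimple setting I cannot cite the orbit-idempotent machinery, so I rely only on $\iota$ being a weight-preserving ring automorphism. As a sanity check, the lemma's generator description of $H$ matches: $\psi$ maps $g_j\prod_{i\ne j}f_i$ to an element supported only in component $j$.
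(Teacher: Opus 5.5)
Your argument is correct and complete. It matches the route by which the corollary follows from the preceding lemma: first the CSS count $k = 2\dim_{\F_2} R/\idlGens{a,b}$, which needs no semisimplicity (the same count appears in the proof of Theorem~\ref{thm:unconditional-cover-dimension}); then the componentwise description of $\idlGens{a,b}$ under $\psi$ as the principal ideals generated by $g_j$ in $F_j[y]/\idlGens{y^m-1}$. The paper itself gives no proof and only cites \cite{PostemaKokkelmans2025}, so your self-contained derivation fills that in faithfully.
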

\noindent A useful example computation is provided in \cite{PostemaKokkelmans2025}.

Next, consider the case $\ell$ and $m$ are any positive even integers. Analysis of codes of this kind of ring typically requires Gr\"{o}bner bases. Assume the lexicographical ordering throughout and denote the leading monomial of a polynomial $f$ by $\mathrm{LM}(f)$.
\begin{theorem}\label{thm:gendims}
    Let $G$ be a Gr\"{o}bner basis of the ideal $\langle a(x,y), b(x, y), x^\ell - 1, y^m - 1 \rangle$. If $\mathfrak{m}$ denotes the set of standard monomials not contained in the leading-term ideal of $G$, then the dimension of the corresponding code is given by $k = 2 |\mathfrak{m}| = 2 \dim_{\F_q} \operatorname{span}_{\F_q}(\mathfrak{m})$. 
\end{theorem}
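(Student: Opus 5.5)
The plan is to compute $k = 2\ell m - \rank H_X - \rank H_Z$ in two steps: first reduce each rank to the dimension of an ideal of $R$, then read that dimension off a Gröbner basis. No semisimplicity is needed, so the argument should hold for arbitrary $\ell, m$ and $q$.

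\textbf{Step 1: $\rank H_X$.} The row space of $H_X$ is the image of the map $\Psi_X : R^2 \to R$, $(u,v) \mapsto au + bv$, up to transposition. More carefully, $\rank H_X$ equals the dimension of the column space of $(A\ B)$. That column space is $\operatorname{im} M_a + \operatorname{im} M_b = aR + bR = \idlGens{a,b} \subseteq R$, so
\[
\rank H_X = \dim_{\F_q} \idlGens{a,b}.
\]

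\textbf{Step 2: $\rank H_Z$.} By Proposition~\ref{prop:algebraic-properties}(2), $H_Z = (M_{\iota(b)}\ {-M_{\iota(a)}})$, whose column space is $\idlGens{\iota(a), \iota(b)} = \iota(\idlGens{a,b})$. The map $\iota$ is an $\F_q$-linear ring automorphism of $R$ (it permutes the monomials), so this ideal has the same dimension and
\[
\rank H_Z = \dim_{\F_q} \idlGens{a,b}.
\]

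\textbf{Step 3: combine.} Adding the two ranks gives
\[
k = 2\ell m - 2\dim_{\F_q} \idlGens{a,b} = 2\bigl(\ell m - \dim_{\F_q}\idlGens{a,b}\bigr) = 2 \dim_{\F_q} R/\idlGens{a,b}.
\]

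\textbf{Step 4: pass to the Gröbner basis.} Using the third isomorphism theorem, identify $R/\idlGens{a,b}$ with $\F_q[x,y]/I$, where $I = \idlGens{a, b, x^\ell - 1, y^m - 1}$. The classical Macaulay standard-monomial theorem (see~\cite{cox2015ideals}) says that for any monomial order, the residue classes of the monomials not in $\idlGens{\mathrm{LM}(g) : g \in G}$ form an $\F_q$-basis of $\F_q[x,y]/I$. This gives $\dim_{\F_q} R/\idlGens{a,b} = |\mathfrak{m}|$, and $\mathfrak{m}$ is finite because the quotient is a quotient of the finite-dimensional ring $R$. Hence $k = 2|\mathfrak{m}|$. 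The second equality in the statement is immediate, since distinct monomials are linearly independent in $\F_q[x,y]$.

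\textbf{Expected obstacle.} The only delicate point is Step 2: one must check that transposition indeed yields the ideal $\iota(\idlGens{a,b})$ of the same dimension. The argument should not instead rely on the orthogonality relation $\ker H_X = (\operatorname{row} H_Z)^\perp$, because here we count dimensions of quotients rather than kernels. The rest is standard commutative algebra.
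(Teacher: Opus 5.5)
Your proof is correct and takes essentially the paper's route. The paper states this result without a separate proof, but its arguments in Theorems~\ref{thm:BB-dimension} and~\ref{thm:unconditional-cover-dimension} compute $\rank H_X=\rank H_Z=\dim_{\F_q}\idlGens{a,b}$ via the involution $\iota$ exactly as you do, yielding $k=2\dim_{\F_q}(R/\idlGens{a,b})$ with no semisimplicity assumption, and Macaulay's standard-monomial basis theorem for $\F_q[x,y]/\idlGens{a,b,x^\ell-1,y^m-1}$ then gives $k=2|\mathfrak{m}|$ for any monomial order, not only lex. One slip in your closing aside: the correct relation is $\ker H_X=(\operatorname{row}H_X)^\perp$ with $\operatorname{row}H_Z\subseteq\ker H_X$, not $\ker H_X=(\operatorname{row}H_Z)^\perp$, though your argument never uses it.
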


Assuming $q=2$, 
Eberhardt and Steffan \cite{EberhardtSteffan2025} applied techniques from commutative algebra to show how logical operators and operations are determined by $a(x, y)$ and $b(x, y)$. The first homology group of the chain complex of $\mathrm{BB}(a,b)$ is
\begin{equation}\label{R-homology}
    H_1(\mathrm{BB}(a, b)) = \ker \begin{pmatrix} a & b \end{pmatrix} / \im \begin{pmatrix} b \\ a \end{pmatrix} = \{(f, g) \in R \times R \mid af = bg\} / \{(rb, ra) \mid r \in R\}.
\end{equation}
Denote the homology classes by $[(f, g)] \in H_1$.
\begin{definition}[Def.~1 \cite{EberhardtSteffan2025}]
    \label{def:purity}
    An element $h \in H_1$ is \emph{horizontally pure} if $h = [(f, 0)]$ and \emph{vertically pure} if $h = [(0, g)]$. 
    We denote the subspaces of $H$ generated by horizontally and vertically pure classes by $H_h$,$H_v$ and say that $H$ is pure if $H = H_h + H_v$.
\end{definition}
\begin{remark}
The purity of $H$ does not imply that every element of $H$ can be written uniquely as a sum $h=h_h + h_v.$ 
This is related to the fact that pure horizontal and vertical classes can be equal in the homology group $[f,0] = [0,g]$ even when the class they generate isn't the 0 class of $H$.
A unique decomposition of $h$ into horizontally and vertically pure classes corresponds to the special case $H=H_h \oplus H_v.$
\end{remark}

Recalling that $R \times R$ arises from having two disjoint copies of each monomial (edge label), purity reflects whether or not the logicals remain disjoint, being supported on either one copy or the other. It is easy to see that this would be the case if $a(x)$ and $b(y)$ are univariate polynomials.

\begin{proposition}[Cor.~2 \cite{EberhardtSteffan2025}]
    The homology is pure if and only if $\idlGens{a} \cap \idlGens{b} = \idlGens{ab}$.
\end{proposition}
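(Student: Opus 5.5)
The plan is a direct translation of purity into a statement about the ideals $\idlGens{a}$ and $\idlGens{b}$, using only the description \eqref{R-homology} of $H_1$. No semisimplicity or spectral input is needed; the argument works in any commutative ring $R$. Since the statement is in characteristic two, the kernel condition $af+bg=0$ reads $af=bg$. For general $q$ one would replace $g$ by $-g$ throughout.

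\emph{Step 1: identify the pure classes.} A class $[(f,0)]$ lies in $H_1$ exactly when $af=0$, i.e.\ $f\in\ann\idlGens{a}$. Likewise $[(0,g)]$ requires $g\in\ann\idlGens{b}$. Hence $H_h+H_v$ is the image of the subspace $\ann\idlGens{a}\times\ann\idlGens{b}$ of the cycle space. So $H_1$ is pure if and only if every cycle $(f,g)$ can be written as
\[
(f,g)=(f_1,g_1)+(rb,ra)
\]
with $af_1=0$, $bg_1=0$ and $r\in R$.

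\emph{Step 2: reduce to a single condition.} I would show that such a decomposition exists if and only if the common value $af=bg$ lies in $\idlGens{ab}$.
\begin{itemize}
\item[($\Rightarrow$)] Multiplying $f=f_1+rb$ by $a$ gives $af=rab\in\idlGens{ab}$.
\item[($\Leftarrow$)] Suppose $af=abr$. Set $f_1=f-rb$, so $af_1=0$. Set $g_1=g-ra$. Then $bg_1=bg-rab=af-rab=0$.
\end{itemize}
Therefore $H_1$ is pure if and only if $af\in\idlGens{ab}$ for every cycle $(f,g)$.

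\emph{Step 3: match this with the ideal condition.} The map $(f,g)\mapsto af$ sends cycles onto $\idlGens{a}\cap\idlGens{b}$. Indeed, any $c\in\idlGens{a}\cap\idlGens{b}$ can be written $c=af=bg$ for some $f,g\in R$, and then $(f,g)$ is a cycle. Consequently the condition of Step 2 says exactly that $\idlGens{a}\cap\idlGens{b}\subseteq\idlGens{ab}$. The reverse inclusion $\idlGens{ab}\subseteq\idlGens{a}\cap\idlGens{b}$ always holds, so purity is equivalent to $\idlGens{a}\cap\idlGens{b}=\idlGens{ab}$.

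There is no real obstacle here. The only point needing care is the bookkeeping of signs for odd $q$, where one must use the convention $H_X=[A\ B]$ and $H_Z=[B^\top\ {-A^\top}]$ so that boundaries are $(rb,-ra)$. I would handle this by the substitution $g\mapsto -g$ noted above, which leaves both sides of the equivalence unchanged.
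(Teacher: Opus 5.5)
Your proof is correct and complete, and your handling of the signs for odd $q$ via $g\mapsto -g$ is right. The paper gives no argument of its own for this proposition; it only quotes it from \cite{EberhardtSteffan2025}. So there is nothing in the text to compare against, and yours is a valid self-contained proof.

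Your Steps 2 and 3 in fact show slightly more than the stated equivalence. The assignment $[(f,g)]\mapsto af+\idlGens{ab}$ is a well-defined surjection $H_1\to(\idlGens{a}\cap\idlGens{b})/\idlGens{ab}$ whose kernel is exactly $H_h+H_v$. Hence $H_1/(H_h+H_v)\cong(\idlGens{a}\cap\idlGens{b})/\idlGens{ab}$, which measures how far $H_1$ is from being pure, not just whether it is pure.
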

\noindent Note that if $\idlGens{a} + \idlGens{b} = R$, then $H_1 = 0$.

\begin{definition}[Def.~2 \cite{EberhardtSteffan2025}]
    The code is \emph{principal} if it is pure and $\mathrm{ann}(a)$ and $\mathrm{ann}(b)$ are principal.
\end{definition}
\noindent If the code is principal, all logical operators can be generated from the two principal generators of the ideals.
\begin{proposition}[Prop.~2, Cor.~8 \cite{EberhardtSteffan2025}]
    \label{prop:princ_is_pure}
    The code is principal for $\ell$ and $m$ odd ($R$ is semisimple).
\end{proposition}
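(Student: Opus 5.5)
Because $R$ is semisimple, everything reduces orbit by orbit to a computation over fields, and the plan is to check the two defining conditions of a principal code there. Concretely, we must show that the homology is pure and that $\ann\idlGens{a}$ and $\ann\idlGens{b}$ are principal. By the purity criterion (Cor.~2 of \cite{EberhardtSteffan2025}, recalled above), purity is equivalent to the ideal identity $\idlGens{a}\cap\idlGens{b}=\idlGens{ab}$. Both conditions will be verified using the orbit decomposition of Proposition~\ref{prop:product-decomp} and the lattice description of ideals in Theorem~\ref{thm:ideals-by-orbits}.

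\textbf{Principality of the annihilators.} We use Theorem~\ref{thm:ideals-by-orbits}: every ideal of $R$ has the form $J_{\mathcal S}=e_{\mathcal S}R$, so every ideal is principal, generated by the idempotent $e_{\mathcal S}$. In particular $\ann\idlGens{a}$ and $\ann\idlGens{b}$ are principal. To make this explicit, let $S_a=\{O: e_Oa\neq 0\}$ be the active support of $a$. Each $e_OR$ is a field by Proposition~\ref{prop:field-factor-degree}, so $e_Oa$ is a unit in $e_OR$ whenever it is nonzero. Hence $\idlGens{a}=J_{S_a}$ and $\ann\idlGens{a}=J_{S_a^c}$, generated by $1-e_{S_a}$; the same holds for $b$.

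\textbf{The intersection identity.} With this description we have $\idlGens{a}\cap\idlGens{b}=J_{S_a}\cap J_{S_b}=J_{S_a\cap S_b}$, using the intersection rule of Theorem~\ref{thm:ideals-by-orbits}. For the product, $e_O(ab)=(e_Oa)(e_Ob)$ is nonzero exactly when both factors are nonzero, because $e_OR$ is a field and has no zero divisors. So $S_{ab}=S_a\cap S_b$, and therefore $\idlGens{ab}=J_{S_a\cap S_b}$. This gives $\idlGens{a}\cap\idlGens{b}=\idlGens{ab}$, hence purity, and together with the previous step the code is principal.

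\textbf{Expected obstacle.} The only step needing care is the identification $\idlGens{c}=J_{S_c}$ for a single element $c$. It rests on the fact that a nonzero element of a field component generates that whole component. One could instead compare zero sets, $\mathcal Z(ab)=\mathcal Z(a)\cup\mathcal Z(b)$, and use the fact that in the semisimple case every ideal is radical. The idempotent route is cleaner, since it works directly with the Wedderburn decomposition already established. Note also that $\ell,m$ odd with $q=2$ is exactly the hypothesis $\gcd(\ell m,q)=1$, so the argument applies verbatim to any $q$ with $\gcd(\ell m,q)=1$.
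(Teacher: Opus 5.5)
Your argument is correct. The paper gives no proof of its own here; it imports the result from \cite{EberhardtSteffan2025}. What you have written is therefore a self-contained derivation from the paper's orbit-idempotent machinery, not a reconstruction of an argument in the text. Each step holds up. Multiplying $a$ by $\sum_{O\in S_a}(e_Oa)^{-1}$ (inverses taken in the fields $e_OR$) gives $e_{S_a}$, so $\idlGens{a}=J_{S_a}$. Theorem~\ref{thm:ideals-by-orbits} then gives $\ann\idlGens{a}=J_{S_a^c}$ and $J_{S_a}\cap J_{S_b}=J_{S_a\cap S_b}$. Finally, $e_O(ab)=(e_Oa)(e_Ob)$ gives $S_{ab}=S_a\cap S_b$.

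Two comparisons are worth making. First, the support bookkeeping can be skipped. Once every ideal has the form $eR$ with $e$ idempotent, $eR\cap fR=efR=(eR)(fR)$, so $\idlGens{a}\cap\idlGens{b}=\idlGens{a}\idlGens{b}=\idlGens{ab}$ immediately. Second, your proof still relies on the purity criterion, which the paper also only cites, and in a $q=2$ context. That criterion is a short check valid over any $\F_q$. If $af+bg=0$, then $af\in\idlGens{a}\cap\idlGens{b}=\idlGens{ab}$, say $af=rab$. Then $(f,g)=(f-br,0)+(0,g+ar)+(br,-ar)$ with $a(f-br)=0=b(g+ar)$.

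Alternatively, you can bypass the criterion entirely with Corollary~\ref{cor:halo_free_representatives}. Every class has a representative $(u',v')$ supported on $\suppreg{\mathcal{F}}_{a,b}$, where both $a$ and $b$ vanish. Hence $(u',0)$ and $(0,v')$ are separately cycles, and $H_1=H_h+H_v$ directly. The sum is even direct, since $H_Z^\top$ vanishes on the free orbits.

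The bare citation covers only the binary setting in which it is stated. Your route, with either of these completions, covers every $q$ with $\gcd(\ell m,q)=1$ using only results proved in the paper.
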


\begin{definition}[Def.~3 \cite{EberhardtSteffan2025}]
    A polynomial $f \in R$ is \emph{semi-periodic} if $f(x, y) = x^k + \zeta(y)$ and $k k^\prime = \ell$.
\end{definition}
\noindent The definition holds switching $x$ and $y$ and $\ell$ and $m$.
\begin{theorem}[Thm.~5 \cite{EberhardtSteffan2025}]
    If a BB code with~\eqref{BBansatz} is pure and $a, b$ are both either univariate or semi-periodic, then it is principal.
\end{theorem}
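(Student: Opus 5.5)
We take purity as given by hypothesis. By the definition of a principal code, it then remains to show that $\ann_R(a)$ and $\ann_R(b)$ are principal ideals. We treat a single defining polynomial $f\in\{a,b\}$ and split into the two allowed shapes, univariate and semi-periodic. In both cases the idea is the same. We find a subring $T\subseteq R$ containing $f$ such that $R$ is a free $T$-module with a monomial basis, so that annihilators extend: $\ann_R(f)=\ann_T(f)\,R$. We then compute $\ann_T(f)$ explicitly as a principal ideal of $T$. Throughout, we use that every ideal of a univariate quotient $\F_q[t]/\idlGens{t^n-1}$ is principal, because it is a quotient of a PID. This holds whether or not the ring is semisimple.

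\textbf{Univariate case.} Say $f=f(x)$, the case $f=f(y)$ being symmetric. Let $T=R_x=\F_q[x]/\idlGens{x^\ell-1}$. Then $R=\bigoplus_{j<m}y^jT$ is free over $T$, and multiplication by $f$ acts coordinatewise on this decomposition, so $\ann_R(f)=\ann_T(f)R$. In $T$, the annihilator of $f$ is generated by $(x^\ell-1)/\gcd(f,x^\ell-1)$. Hence $\ann_R(f)$ is principal.

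\textbf{Semi-periodic case.} Say $f=x^k+\zeta(y)$ with $kk'=\ell$. Put $u=x^k$, $A=\F_q[y]/\idlGens{y^m-1}$ and $T=A[u]/\idlGens{u^{k'}-1}\subseteq R$. Then $R=\bigoplus_{r<k}x^rT$ is free over $T$, and as before $\ann_R(f)=\ann_T(u+\zeta)R$.

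To compute $\ann_T(u+\zeta)$, divide by the monic linear polynomial $u+\zeta$ in $A[u]$:
\[
u^{k'}-1=(u+\zeta)h+c,\qquad c=(-\zeta)^{k'}-1\in A .
\]
Suppose $g\in A[u]$ satisfies $(u+\zeta)g=(u^{k'}-1)s$ in $A[u]$. Substituting the division identity gives
\[
(u+\zeta)(g-sh)=cs.
\]
Now divide $s$ by $u+\zeta$, writing $s=(u+\zeta)s_1+s_0$ with $s_0\in A$. Comparing the remainders modulo the monic $u+\zeta$ forces $cs_0=0$ and $g=sh+cs_1$. Hence
\[
g=s_0h+s_1\bigl((u+\zeta)h+c\bigr)\equiv s_0h \pmod{u^{k'}-1}.
\]
So every element of $\ann_T(u+\zeta)$ has the form $s_0h$ with $s_0\in\ann_A(c)$. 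Conversely, any such element is annihilated by $u+\zeta$, since $(u+\zeta)s_0h=s_0(u^{k'}-1)-s_0c\equiv 0$. Let $d$ generate the principal ideal $\ann_A(c)$ of $A$. Then $\ann_T(u+\zeta)=\idlGens{dh}$, and therefore $\ann_R(f)=\idlGens{dh}R$ is principal.

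\textbf{Main obstacle.} The step needing the most care is this division argument. We must check that the remainder comparison is legitimate over the coefficient ring $A$, which may have zero divisors. It works because $u+\zeta$ is monic, so division with remainder and uniqueness of remainders hold in $A[u]$. We must also confirm that $\{x^r\}_{r<k}$ gives a free $T$-basis of $R$, so that annihilators extend from $T$ to $R$ by scalar extension. That $\{x^r\}_{r<k}$ is a free basis follows from $R\cong A[x]/\idlGens{x^\ell-1}$ together with $x^\ell-1=u^{k'}-1$. Combining the two cases for $a$ and for $b$ with the assumed purity gives principality.
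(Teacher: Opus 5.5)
The paper gives no proof of this statement. It is quoted from Eberhardt--Steffan, and the only supporting material is the subsequent citation of their Thm.~6, which gives the generator $P(x,y)=\sum_{i=0}^{k'-1}x^{\ell-ik}\zeta(y)^i g(y)$ of $\ann(a)$. Your argument is correct and self-contained.

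In the semi-periodic case it reproduces exactly that generator. Over $\F_2$, your element $c$ is the paper's $\chi$, your $d$ can be taken to be $g(y)=(y^m-1)/\bar\chi(y)$, and $x^k h=\sum_{i}x^{\ell-ik}\zeta^i$. Hence $P=x^k\,dh$, which differs from your generator only by a monomial unit.

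Your route adds the following, beyond the bare citation:
\begin{itemize}
\item The free-module reduction $\ann_R(f)=\ann_T(f)R$, combined with division by the monic $u+\zeta$ over the possibly non-reduced ring $A$, proves principality and identifies the generator in a single step.
\item It uses neither the trinomial ansatz nor semisimplicity. The paper's idempotent machinery only covers the semisimple case, where principality is automatic by Proposition~\ref{prop:princ_is_pure}, so the non-semisimple content is exactly what your argument supplies.
\item It works over any $\F_q$, where the relevant element is $(-\zeta)^{k'}-1$. This coincides with the binary expression $\chi=\zeta^{k'}-1$ only in characteristic $2$ or when $k'$ is even.
\end{itemize}
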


In the general case, finding a principal generator of a principal ideal is difficult and is referred to as the \emph{principal ideal problem}. In the semi-periodic case above, we can actually write it down explicitly. Let $a(x, y) = x^k + \zeta(y)$ with $k k^\prime = \ell$, and define $\chi(y) = \zeta(y)^{k^\prime} - 1$ and $\bar{\chi}(y) = \gcd(\chi(y), y^m - 1)$. Then $g(y)$, defined by $g(y) \bar{\chi}(y) = y^m - 1$, is the generator polynomial of a cyclic code over $\F_2[y]/(y^m - 1)$. (The parity-check polynomial for the code is $\bar{\chi}(y)$.) Let
\begin{equation}
    P(x, y) = \sum^{k^\prime - 1}_{i = 0} x^{\ell - i k} \zeta(y)^i g(y).
\end{equation}
\begin{theorem}[Thm.~6 \cite{EberhardtSteffan2025}]
    With the previous assumptions, $\mathrm{ann}(a) = \idlGens{P(x, y)}$.
\end{theorem}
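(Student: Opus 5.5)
We would prove the two inclusions $\idlGens{P} \subseteq \ann(a)$ and $\ann(a) \subseteq \idlGens{P}$ separately. Throughout we work over $\F_2$, as in \cite{EberhardtSteffan2025}, so that signs can be ignored. We view $R$ as a free module over $S = \F_2[y]/\idlGens{y^m-1}$ with basis $1, x, \dots, x^{\ell-1}$. No semisimplicity is assumed, since $m$ may be even.

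\emph{Step 1: $aP = 0$.} Expand
\begin{equation*}
aP = \sum_{i=0}^{k'-1} x^{\ell-(i-1)k}\zeta^i g + \sum_{j=1}^{k'} x^{\ell-(j-1)k}\zeta^{j} g,
\end{equation*}
where the second sum is the first product $\zeta P$ reindexed by $j=i+1$. The terms with $1 \le i = j \le k'-1$ coincide and cancel in characteristic $2$, so the sum telescopes. The surviving terms are the $i=0$ term $x^{\ell+k}g = x^k g$ and the $j=k'$ term $x^{\ell-(k'-1)k}\zeta^{k'}g = x^{k}\zeta^{k'}g$, using $k'k=\ell$. Hence $aP = x^k g\,(1+\zeta^{k'}) = x^k g\chi$. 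Since $\bar\chi \mid \chi$, the product $g\chi$ is a multiple of $g\bar\chi = y^m-1$, so $aP=0$.

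\emph{Step 2: structure of $\ann(a)$.} Write $f = \sum_{r \in \Z_\ell} x^r f_r(y)$ with $f_r \in S$. The condition $af=0$ reads $x^k f = \zeta f$. Comparing coefficients of $x^{r+k}$ gives $f_r = \zeta f_{r+k}$ for all $r$, with indices mod $\ell$. Iterating this relation $k'$ times around the coset $r + k\Z_\ell$, which has exactly $k'$ elements, gives $f_r = \zeta^{k'} f_r$, that is $\chi f_r = 0$ in $S$. Moreover, $f$ is determined by the values $f_0, \dots, f_{k-1}$ via $f_{s-ik} = \zeta^i f_s$.

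\emph{Step 3: the one-variable annihilator.} This is the main point. We need $\ann_S(\chi) = gS$ in the principal ideal ring $S$, which need not be semisimple. We would argue directly in $\F_2[y]$. Suppose $\chi u \equiv 0 \pmod{y^m-1}$ and write $\chi = \bar\chi \chi_1$ and $y^m-1 = \bar\chi g$ with $\gcd(\chi_1, g)=1$. Dividing by $\bar\chi$ gives $g \mid \chi_1 u$, hence $g \mid u$. The converse inclusion $gS \subseteq \ann_S(\chi)$ was already used in Step 1. Therefore each $f_s = g h_s$ for some $h_s \in S$.

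\emph{Step 4: exhibiting $f$ as a multiple of $P$.} Set $r = \sum_{s=0}^{k-1} x^s h_s(y)$. The coefficient of $x^{-ik}$ in $P$ is $\zeta^i g$, and these exponents exhaust the coset $k\Z_\ell$ with no others. Hence $x^s h_s P$ is supported on the coset $s + k\Z_\ell$, with coefficient $\zeta^i g h_s = \zeta^i f_s = f_{s-ik}$ at $x^{s-ik}$. Summing over $s$ gives $rP = f$, which proves $\ann(a) \subseteq \idlGens{P}$. Together with Step 1 this gives $\ann(a) = \idlGens{P}$.

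The only delicate step is Step 3, where repeated roots prevent using idempotents. The gcd/coprimality argument above handles it.
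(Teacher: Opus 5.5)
Your proof is correct and complete. The paper gives no argument of its own for this statement; it quotes it from \cite{EberhardtSteffan2025}. Your three ingredients therefore form a self-contained verification rather than a variant of something in the text: the telescoping check that $aP=x^k g\chi=0$, the coefficient comparison over the free $S$-basis $1,x,\dots,x^{\ell-1}$ reducing $\ann(a)$ to $\ann_S(\chi)$ on each coset of $k\Z_\ell$, and the gcd argument in $\F_2[y]$ showing $\ann_S(\chi)=gS$.

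The route the paper's own machinery suggests is spectral. When $\ell m$ is odd, Theorem~\ref{thm:ideals-by-orbits} reduces the claim to showing that $P$ vanishes exactly off $\mathcal{Z}_a$. This follows from $P(\xi,\eta)=g(\eta)\sum_{i=0}^{k'-1}\bigl(\xi^{-k}\zeta(\eta)\bigr)^i$. At a zero of $a$ the ratio is $1$, so the sum equals $k'$, which is nonzero because $k'$ divides the odd number $\ell$; also $g(\eta)\neq 0$ there, since $\chi(\eta)=0$ and $y^m-1$ is separable. Off $\mathcal{Z}_a$, either $g(\eta)=0$ or the ratio is a nontrivial $k'$-th root of unity and the sum vanishes. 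That argument is shorter but depends on semisimplicity, which is exactly the case where Proposition~\ref{prop:princ_is_pure} already gives principality. Yours never uses idempotents, so it also covers even $\ell$ or $m$, where $R$ has repeated roots and zero sets no longer determine ideals.

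Your restriction to characteristic $2$ is essential rather than cosmetic. Over odd $q$ the displayed $P$ is in general not annihilated by $x^k+\zeta$. There one must replace $\zeta^i$ by $(-\zeta)^i$ and $\chi$ by $(-\zeta)^{k'}-1$, after which your Steps~2--4 go through verbatim.
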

\noindent The argument can be repeated to get $\mathrm{ann}(b) = \idlGens{Q(x, y)}$.

Much less has been said about the minimum distance of BB codes. Lin and Pryadko \cite{LinPryadko2024} use group algebras to provide an upper bound by looking at the logical operators supported on only half of the qubits. Let $q = 2$. We strengthen some of their conclusions in upcoming sections. Consider a BB code with stabilizers $H_X = \begin{pmatrix} A & B \end{pmatrix}$ and $H_Z = \begin{pmatrix} B^T & A^T \end{pmatrix}$ and choose a set of central orthogonal idempotents $\{E_1, E_2, \hdots, E_r\}$ for the group algebra. The $E_i$ serve as projection operators onto independent subspaces and give the block-diagonalization $A = \sum_i A E_i$ and $B = \sum_i B E_i$. Let $E_A = \sum_{A E_i \neq 0} E_i$ and similarly for $E_B$. Now consider the modified stabilizers
\begin{equation*}
    H^\prime_L = \begin{pmatrix}
        A & B\\
        0 & 1 - E_A
    \end{pmatrix}, \quad
    H^\prime_R = \begin{pmatrix}
        A & B\\
        1 - E_B & 0
    \end{pmatrix}.
\end{equation*}
Solutions of $H^\prime_{L/R} \begin{pmatrix} u \\ v \end{pmatrix} = 0$ contain the original logical operators of the code, $Au + Bv = 0$, while enforcing that $v$ and $u$ are restricted strictly to the supports of $A$ and $B$ for $H^\prime_L$ and $H^\prime_R$, respectively. Intuitively, this is allowing $v$ ($u$) to be nonzero but only inside the support of $A$ ($B$). Since this restriction creates a strict subset of the valid logical operators of the full code, the minimum weight of this subset is bounded from below by the true quantum distance, yielding $d(H) \leq \min\{d(H^\prime_L), d(H^\prime_R)\}$. However, the distance of $H^\prime$ is just as hard to compute as the distance of the original code. Restricting further to the case that $v = 0$ (the shortened code) provides an even looser, but now computable upper bound. Considering just solutions $Au = 0$ also includes $Z$ stabilizers which must be excluded from the distance computation. 
Since $Z$ stabilizers are elements of
$\mathrm{rowspace}(H_Z)=\im(H_Z^T)$, they are of the form
\[
H_Z^T r=
\begin{pmatrix}
Br\\
Ar
\end{pmatrix},
\qquad r\in\F_2^{\ell m}.
\]
Thus, a $Z$ stabilizer supported only on the left block has the form
$(Br,0)$ with $Ar=0$; in particular, its left component lies in
$\im(B)$.

The authors of \cite{LinPryadko2024} remove the support on $\im(B)$ by considering the (now classical) code $H_L = \begin{pmatrix} A \\ E_B \end{pmatrix}$ which gives $Au = 0$ and $E_B u = 0$, enforcing that $u \in \ker(A)$ while explicitly projecting out the stabilizer space via $E_B u = 0$, guaranteeing $u \notin \im(B)$. Likewise, define $H_R = \begin{pmatrix} B \\ E_A \end{pmatrix}$. Now, $d(H_{X/Z}) \leq \min\{d(H^\prime_L), d(H^\prime_R)\} \leq \min\{d(H_L), d(H_R)\}$.\\

\noindent {\textbf{Remark:}} Since $A$ and $B$ are classical 2D cyclic codes, the results of the previous section allow us to precisely tailor $A$ and $B$ and control their intersections. The idempotents $E_i$ correspond to the orbit idempotents $e_O$.\\

\noindent {\textbf{Remark:}} Combining the upper bound with the dimension formula $k = 2 \dim(\ker A \cap \ker B)$ \cite{bravyi2024} gives the impression that a BB code is completely determined by the intersection. However, maximizing the intersection by choosing $A = B$ trivially drops the distance to two. We will explain the role of the orbits outside the intersection in the next section.\\

Also noteworthy, \cite{symons2025sequences} describes the change in distance that occurs when using small ``seed'' BB codes to construct sequences of larger BB codes via graph lifting.

\section{BB Codes From $q$-Frobenius Orbits}
\label{sec:BBnew}
We continue under the assumption that $q$ is a prime power. Let $Q = \mathrm{BB}(a, b)$ be the bivariate bicycle code defined by the polynomials $a, b \in R = \F_q[x, y] / \idlGens{x^\ell - 1, y^m - 1}$.
After identifying $\F_q^{\ell m}$ with $R$ in the standard monomial basis, let $A$ and $B$ denote the corresponding matrix representations of the multiplication operators by $a$ and $b$, respectively. The $X$ stabilizers can be identified with the classical code $\mathcal{C}_X$ with parity checks of the form $\begin{pmatrix} A & B \end{pmatrix}$. Equivalently, $\mathcal{C}_X = \{(u, v) \in R^2 \mid au + bv = 0\}$. Defining
\begin{equation}
    \Psi_X \,:\, R^2 \longrightarrow R, \qquad (u, v) \longmapsto au + bv,
\end{equation}
we have $\mathcal{C}_X = \ker \Psi_X$ and $\im \Psi_X = \idlGens{a, b}$. Similarly, $\ker A \cong \ann\idlGens{a}$ and $\ker B \cong \ann\idlGens{b}$ are ideals in $R$. 
Hence, BB codes are two-block (index-2 quasi-cyclic) codes whose constituents are 2D cyclic codes.

Projecting the defining constraint of $C_X$, $a u+ b v = 0$ onto a single Frobenius orbit $O \in \Omega$ gives the local constituent equation
\begin{equation}
\label{eq:local_coupling}
    a_O u_O + b_O v_O = 0 \qquad (\text{over } \F_{q^{|O|}}).
\end{equation}
Let
\begin{equation}
    S_a = \{O \in \Omega \mid a_O \neq 0\}, \qquad S_b = \{O \in \Omega \mid b_O \neq 0\}.
\end{equation}
be the active supports of $a$ and $b$, respectively. We record the resulting orbit partition in both active-support and zero-set form.
Throughout this section, a hat denotes regions defined from the nonzero Wedderburn, equivalently Fourier, supports $S_a$ and $S_b$. Unhatted regions are defined from the zero sets $\mathcal{Z}_a$ and $\mathcal{Z}_b$ on the same root grid.
Also, when convenient, we identify any Frobenius-closed set of root-grid points (such as $\mathcal{Z}_a$) with the subset of $\Omega$ consisting of the orbits it contains. 
Whenever the cardinality operator $\vert{}\cdot\vert{}$ is applied to these regions, it counts the total number of root-grid points, not the number of orbits.

\begin{definition}[Active-Support Version]\leavevmode
    \begin{enumerate}
        \item \textbf{Coupled active-support region}
        \begin{equation*}
            \suppreg{\mathcal{T}}_{a, b} = S_a \cap S_b.
        \end{equation*}
        
        \item \textbf{Uncoupled active-support regions}
        \begin{equation*}
            \suppreg{\mathcal{U}}_{a, b} = S_a \setminus S_b \quad , \quad \suppreg{\mathcal{U}}_{b, a} = S_b \setminus S_a
        \end{equation*}
        
        \item \textbf{Free active-support region}
        \begin{equation*}
            \suppreg{\mathcal{F}}_{a, b} = \Omega \setminus (S_a \cup S_b)
        \end{equation*}
    \end{enumerate}
\end{definition}

\begin{definition}[Zero-Set Version]\leavevmode
    \begin{enumerate}
        \item \textbf{Coupled zero-set region}
        \begin{equation*}
            \mathcal{T}_{a, b} = \mathcal{Z}_a \cap \mathcal{Z}_b
        \end{equation*}

        \item \textbf{Uncoupled zero-set regions}
        \begin{equation*}
            \mathcal{U}_{a, b} = \mathcal{Z}_a \setminus \mathcal{Z}_b \quad , \quad \mathcal{U}_{b, a} = \mathcal{Z}_b \setminus \mathcal{Z}_a
        \end{equation*}
        
        \item \textbf{Free zero-set region}
        \begin{equation*}
            \mathcal{F}_{a,b} = \Omega \setminus (\mathcal{Z}_a \cup \mathcal{Z}_b)
        \end{equation*}
    \end{enumerate}
\end{definition}

\begin{lemma}
    \label{lem:Fouriercorr}
    The active-support and zero-set regions satisfy
    \begin{gather*}
        \suppreg{\mathcal{T}}_{a, b} \longleftrightarrow \mathcal{F}_{a, b}, \quad \suppreg{\mathcal{F}}_{a, b} \longleftrightarrow \mathcal{T}_{a, b},  \quad \suppreg{\mathcal{U}}_{b, a} \longleftrightarrow \mathcal{U}_{a, b}.
    \end{gather*}
\end{lemma}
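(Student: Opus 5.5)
The plan is to pass through the Wedderburn decomposition and reduce the lemma to a pointwise statement on orbits. The key observation is that for a single orbit $O$ and any $c\in R$, $c_O=e_Oc\neq 0$ if and only if $c$ does not vanish on $O$. I would prove this first. By Proposition~\ref{prop:field-factor-degree}, under $\ev$ the component $e_O c$ corresponds to the tuple $(c(\alpha^{i_0},\beta^{j_0}), c(\alpha^{i_0},\beta^{j_0})^q,\dots)$ supported on $O$. Since Frobenius is injective, this tuple is zero exactly when $c(\alpha^{i},\beta^{j})=0$ at one, equivalently every, point of $O$. In particular each $O$ lies either entirely in $\mathcal{Z}_c$ or entirely outside it, and
\[
S_c=\Omega\setminus \mathcal{Z}_c
\]
as subsets of $\Omega$, using the convention that Frobenius-closed point sets are identified with the orbits they contain.

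With $S_a=\Omega\setminus\mathcal{Z}_a$ and $S_b=\Omega\setminus\mathcal{Z}_b$ in hand, each correspondence is set algebra (De Morgan):
\begin{itemize}
\item $\suppreg{\mathcal{T}}_{a,b}=S_a\cap S_b=\Omega\setminus(\mathcal{Z}_a\cup\mathcal{Z}_b)=\mathcal{F}_{a,b}$.
\item $\suppreg{\mathcal{F}}_{a,b}=\Omega\setminus(S_a\cup S_b)=\mathcal{Z}_a\cap\mathcal{Z}_b=\mathcal{T}_{a,b}$.
\item $\suppreg{\mathcal{U}}_{b,a}=S_b\setminus S_a=\mathcal{Z}_a\setminus\mathcal{Z}_b=\mathcal{U}_{a,b}$.
\end{itemize}

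Before this, I would fix what ``$\longleftrightarrow$'' means: equality of orbit sets, hence also of the underlying root-grid point sets. The name ``Fourier correspondence'' comes from the fact that, by Lemma~\ref{lem:fourier-eij}, the components $c_O$ are the Fourier coefficients of $c$ grouped by orbit. One could also read the arrows as bijections intertwined with the coordinate inversion $\iota$, since the table of notation mentions $\iota$ in connection with duality. I would note that $\iota$ sends $\mathcal{Z}_c$ to the inverse orbits, $\mathcal{Z}_{\iota(c)}=-\mathcal{Z}_c$, so under that convention the same argument applies after composing with $O\mapsto -O$.

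The main obstacle is the first step, namely that vanishing of $e_Oc$ is equivalent to $c$ vanishing on all of $O$, rather than just at some point of $O$. Orbit-closedness of $\mathcal{Z}_c$ is what makes this work: applying $\sigma$ to $c(\alpha^i,\beta^j)=0$, with $c$ having coefficients in $\F_q$, gives $c(\alpha^{qi},\beta^{qj})=0$, exactly as in the proof of Theorem~\ref{thm:ideals-by-orbits}. Everything after that is routine complementation.
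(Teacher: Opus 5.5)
Your proposal is correct and takes the paper's route: it proves $S_c=\Omega\setminus\mathcal{Z}_c$ and then gets the three identities from De Morgan's laws, as equalities of orbit sets. The paper only asserts the complement identity, so your evaluation-tuple argument, showing $\ev(e_Oc)=(\theta,\theta^q,\dots,\theta^{q^{|O|-1}})$ with $\theta=c(\alpha^{i_0},\beta^{j_0})$, usefully fills that step; the alternative $\iota$-twisted reading of the arrows is unnecessary, since they are plain equalities.
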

\begin{proof}
    The spectral support of a polynomial is the set-theoretic complement of its zero set in the ambient root space $\Omega$. Thus, $S_a = \Omega \setminus \mathcal{Z}_a$ and $S_b = \Omega \setminus \mathcal{Z}_b$. Substituting these into the formulas and applying De Morgan's laws gives the desired results.
\end{proof}

The same statements can be made for the $Z$ stabilizers. By Proposition~\ref{prop:algebraic-properties},
\begin{equation*}
    \Psi_Z : R^2 \to R, \qquad (u, v) \mapsto \iota(b) u - \iota(a) v
\end{equation*}
and $\im \Psi_Z = \idlGens{\iota(a), \iota(b)} = \iota(\idlGens{a, b}).$

\subsection{Dimension}
\begin{theorem}
    \label{thm:BB-dimension}
    The dimension of $\mathrm{BB}(a, b)$ is 
    \begin{equation}
    \label{dimorbform}
        k = 2 \sum_{O \in \suppreg{\mathcal{F}}_{a, b}} |O| = 2 |\mathcal{T}_{a,b}| = 2 |\mathcal{Z}_a \cap \mathcal{Z}_b|.
    \end{equation}
\end{theorem}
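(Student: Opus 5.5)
We compute $k = 2\ell m - \rank H_X - \rank H_Z$ entirely inside the Wedderburn decomposition of Proposition~\ref{prop:product-decomp}. As in the discussion preceding the definitions of the regions, the row space of $H_X$ is (up to the Euclidean duality of Proposition~\ref{prop:algebraic-properties}(2)) the image of the map $\Psi_X^{\ast}: R\to R^2$, $r\mapsto (\iota(a)r,\iota(b)r)$. Equivalently, $\rank H_X = \dim_{\F_q}\im \Psi_X = \dim_{\F_q}\idlGens{a,b}$, since the column rank of $(A\ B)$ equals the dimension of $\im\Psi_X$. Likewise $\rank H_Z = \dim_{\F_q}\im\Psi_Z = \dim_{\F_q}\iota(\idlGens{a,b}) = \dim_{\F_q}\idlGens{a,b}$, because $\iota$ is an $\F_q$-linear ring automorphism of $R$. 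Hence
\[
k = 2\bigl(\ell m - \dim_{\F_q}\idlGens{a,b}\bigr).
\]

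The remaining work is to identify $\idlGens{a,b}$ with a selected ideal. By Theorem~\ref{thm:ideals-by-orbits}, $\idlGens{a} = J_{S_a}$. To see this, decompose $a=\sum_O a_O$. Each nonzero $a_O$ is a unit in the field $e_O R$ (Proposition~\ref{prop:field-factor-degree}), so $e_O\in \idlGens{a}$ exactly when $O\in S_a$. Similarly $\idlGens{b}=J_{S_b}$. Then $\idlGens{a,b}=\idlGens{a}+\idlGens{b}=J_{S_a\cup S_b}$ by the union identity of Theorem~\ref{thm:ideals-by-orbits}, and Corollary~\ref{cor:orbit-additive-dim} gives
\[
\dim_{\F_q}\idlGens{a,b}=\sum_{O\in S_a\cup S_b}|O|.
\]
Since $\sum_{O\in\Omega}|O|=\ell m$, we obtain $\ell m-\dim\idlGens{a,b}=\sum_{O\in\suppreg{\mathcal F}_{a,b}}|O|$, which is the first equality of the theorem.

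For the remaining equalities, use Lemma~\ref{lem:Fouriercorr}: $\suppreg{\mathcal F}_{a,b}=\Omega\setminus(S_a\cup S_b)$ coincides, as a set of orbits, with $\mathcal Z_a\cap\mathcal Z_b=\mathcal T_{a,b}$. This uses $S_a=\Omega\setminus\mathcal Z_a$, which holds because $a_O=0$ iff $a$ vanishes at one, hence by Frobenius at every, point of $O$. The sum of orbit sizes then equals the point count $|\mathcal T_{a,b}|$, under the paper's convention for $|\cdot|$.

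The main obstacle is the rank identification in the first step. One must make sure that transposition and the sign in $H_Z$ do not change the relevant image. The sign is harmless: $\idlGens{\iota(b),-\iota(a)}=\idlGens{\iota(a),\iota(b)}$. For the transposes, I would rely on column rank equalling row rank, together with Proposition~\ref{prop:algebraic-properties}(2). A secondary check is that CSS dimension is $n-\rank H_X-\rank H_Z$ over $\F_q$, which holds here because $H_XH_Z^\top = AB-BA=0$ by commutativity of $R$.
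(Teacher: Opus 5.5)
Your proof is correct and follows the paper's argument. Both use the CSS rank count $k = 2(\ell m - \dim_{\F_q}\idlGens{a,b})$, then an orbitwise computation in the Wedderburn decomposition, then Lemma~\ref{lem:Fouriercorr} to pass from $\suppreg{\mathcal F}_{a,b}$ to $\mathcal T_{a,b}$. The only difference is that you identify $\idlGens{a,b}=J_{S_a\cup S_b}$ directly and take the complementary dimension, whereas the paper first rewrites $k=2\dim_{\F_q}\ann\idlGens{a,b}$ via Proposition~\ref{prop:algebraic-properties}(1) and then computes the annihilator componentwise as the orbits where $a_O=b_O=0$.
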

\begin{proof}
    The ambient space of the code is $R^2$, so $n = 2 \dim_{\F_q}(R)$.
    The $X$ stabilizers are defined by the image of $\Psi_X$. Thus, $\rank(H_X) = \dim_{\F_q}(\langle a, b \rangle)$.
    Since the involution $\iota$ is a linear automorphism, it preserves dimensions: $\rank(H_Z) = \dim_{\F_q}(\iota(\langle a, b \rangle)) = \dim_{\F_q}(\langle a, b \rangle)$.
    
    Using the standard CSS dimension formula, if $\mathcal{C}_X = \ker(H_X)$ and $\mathcal{C}_Z = \ker(H_Z)$, then $k = \dim \mathcal{C}_X + \dim \mathcal{C}_Z - n$. Since
    \begin{equation*}
        \dim \mathcal{C}_X = n -\rank(H_X), \qquad \dim \mathcal{C}_Z = n - \rank(H_Z),
    \end{equation*}
    we obtain
    \begin{equation*}
        k = n - \rank(H_X)-\rank(H_Z) = 2 \dim_{\F_q}(R) - 2 \dim_{\F_q}(\langle a, b \rangle).
    \end{equation*}
    By Proposition~\ref{prop:algebraic-properties}(1), $\dim_{\F_q}(R) - \dim_{\F_q}(\langle a, b \rangle) = \dim_{\F_q}(\ann\langle a, b \rangle)$. Therefore, $k = 2 \dim_{\F_q}(\ann\langle a, b \rangle)$.
    Since $R$ is semisimple by assumption, the Wedderburn decomposition gives $R \cong \prod_{O \in \Omega} K_O$, where $K_O = \F_{q^{|O|}}$. Under this isomorphism, an element $r \in R$ is represented by its spectral components $(r_O)_{O \in \Omega}$. 
    
    The ideal $\ann\langle a, b \rangle$ is the intersection $\ann(a) \cap \ann(b)$. Thus, an element $r = (r_O)_{O \in \Omega}$ lies in $\ann\langle a, b \rangle$ if and only if
    \begin{equation*}
        r_O a_O = 0 \qquad \text{and} \qquad r_O b_O = 0
    \end{equation*}
    for every $O \in \Omega$. Since each $K_O$ is a field, $r_O$ is arbitrary exactly on those orbits where $a_O = b_O = 0$, and must vanish on all other orbits. 
    
    The condition $a_O = b_O = 0$ is precisely the definition of the free active-support region, $O \in \suppreg{\mathcal{F}}_{a, b}$. Thus, the annihilator is supported exclusively on this region, yielding
    \begin{equation*}
        \dim_{\F_q}(\ann\langle a, b \rangle) = \sum_{O \in \suppreg{\mathcal{F}}_{a, b}} \dim_{\F_q} K_O = \sum_{O \in \suppreg{\mathcal{F}}_{a, b}} |O|.
    \end{equation*}
    Finally, Lemma~\ref{lem:Fouriercorr} identifies the free active-support region $\suppreg{\mathcal{F}}_{a,b}$ with the common-zero region $\mathcal{T}_{a,b}=\mathcal{Z}_a \cap \mathcal{Z}_b$, yielding the desired formula.
\end{proof}

\begin{remark}
    Since $\ann \langle a, b \rangle = \ann \langle a \rangle \cap \ann \langle b \rangle$, Equation~\eqref{dimorbform} is the Frobenius-orbit version of the standard two-block dimension count. The intermediate formula $k = 2 (\dim_{\F_q}(R) - \dim_{\F_q}(\langle a, b \rangle))$ also appears in \cite{EberhardtSteffan2025}.
\end{remark}

Theorem~\ref{thm:BB-dimension} shows that $\mathcal{T}_{a, b}$ plays an analogous role to the defining set for cyclic codes. In order to construct a BB code with high dimension, one must carefully choose the polynomials $a$ and $b$ to have a large ``core'' of common roots. As previously mentioned, a similar interpretation can be deduced from the minimum distance upper bounds of \cite{LinPryadko2024}. However, there is always a tension between dimension and distance. If we attempt to maximize both formulas by setting $a = b$, the bound increases but the true minimum distance drops to $d \leq 2$. The uncoupled regions therefore cannot be ignored. These orbits make non-trivial contributions to the stabilizers, which can significantly lower the distance compared to the upper bound.

We can take the previous theorem further by showing that the reason the uncoupled orbits are not required in the dimension count is because they correspond to homologically trivial logical operators. 

\begin{remark}
    In the homological formulation of stabilizer codes, the logical $Z$ operators are identified with the homology group $H_1 = \ker(H_X) / \im(H_Z^\top)$, while the logical $X$ operators correspond to the cohomology group $H^1 = \ker(H_Z) / \im(H_X^\top)$. For topological codes, the isomorphism $H_1 \cong H^1$ is traditionally established via Poincaré duality. In our framework, this geometric duality is replaced by an exact algebraic isomorphism defined by the canonical involution $\iota(x,y) = (x^{-1}, y^{-1})$ coupled with a block exchange. The non-degenerate bilinear form of the symmetric Frobenius ring guarantees that orthogonal complements are isomorphic to ideal annihilators. Consequently, $H_Z = \begin{pmatrix} \iota(B) & -\iota(A) \end{pmatrix}$ ensures that the map $(u, v) \mapsto (\iota(v), -\iota(u))$ is a bijective dimension-preserving automorphism between the logical spaces,  establishing the duality $H_1 \cong H^1$.
\end{remark}

\begin{lemma}
    \label{lem:orbitwise_complex_split}
    The chain complex
    \begin{equation*}
        R \xrightarrow{H_Z^\top} R^2 \xrightarrow{H_X} R
    \end{equation*}
    splits as a direct sum of subcomplexes indexed by $O\in\Omega$ as 
    \begin{equation*}
        e_O R \xrightarrow{H_Z^\top|_O} (e_O R)^2 \xrightarrow{H_X|_O} e_O R,
    \end{equation*}
    where $H_X|_O : (e_O R)^2 \to e_O R$ is defined by the localized (restricted) action
    \begin{equation*}
        H_X|_O(u_O, v_O) = H_X(u_O, v_O) = a_O u_O + b_O v_O,
    \end{equation*}
    $(u_O, v_O) \in (e_O R)^2$, $a_O = a e_O$, and $b_O = b e_O$ are the projections of the check polynomials into the local field component. The map $H_Z^\top|_O$ is defined analogously. Consequently,
    \begin{equation*}
        H_1 \cong \bigoplus_{O \in \Omega} H_1(O), \qquad H_1(O) = \ker(H_X|_O) \big/ \im(H_Z^\top|_O).
    \end{equation*}
\end{lemma}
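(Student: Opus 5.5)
The plan is to use that the orbit idempotents $e_O$ lie in $R$, are central (since $R$ is commutative), are pairwise orthogonal, and satisfy $\sum_O e_O = 1$ (Proposition~\ref{prop:orbit-idempotents-in-R}). Both boundary maps are $R$-linear. $H_X$ is $(u,v)\mapsto au+bv$, i.e.\ $\Psi_X$. By Proposition~\ref{prop:algebraic-properties}(2), $H_Z^\top$ is $r\mapsto(\iota(\iota(b))r,\,-\iota(\iota(a))r)=(br,-ar)$ up to the sign conventions of Definition~\ref{def:BBcodes}. Either way, each map is given by multiplication by fixed elements of $R$, so it commutes with multiplication by every $e_O$.

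First, decompose each term of the complex. Lemma~\ref{lem:orbit-decomp} gives $R=\bigoplus_O e_OR$, and hence $R^2=\bigoplus_O (e_OR)^2$ componentwise. Second, check that each map respects these decompositions. For $(u,v)\in(e_OR)^2$ we have $u=e_Ou$ and $v=e_Ov$, so
\[
H_X(u,v)=au+bv=e_O(au+bv)=(ae_O)u+(be_O)v=a_Ou+b_Ov\in e_OR ,
\]
and similarly $H_Z^\top(e_OR)\subseteq(e_OR)^2$. This shows that the restrictions $H_X|_O$ and $H_Z^\top|_O$ are well defined and have the stated formulas. Since the ambient maps are additive, they equal the direct sum of the restrictions, and $H_X\circ H_Z^\top=0$ restricts to $H_X|_O\circ H_Z^\top|_O=0$. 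Thus each orbit gives a subcomplex, and the whole complex is their direct sum.

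Third, conclude that homology commutes with finite direct sums of complexes. Concretely, $\ker H_X=\bigoplus_O\ker(H_X|_O)$. To see this, write $(u,v)=\sum_O e_O(u,v)$. Then $H_X(u,v)=\sum_O H_X|_O(e_Ou,e_Ov)$, with the summands lying in the independent pieces $e_OR$, so the whole sum vanishes exactly when each summand does. Likewise $\im H_Z^\top=\bigoplus_O\im(H_Z^\top|_O)$, since $H_Z^\top(r)=\sum_O H_Z^\top|_O(e_Or)$. Taking the quotient of direct sums componentwise gives $H_1\cong\bigoplus_O H_1(O)$.

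There is no genuine obstacle. The only point needing care is the bookkeeping of $H_Z^\top$: using Proposition~\ref{prop:algebraic-properties}(2) to confirm it is a multiplication operator by elements of $R$, so that it commutes with the $e_O$, and tracking the sign convention consistently. All remaining steps are formal consequences of the idempotents being central, orthogonal, and summing to one.
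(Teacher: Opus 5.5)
Your proposal is correct and follows the paper's own argument: the orbit idempotents are central, orthogonal and sum to one, so the $R$-linear maps $H_X$ and $H_Z^\top=(r\mapsto(br,-ar))$ preserve each summand $e_O R$, the complex splits orbitwise, and homology commutes with the finite direct sum. You additionally spell out the kernel and image decompositions, which the paper leaves implicit.
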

\begin{proof}
    Since each $e_O$ is central and $e_O^2 = e_O$, multiplication by $a$ and $b$ preserves each summand $e_O R$. In other words,
    \begin{equation*}
        e_O(a u) = (e_O a)u = a_O (e_O u) \in e_O R, \qquad e_O(b v)= b_O (e_O v) \in e_O R,
    \end{equation*}
    and similarly for the action in $H_Z^\top$. It follows that both maps $H_X$ and $H_Z^\top$ restrict to $(e_O R)^2$ and $e_O R$, and that the global complex is the direct sum of these orbitwise subcomplexes. Taking homology commutes with finite direct sums, yielding the claimed decomposition of $H_1$.
\end{proof}

\begin{theorem}
    \label{thm:trivial_homology}
    For every orbit $O \notin \suppreg{\mathcal{F}}_{a,b}$, the local homology group $H_1(O)$ is trivial.
\end{theorem}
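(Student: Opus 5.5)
We work entirely inside the local subcomplex supplied by Lemma~\ref{lem:orbitwise_complex_split}. Fix $O\notin\suppreg{\mathcal F}_{a,b}$ and write $K_O=e_OR\cong\F_{q^{|O|}}$ (Proposition~\ref{prop:field-factor-degree}). The local complex is
\[
K_O\xrightarrow{\ \partial_2\ }K_O^2\xrightarrow{\ \partial_1\ }K_O,\qquad \partial_2(r)=(\iota(b)_O r,\,-\iota(a)_O r),\quad \partial_1(u,v)=a_Ou+b_Ov .
\]
The maps $\partial_1$ and $\partial_2$ are $K_O$-linear maps between $K_O$-vector spaces of dimensions $1,2,1$. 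The plan is therefore a dimension count over the field $K_O$, not over $\F_q$.

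\textbf{Step 1 ($\partial_1$ is surjective).} Since $O\notin\suppreg{\mathcal F}_{a,b}$, at least one of $a_O,b_O$ is nonzero, hence a unit in the field $K_O$. So $\partial_1$ is surjective and $\dim_{K_O}\ker\partial_1=1$.

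\textbf{Step 2 ($\partial_2$ is injective).} We must show that $\iota(a)_O$ and $\iota(b)_O$ are not both zero. This is the one point needing care, because $\iota$ does not fix $O$: it sends $e_O$ to $e_{-O}$, where $-O=\{(-i,-j)\}$. The cleanest way to proceed is not to compare orbits but to use the complex property directly. We have $H_X H_Z^\top=AB-BA=0$, so $\partial_1\partial_2=0$ on every local piece.

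Here $\partial_2$ is the restriction of multiplication by $\iota(b)$ and $-\iota(a)$ to $K_O$. By Proposition~\ref{prop:algebraic-properties}(2), these equal $M_b^\top$ and $-M_a^\top$, and they preserve $e_OR$ because $e_O$ is central. Evaluating at a root-grid point $(\alpha^i,\beta^j)$ with $(i,j)\in O$, we get $\iota(a)(\alpha^i,\beta^j)=a(\alpha^{-i},\beta^{-j})$. So $\iota(a)_O=0$ exactly when $a$ vanishes on $-O$.

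To conclude, I would therefore show that $a_O,b_O$ not both zero implies $a_{-O},b_{-O}$ not both zero. A cleaner alternative is to note that over $\F_q$ the orbit $-O$ need not equal $O$, so a direct dimension count avoiding $\iota$ is safer. Concretely:
\begin{itemize}
\item take $\ker\partial_1$, which is the $K_O$-line spanned by $(b_O,-a_O)$ when these are nonzero;
\item observe that it contains the image of multiplication by the stabilizer generators.
\end{itemize}
I expect this to be the main obstacle: whether the paper's $H_Z^\top|_O$ really uses the $O$-components of $\iota(a),\iota(b)$, and whether those are nonzero.

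\textbf{Step 3 (resolving the obstacle).} I would first try to check directly that the image of $H_Z^\top$ contains $(b_O,-a_O)$. Using $H_Z^\top=\begin{pmatrix} B\\ -A\end{pmatrix}$ in matrix form, and recalling that $(B^\top)^\top=B$, the $Z$-stabilizers are exactly $\{(br,-ar)\}$. This matches \eqref{R-homology}, so $\partial_2(r)=(b_Or,-a_Or)$. This is injective precisely when $(a_O,b_O)\neq(0,0)$, which is exactly our hypothesis.

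Then $\im\partial_2$ is one-dimensional over $K_O$ and contained in the one-dimensional $\ker\partial_1$, so the two coincide. Hence $H_1(O)=0$. The same count gives $\dim_{K_O}$ of local homology $=2-1-1=0$, consistent with Theorem~\ref{thm:BB-dimension}.
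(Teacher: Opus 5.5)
Your Step 3 argument is correct and is essentially the paper's proof: over the field $e_OR$, the kernel of $\begin{pmatrix} a_O & b_O \end{pmatrix}$ is the line $e_OR\cdot(b_O,-a_O)$, which is exactly the image of $H_Z^\top|_O$ because $H_Z^\top=\begin{pmatrix} M_b\\ -M_a\end{pmatrix}$; the paper simply asserts this equality, while you obtain it from the count $\dim_{e_OR}\ker\partial_1 = 1 = \dim_{e_OR}\im\partial_2$. Your opening formula $\partial_2(r)=(\iota(b)_Or,-\iota(a)_Or)$ forgets to transpose the blocks of $H_Z$ (recall $M_{\iota(c)}^\top=M_c$), so the Step 2 detour about $-O$ is a red herring and should be dropped.
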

\begin{proof}
    The local complex over $e_O R$ is exact whenever $(a_O, b_O) \neq (0, 0)$. We can equate the kernel and the image. We find
    $$\ker \begin{pmatrix} a_O & b_O \end{pmatrix} = e_O R \begin{pmatrix} b_O \\ -a_O \end{pmatrix} = \operatorname{im} \begin{pmatrix} b_O \\ -a_O \end{pmatrix}$$
    Thus $H_1(O) = 0$ unless $a_O = b_O = 0$.
\end{proof}

\begin{corollary}
    \label{cor:halo_free_representatives}
    Every logical class $[(u, v)]$ admits a representative $(u^\prime, v^\prime)$ such that
    $$e_O u^\prime = e_O v^\prime = 0 \qquad \text{for all } O \notin \suppreg{\mathcal{F}}_{a,b}.$$
    We may choose a representative whose spectral support is contained strictly in $\suppreg{\mathcal{F}}_{a,b}$.
\end{corollary}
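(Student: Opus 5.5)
Take a logical class $[(u,v)]$ with $(u,v)\in\ker H_X$, i.e.\ $au+bv=0$, and set $e:=e_{\suppreg{\mathcal{F}}_{a,b}}$ and $f:=1-e=e_{\Omega\setminus\suppreg{\mathcal{F}}_{a,b}}$. Both are idempotents of $R$ by Proposition~\ref{prop:orbit-idempotents-in-R}. Split $(u,v)=(eu,ev)+(fu,fv)$. Because $e$ and $f$ are central and $e+f=1$, each piece separately satisfies the $X$-check equation: multiplying $au+bv=0$ by $f$ gives $a(fu)+b(fv)=0$, and likewise for $e$. So it suffices to show that $(fu,fv)$ lies in $\im H_Z^\top$. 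Then $(u',v'):=(eu,ev)$ represents the same class, and $e_Ou'=e_Oe\,u=0$ for every $O\notin\suppreg{\mathcal{F}}_{a,b}$.

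To show $(fu,fv)\in\im H_Z^\top$, decompose $(fu,fv)=\sum_{O\notin\suppreg{\mathcal{F}}_{a,b}}(u_O,v_O)$, where $u_O=e_Ou$ and $v_O=e_Ov$. By Lemma~\ref{lem:orbitwise_complex_split}, each summand lies in $\ker(H_X|_O)$. By Theorem~\ref{thm:trivial_homology}, for such $O$ we have $H_1(O)=0$. Hence there is $r_O\in e_OR$ with $(u_O,v_O)=H_Z^\top|_O(r_O)$. Set $r=\sum_O r_O$. Then $H_Z^\top r=\sum_O H_Z^\top|_O(r_O)=(fu,fv)$, because $H_Z^\top$ is $R$-linear and therefore commutes with the orbit decomposition. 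This is just the direct-sum splitting of the complex in Lemma~\ref{lem:orbitwise_complex_split}.

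The only point needing care is consistency of the image maps. Theorem~\ref{thm:trivial_homology} writes the local image using $(b_O,-a_O)$, whereas $H_Z^\top$ is multiplication by $(\iota(b),-\iota(a))$, whose $O$-component lives on the orbit $-O$ rather than $O$. So I need to check that the image of $H_Z^\top$ restricted to $e_OR$ is still all of $\ker(a_O\ b_O)$ when $(a_O,b_O)\neq(0,0)$. Locally $\ker(a_O\ b_O)$ is a one-dimensional $K_O$-subspace of $K_O^2$. I expect to handle this by invoking exactness in the form the paper already proved, $H_1(O)=0$, which gives the needed image directly. As a sanity check, dimensions agree globally: Theorem~\ref{thm:BB-dimension} shows $\dim H_1=\sum_{O\in\suppreg{\mathcal{F}}_{a,b}}|O|$, so the $e$-parts of cycles modulo the $e$-parts of boundaries already account for all of $H_1$.

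The final sentence of the statement follows from the same construction. The spectral support of $(u',v')$, meaning the set of $O$ with $e_Ou'\ne0$ or $e_Ov'\ne0$, is contained in $\suppreg{\mathcal{F}}_{a,b}$ by construction.
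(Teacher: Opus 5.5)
Your argument is correct and is essentially the paper's proof. The paper subtracts $H_Z^\top(w)$, where $w=\sum_{O\notin\suppreg{\mathcal{F}}_{a,b}} w_O$ is built orbitwise from Theorem~\ref{thm:trivial_homology}, and this is exactly your replacement of $(u,v)$ by $(eu,ev)$.

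The consistency worry in your third paragraph is unfounded. By Proposition~\ref{prop:algebraic-properties}(2), $M_{\iota(c)}^\top=M_c$, so $H_Z^\top=\begin{pmatrix} M_b \\ -M_a\end{pmatrix}$ is multiplication by $(b,-a)$, not by $(\iota(b),-\iota(a))$. Its $O$-component $(b_O,-a_O)$ therefore stays on $O$ and matches the local image in Theorem~\ref{thm:trivial_homology} directly.
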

\begin{proof}
    By Lemma~\ref{lem:orbitwise_complex_split} the homology class $[(u, v)]$ decomposes orbitwise. Theorem~\ref{thm:trivial_homology} implies that the local component $(e_O u, e_O v) \in \ker(H_X|_O)$ lies in $\mathrm{im}(H_Z^\top|_O)$ for every $O \notin \suppreg{\mathcal{F}}_{a,b}$. Thus there exists $w_O \in e_O R$ with $(e_O u, e_O v) = H_Z^\top|_O(w_O)$. Let $w = \sum_{O \notin \suppreg{\mathcal{F}}_{a,b}} w_O \in R$. The sum is finite and orthogonal since the idempotents $e_O$ are orthogonal. Then $(u^\prime, v^\prime) = (u, v) - H_Z^\top(w)$ is homologous to $(u, v)$. This representative has vanishing components. This means $e_O u^\prime = 0$ and $e_O v^\prime = 0$ for all $O \notin \suppreg{\mathcal{F}}_{a,b}$.
\end{proof}

\subsection{Minimum Distance}
There are multiple ways to try to bound the minimum distance of BB codes. The most natural way is to try to bound the quantum distance through the classical distances of the two 2D cyclic codes. This has the potential to miss low-weight mixed-block logical operators.

The $Z$-logical space of $Q=\mathrm{BB}(a,b)$ corresponds to the homology group
\[
H_1 = \ker H_X/\im H_Z^\top = \{(u,v) \in R^2 \mid au + bv = 0\}/\{(br, -ar) \mid r \in R\}.
\]
Similarly, the $X$-logical space corresponds to the cohomology group $H^1$ obtained from the dual complex.

\begin{remark} 
    Since the inversion map is a coordinate permutation, the $X$- and $Z$-side distance formulas are identical after replacing $a$, $b$ by $\iota(a)$, and $\iota(b)$.
\end{remark}

For $c \in R$ we set 
\begin{equation*}
    d_R(\ann\idlGens{c}) = \min \{\wt_R(u) \mid 0 \neq u \in R , c u = 0\}.
\end{equation*}
If $\ann\idlGens{c} = 0$, we set $d_R(\ann\idlGens{c}) = \infty$. For the polynomials $a, b$, we write 
\begin{equation*}
    d_a = d_R(\ann\idlGens{a}), \quad d_b = d_R(\ann\idlGens{b}).
\end{equation*}
For a pair of $\F_q$-subspaces $U$ and $V$ of $R$ such that $V\subseteq U$, we write 
\begin{equation*}
    d_R(U) = \min \{\wt(u) \mid 0 \neq u \in U \}, \quad d_R(U \setminus V) = \min \{\wt(u) \mid  u \in U \setminus V \}.
\end{equation*}
Finally, for $a$ and $b$ in $R$ we define the \emph{colon ideals}
\begin{equation*}
    \idlGens{b : a} = \{u \in R \mid a u \in \langle b \rangle \}, \quad \idlGens{a : b} = \{u\in R \mid b u \in \langle a \rangle \}.
\end{equation*}

\begin{theorem}
\label{thm:colon-lower-bound}
Define
\[
E_a
=
d_R\bigl(
\ann\idlGens{a}
\setminus
b\ann\idlGens{a}
\bigr),
\qquad
E_b
=
d_R\bigl(
\ann\idlGens{b}
\setminus
a\ann\idlGens{b}
\bigr),
\]
and
\[
N_{a,b}
=
d_R(\idlGens{b:a})
+
d_R(\idlGens{a:b}).
\]
Then the $X$- and $Z$-distances of $Q=\mathrm{BB}(a,b)$ satisfy
\[
d_X,d_Z\ge\min\{E_a,E_b,N_{a,b}\}.
\]
Consequently,
\[
d(Q)\ge\min\{E_a,E_b,N_{a,b}\}.
\]
\end{theorem}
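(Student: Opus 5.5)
The plan is a direct case analysis on a minimum-weight nontrivial $Z$-logical. Afterwards, the $X$ side is transported to the $Z$ side by the weight-preserving involution described in the remark preceding Lemma~\ref{lem:orbitwise_complex_split}.

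\textbf{Setup.} Let $(u,v)\in R^2$ satisfy $au+bv=0$ and $(u,v)\notin\im H_Z^\top=\{(br,-ar)\mid r\in R\}$. The weight of $(u,v)$ is $\wt_R(u)+\wt_R(v)$, and $d_Z$ is the minimum of this quantity over all such pairs. I split into three cases according to which blocks are nonzero. The case $u=v=0$ cannot occur, since $0$ is a stabilizer.

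\textbf{Case $v=0$.} Then $au=0$, so $u\in\ann\idlGens a$. Suppose $u\in b\ann\idlGens a$, say $u=br$ with $ar=0$. Then $(u,0)=(br,-ar)\in\im H_Z^\top$, which is excluded. Hence $u\in\ann\idlGens a\setminus b\ann\idlGens a$, and $\wt(u,v)=\wt_R(u)\ge E_a$.

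\textbf{Case $u=0$.} Then $bv=0$, so $v\in\ann\idlGens b$. Suppose $v\in a\ann\idlGens b$, say $v=-ar'$ with $br'=0$; here $-r'$ again lies in $\ann\idlGens b$ because annihilators are ideals. Then $(0,v)=(br',-ar')$ is a stabilizer, which is excluded. So the weight is at least $E_b$.

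\textbf{Case $u\neq0$ and $v\neq0$.} From $au=-bv\in\idlGens b$ we get $u\in\idlGens{b:a}$, and symmetrically $bv=-au\in\idlGens a$ gives $v\in\idlGens{a:b}$. Both are nonzero, so the total weight is at least $d_R(\idlGens{b:a})+d_R(\idlGens{a:b})=N_{a,b}$. This case needs no stabilizer exclusion, because it only lower-bounds weights of nonzero elements.

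Combining the three cases gives $d_Z\ge\min\{E_a,E_b,N_{a,b}\}$. The same argument works verbatim if $\idlGens{b:a}$ or one of the difference sets is empty, under the convention that $d_R(\emptyset)=\infty$.

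\textbf{The $X$ side.} An $X$-logical is a pair $(u,v)\in\ker H_Z$, that is, $\iota(b)u-\iota(a)v=0$, taken modulo $\im H_X^\top=\{(\iota(a)r,\iota(b)r)\}$; this description uses Proposition~\ref{prop:algebraic-properties}(2). Consider the map $\tau(u,v)=(\iota(v),-\iota(u))$.
\begin{itemize}
\item Applying $\iota$ to the kernel equation gives $a\iota(v)-b\iota(u)=0$, so $\tau$ sends $\ker H_Z$ into $\ker H_X$.
\item For the stabilizers, $\tau(\iota(a)r,\iota(b)r)=(b\iota(r),-a\iota(r))\in\im H_Z^\top$.
\item $\tau$ is bijective: $\iota$ is an involution and block exchange is invertible.
\end{itemize}
Hence $\tau$ induces a bijection between nontrivial $X$-logicals and nontrivial $Z$-logicals. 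Since $\iota$ is a coordinate permutation and negation preserves support, $\tau$ preserves Hamming weight. Therefore $d_X=d_Z$ and the same bound holds for $d_X$, which gives the statement for $d(Q)=\min\{d_X,d_Z\}$.

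\textbf{Main obstacle.} The step needing the most care is matching the stabilizer exclusions in the single-block cases with exactly the sets $b\ann\idlGens a$ and $a\ann\idlGens b$. In particular, a stabilizer supported on one block must come from an $r$ in the appropriate annihilator, and the sign conventions in $H_Z^\top$ must be tracked. Everything else is routine bookkeeping.
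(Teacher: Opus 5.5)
Your proof is correct. Your $Z$-side argument is the paper's three-case analysis ($v=0$, $u=0$, both nonzero) essentially line for line, including the sign bookkeeping in the $u=0$ case.

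The only real difference is on the $X$ side:
\begin{itemize}
\item The paper re-runs the argument for the transposed pair $(c,d)=(\iota(b),-\iota(a))$. It then checks, using that $\iota$ is a weight-preserving ring automorphism, that $E_c=E_b$, $E_d=E_a$ and $N_{c,d}=N_{a,b}$.
\item You instead transport logicals through the duality $\tau(u,v)=(\iota(v),-\iota(u))$. This avoids recomputing the three terms and yields the slightly stronger conclusion $d_X=d_Z$.
\end{itemize}

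For that transport, what you actually need is the equality $\tau(\im H_X^\top)=\im H_Z^\top$, not just the inclusion you wrote. It is what guarantees that a nontrivial $X$-logical is sent to a nontrivial $Z$-logical. The equality follows from your own computation, because $\iota(r)$ ranges over all of $R$ as $r$ does. Likewise, your kernel computation is an equivalence rather than a one-way implication, since $\iota$ is a ring automorphism.
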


\begin{proof}
    We prove the $Z$-distance bound. The $X$-distance bound follows by applying the same argument to the transposed checks.
    
    A $Z$-type operator is represented by a pair $(u, v) \in R^2$. It commutes with all $X$-checks precisely when $a u + b v = 0$. Thus, the $Z$-kernel is $K_Z = \{(u, v) \in R^2 \mid a u + b v = 0 \}$. The $Z$-stabilizer subspace is the image of $H_Z^\top$. Since $H_Z = \begin{pmatrix} M_{\iota(b)} & -M_{\iota(a)}\end{pmatrix}$, its transpose is $H_Z^\top = \begin{pmatrix} M_b \\ -M_a \end{pmatrix}$. Thus, we have $\im(H_Z^\top) = \{(b r, -a r) \mid r \in R\}$. Therefore, a nontrivial $Z$-logical operator is an element of $K_Z \setminus \{(b r, -a r) \mid r \in R\}$.
    
    Let $(u, v) \in K_Z$ be a nontrivial $Z$-logical operator. We split our analysis into three cases. First, suppose that $v = 0$ and $u \neq 0$. Then the kernel equation gives $a u = 0$, so $u \in \ann\idlGens{a}$. Moreover, $(u, 0)$ is a $Z$-stabilizer if and only if there exists $r \in R$ such that $(u, 0) = (b r, -a r)$. This is equivalent to $-a r = 0 \implies a r = 0$, meaning $r \in \ann\idlGens{a}$, and $u = b r$. Equivalently, $u \in b \ann\idlGens{a}$. Since $(u, 0)$ is nontrivial, we must have $u \in \ann\idlGens{a} \setminus b \ann\idlGens{a}$. Hence, $\wt_R(u, v) = \wt_R(u) \geq E_a$.
    
    Second, suppose that $u = 0$ and $v \neq 0$. Then the kernel equation gives $b v = 0$, so $v \in \ann\idlGens{b}$. Moreover, $(0, v)$ is a $Z$-stabilizer if and only if there exists $r \in R$ such that $(0, v) = (b r, -a r)$. This forces $b r = 0$, meaning $r \in \ann\idlGens{b}$, and $v = -a r$. Since $-1$ is a unit in $\F_q$, the ideal $a \ann\idlGens{b}$ is invariant under scalar multiplication by $-1$, meaning $\{-ar \mid r \in \ann\idlGens{b}\} = a \ann\idlGens{b}$. Thus, $v \in a \ann\idlGens{b}$. Since $(0, v)$ is nontrivial, we must have $v \in \ann\idlGens{b} \setminus a \ann\idlGens{b}$. Hence, $\wt(u, v) = \wt(v) \geq E_b$.
    
    Finally, suppose that both $u$ and $v$ are nonzero. In this case, $a u + b v = 0$ implies that $a u = -b v \in \langle b \rangle$ and $b v = -a u \in \langle a \rangle$. Once again, since ideals absorb units, this is equivalent in terms of colon ideals to $u \in \idlGens{b : a}$ and $v \in \idlGens{a : b}$. It follows that $\wt(u, v) = \wt(u) + \wt(v) \geq d_R(\idlGens{b : a}) + d_R(\idlGens{a : b}) = N_{a, b}$.
    
    We conclude from these observations that every nontrivial $Z$-logical operator falls into one of the three cases we considered above. The result follows.

   For the $X$ side, the transposed checks correspond to the pair
\[
(c,d)=(\iota(b),-\iota(a)).
\]
The inversion map is a Hamming-weight-preserving ring automorphism and
satisfies
\[
\iota(\ann\idlGens{a})
=\ann\idlGens{\iota(a)},
\qquad
\iota(\idlGens{b:a})
=\idlGens{\iota(b):\iota(a)}.
\]
It also maps
$b\ann\idlGens{a}$ onto
$\iota(b)\ann\idlGens{\iota(a)}$.
Since multiplication by $-1$ does not change an ideal or its minimum
distance, the three terms for $(c,d)$ are
\[
E_c=E_b,\qquad E_d=E_a,\qquad N_{c,d}=N_{a,b}.
\]
Applying the preceding argument to $(c,d)$ therefore gives
\[
d_X\ge\min\{E_a,E_b,N_{a,b}\}.
\] 
\end{proof}

Theorem~\ref{thm:colon-lower-bound} highlights our design principle for bivariate bicycle codes: Ensuring large minimum distances for the individual annihilator ideals $\ann\idlGens{a}$ and $\ann\idlGens{b}$ is insufficient to guarantee a large quantum code distance. Specifically, the mixed-block logical operators governed by the colon ideals $\idlGens{b : a}$ and $\idlGens{a : b}$ must be controlled. We explore this further in the appendix.
In this regard, to make Theorem~\ref{thm:colon-lower-bound} explicitly computable, in our next result, we will translate these abstract ideal operations into the spectral domain. 
When $R$ is semisimple, every ideal is uniquely determined by its zero set of Frobenius orbits, allowing us to express the required annihilator and colon ideals as classical 2D cyclic codes.
\begin{remark}
    It is worth pointing out that the previous theorem does not assume semisimplicity of $R$.
\end{remark}
For an orbit set $S \subseteq \Omega$, define the $q$-ary abelian code
\begin{equation*}
    \mathcal{C}(S) = \{f \in R \mid f(P) = 0\ \text{ for every}\ P \in S\}.
\end{equation*}
While Theorem~\ref{thm:colon-lower-bound} provides a general lower bound, the un-excluded mixed-block term $N_{a, b} = d_R(\idlGens{b:a}) + d_R(\idlGens{a:b})$ has a practical limitation when applied to sparse check polynomials. Since $ab \in \idlGens{b}$, the stabilizer vector $(b, -a) \in \im(H_Z^\top)$ always lies in $\idlGens{b:a} \times \idlGens{a:b}$, capping $N_{a, b} \le \wt(a) + \wt(b)$. To prevent sparse check polynomials from capping the mixed-block distance, we refine $N_{a, b}$ by excluding stabilizers from one block at a time.

\begin{theorem}
    \label{thm:refined-colon-bound}
    Assume that $R = \F_q[x, y]/\idlGens{x^\ell - 1, y^m - 1}$ is semisimple. For a BB code $Q = \mathrm{BB}(a, b)$, define the alternating stabilizer-excluded colon distance by
    \begin{equation}
        N^*_{a, b} = \min \Big\{ d_R(\idlGens{b:a} \setminus \idlGens{b}) + d_R(\idlGens{a:b}), \; d_R(\idlGens{b:a}) + d_R(\idlGens{a:b} \setminus \idlGens{a}) \Big\}.
    \end{equation}
    Then the minimum $Z$-distance of $Q$ satisfies $d_Z \ge \min \{ E_a, \, E_b, \, N^*_{a, b} \}$, where $E_a = d_R(\ann\idlGens{a} \setminus b\ann\idlGens{a})$ and $E_b = d_R(\ann\idlGens{b} \setminus a\ann\idlGens{b})$.
\end{theorem}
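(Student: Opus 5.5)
The plan is to reuse the three-case split from the proof of Theorem~\ref{thm:colon-lower-bound}, changing only the mixed-block case. Let $(u,v)\in R^2$ with $au+bv=0$ be a nontrivial $Z$-logical, so that $(u,v)\notin\{(br,-ar)\mid r\in R\}$. The pure cases $v=0$ and $u=0$ carry over verbatim and give $\wt_R(u,v)\ge E_a$ and $\wt_R(u,v)\ge E_b$ respectively. So assume $u\neq 0$ and $v\neq 0$. As before, $au=-bv$ gives $u\in\idlGens{b:a}$ and $v\in\idlGens{a:b}$. The refinement comes from a further split on whether $u\in\idlGens{b}$ and whether $v\in\idlGens{a}$.

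If $u\notin\idlGens{b}$, then $u\in\idlGens{b:a}\setminus\idlGens{b}$ and $0\neq v\in\idlGens{a:b}$. Hence
\[
\wt_R(u)+\wt_R(v)\ge d_R(\idlGens{b:a}\setminus\idlGens{b})+d_R(\idlGens{a:b}),
\]
which is the first term of the minimum defining $N^*_{a,b}$. Symmetrically, if $v\notin\idlGens{a}$, then $\wt_R(u,v)$ is at least the second term. In either case $\wt_R(u,v)\ge N^*_{a,b}$.

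The key step, and the only place semisimplicity enters, is to rule out the remaining case $u\in\idlGens{b}$ and $v\in\idlGens{a}$. I will show that such a pair is always a stabilizer, contradicting nontriviality. By Proposition~\ref{prop:product-decomp}, I work componentwise in the fields $e_OR\cong\F_{q^{|O|}}$. There, $u\in\idlGens{b}$ means $u_O=0$ whenever $b_O=0$, and $v\in\idlGens{a}$ means $v_O=0$ whenever $a_O=0$. I construct $r=\sum_O r_O$ orbit by orbit, using the four regions:
\begin{itemize}
\item on $\suppreg{\mathcal F}_{a,b}$, both $u_O$ and $v_O$ vanish, so take $r_O=0$;
\item on $\suppreg{\mathcal U}_{b,a}$ (where $a_O=0$ and $b_O\ne0$), we have $v_O=0$; take $r_O=b_O^{-1}u_O$, and then $-a_Or_O=0=v_O$;
\item on $\suppreg{\mathcal U}_{a,b}$ (where $b_O=0$ and $a_O\ne0$), we have $u_O=0$; take $r_O=-a_O^{-1}v_O$;
\item on $\suppreg{\mathcal T}_{a,b}$, take $r_O=b_O^{-1}u_O$; the local equation~\eqref{eq:local_coupling} then gives $b_Ov_O=-a_Ou_O=-a_Ob_Or_O$, and cancelling the unit $b_O$ yields $v_O=-a_Or_O$.
\end{itemize}
Summing over $O$ gives $(u,v)=(br,-ar)\in\im H_Z^\top$, which is the contradiction.

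Combining all cases gives $d_Z\ge\min\{E_a,E_b,N^*_{a,b}\}$. I expect the main obstacle to be this fourth case: the argument is clean orbitwise but genuinely needs $R$ to be a product of fields. That is why the hypothesis appears here but not in Theorem~\ref{thm:colon-lower-bound}, and I would state this dependence explicitly.
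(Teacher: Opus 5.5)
Your proof is correct and has the same structure as the paper's. You reduce to the mixed case $u,v\neq 0$, place $u\in\idlGens{b:a}$ and $v\in\idlGens{a:b}$, and show that a kernel vector with $u\in\idlGens{b}$ and $v\in\idlGens{a}$ must be a stabilizer. The real difference is how you prove that last step.

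The paper writes $u=br_1$ and $v=-ar_2$, deduces $r_1-r_2\in\ann\idlGens{ab}$, and invokes $\ann\idlGens{ab}=\ann\idlGens{a}+\ann\idlGens{b}$. It then splits $r_1-r_2=s_a+s_b$ and sets $r=r_1-s_b=r_2+s_a$. You instead build $r$ directly in each field $e_OR$, region by region.

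Your version is more explicit. The paper simply credits the annihilator identity to the Wedderburn decomposition, which is in effect your orbitwise computation. You also treat the one-block cases $u=0$ and $v=0$ explicitly, which the paper's proof leaves implicit by appeal to Theorem~\ref{thm:colon-lower-bound}.

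What the paper's formulation buys is that it isolates the exact property used. By the double-annihilator property of Proposition~\ref{prop:algebraic-properties}(1), which does not need semisimplicity, $\ann\idlGens{ab}=\ann\idlGens{a}+\ann\idlGens{b}$ is equivalent to $\idlGens{a}\cap\idlGens{b}=\idlGens{ab}$. That is exactly the purity condition recalled in Section~\ref{sec:BBpast}. So semisimplicity is sufficient for this step but not necessary. Your closing remark that the step ``genuinely needs $R$ to be a product of fields'' is too strong; the step goes through whenever the homology is pure.
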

\begin{proof}
    Let $(u, v) \in \ker(H_X) \setminus \im(H_Z^\top)$ be a nontrivial $Z$-logical operator with both $u \neq 0$ and $v \neq 0$. Rearranging the syndrome condition $au + bv = 0$ gives $au = -bv \in \idlGens{b}$ and $bv = -au \in \idlGens{a}$, which implies $u \in \idlGens{b:a}$ and $v \in \idlGens{a:b}$ by definition of the colon ideal. A pair $(u, v)$ is a stabilizer $(br, -ar) \in \im(H_Z^\top)$ if and only if $u \in \idlGens{b}$ and $v \in \idlGens{a}$ simultaneously. Since $(u, v)$ is nontrivial, at least one component must lie outside the stabilizer ideal, either $u \in \idlGens{b:a} \setminus \idlGens{b}$ or $v \in \idlGens{a:b} \setminus \idlGens{a}$.
    
    To see why membership in both principal ideals forces $(u, v)$ to be a stabilizer generated by a single element $r \in R$, suppose $(u, v) \in \ker(H_X)$ satisfies $u \in \idlGens{b}$ and $v \in \idlGens{a}$. We can write $u = b r_1$ and $v = -a r_2$ for some $r_1, r_2 \in R$. Substituting these into the syndrome equation $au + bv = 0$ yields $ab(r_1 - r_2) = 0$, meaning $r_1 - r_2 \in \ann\idlGens{ab}$. When $\gcd(q, \ell m) = 1$, the Wedderburn decomposition of the semisimple ring $R$ gives $\ann\idlGens{ab} = \ann\idlGens{a} + \ann\idlGens{b}$. We may therefore decompose the difference as $r_1 - r_2 = s_a + s_b$, where $s_a \in \ann\idlGens{a}$ and $s_b \in \ann\idlGens{b}$. Setting $r = r_1 - s_b = r_2 + s_a$ yields $br = b(r_1 - s_b) = b r_1 = u$ and $-ar = -a(r_2 + s_a) = -a r_2 = v$. This confirms that $(u, v) = (br, -ar) \in \im(H_Z^\top)$ is generated by a single element $r \in R$. Thus, any nontrivial logical operator $(u, v)$ must have at least one component outside these principal ideals. Taking the minimum Hamming weight over the two alternating exclusions yields $\wt(u, v) \ge N^*_{a, b}$.
\end{proof}

\begin{theorem}
\label{thm:orbit-partition-colon-bound}
    Using the notation above, 
    \begin{equation*}
        \ann\idlGens{a} = \mathcal{C}(\mathcal{U}_{b, a} \sqcup \mathcal{F}_{a, b}), \quad \ann\idlGens{b} = \mathcal{C}(\mathcal{U}_{a, b} \sqcup \mathcal{F}_{a, b}), \quad \idlGens{b : a} = \mathcal{C}(\mathcal{U}_{b, a}), \quad \idlGens{a : b} = \mathcal{C}(\mathcal{U}_{a, b}).
    \end{equation*}
    Moreover, 
    \begin{equation*}
        b\ann\idlGens{a} = \mathcal{C}(\mathcal{T}_{a, b} \cup \mathcal{U}_{b, a} \cup \mathcal{F}_{a, b}), \quad a\ann\idlGens{b} = \mathcal{C}(\mathcal{T}_{a, b} \cup \mathcal{U}_{a, b} \cup \mathcal{F}_{a, b}).
    \end{equation*}
    Consequently, the one-block terms in the BB distance bound are 
    \begin{equation*}
        E_a = d_R(\mathcal{C}(\mathcal{U}_{b, a} \cup \mathcal{F}_{a, b}) \setminus \mathcal{C}(\mathcal{T}_{a,b}\cup \mathcal{U}_{b, a}\cup \mathcal{F}_{a, b})), \quad 
        E_b = d_R(\mathcal{C}(\mathcal{U}_{a, b} \cup \mathcal{F}_{a, b}) \setminus \mathcal{C}(\mathcal{T}_{a, b} \cup \mathcal{U}_{a, b} \cup \mathcal{F}_{a, b})),
    \end{equation*}
    and the mixed-block term is $N_{a, b} = d_R(\mathcal{C}(\mathcal{U}_{b, a})) + d_R(\mathcal{C}(\mathcal{U}_{a, b}))$.
    In particular, the $Z$-distance satisfies 
    \begin{equation*}
        d_Z \geq \min \{d_R(\mathcal{C}(\mathcal{U}_{b, a} \cup \mathcal{F}_{a, b})), \ d_R(\mathcal{C}(\mathcal{U}_{a, b} \cup \mathcal{F}_{a, b})), \ d_R(\mathcal{C}(\mathcal{U}_{b, a})) + d_R(\mathcal{C}(\mathcal{U}_{a, b}))\}.
    \end{equation*}
\end{theorem}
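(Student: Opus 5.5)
Everything will be computed in the spectral (Wedderburn) picture. By Proposition~\ref{prop:product-decomp} and Proposition~\ref{prop:field-factor-degree}, an element $r\in R$ is the tuple $(r_O)_{O\in\Omega}$ with $r_O=e_Or$ in the field $e_OR\cong\F_{q^{|O|}}$. By Theorem~\ref{thm:ideals-by-orbits}, every ideal is $J_{\mathcal S}=e_{\mathcal S}R$ for a unique $\mathcal S$. The evaluation isomorphism shows that $J_{\mathcal S}=\mathcal C(\Omega\setminus\mathcal S)$: an element vanishes on the root-grid points of an orbit $O$ exactly when $r_O=0$. So for each ideal it suffices to identify the set of orbits on which its elements are forced to have zero component, and then to express that set through the zero-set regions. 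Recall that $\mathcal Z_a=\{O: a_O=0\}$, that $\Omega=\mathcal T_{a,b}\sqcup\mathcal U_{a,b}\sqcup\mathcal U_{b,a}\sqcup\mathcal F_{a,b}$, and Lemma~\ref{lem:Fouriercorr}.

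\textbf{Annihilators.} The condition $au=0$ holds iff $a_Ou_O=0$ for all $O$. Since each $e_OR$ is a field, this means $u_O=0$ whenever $a_O\ne0$, i.e.\ on $S_a=\Omega\setminus\mathcal Z_a=\mathcal U_{b,a}\sqcup\mathcal F_{a,b}$. Hence $\ann\idlGens a=\mathcal C(\mathcal U_{b,a}\sqcup\mathcal F_{a,b})$, and $\ann\idlGens b$ follows by symmetry.

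\textbf{Colon ideals.} Since $\idlGens b=J_{S_b}$, the condition $au\in\idlGens b$ means $a_Ou_O=0$ for every $O\in\mathcal Z_b$. This forces $u_O=0$ exactly on $\mathcal Z_b\setminus\mathcal Z_a=\mathcal U_{b,a}$, giving $\idlGens{b:a}=\mathcal C(\mathcal U_{b,a})$, and symmetrically $\idlGens{a:b}=\mathcal C(\mathcal U_{a,b})$.

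\textbf{The product $b\ann\idlGens a$.} Its components are $b_Or_O$ with $r_O$ free on $\mathcal Z_a$ and zero on $S_a$. As $r_O$ ranges over the field, $b_Or_O$ fills $e_OR$ when $b_O\ne0$ and is zero otherwise. So the support set is $\mathcal Z_a\cap S_b=\mathcal U_{a,b}$, and $b\ann\idlGens a=\mathcal C(\Omega\setminus\mathcal U_{a,b})=\mathcal C(\mathcal T_{a,b}\cup\mathcal U_{b,a}\cup\mathcal F_{a,b})$. The symmetric statement for $a\ann\idlGens b$ has support $\mathcal U_{b,a}$. This is the step requiring the most care: keeping the complementation between supports and defining sets straight. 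I would double-check it against the dimension count of Theorem~\ref{thm:BB-dimension}, since the quotient $\ann\idlGens a/b\ann\idlGens a$ should be supported exactly on $\mathcal T_{a,b}$.

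\textbf{Distance terms.} Substituting these descriptions into the definitions of $E_a$, $E_b$ and $N_{a,b}$ from Theorem~\ref{thm:colon-lower-bound} gives the stated formulas. For the final inequality, note that a set difference $U\setminus V$ has minimum weight at least $d_R(U)$, since $0\in V$ so every element of $U\setminus V$ is nonzero. Hence $E_a\ge d_R(\mathcal C(\mathcal U_{b,a}\cup\mathcal F_{a,b}))$, and similarly for $E_b$. Combining this with Theorem~\ref{thm:colon-lower-bound} yields the displayed lower bound on $d_Z$.
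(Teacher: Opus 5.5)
Your proposal is correct and follows essentially the same route as the paper: both compute each ideal coordinatewise in the semisimple decomposition (the paper pointwise at root-grid points over the splitting field, you orbitwise in the fields $e_O R$), identify the forced-zero sets $\mathcal{U}_{b,a}\sqcup\mathcal{F}_{a,b}$, $\mathcal{U}_{b,a}$, and $\Omega\setminus\mathcal{U}_{a,b}$, and then substitute into Theorem~\ref{thm:colon-lower-bound} using $d_R(U\setminus V)\ge d_R(U)$. Your remark that $0\in V$ forces every element of $U\setminus V$ to be nonzero supplies the justification for that inequality, which the paper leaves implicit.
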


\begin{remark}\label{rem:conservative_bound}
    Note that the final lower bound on $d_Z$ in Theorem~\ref{thm:orbit-partition-colon-bound} is strictly weaker than the bound in Theorem~\ref{thm:colon-lower-bound} since it drops the set differences removing stabilizers (for instance, bounding $d_R(\mathcal{C}(\mathcal{U}_{b,a} \cup \mathcal{F}_{a,b}))$ rather than $E_a$). Since $d_R(U \setminus V) \ge d_R(U)$ for any nested subspaces $V \subset U$, ignoring stabilizer subtraction yields a conservative but readily computable lower bound that can be evaluated directly from the minimum distances of classical 2D cyclic codes.
\end{remark}

\begin{proof}
    Since $R$ is semisimple, after extending scalars to a splitting field the ring decomposes coordinatewise over the root grid $\mu_\ell \times \mu_m$.
    Let $P$ be a root-grid point. The ideal $\langle a \rangle$ has arbitrary coordinate at $P$ when $a(P) \neq 0$, and has zero coordinate at $P$ when $a(P) = 0$. Thus,
    \begin{equation*}
        \ann\idlGens{a} = \{f \in R \mid f(P) = 0 \text{ whenever } a(P) \neq 0\}.
    \end{equation*}
    Equivalently, $\ann\idlGens{a}$ has defining zero-orbit set $\Omega \setminus \mathcal{Z}_a$. Since $\mathcal{Z}_a = \mathcal{T}_{a, b} \sqcup \mathcal{U}_{a, b}$, we get $\Omega \setminus \mathcal{Z}_a = \mathcal{U}_{b, a} \sqcup \mathcal{F}_{a, b}$. Therefore,
    \begin{equation*}
        \ann\idlGens{a} = \mathcal{C}(\mathcal{U}_{b, a} \cup \mathcal{F}_{a, b}).
    \end{equation*}
    Similarly,
    \begin{equation*}
        \ann\idlGens{b} = \mathcal{C}(\mathcal{U}_{a, b} \cup \mathcal{F}_{a, b}).
    \end{equation*}
    Now consider $\idlGens{b : a} = \{u \in R \mid a u \in \langle b \rangle \}$. At a root-grid point $P$, if $b(P) \neq 0$, then the ideal $\langle b \rangle$ imposes no coordinate restriction at $P$. If $b(P) = 0$, then membership $a u \in \langle b \rangle$ requires $a(P) u(P) = 0$. This imposes the condition $u(P) = 0$ precisely when $b(P) = 0$ and $a(P) \neq 0$. In orbit notation, this translates to
    $\mathcal{Z}_b \setminus \mathcal{Z}_a  = \mathcal{U}_{b, a}$. Therefore,
    \begin{equation*}
        \idlGens{b : a} = \mathcal{C}(\mathcal{U}_{b, a}).
    \end{equation*}
    A similar argument gives 
    \begin{equation*}
        \idlGens{a : b} = \mathcal{C}(\mathcal{U}_{a, b}). 
    \end{equation*}
    Next, $b \ann\idlGens{a}$ is obtained by multiplying elements spectrally supported on $\mathcal{Z}_a = \mathcal{T}_{a, b} \sqcup \mathcal{U}_{a, b}$ by $b$. On $\mathcal{T}_{a, b}$, we have $b = 0$, while on $\mathcal{U}_{a, b}$, we have $b \neq 0$. Thus $b \ann\idlGens{a}$ is supported exactly on $\mathcal{U}_{a, b}$. Its defining zero-orbit set is therefore, 
    \begin{equation*}
        \Omega \setminus \mathcal{U}_{a, b} = \mathcal{T}_{a, b} \sqcup \mathcal{U}_{b, a} \sqcup \mathcal{F}_{a, b}, 
    \end{equation*}
    implying that 
    \begin{equation*}
        b \ann\idlGens{a} = \mathcal{C}(\mathcal{T}_{a, b}\cup \mathcal{U}_{b, a} \cup \mathcal{F}_{a, b}).
    \end{equation*}
    Similarly, we have 
    \begin{equation*}
        a \ann\idlGens{b} = \mathcal{C}(\mathcal{T}_{a, b}\cup \mathcal{U}_{a, b} \cup \mathcal{F}_{a, b}).
    \end{equation*}
    Substituting these identities into our previous Theorem~\ref{thm:colon-lower-bound}, gives the asserted formulas for $E_a, E_b$, and $N_{a,b}$. The final lower bound follows since 
    \begin{equation*}
        d_R(\mathcal{C}(\mathcal{U}_{b, a} \cup \mathcal{F}_{a, b}) \setminus \mathcal{C}(\mathcal{T}_{a, b}\cup \mathcal{U}_{b, a} \cup \mathcal{F}_{a, b})) \geq d_R(\mathcal{C}(\mathcal{U}_{b, a} \cup \mathcal{F}_{a, b})), 
    \end{equation*}
    and similarly for the $b$ side. The $X$-distance statement follows by applying the same argument to the inverted pair $(\iota(b), \iota(a))$.
\end{proof}

\begin{corollary}
    \label{cor:spectral-one-sided-colon}
    Under the assumptions of Theorem~\ref{thm:orbit-partition-colon-bound}, the alternating stabilizer exclusions of Theorem~\ref{thm:refined-colon-bound} correspond to set-difference distances of classical 2D cyclic codes:
    \begin{equation}
        d_R(\idlGens{b:a} \setminus \idlGens{b}) = d_R\Big( \mathcal{C}(\mathcal{U}_{b, a}) \setminus \mathcal{C}(\mathcal{U}_{b, a} \cup \mathcal{T}_{a, b}) \Big),
    \end{equation}
    and symmetrically,
    \begin{equation}
        d_R(\idlGens{a:b} \setminus \idlGens{a}) = d_R\Big( \mathcal{C}(\mathcal{U}_{a, b}) \setminus \mathcal{C}(\mathcal{U}_{a, b} \cup \mathcal{T}_{a, b}) \Big).
    \end{equation}
    Consequently, the refined mixed-block distance bound $N^*_{a, b}$ is computable from the minimum Hamming weights of classical 2D cyclic codes:
    \begin{equation}
        N^*_{a, b} = \min \left\{ 
        \begin{aligned}
            &d_R\Big( \mathcal{C}(\mathcal{U}_{b, a}) \setminus \mathcal{C}(\mathcal{U}_{b, a} \cup \mathcal{T}_{a, b}) \Big) + d_R\big( \mathcal{C}(\mathcal{U}_{a, b}) \big), \\
            &d_R\big( \mathcal{C}(\mathcal{U}_{b, a}) \big) + d_R\Big( \mathcal{C}(\mathcal{U}_{a, b}) \setminus \mathcal{C}(\mathcal{U}_{a, b} \cup \mathcal{T}_{a, b}) \Big)
        \end{aligned}
        \right\}.
    \end{equation}
\end{corollary}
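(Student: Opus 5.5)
The plan is to describe each ideal in the corollary as an abelian code $\mathcal{C}(S)$, using the spectral machinery of Theorem~\ref{thm:orbit-partition-colon-bound}. Then the set differences and the formula for $N^*_{a,b}$ follow by substitution. The colon ideals are already handled: Theorem~\ref{thm:orbit-partition-colon-bound} gives $\idlGens{b:a}=\mathcal{C}(\mathcal{U}_{b,a})$ and $\idlGens{a:b}=\mathcal{C}(\mathcal{U}_{a,b})$. So the only new ingredient is the spectral description of the principal ideals $\idlGens{b}$ and $\idlGens{a}$.

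First I would show that $\idlGens{b}=\mathcal{C}(\mathcal{Z}_b)$. By Theorem~\ref{thm:ideals-by-orbits}, $\idlGens{b}=J_{\mathcal{S}}$ with $\mathcal{S}=S_b$, since $b=\sum_O b_O$ and each $b_O\neq 0$ is a unit in the field $e_OR$; hence $e_{S_b}\in\idlGens{b}$. Equivalently, after extending scalars to $K$, a coordinate of $\idlGens{b}$ at a root-grid point $P$ is arbitrary when $b(P)\neq0$ and zero when $b(P)=0$. This is exactly the pointwise argument already used in the proof of Theorem~\ref{thm:orbit-partition-colon-bound}.

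Next I decompose the zero set as $\mathcal{Z}_b=\mathcal{T}_{a,b}\sqcup\mathcal{U}_{b,a}$, which gives $\idlGens{b}=\mathcal{C}(\mathcal{T}_{a,b}\cup\mathcal{U}_{b,a})$. Since $\idlGens{b}\subseteq\idlGens{b:a}$ (because $ab\in\idlGens{b}$), the set difference $\idlGens{b:a}\setminus\idlGens{b}$ becomes $\mathcal{C}(\mathcal{U}_{b,a})\setminus\mathcal{C}(\mathcal{U}_{b,a}\cup\mathcal{T}_{a,b})$. This inclusion is also visible spectrally, since enlarging the zero set shrinks the code. Taking minimum weights gives the first displayed identity, and interchanging $a$ and $b$ gives the second. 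Finally, I substitute these identities, together with $d_R(\idlGens{b:a})=d_R(\mathcal{C}(\mathcal{U}_{b,a}))$ and $d_R(\idlGens{a:b})=d_R(\mathcal{C}(\mathcal{U}_{a,b}))$, into the definition of $N^*_{a,b}$ from Theorem~\ref{thm:refined-colon-bound}. This yields the stated formula.

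I do not expect a real obstacle. The one point that needs care is justifying $\idlGens{b}=\mathcal{C}(\mathcal{Z}_b)$ over $\F_q$ rather than only over $K$. For that I would invoke Theorem~\ref{thm:ideals-by-orbits}: every ideal of $R$ equals the ideal of functions vanishing on its Frobenius-closed zero set. The zero set of $\idlGens{b}$ is $\mathcal{Z}_b$, because every element $rb$ vanishes wherever $b$ does, and $b$ itself vanishes nowhere else.
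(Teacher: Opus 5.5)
Your proposal is correct and follows essentially the same route as the paper. You read off the colon ideals from Theorem~\ref{thm:orbit-partition-colon-bound}, use semisimplicity and $\mathcal{Z}_b=\mathcal{T}_{a,b}\sqcup\mathcal{U}_{b,a}$ to identify $\idlGens{b}=\mathcal{C}(\mathcal{T}_{a,b}\cup\mathcal{U}_{b,a})$ as a subcode of $\mathcal{C}(\mathcal{U}_{b,a})$, and substitute into $N^*_{a,b}$; your explicit observation that $e_{S_b}\in\idlGens{b}$, because each $b_O$ is a unit in $e_OR$, simply spells out the paper's one-line appeal to principal ideals being idempotent-generated.
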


\begin{proof}
    Since $R$ is semisimple, every principal ideal is generated by an idempotent and corresponds uniquely to the classical 2D cyclic code defined by its Frobenius-orbit zero set. By Theorem~\ref{thm:orbit-partition-colon-bound}, $\idlGens{b:a} = \mathcal{C}(\mathcal{U}_{b, a})$ and $\idlGens{a:b} = \mathcal{C}(\mathcal{U}_{a, b})$. Furthermore, since $\mathcal{Z}_b = \mathcal{T}_{a, b} \sqcup \mathcal{U}_{b, a}$, the principal stabilizer ideal is $\idlGens{b} = \mathcal{C}(\mathcal{T}_{a, b} \cup \mathcal{U}_{b, a})$, which forms a subcode of $\mathcal{C}(\mathcal{U}_{b, a})$. Substituting these classical code identities into the alternating exclusion formula for $N^*_{a, b}$ yields the stated result.
\end{proof}

\begin{remark}
    \label{rem:when-nstar-matters}    
In Example~\ref{subsec:bb510} below as well as in the BCH-based constructions the check polynomials $a$ and $b$ are dense polynomials. Consequently the stabilizer ceiling is large. The un-excluded colon sum $N_{a,b}$ is already sufficient to certify the code distance.
    
  By contrast, for standard BB codes in the literature defined by sparse weight-$3$ trinomials, the un-excluded bound can never exceed $N_{a, b} \le 3 + 3 = 6$. When evaluating such sparse presentations on grids where the true quantum distance is $d > 6$, the subcode subtraction in Corollary~\ref{cor:spectral-one-sided-colon} is essential. Indeed, removing the subcodes $\mathcal{C}(\mathcal{U}_{b, a} \cup \mathcal{T}_{a, b}) = \idlGens{b}$ and $\mathcal{C}(\mathcal{U}_{a, b} \cup \mathcal{T}_{a, b}) = \idlGens{a}$ eliminates the weight-$3$ check polynomials from their respective colon ideals, breaking the $\wt(a) + \wt(b) = 6$ stabilizer ceiling and allowing $N^*_{a, b}$ to certify $d > 6$.
\end{remark}

\begin{example}
    \label{subsec:bb510}
    To demonstrate how to use Theorem~\ref{thm:orbit-partition-colon-bound} in practice, let $\ell = 15$, $m = 17$, and consider a coprime BB code over $\F_2[z] / \idlGens{z^{255} - 1}$. The polynomial $z^{255} - 1$ splits over $\F_{2^8}$, and $\F_{2^8}$ contains a primitive $255$-th root of unity, which we denote by $\alpha$. The multiplicative order of $2$ modulo $255$ is $8$, so the $2$-cyclotomic cosets modulo $255$ have sizes dividing $8$. There are $35$ such cosets in total.
    
We begin by assigning each $2$-cyclotomic coset to one of four classes as recorded in Table~\ref{tab:bb510-partition}. Each entry $C_{s, r}$ denotes the full $2$-cyclotomic coset of size $r$.
    \begin{table}[ht]
        \centering
        \renewcommand{\arraystretch}{1.3}
        \begin{tabular}{c|l|c}
            \textbf{Class} & \textbf{2-Cyclotomic Cosets With Cardinality} & $\sum |C_{s, r}|$ \\
            \hline
            $\mathcal{T}_{a, b}$ & $C_{9, 8}, C_{13, 8}, C_{17, 4}, C_{21, 8}, C_{25, 8}, C_{29, 8}, C_{37, 8}$ & $52$ \\
            $\mathcal{U}_{a, b}$ & $C_{11, 8}, C_{23, 8}, C_{27, 8}, C_{43, 8}, C_{45, 8}, C_{51, 4}, C_{53, 8}, C_{85, 2}, C_{87, 8}, C_{91, 8}$ & $70$ \\
            $\mathcal{U}_{b, a}$ & $C_{0, 1}, C_{1, 8}, C_{3, 8}, C_{5, 8}, C_{7, 8}, C_{15, 8}, C_{39, 8}, C_{63, 8}, C_{127, 8}$ & $65$ \\
            $\mathcal{F}_{a, b}$ & $C_{19, 8}, C_{31, 8}, C_{47, 8}, C_{55, 8}, C_{59, 8}, C_{61, 8}, C_{95, 8}, C_{111, 8}, C_{119, 4}$ & $68$
        \end{tabular}
        \caption{2-cyclotomic cosets $C_{s, r}$ modulo $255$ with cardinality $r$.}
        \label{tab:bb510-partition}
    \end{table}
    
    With the choices in the table, $|\mathcal{Z}_a| = |\mathcal{T}_{a, b}| + |\mathcal{U}_{a, b}| = 122$ and $|\mathcal{Z}_b| = |\mathcal{T}_{a, b}| + |\mathcal{U}_{b, a}| = 117$. 
   We take $a, b$ to be the canonical cyclic-ideal generators satisfying $\langle a\rangle=\mathcal{C}(\mathcal{Z}_{a})$ and $\langle b\rangle=\mathcal{C}(\mathcal{Z}_{b})$, which yields check weights $\wt(a)=53$ and $\wt(b)=54$.
To apply Theorem~\ref{thm:orbit-partition-colon-bound}, we determine the relevant annihilator and colon ideals.

    Since $\idlGens{a, b} = \idlGens{a} + \idlGens{b} = \mathcal{C}(\mathcal{T}_{a, b})$, we have $\dim_{\F_2} \ann\idlGens{a, b} = |\mathcal{T}_{a, b}| = 52$. Hence the dimension of the code is $k = 104$.
    
    Following Remark~\ref{rem:conservative_bound}, we evaluate the conservative lower bounds of Theorem~\ref{thm:orbit-partition-colon-bound} by applying the BCH bound directly to the classical 2D cyclic codes. Each term is bounded by an explicit consecutive run in the corresponding defining set:
    \begin{enumerate}
        \item First, $\ann\idlGens{a} = \mathcal{C}(\mathcal{U}_{b, a} \cup \mathcal{F}_{a, b})$. The defining set $\mathcal{U}_{b, a} \cup \mathcal{F}_{a, b}$ contains a cyclic consecutive run of length $29$. Hence, $E_a \ge d_R(\ann\idlGens{a}) = d_R(\mathcal{C}(\mathcal{U}_{b, a} \cup \mathcal{F}_{a, b})) \ge 30$.
        \item Second, $\ann\idlGens{b} = \mathcal{C}(\mathcal{U}_{a, b} \cup \mathcal{F}_{a, b})$. The defining set $\mathcal{U}_{a, b} \cup \mathcal{F}_{a, b}$ contains a consecutive run of length $22$. Hence, $E_b \ge d_R(\ann\idlGens{b}) = d_R(\mathcal{C}(\mathcal{U}_{a, b} \cup \mathcal{F}_{a, b})) \ge 23$.
        \item Third, $\idlGens{a : b} = \mathcal{C}(\mathcal{U}_{a, b})$ and $\idlGens{b : a} = \mathcal{C}(\mathcal{U}_{b, a})$. The set $\mathcal{U}_{a, b}$ contains a consecutive run of length $9$, so $d_R(\idlGens{a : b}) = d_R(\mathcal{C}(\mathcal{U}_{a, b})) \ge 10$. The set $\mathcal{U}_{b, a}$ contains a cyclic consecutive run of length $13$, so $d_R(\idlGens{b : a}) = d_R(\mathcal{C}(\mathcal{U}_{b, a})) \ge 14$. Therefore, $N_{a,b} = d_R(\idlGens{a : b}) + d_R(\idlGens{b : a}) \ge 24$.
    \end{enumerate}
    Combining these estimates gives $d_Z \ge \min\{30, 23, 24\} = 23$.
    
    The inversion map $\iota: z \mapsto z^{-1}$ sends exponents $i \mapsto -i \pmod{255}$. It sends every cyclic block of consecutive exponents to another cyclic block of the same length. Hence, the BCH-designed lower bounds above are preserved after applying $\iota$. Thus, $d_X \ge 23$ and $d = \min\{d_X, d_Z\} \ge 23$. The final parameters are $[\![510, 104, d \ge 23]\!]$. The rate and estimated $k d^2 / n$ for our code are 
    \begin{equation*}
        r = \frac{104}{510} \approx 0.204, \quad \frac{k d^2}{n} \ge \frac{104\cdot 23^2}{510} \approx 108.
    \end{equation*}
\end{example}

\begin{example}
    \label{ex:nstar-improves-certificate}
    We present a code where the alternating stabilizer exclusion of
    Theorem~\ref{thm:refined-colon-bound} (via Corollary~\ref{cor:spectral-one-sided-colon}) \emph{strictly improves} the
    certified $Z$-distance over the un-excluded colon sum. Consider the
    $[\! [42, 2]\! ]$ BB code over the semisimple ring
    $R = \F_2[x, y] / \idlGens{x^3 - 1, y^7 - 1}$,
    defined by the weight-$4$ check polynomials
    \begin{equation}
        a(x, y) = xy + xy^4 + x^2 y + x^2 y^3
        \quad \text{and} \quad
        b(x, y) = y^2 + x^2 y^3 + x^2 y^4 + x^2 y^6 .
    \end{equation}
    A direct enumeration of the classical 2D cyclic codes gives the
    un-excluded colon distances
    \begin{equation}
        d_R\big(\mathcal{C}(\mathcal{U}_{b, a})\big) = d_R(\idlGens{b:a}) = 2
        \quad \text{and} \quad
        d_R\big(\mathcal{C}(\mathcal{U}_{a, b})\big) = d_R(\idlGens{a:b}) = 2,
    \end{equation}
    so the un-excluded mixed-block term is $N_{a, b} = 2 + 2 = 4$.

    Excluding the stabilizer subcodes $\idlGens{b} = \mathcal{C}(\mathcal{U}_{b, a} \cup \mathcal{T}_{a, b})$
    and $\idlGens{a} = \mathcal{C}(\mathcal{U}_{a, b} \cup \mathcal{T}_{a, b})$ lifts each
    escaping minimum weight from $2$ to $3$:
    \begin{equation}
        d_R\big(\idlGens{b:a} \setminus \idlGens{b}\big) = 3
        \quad \text{and} \quad
        d_R\big(\idlGens{a:b} \setminus \idlGens{a}\big) = 3.
    \end{equation}
    The refined mixed-block term of Theorem~\ref{thm:refined-colon-bound} is therefore
    \begin{equation}
        N^*_{a, b} = \min \{ 3 + 2, \; 2 + 3 \} = 5 > 4 = N_{a, b}.
    \end{equation}
    The single-block terms evaluate to $E_a = 7$ and $E_b = 9$. Since the
    mixed-block term is the binding one ($N_{a, b} < \min\{E_a, E_b\}$), the
    improvement propagates directly to the distance certificate:
    \begin{equation}
        d_Z \ge \min \{ E_a, \, E_b, \, N_{a, b} \} = \min \{ 7, 9, 4 \} = 4,
        \qquad
        d_Z \ge \min \{ E_a, \, E_b, \, N^*_{a, b} \} = \min \{ 7, 9, 5 \} = 5.
    \end{equation}
    The alternating stabilizer exclusion thus tightens the guaranteed
    $Z$-distance from $4$ to $5$. Both bounds are valid: the true minimum
    distance of this code is $d_Z = 7$.
\end{example}

\begin{remark}
    Echoing our previous statements, we emphasize that we have chosen to use the BCH bound in the previous example to simplify the discussion. It is, however, often loose, and numerically computing some of the other bounds may provide better estimates.
\end{remark}

\subsection{Metachecks}
We define metachecks $M_X$ and $M_Z$ in the usual way:
\begin{equation}\label{eq:chain}
    \begin{tikzcd}[column sep=large]
        C_3 \arrow[r, "M_Z^T"] & C_2 \arrow[r, "H_Z^T"] & C_1 \arrow[r, "H_X"] & C_0 \arrow[r, "M_X"] & C_{-1}.
    \end{tikzcd}
\end{equation}
Given (unlifted) stabilizers $H_X = \begin{pmatrix} a & b\end{pmatrix}$, the space of valid syndromes is $\idlGens{a, b} = a u + b v$ for $u, v \in R$. This is an ideal in $R$. The classical code encoding this information provides a set of metachecks for the $X$ stabilizers. The parity checks $\ann \idlGens{a, b}$ completely specify the space. Since $R$ is semisimple, $\ann \idlGens{a, b}$ is principal and there exists a $m \in R$ such that $\idlGens{m} = \ann \idlGens{a, b}$. 

If $m s \neq 0$, then the measured syndrome $s$ is not a valid syndrome for the BB code and a measurement error has occurred. Ideally, we want $m$ to be sparse but there is no mathematical guarantee that such an element exists. (On the contrary, $m$ is often a dense central idempotent.) We can ease this burden by instead requiring two (or more) sparse generators which \emph{together} generate the space: $\idlGens{m_1, m_2} = \ann \idlGens{a, b}$. The metachecks are now of the form $\begin{pmatrix} m_1 s & m_2 s \end{pmatrix} = \begin{pmatrix} 0 & 0 \end{pmatrix}$, however, the code and its properties are still determined by the 2D cyclic code $\ann \idlGens{a, b} \subset R$.

Similar to $m$, there are no guarantees that such sparse $m_1, m_2$ exist. Instead of trying to prove restricted theoretical results along these lines, we treat this as a design problem.
\begin{example}
    \label{ex:metacheck1}
    Start with a (1D) cyclic code with known good distance and generator polynomial $g$. Choose two coprime codewords $u$ and $v$ and define $a = g u$ and $b = g v$. Then, $\idlGens{a, b} = \idlGens{g u, g v} = g \idlGens{u, v} = \idlGens{g}$. We search for sparse metachecks in $\ann\idlGens{g}$. This is equivalent to searching for low-weight codewords of the dual code.

    This idea may be extended to 2D, however, the fact that $R$ is no longer a Euclidean domain makes the search for low-weight codewords more difficult.
\end{example}

\begin{remark}
    Roughly speaking, for an exact two-block group algebra metacheck chain complex, 
    restricting the metachecks to weights no more than $w$ is equivalent to saying that the code in the previous example is a classical locally-recoverable code (LRC) with locality $r = w - 1$. It may be possible to exploit results from the extensive LRC literature (e.g.~\cite{tamo_cyclic_2016}) to design metachecks for BB codes from first principles.
\end{remark}

Before presenting a concrete realization of the previous example, we record a useful result.

\begin{theorem}
    \label{thm:exactness-equivalence}
    Let $M_X = \begin{pmatrix} m_{1, X} \\ m_{2, X} \end{pmatrix}$ and $M_Z = \begin{pmatrix} m_{1, Z} \\ m_{2, Z} \end{pmatrix}$ be metachecks for a dimension $k$ code $\mathrm{BB}(a, b)$ over $R$ with $\langle m_{1, X}, m_{2, X} \rangle = \operatorname{ann}\langle a, b \rangle$ and $\langle m_{1, Z}, m_{2, Z} \rangle = \operatorname{ann}\langle \iota(a), \iota(b) \rangle$ respectively. Let $d_{M_X}$ and $d_{M_Z}$ denote the minimum distances of the codes defined by $\ker(M_X)$ and $\ker(M_Z)$ respectively.

    \begin{enumerate}
        \item[(1)] $\operatorname{im}(M_Z^\top) = \ker(H_Z^\top)$ and $\operatorname{im}(M_X^\top) = \ker(H_X^\top)$. Furthermore $\ker(M_X) = \operatorname{im}(H_X)$ and $\ker(M_Z) = \operatorname{im}(H_Z)$.
        
        \item[(2)] $\operatorname{rank} M_X = \operatorname{rank} M_Z = k/2$. In particular $M_X$ captures all $X$-stabilizer redundancies and $M_Z$ captures all $Z$-stabilizer redundancies.
        
        \item[(3)] $d_{M_X} \leq \min \{ \operatorname{wt}(a), \operatorname{wt}(b) \}$ and $d_{M_Z} \leq \min \{ \operatorname{wt}(\iota(a)), \operatorname{wt}(\iota(b)) \}$.
    \end{enumerate}
\end{theorem}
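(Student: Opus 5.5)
The plan is to translate every matrix in the chain complex \eqref{eq:chain} into a multiplication operator on $R$ or $R^2$. Then each kernel and image becomes an ideal, and the ideals can be compared using Proposition~\ref{prop:algebraic-properties}. Concretely, $M_X:R\to R^2$ is $s\mapsto(m_{1,X}s,\,m_{2,X}s)$. By Proposition~\ref{prop:algebraic-properties}(2), transposes are obtained by replacing each polynomial with its image under $\iota$. Hence $M_X^\top:(s,t)\mapsto \iota(m_{1,X})s+\iota(m_{2,X})t$ and $H_X^\top: r\mapsto(\iota(a)r,\iota(b)r)$. Likewise $H_Z^\top:r\mapsto(br,-ar)$, as in the proof of Theorem~\ref{thm:colon-lower-bound}. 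We will use repeatedly that $\iota$ is a ring automorphism, so $\iota(\ann I)=\ann\iota(I)$ and $\iota(\idlGens{f,g})=\idlGens{\iota f,\iota g}$.

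For (1), start with $\ker(M_X)=\{s: m_{1,X}s=m_{2,X}s=0\}=\ann\idlGens{m_{1,X},m_{2,X}}$. By hypothesis this equals $\ann\idlGens{\ann\idlGens{a,b}}$, and Proposition~\ref{prop:algebraic-properties}(1) gives $\ann\idlGens{\ann\idlGens{a,b}}=\idlGens{a,b}=\im H_X$. Similarly, $\ker(M_Z)=\ann\ann\idlGens{\iota a,\iota b}=\idlGens{\iota a,\iota b}$, and this is $\im H_Z$ because $H_Z(u,v)=\iota(b)u-\iota(a)v$ and $-1$ is a unit. On the other side,
\[
\ker H_X^\top=\ann\idlGens{\iota a,\iota b}=\iota(\ann\idlGens{a,b}),
\qquad
\im M_X^\top=\idlGens{\iota m_{1,X},\iota m_{2,X}}=\iota(\ann\idlGens{a,b}),
\]
so these agree. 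In the same way,
\[
\ker H_Z^\top=\ann\idlGens{a,b}
\quad\text{and}\quad
\im M_Z^\top=\iota(\ann\idlGens{\iota a,\iota b})=\ann\idlGens{a,b}.
\]

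For (2), use rank--nullity together with (1): $\rank M_X=\ell m-\dim\ker M_X=\ell m-\dim\idlGens{a,b}=\dim\ann\idlGens{a,b}$. The proof of Theorem~\ref{thm:BB-dimension} shows this quantity is $k/2$. The same computation for $M_Z$ uses the fact that $\iota$ preserves dimensions. The redundancy statement is a restatement of (1). The linear dependencies among the rows of $H_X$ form the space $\ker H_X^\top$, and this space has dimension $\ell m-\rank H_X=k/2$. By (1) it coincides with $\im M_X^\top$, which is spanned by the rows of $M_X$. So every redundancy is captured, and the dimensions match. The $Z$ case is identical.

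For (3), by (1) the code $\ker M_X$ equals $\idlGens{a,b}$, which contains the codewords $a$ and $b$. Hence $d_{M_X}\le\min\{\wt(a),\wt(b)\}$, and likewise $\ker M_Z\ni\iota(a),\iota(b)$. The main point needing care, though it is minor, is the degenerate case where $a$ or $b$ is zero. Under the convention $d_R=\infty$ for the zero space, that polynomial's weight should be read as $\infty$, or the bound stated only over nonzero check polynomials. Otherwise, the only real subtlety is keeping track of where $\iota$ appears in each transpose, and Proposition~\ref{prop:algebraic-properties}(2) settles this.
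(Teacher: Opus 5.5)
Your proof is correct and follows the paper's argument: the double annihilator gives $\ker M_X=\operatorname{im}H_X$, rank--nullity gives (2), and $a,b\in\ker M_X$ gives (3). The only variation is that you identify $\operatorname{im}M_X^\top$ and $\ker H_X^\top$ directly as the ideal $\iota(\operatorname{ann}\langle a,b\rangle)$, whereas the paper gets this equality from the inclusion obtained by transposing $M_XH_X=0$ plus a dimension count; your caveat about $a=0$ or $b=0$ in (3) is a fair point that the paper's proof passes over silently.
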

\begin{proof}
    We prove the results for $X$ since the $Z$ side follows identical arguments after applying $\iota$.

    (1) The image of $H_X$ is the ideal $\langle a, b \rangle$. A vector $\sigma$ is in the kernel of $M_X$ when $m_{1, X} \sigma = 0$ and $m_{2, X} \sigma = 0$. This means $\ker(M_X) = \operatorname{ann}\langle m_{1, X}, m_{2, X} \rangle$. The hypothesis states $\langle m_{1, X}, m_{2, X} \rangle = \operatorname{ann}\langle a, b \rangle$. Taking the annihilator of both sides yields $\ker(M_X) = \operatorname{ann}(\operatorname{ann}\langle a, b \rangle)$. The double annihilator property gives $\operatorname{ann}(\operatorname{ann}\langle a, b \rangle) = \langle a, b \rangle$. This proves $\ker(M_X) = \operatorname{im}(H_X)$. The chain condition $M_X H_X = 0$ transposes to $H_X^\top M_X^\top = 0$. This gives $\operatorname{im}(M_X^\top) \subseteq \ker(H_X^\top)$. The dimension of $\langle a, b \rangle$ is $\ell m - k/2$. The rank-nullity theorem gives $\operatorname{rank} M_X = \ell m - (\ell m - k/2) = k/2$. The dimension of $\operatorname{im}(M_X^\top)$ is $k/2$. The dimension of $\ker(H_X^\top)$ is $\ell m - \operatorname{rank} H_X = k/2$. Equal dimensions with an inclusion force equality. This yields $\operatorname{im}(M_X^\top) = \ker(H_X^\top)$.

    (2) Part 1 proves $\operatorname{rank} M_X = k/2$. The identical argument proves $\operatorname{rank} M_Z = k/2$. The matrices capture all stabilizer redundancies.

    (3) If $a \neq 0$ let $u=1$ and $w=0$. Write $a = u a + w b$. This is an element of $\operatorname{im}(H_X)$. Part 1 dictates $\operatorname{im}(H_X) = \ker(M_X)$. We see that $d_{M_X} \leq \operatorname{wt}(a)$. If $b \neq 0$ use $u=0$ and $w=1$. This gives $d_{M_X} \leq \operatorname{wt}(b)$. Taking the minimum over $\{ \operatorname{wt}(a), \operatorname{wt}(b) \}$ finishes the proof. The $Z$-side is identical noting that $\operatorname{wt}(a) = \operatorname{wt}(\iota(a))$ since the involution $\iota$ is a permutation of the monomial basis.
\end{proof}

\begin{remark}
    Here, we are explicitly choosing the metachecks to form an exact sequence. This is true if and only if the metacheck operators completely generate the space. 
\end{remark}

\begin{example}
    \label{ex:metacheck-common-factor}
    We illustrate Theorem~\ref{thm:exactness-equivalence} by constructing an explicit metacheck complex over the semisimple ring $R = \F_2[z]/\langle z^{15}-1 \rangle$.
    
    Let $\mathcal{C} \subset R$ be the cyclic binary $[15, 11, 3]$ Hamming code. Its generator polynomial is the primitive quartic $g(z) = 1 + z + z^4$, and its check polynomial is $h(z) = (z^{15}-1)/g(z)$, which has degree 11. Choose two coprime polynomials $u, v \in R$ of low weight, for instance, $u(z) = 1 + z^2$ and $v(z) = 1 + z + z^3$, with $\gcd(u, v, z^{15}-1) = 1$, and define the BB constructor polynomials by
    \begin{align*}
        a(z) &= u(z) h(z) = 1 + z + z^4 + z^8 + z^9 + z^{10} + z^{11} + z^{13}\\
        b(z) &= v(z) h(z) = 1 + z^3 + z^7 + z^8 + z^9 + z^{10} + z^{12} + z^{14}.
    \end{align*}
    The $X$-syndrome code has dimension $\dim \langle h \rangle = 15 - \deg(h) = 4$ and the BB code has dimension 22.

For the metachecks, we simply take $m_X$ to be the generator polynomial of $\mathcal{C}$, which generates the entire ideal on its own. The dual of the Hamming code is the $[15, 4, 8]$ Simplex code. Hence, $d_{M_X} = 8$, matching the upper bound $d_{M_X} \le \min\{\wt(a), \wt(b)\}$. 
Repeating for $Z$, we take $m_Z = 1+z^3+z^4$ with $d_{M_Z} = 8$.
\end{example}

\section{Automorphisms}
\label{sec:BBauts}
First established in~\cite{CRSS1998}, computational tools such as~\cite{Sayginel2024} identify fault-tolerant logical Clifford gates of a stabilizer code by mapping its symplectic check matrix to an enlarged linear code and extracting global permutation automorphisms, including dualities that interchange $X$- and $Z$-checks. In a complementary direction, the fold-transversal construction of Breuckmann and Burton~\cite{BreuckmannBurton2024} builds Clifford gates from $ZX$-dualities of a CSS code. Eberhardt and Steffan~\cite{EberhardtSteffan2025} specialize this to two-block group-algebra and bivariate bicycle codes using the homological and ideal-theoretic structure of the group algebra. The original BB codes paper~\cite{bravyi2024} already exploits the translation symmetry of these codes in its syndrome circuit and uses the fold-transversal machinery of~\cite{BreuckmannBurton2024} to extend the reach of its logical measurements.

For semisimple BB codes, and for two-block group algebra codes in general, the zero sets alone do not determine the coordinate-permutation automorphism group of the stabilizers. To recover it, we view $H_X$ and $H_Z$ as index-$2$ quasi-cyclic modules inside $R^2$. 

Before defining the symmetry group we fix our conventions and terminology, since several inequivalent notions of ``automorphism group of a code'' coexist in the literature. 
First, in this section, $q$ denotes a prime number. 
For a linear code $C \subseteq \F_q^n$, the \emph{permutation automorphism group} $\Aut_{\mathrm{perm}}(C)$ is the group of coordinate permutations preserving $C$, and the \emph{monomial automorphism group} enlarges it by nonzero scalar multiplication on individual coordinates. Over $\F_2$ the only nonzero scalar is $1$, so the two notions coincide; for $q>2$ they differ, and the monomial group is the relevant one, since coordinatewise scaling by $\F_p^\times$ is a nontrivial symmetry. For cyclic codes the subgroup generated by the cyclic shifts is always present. A multiplier map must satisfy stabilizer and slope conditions to survive as an automorphism. Multiplier maps are not automatically present.

For qudit stabilizer codes over a prime field $\F_q$, the classical object is the automorphism group of the associated additive code, defined inside the wreath product $\Gamma \wr S_n\cong \Gamma^n \rtimes S_n$ of coordinate permutations and local symmetries $\Gamma$ of the single-qudit Pauli group. Modulo phases, the single-qudit Pauli group is $\F_q \times \F_q$, and since $p$ is prime, the local Clifford group acting on it modulo Paulis is strictly linear:
\begin{equation*}
    \mathrm{Cl}_1 / \mathcal P_1 \cong \mathrm{Sp}_2(\F_q) = \mathrm{SL}_2(\F_q).
\end{equation*}
This replaces the binary coincidence $\mathrm{Cl}_1 / \mathcal{P}_1 \cong \GL_2(\F_2) \cong S_3$. Note that $\Sp_2 = \SL_2$ since the symplectic condition on a $2 \times 2$ matrix is exactly $\det = 1$.

\begin{definition}
    \label{def:autlc}
    Let $\mathrm{Cl}_1 \wr S_n$ denote the group of local-Clifford--permutation operators on $n$ qudits. The \emph{local Clifford automorphism group} of a stabilizer code $Q$ with stabilizer group $\mathcal{S}$ is
    \begin{equation*}
        \Aut_{\mathrm{LC}}(Q) = \{U\in \mathrm{Cl}_1 \wr S_n \mid U \mathcal{S} U^\dagger = \mathcal{S}\}.
    \end{equation*}
    Equivalently, if $H$ is the full symplectic check matrix of $Q$ over $\F_q$, then $U \in \Aut_{\mathrm{LC}}(Q)$ if and only if there is an invertible check-mixing matrix $M_C$ over $\F_q$ satisfying the linear condition
    \begin{equation}
        \label{eq:autocond}
        M_C H = H\mathcal{M}_V,
    \end{equation}
    where $\mathcal{M}_V \in \mathrm{Sp}_{2n}(\F_q)$ is the symplectic matrix induced by $U$, so that the physical transformation $v \mapsto v \mathcal M_V$ is an $\F_p$-linear symplectic map on $\F_q^{2n}$.
\end{definition}

Every $U \in \Aut_{\mathrm{LC}}(Q)$ normalizes $\mathcal{S}$ and hence induces a linear symplectic automorphism $M_{\mathrm{log}} \in \Sp_{2k}(\F_q)$ of the logical Pauli group $\mathcal L \cong \mathcal{C}(\mathcal{S}) / \mathcal{S}$. The physical automorphism implements a nontrivial logical Clifford gate exactly when $M_{\mathrm{log}} \neq \Id$. We use the right-action convention $v \mapsto v \mathcal{M}_V$ for symplectic row vectors $v \in \F_q^{2n}$, with standard form $J_n = \begin{psmallmatrix} 0 & I_n \\ -I_n & 0 \end{psmallmatrix}$, so that $\mathcal{M}_V J_n \mathcal{M}_V^\top = J_n$. 
Let $L$ be the $2k \times 2n$ matrix whose rows form a normalized symplectic logical basis $\bar{X}_1, \dots, \bar{X}_k, \bar{Z}_1, \dots, \bar{Z}_k$ with $L J_n L^\top = J_k$. Then $L \mathcal{M}_V = M_{\mathrm{log}} L + S_0$ with every row of $S_0$ in the stabilizer row space. Since the stabilizer space is symplectic orthogonal to the logical space, right-multiplication by $J_n L^\top$ removes the $S_0$-term, giving
\begin{equation}
    \label{eq:logaction}
    M_{\mathrm{log}} = L \mathcal{M}_V J_n L^\top J_k^{-1}.
\end{equation}
This converts a physical stabilizer automorphism into its explicit logical Clifford action.

\begin{remark}
The permutation automorphism group of a CSS code, defined below, is the subgroup of $\Aut_{\mathrm{LC}}(Q)$ coming from pure coordinate permutations that preserve the $X$- and $Z$-check row spaces separately. For $p>2$, this group is enlarged by coordinate scalings in $\F_q^\times$, which act on each side without mixing them. The larger local Clifford group may also include Hadamard-type operations exchanging the two sides. Entangling Cliffords are not local Clifford operations, though they may still be fault-tolerant Clifford automorphisms. We treat them separately below.
\end{remark}

The ideal $\idlGens{a, b} \subseteq R$ records which $X$-syndromes can occur. The object controlling coordinate symmetries is the rowspace of $H_X$ over $\F_p$, equivalently the image of $H_X^\top$ inside $R^2$.
\begin{definition}
    The \emph{$X$-check module} and the \emph{$Z$-check module} are the $R$-submodules
    \begin{equation*}
        \mathcal{R}_X = \im(H_X^\top) = R \cdot (\iota(a), \iota(b)), \qquad \mathcal{R}_Z = \im(H_Z^\top) = R\cdot (b,\,-a),
    \end{equation*}
    both contained in $R^2$. While the syndrome ideals are cyclic codes of length $\ell m$, the check modules are quasi-cyclic codes of index two and length $2\ell m$.
\end{definition}

Since $R$ is semisimple, $R = \bigoplus_{O \in \Omega} e_O R$, where $e_O$ is the primitive idempotent of the $p$-Frobenius orbit $O$ and $\Omega$ is the orbit set. Every $R$-submodule $\mathcal{M}\subseteq R^2$ decomposes constituent-wise as $\mathcal M = \bigoplus_O e_O \mathcal{M}$ with $e_O \mathcal{M} \subseteq (e_O R)^2$, and its spectral support is $\{O \mid e_O \mathcal{M} \neq 0\}$.

For $\mathcal{R}_X$, the constituent at $O$ is the line\footnote{The line going through the point $(x, y)$ in the standard Euclidean plane is given by all scalar multiples $(r x, r y)$ for $r \in \mathbb{R}$. This goes through the origin ($r = 0$) and has slope $r y / r x = y / x$.} $e_O \mathcal{R}_X = e_O R \cdot (e_O\iota(a), e_O\iota(b)) \subseteq (e_O R)^2$. The orbit-region language of the earlier sections describes these constituents exactly:
\begin{itemize}
    \item on the free active-support region, both coordinates vanish and the constituent is zero;
    \item on the uncoupled active-support regions, the constituents are the coordinate lines $e_O R \cdot (1, 0)$ and $e_O R \cdot(0, 1)$;
    \item on the coupled active-support region, both coordinates are nonzero and the constituent is the graph of the \emph{slope}
        \begin{equation*}
            \lambda_O = e_O\iota(b) \big/ e_O\iota(a) \in (e_O R)^\times, \qquad e_O\mathcal R_X = e_O R \cdot(1, \lambda_O).
        \end{equation*}
\end{itemize}
Thus, the spectral support of $\mathcal R_X$ is the union of the coupled active-support region and the two uncoupled active-support regions of $(\iota(a), \iota(b))$ plus additional information from the slope on the coupled active-support region.\footnote{Each slope $\lambda_O$ ranges over the full multiplicative group $(e_O R)^\times$. This group has size $p^{\vert{}O\vert{}}-1$. The slope data carries structural information for all fields, including $\F_2$, as constituents form field extensions.} A similar discussion applies to $\mathcal{R}_Z$.

\subsection{Permutation Automorphisms}
A \emph{permutation automorphism} of $Q$ is a permutation $\pi \in S_{2\ell m}$ of the physical qudit coordinates such that
\begin{equation*}
    \pi(\mathcal{R}_X) = \mathcal{R}_X, \qquad \pi(\mathcal{R}_Z) = \mathcal{R}_Z.
\end{equation*}
Equivalently, $\pi$ is a pure coordinate permutation satisfying the stabilizer automorphism condition \eqref{eq:autocond} with no $X/Z$-basis mixing. We denote the resulting group by $\Aut_{\mathrm{perm}}(Q)$. When applied to a classical code $\mathcal C\subseteq\F_q^{2N}$, the same notation $\Aut_{\mathrm{perm}}(\mathcal C)$ denotes its coordinate permutation automorphism group, so that by definition
\begin{equation*}
    \Aut_{\mathrm{perm}}(Q) = \Aut_{\mathrm{perm}}(\mathcal{R}_X) \cap \Aut_{\mathrm{perm}}(\mathcal{R}_Z).
\end{equation*}
For $q > 2$, the multiplicative group $\F_q^\times$ contributes a symmetry through the local Clifford multiplier $M_c = \mathrm{diag}(c, c^{-1}) \in \mathrm{Sp}_2(\F_q)$ applied simultaneously to every qudit coordinate pair. This map scales the $X$-check module by $c$ and the $Z$-check module by $c^{-1}$, sending $\mathcal{R}_X \mapsto c\mathcal{R}_X = \mathcal{R}_X$ and $\mathcal{R}_Z \mapsto c^{-1}\mathcal{R}_Z = \mathcal{R}_Z$.
Since $\det M_c = 1$, it is strictly symplectic for all $c \in \F_q^\times$. By contrast, uniform scalar multiplication $c \cdot \mathrm{Id}$ acts as $\mathrm{diag}(c, c)$ with determinant $c^2$, which rescales the symplectic form and is therefore a Clifford automorphism only when $c = \pm 1$. Independent per-block scalars are likewise not automorphisms in general: scaling only the first block by $c \ne 1$ sends $a u + b v = 0$ to $c\,au + bv = (c-1) au$, which is nonzero on the coupled active-support region.

\begin{remark}
    \label{rem:tanner-decoding}
    A permutation automorphism whose induced check-mixing matrix $M_C$ in \eqref{eq:autocond} is itself a permutation matrix is precisely an automorphism of the Tanner graphs of $H_X$ and $H_Z$. It permutes qudit nodes and check nodes while preserving adjacency. All of the structured symmetries constructed in this section, namely translations, multipliers, block swaps, and the $ZX$-dualities below, are of this form. Over $\F_q$, the edges of the Tanner graph carry the nonzero entries of $H_X$ and $H_Z$ as labels, and a graph automorphism is required to preserve those labels up to the allowed $\F_q^\times$ rescaling of each check.
    
    In the automorphism-ensemble decoding (AED) strategy~\cite{geiselhart2021automorphism, geiselhart2022automorphism, mandelbaum2023generalized, mandelbaum2024improved, koutsioumpas2025automorphism}, each of an ensemble of belief propagation (BP) decoders is fed the received vector or syndrome transformed by a different automorphism, and its correction is pulled back through the inverse automorphism. The ensemble helps only when its members produce distinct corrections. This requires care. If BP is exactly equivariant under an automorphism $\pi$, then $\pi^{-1}(\mathrm{BP}(\pi(s))) = \mathrm{BP}(s)$ and that ensemble member would be redundant.
\end{remark}

Diagonal translations by elements of $\Z_\ell \times \Z_m$ always give permutation automorphisms. The mapping is $(u,v) \mapsto (tu, tv)$ where $t$ is a monomial group element. Multiplication by $t$ commutes with multiplication by every element of the commutative group algebra $R$.  
Thus $\mathbb{Z}_\ell \times \mathbb{Z}_m \leq \Aut_{\mathrm{perm}}(Q)$.

Multiplier symmetries require more care. Over a prime field, a group automorphism $\mu_{\mathbf{s}} \colon g \mapsto g^{\mathbf{s}}$ for $\mathbf{s} = (s_x, s_y) \in (\Z_\ell)^\times \times (\Z_m)^\times$ extends linearly to a ring automorphism of $R$, sending $R \cdot (\iota(a), \iota(b))$ to $R \cdot (\mu_{\mathbf{s}}(\iota(a)), \mu_{\mathbf{s}}(\iota(b)))$.

\begin{proposition}
    \label{prop:kernel-suffices}
    Let $\Theta \colon R^2 \to R^2$ be a monomial Euclidean isometry satisfying $\Theta^\top \Theta = \Id$ (a coordinate permutation possibly multiplied by scalars $c = \pm 1$). If $\Theta(\ker H_X) = \ker H_X$ and $\Theta(\ker H_Z) = \ker H_Z$, then $\Theta$ preserves the CSS stabilizer of $Q$, that is $\Theta(\mathcal{R}_X) = \mathcal{R}_X$ and $\Theta(\mathcal{R}_Z) = \mathcal{R}_Z$.
\end{proposition}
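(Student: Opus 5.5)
The plan is to use the Euclidean orthogonality between check row spaces and kernels. For any matrix $H$ over $\F_q$ with the standard dot product, $\ker H = (\operatorname{rowspace} H)^{\perp}$, and since the form is nondegenerate and finite-dimensional, $(\ker H)^\perp = \operatorname{rowspace} H = \im(H^\top)$. With $\mathcal{R}_X = \im(H_X^\top)$ and $\mathcal{R}_Z = \im(H_Z^\top)$, this gives
\[
\mathcal{R}_X = (\ker H_X)^\perp, \qquad \mathcal{R}_Z = (\ker H_Z)^\perp .
\]
The statement is therefore reduced to showing that $\Theta$ maps orthogonal complements of invariant subspaces to themselves.

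The key step uses the hypothesis $\Theta^\top\Theta = \Id$. Then $\Theta$ preserves the standard Euclidean pairing on $R^2 \cong \F_q^{2\ell m}$. By Proposition~\ref{prop:algebraic-properties}(2), this is the blockwise pairing $\langle\cdot,\cdot\rangle_E$, for which the monomial basis is orthonormal. Concretely, $\langle \Theta w, \Theta w'\rangle = w^\top \Theta^\top\Theta w' = \langle w, w'\rangle$. It is worth checking that a signed permutation matrix with entries $\pm 1$ does satisfy $\Theta^\top\Theta = \Id$, so the hypothesis is consistent with the examples named in the statement.

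Now take $w \in \mathcal{R}_X$ and any $k \in \ker H_X$. Since $\Theta$ is invertible and $\Theta(\ker H_X) = \ker H_X$, we can write $k = \Theta k'$ with $k' \in \ker H_X$. Then
\[
\langle \Theta w, k\rangle = \langle \Theta w, \Theta k'\rangle = \langle w, k'\rangle = 0 ,
\]
so $\Theta w \in (\ker H_X)^\perp = \mathcal{R}_X$. This shows $\Theta(\mathcal{R}_X) \subseteq \mathcal{R}_X$. Because $\Theta$ is injective and the spaces are finite-dimensional, the inclusion is an equality. Repeating the argument with $H_Z$ gives $\Theta(\mathcal{R}_Z) = \mathcal{R}_Z$.

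I do not expect a genuine obstacle. The only point needing care is the identification $(\ker H)^\perp = \im(H^\top)$ over a finite field. Self-orthogonal vectors can exist there, but this causes no problem: the equality follows from nondegeneracy of the dot product, $\dim W^\perp = n - \dim W$, and $W^{\perp\perp} = W$, and positivity is not needed. I would state this explicitly so the proof does not appear to rely on an inner-product-space argument.
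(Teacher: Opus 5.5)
Your proof is correct and follows the same route as the paper: identify $\mathcal{R}_X=(\ker H_X)^\perp$ and $\mathcal{R}_Z=(\ker H_Z)^\perp$, then use that an isometry with $\Theta^\top\Theta=\Id$ maps the orthogonal complement of a $\Theta$-invariant subspace onto itself. The paper simply asserts $\Theta(U^\perp)=\Theta(U)^\perp$, whereas you verify it directly and note that over $\F_q$ only nondegeneracy of the dot product is needed, not positivity.
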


\begin{proof}
    As a coordinate isometry, $\Theta$ commutes with orthogonal complement, $\Theta(U^\perp) = \Theta(U)^\perp$ for every $\F_q$-subspace $U \subseteq R^2$. Applying this to $\ker H_X$ and $\ker H_Z$ and using $\Theta(\ker H_X) = \ker H_X$, $\Theta(\ker H_Z) = \ker H_Z$ gives $\Theta((\ker H_X)^\perp) = (\ker H_X)^\perp$ and $\Theta((\ker H_Z)^\perp) = (\ker H_Z)^\perp$. For any parity-check matrix $H$, its row space is $\mathrm{rowspace}(H) = \ker(H)^\perp$, so $\mathcal{R}_X = \mathrm{rowspace}(H_X) = (\ker H_X)^\perp$ and $\mathcal{R}_Z = \mathrm{rowspace}(H_Z) = (\ker H_Z)^\perp$. Hence, $\Theta(\mathcal{R}_X) = \mathcal{R}_X$ and $\Theta(\mathcal{R}_Z) = \mathcal{R}_Z$, and $\Theta$ preserves the full stabilizer.
\end{proof}

On the spectral side, $\mu_{\mathbf{s}}$ permutes the $p$-Frobenius orbits by $O \mapsto \mathbf{s}^{-1}O$. Two nonzero constituent pairs span the same line in $(e_O R)^2$ when their coordinates and slopes match. This is shown by the condition
\begin{equation*}
    e_O R \cdot (\alpha, \beta) = e_O R \cdot (\gamma, \delta) \iff \alpha \delta = \beta \gamma.
\end{equation*}
Therefore, on the coupled active-support region, we require $\mu_s(\lambda_{\mathbf{s}O}) = \lambda_{O}$ for every coupled active-support orbit $O$, since the slope $\lambda_{\mathbf{s}O}$ resides in the field component $e_{\mathbf{s}O}R$ rather than $e_{O}R$. When incorporating relative shifts $t_1, t_2$, this condition becomes $\frac{(t_2)_{O}}{(t_1)_{O}} \mu_s(\lambda_{\mathbf{s}O}) = \lambda_{O}$.

Preserving the two zero sets $\mathcal{Z}_a$ and $\mathcal{Z}_b$ is necessary but not sufficient. One must match the slope $\lambda_O$ exactly. Even over $\mathbb{F}_2$, an orbit constituent is $e_O R \cong \mathbb{F}_{2^{|O|}}$. The slope $\lambda_O$ can take $2^{|O|} - 1$ possible nonzero values. This makes the slope test a strict analytical requirement for any field. The zero sets determine dimensions and BCH-type information. The automorphism group of the quasi-cyclic check module depends on the actual elements $a$ and $b$ through their slopes.

\begin{theorem}
    \label{thm:x-kernel-symmetries}
    Let $Q = \mathrm{BB}(a, b)$. For $\mathbf{s} \in (\mathbb{Z}_\ell)^\times \times (\mathbb{Z}_m)^\times$ and shifts $t_1 = (t_{1x}, t_{1y}), t_2 = (t_{2x}, t_{2y}) \in \mathbb{Z}_\ell \times \mathbb{Z}_m$, define the block-preserving and block-swapping linear maps
    \begin{align*}
        \Phi_{\mathbf{s}, t_1, t_2}(u, v) &= (x^{t_{1x}} y^{t_{1y}} \mu_{\mathbf{s}}(u), x^{t_{2x}} y^{t_{2y}} \mu_{\mathbf{s}}(v)), \\
        \Psi_{\mathbf{s}, t_1, t_2}(u, v) &= (x^{t_{1x}} y^{t_{1y}} \mu_{\mathbf{s}}(v), x^{t_{2x}} y^{t_{2y}} \mu_{\mathbf{s}}(u)).
    \end{align*}
    Then:
    \begin{enumerate}
        \item $\Phi_{\mathbf{s}, t_1, t_2}(\ker H_X) = \ker H_X$ if and only if $\mu_{\mathbf{s}}$ stabilizes the zero sets, $\mathcal{Z}_{\mu_{\mathbf{s}}(a)} = \mathcal{Z}_a$ and $\mathcal{Z}_{\mu_{\mathbf{s}}(b)} = \mathcal{Z}_b$, and
        \begin{equation*}
            \mu_{\mathbf{s}}(a) b = x^{t_{1x}-t_{2x}} y^{t_{1y}-t_{2y}} a \mu_{\mathbf{s}}(b)
        \end{equation*}
        holds in $R$.
        \item $\Psi_{\mathbf{s}, t_1, t_2}(\ker H_X) = \ker H_X$ if and only if $\mu_{\mathbf{s}}$ swaps the zero sets, $\mathcal{Z}_{\mu_{\mathbf{s}}(a)} = \mathcal{Z}_b$ and $\mathcal{Z}_{\mu_{\mathbf{s}}(b)} = \mathcal{Z}_a$, and
        \begin{equation*}
            \mu_{\mathbf{s}}(a) a = x^{t_{2x}-t_{1x}} y^{t_{2y}-t_{1y}} b \mu_{\mathbf{s}}(b)
        \end{equation*}
        holds in $R$.
    \end{enumerate}
    A map $\Phi$ belongs to the full code permutation automorphism group $\Aut_{\mathrm{perm}}(Q)$ if and only if $\Phi(\ker H_X) = \ker H_X$ and $\Phi(\ker H_Z) = \ker H_Z$. Consequently, the block-preserving symmetries of $Q$ form the intersection subgroup of solution pairs satisfying condition (1) simultaneously for $(a, b)$ and for the $Z$-check generators $(\iota(b), -\iota(a))$. On the coupled active-support region, the identity in (1) corresponds to equality of the slopes after orbit relabeling, $\lambda_{\mathbf{s}O} = \lambda_O$.
\end{theorem}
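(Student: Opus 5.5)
The plan is to work constituent-wise in the Wedderburn decomposition, as in Lemma~\ref{lem:orbitwise_complex_split}. Write $\tau_i = x^{t_{ix}}y^{t_{iy}}$; these are units, and $\mu_{\mathbf{s}}$ is a ring automorphism of $R$. The first step is to compute the image of the kernel directly. Since $\Phi=\Phi_{\mathbf{s},t_1,t_2}$ is a bijection, $\Phi(\ker H_X)=\ker H_X$ is equivalent to $\Phi(\ker H_X)\subseteq \ker H_X$. Moreover $(u',v')=\Phi(u,v)$ lies in $\Phi(\ker H_X)$ exactly when
\[
\mu_{\mathbf{s}}^{-1}\bigl(\tau_1^{-1}u'\bigr)\,a+\mu_{\mathbf{s}}^{-1}\bigl(\tau_2^{-1}v'\bigr)\,b=0 .
\]
Applying $\mu_{\mathbf{s}}$, this becomes $\mu_{\mathbf{s}}(a)\tau_1^{-1}u'+\mu_{\mathbf{s}}(b)\tau_2^{-1}v'=0$. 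Hence $\Phi(\ker H_X)=\ker(a',b')$ with $a'=\tau_1^{-1}\mu_{\mathbf{s}}(a)$ and $b'=\tau_2^{-1}\mu_{\mathbf{s}}(b)$. Item (1) therefore reduces to the algebraic question of when $\ker(a,b)=\ker(a',b')$ as $R$-submodules of $R^2$.

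The second step answers that question orbit by orbit. On each field $e_OR$, the kernel of $(\alpha,\beta)$ is all of $(e_OR)^2$ if $\alpha=\beta=0$, and otherwise it is the line through $(\beta,-\alpha)$. So the two kernels agree on $O$ iff either both pairs vanish, or both are nonzero with $\alpha\beta'=\beta\alpha'$. This is the line criterion $e_OR\cdot(\alpha,\beta)=e_OR\cdot(\gamma,\delta)\iff\alpha\delta=\beta\gamma$ recorded before the theorem.

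Next, split by region. On the coupled and uncoupled regions the line condition forces the zero pattern of $(a'_O,b'_O)$ to match that of $(a_O,b_O)$; for instance, if $a_O=0\ne b_O$, then $\alpha\beta'=\beta\alpha'$ forces $a'_O=0$. On the free region, matching of the full kernels forces both to vanish. Multiplication by the units $\tau_i$ does not change zero sets. Together this gives $\mathcal{Z}_{\mu_{\mathbf{s}}(a)}=\mathcal{Z}_a$ and $\mathcal{Z}_{\mu_{\mathbf{s}}(b)}=\mathcal{Z}_b$.

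Multiplying the line condition by $\tau_1\tau_2$ and summing over orbits gives the global identity $\mu_{\mathbf{s}}(a)b=\tau_1\tau_2^{-1}a\mu_{\mathbf{s}}(b)$. Conversely, given the zero-set conditions and this identity, the constituent-wise criterion holds on every orbit: on orbits where both pairs vanish the kernels agree trivially, and elsewhere the identity is exactly the line condition. That proves the equivalence in (1).

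Item (2) follows the same way. $\Psi$ sends $\ker H_X$ to $\ker(\tau_1^{-1}\mu_{\mathbf{s}}(b),\,\tau_2^{-1}\mu_{\mathbf{s}}(a))$, with the roles swapped. The orbitwise comparison then yields the swapped zero sets. The cross-multiplication becomes $a\,\tau_2^{-1}\mu_{\mathbf{s}}(a)=b\,\tau_1^{-1}\mu_{\mathbf{s}}(b)$, which rearranges to the displayed identity.

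For the automorphism-group statement, the maps $\Phi$ and $\Psi$ are monomial permutations of coordinates with no scalars, so Proposition~\ref{prop:kernel-suffices} shows that preserving both kernels is sufficient. For necessity, orthogonal complements commute with coordinate permutations, so $\Phi(\mathcal{R}_X)=\mathcal{R}_X$ gives $\Phi(\ker H_X)=\Phi(\mathcal{R}_X^\perp)=\ker H_X$, and likewise for $Z$. Since $\ker H_Z=\ker(\iota(b),-\iota(a))$, applying (1) to that pair gives the intersection description.

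For the slope remark, on a coupled orbit divide the identity by $a\,\mu_{\mathbf{s}}(a)$. Because $\mu_{\mathbf{s}}$ maps $e_{\mathbf{s}O}R$ onto $e_OR$ by the preceding lemma, this gives $\mu_{\mathbf{s}}(\lambda_{\mathbf{s}O})=(\tau_2/\tau_1)_O\lambda_O$, which is the slope matching condition. With trivial shifts, after identifying components through $\mu_{\mathbf{s}}$, it reads $\lambda_{\mathbf{s}O}=\lambda_O$.

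The main obstacle is the bookkeeping of orbit relabeling: $\mu_{\mathbf{s}}(a)$ evaluated on $O$ is $a$ evaluated on $\mathbf{s}O$, and the slopes of the paper are defined via $\iota(a),\iota(b)$ for $\mathcal{R}_X$. I would settle these conventions once, using $\mu_{\mathbf{s}}(e_O)=e_{\mu_{\mathbf{s}}^{-1}(O)}$, and then state the slope remark modulo that identification.
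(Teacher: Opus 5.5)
Your proposal is correct and follows essentially the same route as the paper. Both arguments work constituent-wise in the Wedderburn decomposition and use the fact that on each field $e_OR$ the kernel of $(\alpha,\beta)\neq(0,0)$ is the line through $(\beta,-\alpha)$, so two kernels agree exactly when the zero patterns agree and $\alpha\beta'=\beta\alpha'$. Your preliminary identification $\Phi_{\mathbf{s},t_1,t_2}(\ker H_X)=\ker\bigl(\tau_1^{-1}\mu_{\mathbf{s}}(a),\,\tau_2^{-1}\mu_{\mathbf{s}}(b)\bigr)$ is a tidier packaging: it yields the zero-set conditions and the converse directly from the line criterion, where the paper appeals more loosely to preserved ``marginal dimensions'' and to ``retracing the substitutions'' on coupled orbits.
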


\begin{proof}   
    We prove (1); case (2) is identical, with the block-swapping map $\Psi$ in place of $\Phi$ (equivalently, the roles of $a$ and $b$ interchanged). The analogous condition for $\ker H_Z$, required for a full permutation automorphism by Proposition~\ref{prop:kernel-suffices}, follows by the same computation with the blocks of $H_Z$ inserted.
    
    Since $R$ is semisimple, it is a product of the residue fields $e_O R$ over the $p$-Frobenius orbits $O \in \Omega$, and an identity in $R$ holds if and only if it holds on every orbit.
    
    Assume $\Phi_{\mathbf{s}, \mathbf{t}_1, \mathbf{t}_2}(\ker H_X) = \ker H_X$. The map is a coordinate permutation multiplied by scalars, so it preserves the marginal dimensions of the kernel, which forces $\mu_{\mathbf{s}}$ to stabilize $\mathcal{Z}_a$ and $\mathcal{Z}_b$. For $(u, v) \in \ker H_X$, applying $\mu_{\mathbf{s}}$ to $a u + b v = 0$ gives
    \begin{equation*}
        \mu_{\mathbf{s}}(a) \,\mu_{\mathbf{s}}(u) + \mu_{\mathbf{s}}(b) \, \mu_{\mathbf{s}}(v) = 0,
    \end{equation*}
    using that $\mu_{\mathbf{s}}$ is a ring automorphism of $R$. The transformed pair lies in $\ker H_X$ exactly when $a x^{t_{1x}} y^{t_{1y}} \mu_{\mathbf{s}}(u) + b x^{t_{2x}} y^{t_{2y}} \mu_{\mathbf{s}}(v) = 0$. We check the claimed identity orbit by orbit.
    
    On the uncoupled and free active-support regions, $O \in \mathcal{Z}_a \cup \mathcal{Z}_b$, at least one of $a, b$ vanishes on $O$; zero-set stability makes the corresponding $\mu_{\mathbf{s}}(a)$ or $\mu_{\mathbf{s}}(b)$ vanish too, so both sides of the identity are $0$ on $O$.
    
    On the coupled active-support region, $O \notin \mathcal{Z}_a \cup \mathcal{Z}_b$, all four of $a, b, \mu_{\mathbf{s}}(a), \mu_{\mathbf{s}}(b)$ are invertible in $e_O R$. Solving the kernel relation for $\mu_{\mathbf{s}}(u) = -(\mu_{\mathbf{s}}(a))^{-1} \mu_{\mathbf{s}}(b)\mu_{\mathbf{s}}(v)$ and substituting into the membership requirement, then choosing a kernel element with $\mu_{\mathbf{s}}(v) \ne 0$ on $O$ and dividing it out, gives
    \begin{equation*}
        a x^{t_{1x}} y^{t_{1y}}(\mu_{\mathbf{s}}(a))^{-1} \mu_{\mathbf{s}}(b) = b x^{t_{2x}} y^{t_{2y}}.
    \end{equation*}
    Multiplying by $x^{-t_{2x}} y^{-t_{2y}} \mu_{\mathbf{s}}(a)$ yields $\mu_{\mathbf{s}}(a) b = x^{t_{1x} - t_{2x}} y^{t_{1y} - t_{2y}} a \, \mu_{\mathbf{s}}(b)$. Since this holds on every orbit, it holds in $R$. The converse follows by retracing the substitutions, and the slope reading is the coupled-region form of the same identity.
\end{proof}

\begin{corollary}
    \label{cor:strict-multiplier}
    The strict linear multiplier $\widetilde{\mu}_{\mathbf{s}}(u, v) = (\mu_{\mathbf{s}}(u), \mu_{\mathbf{s}}(v))$, with no relative shift and no block swap, is an automorphism of $Q = \mathrm{BB}(a, b)$ if and only if $\mu_{\mathbf{s}}$ stabilizes the individual zero sets and $\mu_{\mathbf{s}}(a) \, b = a \, \mu_{\mathbf{s}}(b)$ 
    in $R$. This is precisely the slope multiplier stabilizer $\lambda_{\mathbf{s}O} = \lambda_O$ on the coupled active-support region.
\end{corollary}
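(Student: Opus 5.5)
The plan is to obtain Corollary~\ref{cor:strict-multiplier} as the specialization $t_1 = t_2 = (0,0)$ of Theorem~\ref{thm:x-kernel-symmetries}(1), applied on both the $X$ and $Z$ sides, and then to translate the resulting identity into slope language. With no shifts, $\Phi_{\mathbf{s},0,0} = \widetilde{\mu}_{\mathbf{s}}$. Part (1) of the theorem then says that $\widetilde{\mu}_{\mathbf{s}}(\ker H_X) = \ker H_X$ if and only if two things hold: $\mu_{\mathbf{s}}$ stabilizes $\mathcal{Z}_a$ and $\mathcal{Z}_b$, and
\[
\mu_{\mathbf{s}}(a)\,b = a\,\mu_{\mathbf{s}}(b).
\]
The monomial factor disappears because $x^0y^0 = 1$. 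Since $\widetilde{\mu}_{\mathbf{s}}$ is a pure coordinate permutation, it is a Euclidean isometry. Proposition~\ref{prop:kernel-suffices} therefore reduces membership in $\Aut_{\mathrm{perm}}(Q)$ to preserving both $\ker H_X$ and $\ker H_Z$.

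The next step is to show that the $Z$-side condition is implied by the $X$-side one. I would apply Theorem~\ref{thm:x-kernel-symmetries}(1) to the pair $(\iota(b), -\iota(a))$. Three facts are needed here.
\begin{enumerate}
\item $\mu_{\mathbf{s}}$ commutes with $\iota$, since both are group automorphisms of the abelian group $\Z_\ell\times\Z_m$ and $g\mapsto g^{\mathbf{s}}$ commutes with $g\mapsto g^{-1}$.
\item $\mathcal{Z}_{\iota(c)} = -\mathcal{Z}_c$, and multiplication by $\mathbf{s}$ commutes with negation. Hence $\mu_{\mathbf{s}}$ stabilizes $\mathcal{Z}_{\iota(a)}$ and $\mathcal{Z}_{\iota(b)}$ exactly when it stabilizes $\mathcal{Z}_a$ and $\mathcal{Z}_b$.
\item Applying the ring automorphism $\iota$ to $\mu_{\mathbf{s}}(a)b = a\mu_{\mathbf{s}}(b)$ gives $\mu_{\mathbf{s}}(\iota a)\,\iota b = \iota a\,\mu_{\mathbf{s}}(\iota b)$. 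Multiplying by $-1$ gives the $Z$-side identity $\mu_{\mathbf{s}}(\iota b)(-\iota a) = \iota b\,\mu_{\mathbf{s}}(-\iota a)$, up to sign bookkeeping.
\end{enumerate}
So the two conditions coincide. This gives the ``if and only if'' with only the stated hypotheses.

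The slope reformulation is the last step. Projecting the identity with $e_O$ is harmless on orbits in $\mathcal{Z}_a\cup\mathcal{Z}_b$: there both sides vanish, by the same zero-set-stability argument used in the theorem's proof. On a coupled orbit, all four of $a,b,\mu_{\mathbf{s}}(a),\mu_{\mathbf{s}}(b)$ are units in $e_OR$, so the identity is equivalent to
\[
e_O\mu_{\mathbf{s}}(b)/e_O\mu_{\mathbf{s}}(a) = e_Ob/e_Oa.
\]
Using $\mu_{\mathbf{s}}(e_{O'}) = e_{\mathbf{s}^{-1}O'}$, the component $e_O\mu_{\mathbf{s}}(c)$ equals $\mu_{\mathbf{s}}(e_{\mathbf{s}O}c)$. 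The left side is therefore $\mu_{\mathbf{s}}$ applied to the ratio $b/a$ on the orbit $\mathbf{s}O$.

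Passing from $b/a$ to $\lambda_O = e_O\iota(b)/e_O\iota(a)$ goes through $\iota$. Since $\iota$ commutes with $\mu_{\mathbf{s}}$ and maps $O$ to $-O$, the coupled-region condition is equivalent to $\mu_{\mathbf{s}}(\lambda_{\mathbf{s}O}) = \lambda_O$ for all coupled $O$. This is the slope condition stated before Theorem~\ref{thm:x-kernel-symmetries}, which the corollary abbreviates as $\lambda_{\mathbf{s}O}=\lambda_O$ after the identification of the component fields via $\mu_{\mathbf{s}}$.

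I expect the main obstacle to be this bookkeeping of orbit relabeling. That means tracking $\mathbf{s}$ versus $\mathbf{s}^{-1}$, and the sign from $\iota$, so that the slope statement lands on the same orbit indexing as $\lambda_{\mathbf{s}O}=\lambda_O$. I would fix the convention from the lemma $\mu_{a,b}(e_O)=e_{\mu_{a,b}^{-1}(O)}$ and verify it once on a single primitive idempotent $e_{i,j}$ over $K$.
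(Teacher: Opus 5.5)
Your proposal is correct and follows the paper's route: the paper's proof is just the specialization $t_1=t_2=0$ of Theorem~\ref{thm:x-kernel-symmetries}(1), which turns the monomial prefactor into $1$. Your additional steps check out and fill in points the paper's one-line proof leaves implicit. The first is showing, via $\iota\mu_{\mathbf{s}}=\mu_{\mathbf{s}}\iota$, that the $Z$-side condition for $(\iota(b),-\iota(a))$ follows automatically from the $X$-side one. The second is tracking the orbit relabeling and the $\iota$-twist to obtain $\mu_{\mathbf{s}}(\lambda_{\mathbf{s}O})=\lambda_O$.
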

\begin{proof}
    Set $\mathbf{t}_1 = \mathbf{t}_2 = \mathbf{0}$ in Theorem~\ref{thm:x-kernel-symmetries} (1). The prefactor $x^{t_{1x} - t_{2x}} y^{t_{1y} - t_{2y}}$ becomes $1$, leaving $\mu_{\mathbf{s}}(a) b = a \, \mu_{\mathbf{s}}(b)$.
\end{proof}

\begin{remark}
    \label{rem:single-block}
    It is instructive to compare with the automorphism groups of the two single-block cyclic codes $\langle a \rangle$ and $\langle b \rangle$ taken separately. Each contains the shift group $\mathbb{Z}_{\ell m}$ together with its multiplier stabilizer for $\langle a \rangle$ and likewise for $\langle b\rangle$, and for many lengths this exhausts the full permutation automorphism group of a cyclic code~\cite{BienertKlopsch2010, HuffmanPless}. The coupled pair is far more rigid: a linear multiplier must stabilize \emph{both} zero sets simultaneously and, beyond that, satisfy the exact identity of Theorem~\ref{thm:x-kernel-symmetries}, which pins the relative slope on the coupled active-support region. The symmetry group of the pair is therefore in general a small subgroup of the intersection of the two single-block groups, and small compared with either factor. Example~\ref{ex:aut-127} below realizes the strict drop: the Frobenius multiplier group $\langle p \rangle$ stabilizes both zero sets, yet no nontrivial multiplier survives the slope test.
\end{remark}

\subsection{$ZX$-Dualities \& Fourier-Type Gates}
Over $\F_2$ the duality that exchanges the two CSS sides is a transversal Hadamard, an involution. Over a general prime field $\F_q$, its role is played by the qudit Fourier gate $F$, which acts on single-qudit Paulis by $FXF^\dagger=Z$ and $FZF^\dagger=X^{-1}$ and therefore has order four rather than two. The sign in $FZF^\dagger=X^{-1}$ is the same minus sign carried by our $H_Z$, and it forces the block-swap duality to be a signed monomial map. The systematic use of such dualities to build fault-tolerant Clifford gates, the \emph{fold-transversal} framework, is due to Breuckmann and Burton~\cite{BreuckmannBurton2024}. The specialization to two-block group-algebra codes that we follow here was developed by Eberhardt and Steffan~\cite{EberhardtSteffan2025}.

\begin{definition}
    A \emph{$ZX$-duality} of a CSS code $(H_X,H_Z)$ over $\F_q$ is a monomial coordinate map $D$ on the $2\ell m$ physical qudits, that is a permutation dressed by nonzero $\F_q^\times$ scalars, such that
    \begin{equation*}
        \mathrm{rowspace}(H_X D) = \mathrm{rowspace}(H_Z), \qquad \mathrm{rowspace}(H_Z D^{-\top}) = \mathrm{rowspace}(H_X).
    \end{equation*}
    To such a duality we associate the Fourier-type symplectic matrix
    \begin{equation*}
        F_D =\begin{pmatrix}
            0 & D \\
            -D^{-\top} & 0
        \end{pmatrix},
    \end{equation*}
    representing transversal physical Fourier gates together with the coordinate map $D$. Note the two independent doublings. The map $D$ acts on the physical two-block coordinate space $R^2\cong\F_q^{2 \ell m}$, while $F_D$ acts on the symplectic Pauli space $\F_q^{4 \ell m}$ by exchanging the $X$- and $Z$-components with a sign. Over $\F_2$ the sign is invisible, $F_D$ becomes the symmetric Hadamard block, and $D$ is an ordinary permutation.
\end{definition}

Let
\begin{equation*}
    D_0 = \begin{pmatrix}
        0 & \sigma \\
        -\sigma & 0
    \end{pmatrix},
\end{equation*}
where $\sigma : g \mapsto g^{-1}$ is the coordinate inversion on one copy of $R$.

\begin{theorem}
    \label{thm:zx-always}
    For every BB code $Q = \mathrm{BB}(a, b)$ over $\F_q$, the signed block-swapping inversion $D_0$ satisfies
    \begin{equation*}
        H_X D_0 = -\sigma H_Z, \qquad H_Z D_0 = \sigma H_X,
    \end{equation*}
    where $\sigma$ on the right-hand side denotes the corresponding check-index row permutation. Consequently $D_0$ is a $ZX$-duality, and $F_{D_0}$ is a Fourier-type Clifford automorphism of $Q$. Moreover $D_0$ is antisymmetric ($D_0^\top = -D_0$), satisfies $D_0^{-\top} = D_0$ and $D_0^2 = -\Id$ (so $D_0$ has order four for $p > 2$), while its symplectic lift $F_{D_0}$ is an involution ($F_{D_0}^2 = \Id$).
\end{theorem}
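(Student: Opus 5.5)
The identities reduce to block-matrix multiplication once transposes are rewritten as conjugations by the inversion permutation; the duality and Clifford properties then follow from those identities and a few algebraic facts about $D_0$.

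\textbf{Step 1: rewrite transposes.} Let $\sigma = M_\iota$ be the permutation matrix of $g \mapsto g^{-1}$. It satisfies $\sigma^\top = \sigma$ and $\sigma^2 = \Id$. Since $\iota$ is a ring automorphism of $R$, we have $\sigma M_c \sigma = M_{\iota(c)}$. Combined with Proposition~\ref{prop:algebraic-properties}(2) this gives
\begin{equation*}
A^\top = \sigma A\sigma, \qquad B^\top = \sigma B \sigma, \qquad H_Z = \begin{pmatrix} \sigma B\sigma & -\sigma A \sigma\end{pmatrix}.
\end{equation*}
This is the same rewriting used in the proof of Proposition~\ref{prop:equivalentcodes}.

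\textbf{Step 2: the two identities.} Multiplying the blocks out gives
\begin{equation*}
H_X D_0 = \begin{pmatrix} -B\sigma & A\sigma\end{pmatrix}.
\end{equation*}
By Step 1 this equals $-\sigma H_Z$, because $-\sigma(\sigma B\sigma) = -B\sigma$ and $-\sigma(-\sigma A\sigma) = A\sigma$. Similarly,
\begin{equation*}
H_Z D_0 = \begin{pmatrix} A^\top\sigma & B^\top \sigma\end{pmatrix} = \begin{pmatrix} \sigma A & \sigma B\end{pmatrix} = \sigma H_X.
\end{equation*}
Here the left $\sigma$ is read as a relabelling of check indices, i.e. a row permutation.

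\textbf{Step 3: algebraic properties of $D_0$.} These are direct block computations using $\sigma^\top = \sigma$ and $\sigma^2 = \Id$:
\begin{equation*}
D_0^\top = \begin{pmatrix} 0 & -\sigma \\ \sigma & 0\end{pmatrix} = -D_0, \qquad D_0^2 = -\Id.
\end{equation*}
From these, $D_0^{-1} = -D_0$ and $D_0^{-\top} = (-D_0)^\top = D_0$. The order of $D_0$ is exactly four when $p>2$, since then $-\Id \ne \Id$.

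\textbf{Step 4: the $ZX$-duality and $F_{D_0}$.} Because $D_0^{-\top} = D_0$, Step 2 gives both rowspace conditions of the definition: left multiplication by the invertible matrices $-\sigma$ and $\sigma$ does not change row spaces. Squaring,
\begin{equation*}
F_{D_0}^2 = \mathrm{diag}\bigl(-D_0D_0^{-\top},\, -D_0^{-\top}D_0\bigr) = \mathrm{diag}(-D_0^2,\, -D_0^2) = \Id .
\end{equation*}
Symplecticity, $F_{D_0}J F_{D_0}^\top = J$, holds for any $F_D$ of this shape; I would check it by block multiplication.

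\textbf{Step 5: $F_{D_0}$ preserves the stabilizer.} A stabilizer row $(h,0)$ with $h$ in the $X$-rowspace is sent to $(0, hD_0)$, which lies in $0 \oplus \mathrm{rowspace}(H_Z)$ by Step 2. A row $(0,h')$ with $h'$ in the $Z$-rowspace is sent to $(-h'D_0^{-\top}, 0) = (-h'D_0, 0)$, which lies in $\mathrm{rowspace}(H_X)\oplus 0$, again by Step 2. Hence $F_{D_0}$ maps the stabilizer group to itself, which is the automorphism condition~\eqref{eq:autocond} with a permutation check-mixing matrix.

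\textbf{Main obstacle.} The risky part is sign and convention bookkeeping, not any structural difficulty. One must confirm that the right-action convention $v \mapsto v\mathcal M_V$ together with $J_n$ is compatible with the placement of $-D^{-\top}$ in $F_D$, and that $\sigma$ acting on check indices is consistent between the $X$ and $Z$ sides. I would verify these against a small $\F_3$ instance if any doubt remains.
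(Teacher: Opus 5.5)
Your proposal is correct and follows the paper's proof essentially step for step. Both rewrite $A^\top, B^\top$ as $\sigma$-conjugates $M_{\iota(\cdot)}$, multiply blocks to get $H_XD_0=-\sigma H_Z$ and $H_ZD_0=\sigma H_X$, and then derive $D_0^\top=-D_0$, $D_0^2=-\Id$, $D_0^{-\top}=D_0$ and $F_{D_0}^2=\Id$. Your Steps 4 and 5 add two checks the paper leaves implicit: symplecticity of $F_{D_0}$, and the row-by-row verification that stabilizers are preserved. Both hold under the right-action convention.
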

\begin{proof}
    Using $M_e^\top = \sigma M_e \sigma = M_{\iota(e)}$ and $H_Z = \begin{pmatrix} M_{\iota(b)} & -M_{\iota(a)} \end{pmatrix}$, we compute
    \begin{equation*}
        H_X D_0 = \begin{pmatrix} M_a & M_b \end{pmatrix}
            \begin{pmatrix}
                0 & \sigma \\
                -\sigma & 0
            \end{pmatrix}
            =
            \begin{pmatrix}
            -M_b \sigma & M_a \sigma
            \end{pmatrix}
            = \sigma
            \begin{pmatrix}
                -M_{\iota(b)} & M_{\iota(a)}
            \end{pmatrix}
            =
            -\sigma H_Z,
    \end{equation*}
    since $M_b \sigma = \sigma M_{\iota(b)}$ and $M_a \sigma = \sigma M_{\iota(a)}$. The computation of $H_Z D_0 = \sigma H_X$ is identical with the blocks of $H_Z$ inserted. 

    Since $\sigma^2 = \Id$, we have $D_0^\top = -D_0$ and $D_0^2 = \begin{psmallmatrix} -\sigma^2 & 0 \\ 0 & -\sigma^2 \end{psmallmatrix} = -\Id$, which implies $D_0^{-1} = -D_0$ and $D_0^{-\top} = (-D_0)^{-1} = D_0$. Thus, $H_Z D_0^{-\top} = H_Z D_0 = \sigma H_X$. Since $\sigma$ is an invertible row permutation and $-1 \in \F_q^\times$, we obtain $\mathrm{rowspace}(H_X D_0) = \mathrm{rowspace}(H_Z)$ and $\mathrm{rowspace}(H_Z D_0^{-\top}) = \mathrm{rowspace}(H_X)$, confirming that $D_0$ is a $ZX$-duality and $F_{D_0}$ preserves the stabilizer group.

    Finally, $D_0^4 = \Id$, yielding order four for $D_0$ when $p > 2$ (and order two over $\F_2$). For the symplectic lift, using $D_0^{-\top} = D_0$, we calculate:
    \begin{equation*}
        F_{D_0}^2 = \begin{pmatrix} 0 & D_0 \\ -D_0 & 0 \end{pmatrix}^2 = \begin{pmatrix} -D_0^2 & 0 \\ 0 & -D_0^2 \end{pmatrix} = \begin{pmatrix} \Id & 0 \\ 0 & \Id \end{pmatrix} = \Id,
    \end{equation*}
    so $F_{D_0}$ is an involution.
\end{proof}

\begin{remark}
    \label{rem:zx-duality-structure}
    Over $\F_2$, the duality $D_0$ is the \emph{$ZX$-duality} $\tau_0$ of Eberhardt and Steffan~\cite[\S III-B]{EberhardtSteffan2025}, defined through $(f, g) \mapsto (\iota(g), \iota(f))$ on $R \times R$, and its Hadamard lift is the fold-transversal gate $H_{\tau_0}$ of~\cite[Thm.~6]{BreuckmannBurton2024}, \cite[Thm.~3]{EberhardtSteffan2025}. Over $\F_q$, the same map acquires the sign forced by $H_Z$. The underlying coordinate map $D_0$ given by $(f,g) \mapsto (\iota(g), -\iota(f))$ has order four for $p>2$, while its combined symplectic lift $F_{D_0}$ is an involution of order two. By \cite[Lem.~1]{BreuckmannBurton2024}, the set of all $ZX$-dualities is the coset $\tau_0 \cdot \Aut$ of the automorphism group. Reference~\cite{bravyi2024} does not develop $ZX$-dualities itself, but its logical-measurement constructions invoke these fold-transversal techniques, and the SWAP-implemented automorphism searches of~\cite{Sayginel2024} recover the grid translations and $ZX$-dualities computationally.
    
    In our setting, Theorem~\ref{thm:zx-always} adds the observation that at the level of the quasi-cyclic check modules, $D_0$ exists unconditionally for every matched or unmatched pair $(a, b)$ over any abelian group algebra of order coprime to $p$. It realizes the block swap of Theorem~\ref{thm:x-kernel-symmetries}(2) by mapping $D_0(\mathcal{R}_X) = \mathcal{R}_Z$, so $\Aut_{\mathrm{perm}}(\mathcal{R}_Z) = D_0 \Aut_{\mathrm{perm}}(\mathcal{R}_X) D_0^{-1}$ and
    \begin{equation*}
        \Aut_{\mathrm{perm}}(Q) = \{\pi \in \Aut_{\mathrm{perm}}(\mathcal{R}_X) \mid D_0^{-1} \pi D_0 \in \Aut_{\mathrm{perm}}(\mathcal{R}_X)\}.
    \end{equation*}
    Since its Fourier lift $F_{D_0}$ gives a canonical Fourier-type automorphism, the subgroup $\langle \Aut_{\mathrm{perm}}(Q), \, F_{D_0} \rangle \leq \Aut_{\mathrm{LC}}(Q)$ is always present. The operative question is therefore never existence, but the induced logical action \eqref{eq:logaction}, which may be trivial or not depending on the code. Without a separate computation of all local Clifford symmetries, it is not clear if this canonical subgroup exhausts $\Aut_{\mathrm{LC}}(Q)$; indeed, the computations of~\cite{Sayginel2024} show that for several of the BB codes discovered in~\cite{bravyi2024}, the full group is strictly larger than the block-monomial part.
\end{remark}

\subsection{Phase-Type \& $CZ$-Type Clifford Gates}
The single-qudit phase gate satisfies $S X S^\dagger = \omega X Z$ for a phase $\omega$ and $S Z S^\dagger = Z$; in symplectic notation this is the transvection $(x \mid z) \mapsto (x \mid z + x)$. More generally, for a coordinate map $D$ on the $2\ell m$ physical coordinates, consider
\begin{equation*}
    S_D = \begin{pmatrix}
        \Id & D\\
        0 & \Id
        \end{pmatrix},
        \qquad (x \mid z) \longmapsto (x \mid z + x D).
\end{equation*}
If $D \neq \Id$, the operator $S_D$ is generally not a local Clifford operation. When $D$ is symmetric with entries in $\F_q$, $S_D$ is implemented by $CZ$-type gates on the off-diagonal support of $D$ together with single-qudit phase gates on the diagonal; this is the phase-type fold-transversal construction of~\cite[\S2.4.2]{BreuckmannBurton2024}, realized for symmetric BB codes in~\cite{EberhardtSteffan2025}.
\begin{proposition}
    \label{prop:phase}
    Let $D$ be a coordinate map on $\F_q^{2\ell m}$. With the right-action convention and standard symplectic form $J = \begin{psmallmatrix} 0 & I \\ -I & 0 \end{psmallmatrix}$,
    \begin{equation*}
        S_D J S_D^\top = \begin{pmatrix}
            D^\top - D & I\\
            -I & 0
        \end{pmatrix},
    \end{equation*}
    so $S_D$ is symplectic if and only if $D^\top = D$.
    Assuming $D^\top = D$, the gate $S_D$ preserves the CSS stabilizer group if and only if $\mathrm{rowspace}(H_X D) \subseteq \mathrm{rowspace}(H_Z)$. 
    Since $\rank(H_X) = \rank(H_Z)$ for the two-block codes considered here, this inclusion becomes the equality $\mathrm{rowspace}(H_X D) = \mathrm{rowspace}(H_Z)$.
\end{proposition}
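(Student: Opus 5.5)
The plan is to compute the symplectic Gram matrix directly, and then translate stabilizer preservation into a statement about check row spaces.

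\textbf{Step 1: the symplectic condition.} Since $S_D$ is block upper-triangular with identity diagonal blocks, we have
\begin{equation*}
S_D^\top = \begin{pmatrix} I & 0 \\ D^\top & I \end{pmatrix}.
\end{equation*}
First multiply $S_D J = \begin{psmallmatrix} -D & I \\ -I & 0\end{psmallmatrix}$. Then right-multiply by $S_D^\top$. The upper-left entry becomes $-D + D^\top$, and the remaining blocks give $I$, $-I$, $0$. This yields the displayed matrix. Hence $S_D J S_D^\top = J$ exactly when $D^\top - D = 0$. Nothing beyond matrix bookkeeping is needed here.

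\textbf{Step 2: action on the stabilizer group.} Assume $D = D^\top$. The stabilizer row space is spanned by the $X$-type rows $(h \mid 0)$ with $h \in \mathrm{rowspace}(H_X)$ and the $Z$-type rows $(0 \mid h')$ with $h' \in \mathrm{rowspace}(H_Z)$. Under the right action, $(0 \mid h') \mapsto (0 \mid h')$, so the $Z$-type rows are fixed. On the other hand, $(h \mid 0) \mapsto (h \mid hD)$. This image lies in the stabilizer span iff $(0 \mid hD)$ does, since $(h \mid 0)$ already lies in it. A pure $Z$-type vector lies in the CSS stabilizer span iff its $z$-part lies in $\mathrm{rowspace}(H_Z)$. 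Hence $S_D$ maps the stabilizer group into itself iff $\mathrm{rowspace}(H_X D) \subseteq \mathrm{rowspace}(H_Z)$. Because $S_D$ is invertible (with inverse $S_{-D}$), mapping a finite subspace into itself already gives equality of the stabilizer group with its image. Phases are irrelevant, since we work modulo the Pauli phases in the symplectic picture.

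\textbf{Step 3: from inclusion to equality.} For the BB codes, the proof of Theorem~\ref{thm:BB-dimension} gives $\rank H_X = \rank H_Z = \dim_{\F_q}\langle a,b\rangle$. A coordinate map $D$ is invertible (a monomial map), so $\rank(H_X D) = \rank H_X = \rank H_Z$. An inclusion of subspaces of equal dimension is an equality.

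The only point needing care is the invertibility of $D$ in Step 3. The statement says ``coordinate map'', which I read as monomial. If $D$ were merely symmetric and possibly singular, the equality would fail, so I would state this hypothesis explicitly. Everything else is routine.
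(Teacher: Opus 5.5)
Your proposal is correct and follows essentially the same route as the paper: block multiplication for the symplectic Gram matrix, then tracking the $X$-rows $(h\mid 0)\mapsto(h\mid hD)$ while the $Z$-rows stay fixed, and finally the equal-rank argument. Your explicit remark that the last step needs $D$ to be invertible, so that $\rank(H_XD)=\rank(H_X)$, usefully spells out what the paper leaves implicit in the phrase ``coordinate map.''
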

\begin{proof}
    The displayed product follows by block multiplication, using $J^\top = -J$. Its off-diagonal blocks are always $\pm I$, so symplecticity is exactly the vanishing of the top-left block $D^\top - D$. An $X$-stabilizer row $(h_X \mid 0)$ maps to $(h_X \mid h_X D)$, which stays in the stabilizer group if and only if $h_X D \in \mathrm{rowspace}(H_Z)$ for every $h_X \in \mathrm{rowspace}(H_X)$. The $Z$-stabilizers are fixed by $S_D$. Equal ranks turn the inclusion into equality.
\end{proof}

\begin{remark}
    Over $\F_2$ the canonical duality $D_0$ is a symmetric involution, so it doubles as a valid $CZ$-type map and $S_{D_0}$ is automatically a phase-type automorphism. Over $\F_q$ this coincidence breaks in two ways. First, the duality $D_0$ is antisymmetric with $D_0^2 = -\Id$, so it fails the symmetry test of Proposition~\ref{prop:phase} and does not give a phase-type gate. Second, the unsigned swap $\begin{psmallmatrix} 0 & \sigma \\ \sigma & 0 \end{psmallmatrix}$ is symmetric, but a direct check shows $\mathrm{rowspace}(H_X D) \neq \mathrm{rowspace}(H_Z)$ once the sign in $H_Z$ is present, so it fails the stabilizer condition. A phase-type gate therefore requires a genuinely separate symmetric map $D$ solving $\mathrm{rowspace}(H_X D ) =\mathrm{rowspace}(H_Z)$, and its existence is a property of the individual code rather than an automatic consequence of the duality.
\end{remark}

The bare transversal phase gate $S^{\otimes 2 \ell m}$ corresponds to $D = \Id$, which is symmetric. It is valid only if $\mathrm{rowspace}(H_X) \subseteq \mathrm{rowspace}(H_Z)$, which in the equal-rank setting means $\mathrm{rowspace}(H_X) = \mathrm{rowspace}(H_Z)$. Together with CSS orthogonality this forces the $X$-stabilizer row space to be self-orthogonal. Thus bare transversal phase gates are highly constrained and should not be confused with the more general entangling $CZ$-type gates $S_D$.

\begin{example}
    \label{ex:aut-127}
 
We illustrate Theorem~\ref{thm:x-kernel-symmetries}, Corollary~\ref{cor:strict-multiplier}, and Theorem~\ref{thm:zx-always} on a BB code of length $N=2\cdot127=2\cdot(2^7-1)$ over the semisimple ring $R = \mathbb{F}_2[z]/\langle z^{127}-1 \rangle$. The 2-cyclotomic cosets modulo 127 consist of $C_0=\{0\}$ and eighteen cosets of size 7. Denoting each coset by its minimal representative, we partition the eighteen size-7 cosets into disjoint sets $\mathcal{T}, \mathcal{A}, \mathcal{B}, \mathcal{U}$:
\begin{align*}
    \mathcal{T} &= C_1 \cup C_3 \cup C_5 \cup C_7 \quad (|\mathcal{T}|=28), \\
    \mathcal{A} &= C_9 \cup C_{11} \quad (|\mathcal{A}|=14), \\
    \mathcal{B} &= C_{13} \cup C_{15} \cup C_{19} \cup C_{21} \quad (|\mathcal{B}|=28), \\
    \mathcal{U} &= C_{23} \cup C_{27} \cup C_{29} \cup C_{31} \cup C_{43} \cup C_{47} \cup C_{55} \cup C_{63} \quad (|\mathcal{U}|=56).
\end{align*}
We define the check polynomials $a(z)$ and $b(z)$ using the canonical monic cyclic-ideal generators satisfying $\langle a \rangle = \mathcal{C}(\mathcal{Z}_a)$ and $\langle b \rangle = \mathcal{C}(\mathcal{Z}_b)$, where $\mathcal{Z}_a = \mathcal{T} \sqcup \mathcal{A}$ and $\mathcal{Z}_b = \mathcal{T} \sqcup \mathcal{B}$. Since $|\mathcal{T}_{a,b}| = |\mathcal{T}| = 28$, Theorem~\ref{thm:BB-dimension} gives logical dimension $k = 2|\mathcal{T}| = 56$, yielding a $[\![254, 56, d]\!]$ code.

    \emph{(1) Permutation symmetries and the slope test:}
    The common zero-set stabilizer of $a$ and $b$ is the Frobenius multiplier group $\langle 2 \rangle = \{1, 2, 4, 8, 16, 32, 64\} \leq (\Z_{127})^\times$. 
    By Corollary~\ref{cor:strict-multiplier}, for a multiplier $\mu_s \in \langle 2 \rangle$ to be a check-module automorphism, it must satisfy the slope test on the coupled active-support region, $\mu_s(a)b = a\mu_s(b)$ in $R$. More generally, allowing a relative translation as in Theorem~\ref{thm:x-kernel-symmetries}(1), we test for the existence of a shift $j \in \Z_{127}$ such that
    \begin{equation*}
        z^j \mu_s(\iota(a)) \, \iota(b) = \mu_s(\iota(b)) \, \iota(a) \quad \text{in } R, 
    \end{equation*}
    where we apply the test to the check-module generators $(\iota(a), \iota(b))$ since $\mathcal{R}_X = \im(H_X^\top) = R \cdot (\iota(a), \iota(b))$.
   We compute the Hamming weights of both products across the Frobenius multipliers $s \in \langle 2 \rangle$:
    \begin{center}
        \begin{tabular}{c c c}
            \hline
            $s$ & $\wt(z^j \mu_s(\iota(a)) \, \iota(b))$ & $\wt(\mu_s(\iota(b)) \, \iota(a))$ \\
            \hline
            $1$ & $51$ & $51$ \\
            $2$ & $59$ & $75$ \\
            $4$ & $63$ & $63$ \\
            $8$ & $67$ & $55$ \\
            $16$ & $55$ & $67$ \\
            $32$ & $63$ & $63$ \\
            $64$ & $75$ & $59$ \\
            \hline
        \end{tabular}
    \end{center}
    Since cyclic shifts preserve Hamming weight, no shift $z^j$ can equate the polynomials whenever the two weights differ. For $s \in \{4, 32\}$, the weights coincide at $63$, but inspecting the gap structure between nonzero exponents confirms that no cyclic shift exists. Thus, while the multipliers $\mu_s$ stabilize the zero sets $\mathcal{Z}_a$ and $\mathcal{Z}_b$, no nonidentity multiplier survives the slope test.

    \emph{(2) Folded Hadamard-type duality:}
    By Theorem~\ref{thm:zx-always}, the signed block-swapping inversion $D_0$ is a $ZX$-duality unconditionally, so its lift $F_{D_0}$ (a folded Hadamard gate over $\F_2$) is a valid Clifford automorphism of $Q$ with logical action given by~\eqref{eq:logaction}. By contrast, a bare unpermuted transversal Hadamard requires $\mathrm{rowspace}(H_X) = \mathrm{rowspace}(H_Z)$. This is obstructed already at the level of spectral supports: because the common-zero region is not inversion-invariant ($-T \neq T$ modulo $127$), the supports of $\mathcal{R}_X$ and $\mathcal{R}_Z$ differ. Thus $Q$ supports the folded duality $F_{D_0}$ but not the unpermuted transversal Hadamard.

    \emph{(3) Phase-type gates:}
    Over $\F_2$, $D_0^2 = \Id$, so Proposition~\ref{prop:phase} implies that the entangling $CZ$-type phase operator $S_{D_0}$ is a valid Clifford automorphism of the code, acting as a $CZ$-layer along the two-cycles of $D_0$. The bare transversal phase gate $S^{\otimes 254}$ ($D = \Id$) is obstructed by $\mathrm{rowspace}(H_X) \neq \mathrm{rowspace}(H_Z)$. Furthermore, there is no symmetric block-swap phase gate arising from an order-two group automorphism $\omega \in \Aut(\Z_{127})$ that exchanges the zero sets: the only order-two multiplier modulo $127$ is $\mu_{-1}$, which cannot map $\mathcal{Z}_a$ to $\mathcal{Z}_b$ because $|\mathcal{A}| = 14 \neq 28 = |\mathcal{B}|$.
\end{example}

\section{Lifts \& Projections}
\label{sec:lifts-and-projections}
To close, we show how the idempotent framework is the natural language to describe lifts and projections of BB codes~\cite{symons2025sequences}. 
Throughout this section, $p$ denotes the prime characteristic of the underlying field $\F_q$; when $q=p$ is prime, the $p$-cyclotomic cosets analyzed below coincide exactly with the $q$-Frobenius orbits of the code.

Intuitively, lifting defines a map from a BB code on $R = \F_p[x, y] / \idlGens{x^\ell - 1, y^m - 1}$ to a BB code on a larger ring $R^\prime = \F_p[x, y] / \idlGens{x^{\ell^\prime} - 1, y^{m^\prime} - 1}$ and projection does the opposite. Such a map is constrained by the relationship of the orbits on $R$ and $R^\prime$. In order to define this mapping, we require $\ell^\prime = h \ell$ and $m^\prime = h m$. It suffices to study the behavior of the $p$-cyclotomic cosets modulo $n$ and $h n$. As before, we require $\gcd(p, h) = 1$.

Recall that the $p$-cyclotomic coset modulo $n$ containing a representative $s \in \Z_n$ is the orbit generated by multiplication by $p$:
\begin{equation*}
    C_s^{(n)} = \{ s \cdot p^t \pmod n \mid t \ge 0 \}.
\end{equation*}
Every element $v \in C_s^{(n)}$ has exactly $h$ distinct lifts in $\mathbb{Z}_{hn}$.

\begin{definition}
    The \emph{lifted set} of a $p$-cyclotomic coset modulo $n$, $C_s^{(n)}$, is
    \begin{equation*}
        L(C_s^{(n)}) = \{ u \in \mathbb{Z}_{hn} \mid u \equiv v \pmod n \text{ for some } v \in C_s^{(n)} \}.
    \end{equation*}
\end{definition}

\begin{theorem}
    \label{thm:lift}
    Retain the notation above and assume $\gcd(p, hn) = 1$. Fix a coset $C_s^{(n)}$ of size $\kappa = |C_s^{(n)}|$. Write $g = \gcd(s, n)$. We set $g = n$ when $s = 0$. Write $n' = n/g$. Let $c$ be the integer defined by $p^\kappa = 1 + c n'$. (This exists since the congruence $s p^\kappa \equiv s \pmod n$ holds true.) Then the following hold.

    (1) The lifted set $L(C_s^{(n)})$ is invariant under multiplication by $p$ modulo $hn$. It partitions into a disjoint union of $p$-cyclotomic cosets modulo $hn$.

    (2) Suppose $\gcd(h, n) = 1$. Then the sizes of the cosets in the partition of $L(C_s^{(n)})$ are governed by the $p$-cyclotomic cosets of $\mathbb{Z}_h$. A coset $C_r^{(h)} \subseteq \mathbb{Z}_h$ contributes cosets of $L(C_s^{(n)})$ of length $\mathrm{lcm}(|C_r^{(h)}|, \kappa)$. The number of such cosets of that length is $\gcd(|C_r^{(h)}|, \kappa)$.

    (3) Suppose $h$ divides $n$ and $h$ divides $n'$.
    Let $\delta = \gcd(c, h)$. Then $L(C_s^{(n)})$ partitions into exactly $\delta$ distinct $p$-cyclotomic cosets modulo $hn$. Each coset has size $\kappa h/\delta$.

    (4) Suppose $h$ divides $n$ with no further restriction on $g$. Define $\lambda \equiv p^\kappa \pmod h$ and $\tau \equiv c \cdot (s/g) \pmod h$. Let $T$ be the affine map from $\mathbb{Z}_h$ to $\mathbb{Z}_h$ defined by $T(j) = \lambda j + \tau$. Since $\gcd(p, h) = 1$ the value $\lambda$ is a unit modulo $h$ and $T$ is a bijection. The cosets of $L(C_s^{(n)})$ modulo $hn$ are in one-to-one correspondence with the cycles of $T$. A cycle of $T$ of length $d$ corresponds to a $p$-cyclotomic coset modulo $hn$ of size $\kappa d$. Concretely the cycle through $j \in \mathbb{Z}_h$ has length equal to the least $t > 0$ with $S_t \equiv 0 \pmod{m_j}$. Here $m_j = h / \gcd((\lambda - 1)j + \tau, h)$ and $S_t = 1 + \lambda + \dots + \lambda^{t-1}$.
\end{theorem}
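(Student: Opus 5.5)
The whole argument rests on one observation. The reduction map $\rho\colon \Z_{hn}\to\Z_n$, $u\mapsto u \bmod n$, is a ring homomorphism, so it commutes with multiplication by $p$. Since $L(C_s^{(n)})=\rho^{-1}(C_s^{(n)})$ and $C_s^{(n)}$ is $p$-invariant, part (1) follows immediately. A $p$-invariant subset of $\Z_{hn}$ is automatically a disjoint union of $p$-cyclotomic cosets modulo $hn$; here $p$ acts as a permutation because $\gcd(p,hn)=1$.

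For part (2) I would apply the Chinese Remainder Theorem to write $\Z_{hn}\cong\Z_h\times\Z_n$. Because $\gcd(h,n)=1$, this isomorphism is $p$-equivariant for the diagonal action, and $\rho$ becomes projection onto the second factor. Then
\[
L(C_s^{(n)})\cong \Z_h\times C_s^{(n)}=\bigsqcup_{r} C_r^{(h)}\times C_s^{(n)} .
\]
This is exactly the orbit computation already carried out in Lemma~\ref{lem:orbit-length-rigorous}. A point $(i,v)$ in $C_r^{(h)}\times C_s^{(n)}$ returns to itself after $t$ steps precisely when both $|C_r^{(h)}|$ and $\kappa$ divide $t$. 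So every orbit in this block has length $\mathrm{lcm}(|C_r^{(h)}|,\kappa)$. The block contains $|C_r^{(h)}|\,\kappa$ points, so it splits into
\[
\frac{|C_r^{(h)}|\,\kappa}{\mathrm{lcm}(|C_r^{(h)}|,\kappa)}=\gcd(|C_r^{(h)}|,\kappa)
\]
orbits, as claimed.

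For part (4), which implies (3), I would fix the fiber of $\rho$ over the representative $s$, namely $\{s+jn : j\in\Z_h\}$. Every coset contained in $L(C_s^{(n)})$ meets this fiber, because its image under $\rho$ is all of $C_s^{(n)}$. Moreover, $p^t u$ lies in the fiber again only when $\kappa\mid t$. Hence the first-return map of $p$ to the fiber is $p^\kappa$, and a coset meeting the fiber in a cycle of length $d$ has size $\kappa d$. The remaining task is to compute $p^\kappa$ on the fiber. Using $p^\kappa=1+cn'$ and $n's=(s/g)\,n$, we get
\[
p^\kappa(s+jn)=s+c\,(s/g)\,n+p^\kappa jn .
\]
Since $n\cdot(\text{anything})$ modulo $hn$ depends only on that quantity modulo $h$, this equals $s+T(j)\,n$ with $T(j)=\lambda j+\tau$. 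This gives the bijection between cosets and cycles of $T$, with sizes $\kappa d$. For the cycle length, iterate to get $T^t(j)=\lambda^t j+\tau S_t$. Then
\[
T^t(j)-j=S_t\bigl((\lambda-1)j+\tau\bigr),
\]
because $\lambda^t-1=(\lambda-1)S_t$. So $T^t(j)=j$ if and only if $S_t\equiv 0 \pmod{m_j}$, which is the stated criterion.

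Part (3) is then a specialization of (4). If $h\mid n'$, then $n'\mid p^\kappa-1$ forces $\lambda=1$, so $T$ is the translation $j\mapsto j+\tau$. Since $\gcd(s/g,n')=1$ and $h\mid n'$, the factor $s/g$ is a unit modulo $h$, hence $\gcd(\tau,h)=\gcd(c,h)=\delta$. A translation by $\tau$ on $\Z_h$ has $\delta$ cycles, each of length $h/\delta$, which gives $\delta$ cosets of size $\kappa h/\delta$.

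The main point needing care is the first-return claim in (4): that cosets in $L(C_s^{(n)})$ correspond bijectively to $T$-cycles on a single fiber, with size exactly $\kappa$ times the cycle length. I would verify this by noting that $\rho$ maps each coset onto $C_s^{(n)}$ with all fibers of equal size. That requires the degenerate case $s=0$, $g=n$, $n'=1$ to be handled consistently; there $\kappa=1$ and $\tau=0$. The second delicate point is the well-definedness of the reduction $p^\kappa jn \bmod hn$ through $\lambda$, which the formula above settles.
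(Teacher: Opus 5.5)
Your proposal is correct and follows the paper's proof essentially step for step: (1) by $p$-invariance of the preimage, (2) by the CRT orbit count on $C_r^{(h)}\times C_s^{(n)}$, (4) via the first-return map $p^\kappa$ on the base fiber $\{s+jn\}$, which yields $T(j)=\lambda j+\tau$ and the $S_t$ cycle criterion, and (3) as the translation case $\lambda\equiv 1$. Your explicit observation that every coset in $L(C_s^{(n)})$ surjects onto $C_s^{(n)}$, and therefore meets the base fiber, makes rigorous a point the paper leaves implicit; as in the paper, the computation in (4) never actually uses $h\mid n$.
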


Parts (2) and (3) and (4) provide a partial case analysis. Cases with $1 < \gcd(h,n) < h$ and $h$ not dividing $n$ are omitted. Part (2) is the coprime case $\gcd(h,n)=1$. Parts (3) and (4) assume $h$ divides $n$. Part (3) assumes $h$ divides $n'$. Part (4) gives the general law when $h$ divides $n$. When $h$ divides $n'$ the element $s/g$ is a unit modulo $h$. In this subcase $T$ is a pure translation and every cycle has the same length. When $\gcd(\lambda-1, h)=1$ the shift $\phi(j) = (\lambda-1)j+\tau$ is a bijection of $\mathbb{Z}_h$. Here $T$ is conjugate to pure multiplication $u \mapsto \lambda u$. The splitting of $L(C_s^{(n)})$ mirrors the $p^\kappa$ cyclotomic cosets of $\mathbb{Z}_h$ scaled by $\kappa$.

\begin{proof}
(1) Let $u \in L(C_s^{(n)})$. We have $u \equiv v \pmod n$ for some $v \in C_s^{(n)}$. Then $pu \equiv pv \pmod n$. We know $pv \bmod n \in C_s^{(n)}$ since $C_s^{(n)}$ is closed under multiplication by $p$. Hence $pu \bmod hn$ is again a lift of an element of $C_s^{(n)}$. This means $pu \bmod hn \in L(C_s^{(n)})$. Thus $L(C_s^{(n)})$ is invariant under multiplication by $p$ modulo $hn$. It partitions into $p$-cyclotomic cosets modulo $hn$.

(2) Since $\gcd(h, n) = 1$ the Chinese Remainder Theorem gives a ring isomorphism $\phi$ from $\mathbb{Z}_{hn}$ to $\mathbb{Z}_h \times \mathbb{Z}_n$ defined by $\phi(u) = (u \bmod h, u \bmod n)$. Multiplication by $p$ acts componentwise under this map. For a lift $u$ of $v \in C_s^{(n)}$ write $\phi(u) = (r, v)$ with $r \in \mathbb{Z}_h$. As $u$ ranges over $L(C_s^{(n)})$ the first coordinate ranges over all of $\mathbb{Z}_h$. The second coordinate ranges over $C_s^{(n)}$. The $p$-orbit of $(r, v)$ has length $\mathrm{lcm}(|C_r^{(h)}|, |C_v^{(n)}|) = \mathrm{lcm}(|C_r^{(h)}|, \kappa)$. The pairs whose first coordinate lies in a fixed $C_r^{(h)}$ split into $\gcd(|C_r^{(h)}|, \kappa)$ orbits of that common length. Summing over the cosets $C_r^{(h)}$ of $\mathbb{Z}_h$ gives the stated counts.

(4) We prove (4) first. Part (3) is the special case $\lambda \equiv 1$. The size of the coset $C_u^{(hn)}$ containing a lift $u$ is the least $M > 0$ with $u(p^M - 1) \equiv 0 \pmod{hn}$. Reducing modulo $n$ forces $s(p^M - 1) \equiv 0 \pmod n$. Since $p$ has order $\kappa$ on $C_s^{(n)}$ this requires $\kappa \mid M$. We write $M = \kappa t$. Thus $C_u^{(hn)}$ has size $\kappa$ times the number of steps of the induced action on the fiber of lifts.

Index the $h$ lifts of a fixed $v \in C_s^{(n)}$ by $j \in \mathbb{Z}_h$ via $u_j = v + jn$. Multiplication by $p^\kappa$ fixes each residue modulo $n$. It permutes the lifts. We compute this permutation on the level $v = s$. Using $p^\kappa = 1 + c n'$ together with $n'g = n$ and $s = g(s/g)$ we first observe $s p^\kappa = s + c n' s = s + c n' g (s/g) = s + c n (s/g)$. Therefore $p^\kappa u_j = p^\kappa (s + jn) = s p^\kappa + jn p^\kappa = s + c n (s/g) + jn p^\kappa = s + (p^\kappa j + c(s/g))n \pmod{hn}$.

The coefficient of $n$ appears only modulo $h$. Reducing the lift index modulo $h$ shows that multiplication by $p^\kappa$ sends the lift with index $j$ to the lift with index $j \mapsto \lambda j + \tau \pmod h$. Here $\lambda \equiv p^\kappa \pmod h$ and $\tau \equiv c(s/g) \pmod h$.

This computes the return map at the base level $s$. Starting from a lift of $s$ and applying multiplication by $p$ exactly $\kappa$ times returns to a lift of $s$. The lift index advances by $T(j) = \lambda j + \tau$. Consequently the full $p$-orbit modulo $hn$ of a lift of $s$ visits the $\kappa$ levels of $C_s^{(n)}$ once per pass. It closes after exactly $d$ passes through the base level. Thus the cosets of $L(C_s^{(n)})$ correspond to the cycles of the affine bijection $T(j) = \lambda j + \tau$ of $\mathbb{Z}_h$. A cycle of length $d$ yields one $p$-cyclotomic coset modulo $hn$ of total size $\kappa d$.

It remains to compute the cycle length through $j$. Iterating $T$ gives $T^t(j) = \lambda^t j + \tau S_t$ with $S_t = 1 + \lambda + \dots + \lambda^{t-1}$. This means $T^t(j) - j = (\lambda^t - 1)j + \tau S_t = S_t ((\lambda - 1)j + \tau) \pmod h$ using $\lambda^t - 1 = (\lambda - 1)S_t$. Define $\phi$ by setting $\phi(j) = (\lambda - 1)j + \tau$. Then $m_j = h / \gcd(\phi(j), h)$. The condition $T^t(j) = j$ is $S_t \phi(j) \equiv 0 \pmod h$. This means $S_t \equiv 0 \pmod{m_j}$. The cycle length is the least such $t$ as claimed.

(3) Assume $h \mid n'$. Since $s/g$ is coprime to $n' = n/g$ and every prime factor of $h$ divides $n'$ we have $\gcd(s/g, h) = 1$. This means $s/g$ is a unit modulo $h$. Moreover $\lambda \equiv p^\kappa = 1 + c n' \equiv 1 \pmod h$ since $h \mid n'$. Thus $T$ is the pure translation $j \mapsto j + \tau$ with $\tau \equiv c (s/g)$. Its cycles all have the common length $h / \gcd(\tau, h)$. There are $\gcd(\tau, h)$ of them. Since $s/g$ is a unit we have $\gcd(\tau, h) = \gcd(c(s/g), h) = \gcd(c, h) = \delta$. Hence $L(C_s^{(n)})$ splits into exactly $\delta$ cosets. Each has size $\kappa (h/\delta)$. Finally $\delta \kappa (h/\delta) = h\kappa = |L(C_s^{(n)})|$ confirming the count. When $\gcd(s, n) = 1$ we have $g = 1$ and $n' = n$. The hypothesis $h \mid n'$ holds automatically.
\end{proof}

\begin{remark}
The hypothesis $h \mid n^{\prime}$ in (3) is essential and cannot be derived from $h \mid n$ alone. 
For example, $p=2$, $n=15$, $h=5$, $s=5$ gives $\kappa=2$, $g=5$, $n^\prime=3$, so $h \nmid n^\prime$. 
Here, $L(C_5^{(15)})$ splits into three cosets of sizes $2, 4, 4$ modulo $75$, not one coset of size $10$. 
Part (4) covers these cases.
\end{remark}

The opposite operation to lifting is projection. Let
$\pi\colon \Z_{hn} \to \Z_n$ be the canonical reduction modulo $n$,
given by $\pi(u) = u \bmod n$.

\begin{definition}
    The \emph{projected set} of a $p$-cyclotomic coset modulo $hn$, $C_u^{(hn)}$, is
    \begin{equation*}
        \pi(C_u^{(hn)}) = \{ v \in \Z_n \mid v \equiv w \pmod n \text{ for some } w \in C_u^{(hn)} \}.
    \end{equation*}
\end{definition}

\begin{theorem}
\label{thm:project}
Retain the notation of Theorem~\ref{thm:lift} and assume $\gcd(p, hn) = 1$. Let $u \in \Z_{hn}$, and let $s = \pi(u) = u \bmod n$ be its base projection in $\Z_n$. Write $g = \gcd(s,n)$ (with $g = n$ when $s = 0$), $n^\prime = n/g$, and let $c$ be the integer defined by $p^{\kappa} = 1 + c\,n^\prime$. Let $K = \vert{}C_u^{(hn)}\vert{}$ be the size of the cover orbit and $\kappa = \vert{}C_s^{(n)}\vert{}$ the size of the base orbit. Then the following hold.
\begin{enumerate}
    \item[(1)] The projected set $\pi(C_u^{(hn)})$ is exactly the full $p$-cyclotomic coset $C_s^{(n)}$ modulo $n$. The size of the cover coset is an exact integer multiple of the base coset size, $K = w \cdot \kappa$ for some integer $w \ge 1$ called the \emph{wrapping multiplicity}. The restriction $\pi\vert{}_{C_u^{(hn)}}\colon C_u^{(hn)} \to C_s^{(n)}$ is a uniform $w$-to-$1$ covering map of orbits.

    \item[(2)] Suppose $\gcd(h, n) = 1$. Let $r = u \bmod h \in \Z_h$. Then the wrapping multiplicity is
    \begin{equation*}
        w = \frac{\operatorname{lcm}\!\big(|C_r^{(h)}|,\, \kappa\big)}{\kappa}.
    \end{equation*}

    \item[(3)] Suppose $h \mid n^\prime$. Let $\delta = \gcd(c, h)$. Then the wrapping multiplicity is the same for every cover coset in the fiber above $C_s^{(n)}$, namely the uniform constant $w = h / \delta$ 
    and $L(C_s^{(n)})$ splits into exactly $\delta$ such cover cosets. 

    \item[(4)] Suppose $h \mid n$, with no further restriction on $g$. Write the cover element as $u = s + j n$ for some $j \in \Z_h$. Then $w$ is exactly the cycle length of the lift index $j$ under the affine return map $T(j) = \lambda j + \tau \pmod h$, with $\lambda \equiv p^{\kappa}$ and $\tau \equiv c\,(s/g) \pmod h$, defined in Theorem~\ref{thm:lift}(4). Explicitly, $w$ is the least $t > 0$ with $S_t \equiv 0 \pmod{m_j}$, where $m_j = h/\gcd\!\big((\lambda-1)j + \tau,\, h\big)$ and $S_t = 1 + \lambda + \dots + \lambda^{t-1}$. 
\end{enumerate}
\end{theorem}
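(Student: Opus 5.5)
The plan is to derive everything from Theorem~\ref{thm:lift} by viewing projection as the inverse of lifting. For (1), first observe that $\pi$ is $p$-equivariant: $\pi(pw \bmod hn) = p\,\pi(w) \bmod n$, because $n \mid hn$. Starting from $\pi(u)=s$, every element $p^t u$ of the cover coset therefore projects to $p^t s \in C_s^{(n)}$, which gives $\pi(C_u^{(hn)}) \subseteq C_s^{(n)}$. Conversely, every $p^t s$ arises as $\pi(p^t u)$, so the projection is onto the whole coset.

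To get the uniform fibers, list the cover orbit as $u, pu, \dots, p^{K-1}u$. Reduced mod $n$, this sequence is $s, ps, \dots$, which is periodic with exact period $\kappa$. By the first step of the proof of Theorem~\ref{thm:lift}(4), the least $M$ with $p^M u \equiv u \pmod{hn}$ must be a multiple of $\kappa$, so $K = w\kappa$. The $K$ orbit elements then split into $\kappa$ residue classes of $t$ mod $\kappa$, each of size $w$. All elements in one class share the same projection, and distinct classes have distinct projections. Hence $\pi|_{C_u^{(hn)}}$ is exactly $w$-to-$1$.

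Parts (2)--(4) then reduce to computing $w = K/\kappa$ using the coset sizes already found in Theorem~\ref{thm:lift}.
\begin{itemize}
\item For (2), use the CRT identification $u \leftrightarrow (r,s)$ from the proof of Theorem~\ref{thm:lift}(2). The orbit length of $(r,s)$ is $\operatorname{lcm}(|C_r^{(h)}|,\kappa)$; dividing by $\kappa$ gives the formula.
\item For (4), note that $u = s + jn$ is a lift of the base level $s$. By Theorem~\ref{thm:lift}(4), its coset has size $\kappa d$, where $d$ is the length of the $T$-cycle through $j$. Hence $w = d$, and the explicit characterization via $S_t$ and $m_j$ carries over verbatim.
\item For (3), observe that $h \mid n'$ forces $h \mid n$, so (4) applies. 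As in the proof of Theorem~\ref{thm:lift}(3), $T$ is then a translation by $\tau$ with $\gcd(\tau,h) = \delta$. Every cycle has length $h/\delta$, so $w = h/\delta$ uniformly, and there are $\delta$ cover cosets.
\end{itemize}

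The main point needing care is (4): the return-map analysis in Theorem~\ref{thm:lift} was carried out at the base level $v = s$, and we must make sure $u$ is indexed at that level. This holds because $s$ is defined as $u \bmod n$, so $u = s + jn$ sits directly over the representative $s$ and no transport between levels is needed. The sign convention for $j \in \Z_h$ should also be checked against the one used in Theorem~\ref{thm:lift}(4).
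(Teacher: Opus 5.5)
Your proposal is correct and follows essentially the same route as the paper: you prove (1) via the $p$-equivariance of reduction modulo $n$ together with $\kappa \mid K$, and you prove (2)--(4) by reading off $w = K/\kappa$ from the CRT orbit length and from the affine return map $T$ of Theorem~\ref{thm:lift}(4). The differences are cosmetic. You obtain (3) directly from the translation case of (4), whereas the paper cites Theorem~\ref{thm:lift}(3), whose proof does the same thing. Your grouping of the exponents $t$ by residue modulo $\kappa$ also makes the uniform $w$-to-$1$ claim slightly more explicit than the paper's ``traverses the base coset $w$ times'' phrasing.
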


\begin{proof}
\emph{(1)} By definition, the elements of $C_u^{(hn)}$ are generated by $u p^t \bmod hn$ for $t \ge 0$. Projecting gives $\pi(u p^t \bmod hn) = u p^t \bmod n$. Since $u \equiv s \pmod n$, this reduces to $s p^t \bmod n$. As $t$ ranges over $\Z_{\ge 0}$, $s p^t \bmod n$ sweeps out exactly the base orbit $C_s^{(n)}$. Moreover $K$ is the least positive integer with $u p^K \equiv u \pmod{hn}$. Reducing modulo $n$ gives $s p^K \equiv s \pmod n$, so the fundamental period $\kappa$ of $s$ divides $K$; write $K = w \cdot \kappa$. Since multiplication by $p$ is invertible, the orbit traverses the base coset exactly $w$ times before closing, so exactly $w$ distinct elements of $C_u^{(hn)}$ map to each element of $C_s^{(n)}$.

\emph{(2)} By the Chinese Remainder Theorem, $u \leftrightarrow (r, s) \in \Z_h \times \Z_n$, and multiplication by $p$ acts componentwise. The length of the $p$-orbit of $(r,s)$ is the $\operatorname{lcm}$ of the component orbit lengths, so $K = \operatorname{lcm}\!\big(\vert{}C_r^{(h)}\vert{}, \kappa\big)$. Since $K = w \cdot \kappa$, solving for $w$ gives the stated fraction. This is the projection dual of Theorem~\ref{thm:lift}(2).

\emph{(3)} Since $C_u^{(hn)} \subseteq L(C_s^{(n)})$, Theorem~\ref{thm:lift}(3) applies: under $h \mid n^{\prime}$, the set $L(C_s^{(n)})$ partitions into exactly $\delta = \gcd(c,h)$ cover cosets, each of size $K = \kappa \cdot (h/\delta)$, independently of the lift chosen. Substituting $K = w \cdot \kappa$ yields $w = h/\delta$, uniformly over the fiber.

\emph{(4)} By Theorem~\ref{thm:lift}(4), applying $p^{\kappa}$ to the lift $u_j = s + jn$ returns an element in the same fiber above $s$ and updates the lift index via $j \mapsto T(j) = \lambda j + \tau \pmod h$. The $p$-orbit modulo $hn$ visits the $\kappa$ levels of $C_s^{(n)}$ once per pass and closes only when the lift index returns to itself, after exactly $w$ passes, where $w$ is the cycle length of $j$ under $T$. Hence $K = w \cdot \kappa$, and the closed form for $w$ via $m_j$ and $S_t$ is that of Theorem~\ref{thm:lift}(4).
\end{proof}

Parts (1)--(4) of Theorem~\ref{thm:project} transfer verbatim to the group algebra once we track orbit sums
rather than idempotents. For a $p$-cyclotomic coset $C_u^{(hn)}$ define its orbit sum
$\sigma_u = \sum_{z \in C_u^{(hn)}} x^{z} \in \widetilde{R}$, and let
$\rho\colon \widetilde{R} = \F_p[x]/\idlGens{x^{hn}-1} \to R = \F_p[x]/\idlGens{x^{n}-1}$ be the canonical
ring projection $x \mapsto x$ (equivalently $x^z \mapsto x^{z \bmod n}$).

\begin{proposition}
    \label{prop:orbitsum}
    With the notation of Theorem~\ref{thm:project},
    \begin{equation*}
        \rho(\sigma_u) \;=\; w \cdot \sigma_s ,
        \qquad \sigma_s = \sum_{v \in C_s^{(n)}} x^{v}.
    \end{equation*}
\end{proposition}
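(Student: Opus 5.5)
The plan is a direct computation from the definition of $\rho$, organized around the uniform $w$-to-$1$ covering statement of Theorem~\ref{thm:project}(1).

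First, apply $\rho$ term by term using linearity. Since $\rho(x^z)=x^{z\bmod n}=x^{\pi(z)}$, we get
\begin{equation*}
\rho(\sigma_u)=\sum_{z\in C_u^{(hn)}} x^{\pi(z)}=\sum_{v\in \Z_n}\#\{z\in C_u^{(hn)} : \pi(z)=v\}\, x^{v}.
\end{equation*}
The whole statement therefore reduces to computing the fiber cardinalities of $\pi$ restricted to $C_u^{(hn)}$.

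Second, invoke Theorem~\ref{thm:project}(1). It states that $\pi(C_u^{(hn)})=C_s^{(n)}$, so the fiber count is zero for $v\notin C_s^{(n)}$. It also states that $\pi|_{C_u^{(hn)}}$ is uniformly $w$-to-$1$, so the count equals $w$ for each $v\in C_s^{(n)}$. Substituting these counts gives $\rho(\sigma_u)=\sum_{v\in C_s^{(n)}} w\,x^v=w\cdot\sigma_s$, where the integer $w$ acts through its image in $\F_p$.

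The only point needing care is the uniformity of the fibers, which is the main obstacle if one does not simply cite Theorem~\ref{thm:project}(1). I would re-derive it as a short check rather than rely on the informal phrase ``traverses the base coset $w$ times.'' Parametrize the orbit as $z_t=up^t \bmod hn$ for $0\le t<K=w\kappa$; these $K$ elements are distinct. Then $\pi(z_t)=sp^t\bmod n$ depends only on $t\bmod \kappa$, and since $\kappa$ is the exact period of $s$, the values $sp^0,\dots,sp^{\kappa-1}$ are distinct mod $n$. Hence each $v\in C_s^{(n)}$ equals $sp^{t_0}$ for exactly one residue $t_0\bmod\kappa$. It is hit exactly by those $t\in[0,K)$ with $t\equiv t_0 \pmod \kappa$, and there are exactly $K/\kappa=w$ of them.

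Finally, I would note as a remark, not as part of the proof, that the identity holds in $\F_p$. So $\rho(\sigma_u)=0$ whenever $p\mid w$, and the projection of an orbit sum can vanish. This is consistent with the statement, since $w\cdot\sigma_s$ is read with $w$ reduced modulo $p$. Parts (2)--(4) of Theorem~\ref{thm:project} then supply explicit values of $w$ in the various cases.
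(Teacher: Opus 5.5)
Your proposal is correct and matches the paper's proof. Like the paper, you expand $\rho(\sigma_u)=\sum_{z\in C_u^{(hn)}}x^{z \bmod n}$ and group equal terms using the uniform $w$-to-$1$ fiber count from Theorem~\ref{thm:project}(1). Your re-derivation of that count via $z_t=up^t$, $0\le t<w\kappa$, is a welcome tightening of the paper's informal ``traverses the base coset $w$ times'' step, and your remark about reducing $w$ modulo $p$ matches the paper's closing observation.
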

\begin{proof}
    Since $\rho$ sends $x^z \mapsto x^{z \bmod n}$,
    $\rho(\sigma_u) = \sum_{z \in C_u^{(hn)}} x^{z \bmod n}$. By Theorem~\ref{thm:project}(1), the multiset
    $\{z \bmod n : z \in C_u^{(hn)}\}$ covers each element of $C_s^{(n)}$ exactly $w$ times, so
    grouping equal terms gives $\rho(\sigma_u) = w \sum_{v \in C_s^{(n)}} x^{v} = w\, \sigma_s$.
    Over $\F_p$ the coefficient $w$ is reduced modulo $p$, so the image is zero if and only if $p \mid w$.
\end{proof}

Lifting and projection are not mutually inverse at the level of individual cosets. A base coset generally
lifts to a union of several cover cosets and each cover coset projects back onto the whole base coset. The next
result makes this fiber structure precise and shows it is exactly accounted for by the wrapping multiplicities.

\begin{theorem}
\label{thm:liftproject}
Assume $\gcd(p,hn)=1$. Fix a base coset $C_s^{(n)}$ of size $\kappa$, and recall that
$L(C_s^{(n)}) = \{\, z \in \Z_{hn} : z \bmod n \in C_s^{(n)} \,\} = \pi^{-1}\!\big(C_s^{(n)}\big)$ is its full
preimage in $\Z_{hn}$. Then:
\begin{enumerate}
    \item[(a)] \emph{(Projection is a left inverse of lifting.)} Every $p$-cyclotomic coset
    $C \subseteq L(C_s^{(n)})$ of $\Z_{hn}$ satisfies $\pi(C) = C_s^{(n)}$; conversely each such $C$ arises as a lift
    of $C_s^{(n)}$. In particular $\pi\big(L(C_s^{(n)})\big) = C_s^{(n)}$.

    \item[(b)] \emph{(Fiber decomposition.)} The preimage $L(C_s^{(n)})$ decomposes into $r$ disjoint cover cosets
    $C^{(1)},\dots,C^{(r)}$ with wrapping multiplicities $w_1,\dots,w_r$, and
    \begin{equation*}
        |L(C_s^{(n)})| = h\,\kappa, \qquad \sum_{i=1}^{r} w_i = h .
    \end{equation*}

    \item[(c)] \emph{(Orbit-sum conservation.)} At the level of group-algebra orbit sums, projection is
    additive over the fiber and returns the base orbit sum with total weight $h$:
    \begin{equation*}
        \sum_{i=1}^{r} \rho\big(\sigma_{C^{(i)}}\big)
        \;=\; \Big(\sum_{i=1}^{r} w_i\Big)\,\sigma_s
        \;=\; h\,\sigma_s ,
    \end{equation*}
    with $\rho$ and $\sigma_{(\cdot)}$ as in Proposition~\ref{prop:orbitsum}.
\end{enumerate}
\end{theorem}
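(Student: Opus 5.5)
Throughout, $\pi\colon\Z_{hn}\to\Z_n$ is reduction mod $n$, and $L(C_s^{(n)})=\pi^{-1}(C_s^{(n)})$. I will first show this preimage is $p$-stable, then read off (a)--(c) from Theorem~\ref{thm:project}(1) and Proposition~\ref{prop:orbitsum}.

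\textbf{Part (a).} By Theorem~\ref{thm:lift}(1), $L(C_s^{(n)})$ is closed under multiplication by $p$ modulo $hn$. Since multiplication by $p$ is a bijection of $\Z_{hn}$, the preimage is a disjoint union of $p$-cyclotomic cosets. Take any such coset $C=C_u^{(hn)}$ inside it. Then $u\bmod n\in C_s^{(n)}$, so $C_{u\bmod n}^{(n)}=C_s^{(n)}$, and Theorem~\ref{thm:project}(1) gives $\pi(C)=C_s^{(n)}$. The converse is that every coset $C$ with $\pi(C)=C_s^{(n)}$ lies in $\pi^{-1}(C_s^{(n)})=L(C_s^{(n)})$, which is immediate from the definition of the preimage. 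Taking the union over the cosets in the fiber gives $\pi(L(C_s^{(n)}))=C_s^{(n)}$.

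\textbf{Part (b).} Every $v\in\Z_n$ has exactly $h$ preimages $v+jn$, $j\in\Z_h$, so $|L(C_s^{(n)})|=h\kappa$. Write $L(C_s^{(n)})=C^{(1)}\sqcup\dots\sqcup C^{(r)}$. By Theorem~\ref{thm:project}(1), each $\pi|_{C^{(i)}}$ is uniformly $w_i$-to-$1$ onto $C_s^{(n)}$, so $|C^{(i)}|=w_i\kappa$. Summing gives $h\kappa=\kappa\sum_i w_i$, and since $\kappa\ge1$ we get $\sum_i w_i=h$. Equivalently, fix one $v\in C_s^{(n)}$: its $h$ preimages are split among the $C^{(i)}$ with exactly $w_i$ in each.

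\textbf{Part (c).} Since $\rho$ is $\F_p$-linear, Proposition~\ref{prop:orbitsum} applied to each $C^{(i)}$ gives
\[
\sum_i\rho(\sigma_{C^{(i)}})=\sum_i w_i\,\sigma_s=h\,\sigma_s,
\]
using (b). As in Proposition~\ref{prop:orbitsum}, the coefficients are read in $\F_p$. There is also a direct check that avoids (b): $\sum_i\sigma_{C^{(i)}}=\sum_{z\in L}x^z$, and $\rho$ sends each of the $h$ preimages of $v$ to $x^v$, which again yields $h\,\sigma_s$.

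The only step needing care is the uniformity of the covering in (b). It is inherited from Theorem~\ref{thm:project}(1), which in turn uses that $p$ acts invertibly; this is guaranteed by the hypothesis $\gcd(p,hn)=1$, so no case analysis on $\gcd(h,n)$ is required.
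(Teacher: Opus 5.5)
Your proof is correct and follows the paper's argument essentially step for step. For (a) you use $p$-invariance of the preimage together with Theorem~\ref{thm:project}(1); for (b) you sum $|C^{(i)}| = w_i\kappa$ against $|L(C_s^{(n)})| = h\kappa$; for (c) you apply Proposition~\ref{prop:orbitsum} to each cover coset and sum. Your extra direct check of (c), via $\sum_i \sigma_{C^{(i)}} = \sum_{z \in L(C_s^{(n)})} x^z$ and the fact that $\rho$ is $h$-to-$1$ on exponents, is a nice independent confirmation that bypasses (b) and the uniform-covering claim altogether.
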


\begin{proof}
\begin{enumerate}
    \item[(a)] If $C \subseteq L(C_s^{(n)})$ is a $p$-coset then, being $p$-invariant and lying over $C_s^{(n)}$,
Theorem~\ref{thm:project}(1) gives $\pi(C)=C_s^{(n)}$; and any lift of $C_s^{(n)}$ is by definition a
$p$-coset contained in $L(C_s^{(n)})$.
\item[(b)] Since $\pi$ is $h$-to-$1$ and $\vert{}C_s^{(n)}\vert{}=\kappa$, we have $\vert{}L(C_s^{(n)})\vert{} = h \kappa$. The set $L(C_s^{(n)})$ is a union of
$p$-cosets (it is $p$-invariant, as $\pi$ intertwines multiplication by $p$), say
$C^{(1)},\dots,C^{(r)}$, with $\sum_i \vert{}C^{(i)}\vert{} = h\kappa$. By Theorem~\ref{thm:project}(1),
$\vert{}C^{(i)}\vert{} = w_i \kappa$, so $\sum_i w_i \kappa = h\kappa$, giving $\sum_i w_i = h$.
\item[(c)] Apply Proposition~\ref{prop:orbitsum} to each $C^{(i)}$ and sum, then use part (b).
\end{enumerate}
\end{proof}

\begin{remark}
Theorem~\ref{thm:liftproject}(b) is the exact ``conservation law'' behind lifting in the sense that the $h$-fold cover of
$\Z_n$ by $\Z_{hn}$ is partitioned among the cover cosets so that their wrapping multiplicities always sum to
$h$, independently of $p$. The two extreme cases are given by a single fully-wrapped coset ($r=1$, $w_1=h$) and a
maximal split into $h$ unwrapped copies ($r=h$, all $w_i=1$). Part (2) of Theorem~\ref{thm:project} pins down
which case occurs when $\gcd(h,n)=1$, and part (4) does so when $h \mid n$.
\end{remark}

The relation $\rho(\sigma_u) = w\,\sigma_s$ is an identity between orbit sums, not
idempotents. An orbit sum $\sigma_u$ is in general not an idempotent.
For example, for $p=3$, $m=8$ the
coset $C_1^{(8)} = \{1,3\}$ has $\sigma^2 \neq \sigma$, and for $p=2$, $m=7$ the coset $\{1,2,4\}$ likewise
fails.
Consequently the formula $\rho(\widetilde{e}_u) = w\, e_s$ cannot hold for idempotents.
The map $\rho$ is a ring homomorphism, so it maps every idempotent to an idempotent, and $w\, e_s$ is idempotent
only when $w \equiv 0$ or $1 \pmod p$. The correct statement for  idempotents is
Proposition~\ref{prop:idem} below, whose survival criterion is a spectral-support condition, not
$p \nmid w$.

\begin{proposition}
\label{prop:idem}
Assume $\gcd(p, hn) = 1$. Primitive idempotents of $\widetilde{R}$ (resp.\ $R$) are
indexed by the $p$-cyclotomic cosets of $\Z_{hn}$ (resp.\ $\Z_{n}$) in the
frequency (spectral) domain via the discrete Fourier correspondence. Let
$\widetilde{e}_O$ be the primitive idempotent of $\widetilde{R}$ attached to the frequency $p$-Frobenius orbit
$O \subseteq \Z_{hn}$. Then the following assertions hold true.
\begin{enumerate}
    \item[(1)] $\rho(\widetilde{e}_O)$ is again an idempotent of $R$. It is a sum of primitive idempotents.
    If $\rho(\widetilde e_O)=w e_{O'}\neq 0$ for some $p$-Frobenius orbit $O'\subseteq\Z_n$, then $w\equiv 1\pmod p$.

    \item[(2)] $\rho(\widetilde{e}_O) \neq 0$ if and only if $O \subseteq h\,\Z_{hn}$. Equivalently, the projection annihilates precisely
    those spectral components whose frequency is not divisible by $h$ (the ``fold-away'' frequencies).

    \item[(3)] The surviving primitive idempotents of $\widetilde{R}$ are in bijection, via
    $O \mapsto \tfrac1h O$ and $\rho$, with the primitive idempotents of $R$. Hence $\rho$ restricts to a
    surjective ring map onto $R$ realized on the spectral support $h\,\Z_{hn} \cong \Z_n$.
\end{enumerate}
\end{proposition}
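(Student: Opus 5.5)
The plan is to compute $\rho$ entirely in the spectral domain, using the evaluation isomorphism from Section~\ref{sec:2dcyclic} in the univariate case ($m=1$). Fix a splitting field $K$ containing a primitive $hn$-th root of unity $\alpha$, and set $\beta=\alpha^h$, which is a primitive $n$-th root of unity. The key observation is that for every $f\in\F_p[x]$ and every $j\in\Z_n$,
\begin{equation*}
\rho(f)(\beta^j)=f(\beta^j)=f(\alpha^{hj}).
\end{equation*}
This is well defined because $\beta^j$ is a root of $x^n-1$, which divides $x^{hn}-1$. Thus, after extending scalars to $K$, the map $\rho$ is simply restriction of functions from the root grid $\{\alpha^u\}_{u\in\Z_{hn}}$ to the subgrid $\{\alpha^{hj}\}_{j\in\Z_n}$. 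The map $j\mapsto hj$ identifies $\Z_n$ with $h\Z_{hn}$, and it commutes with multiplication by $p$.

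With this in hand, I evaluate $\rho(\widetilde e_O)$ pointwise. By \eqref{2d_idemp_prop} in the one-variable form, $\widetilde e_O(\alpha^u)$ equals $1$ if $u\in O$ and $0$ otherwise. Hence $\rho(\widetilde e_O)(\beta^j)$ equals $1$ exactly when $hj\in O$. Since the evaluation map on $R_K$ is an isomorphism (Lemma~\ref{lem:fourier-eij}), this gives
\begin{equation*}
\rho(\widetilde e_O)=\sum_{j:\,hj\in O} e_j .
\end{equation*}
I then use that an orbit $O\subseteq\Z_{hn}$ lies either entirely inside $h\Z_{hn}$ or entirely outside it. Indeed, $\gcd(p,h)=1$ gives $h\mid pu \iff h\mid u$, so divisibility by $h$ is preserved along $p$-orbits.

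Consequently, if $O\not\subseteq h\Z_{hn}$, the sum above is empty and $\rho(\widetilde e_O)=0$. If $O\subseteq h\Z_{hn}$, then $O'=\tfrac1h O$ is a single $p$-orbit of $\Z_n$ and $\rho(\widetilde e_O)=e_{O'}$. This proves (2), and it proves (1) with the sharper conclusion $w=1$, which trivially satisfies $w\equiv1\pmod p$. The idempotence of $\rho(\widetilde e_O)$ also follows abstractly, because $\rho$ is a ring homomorphism.

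For (3), the correspondence $O\mapsto \tfrac1h O$ is a bijection between orbits contained in $h\Z_{hn}$ and orbits of $\Z_n$, because multiplication by $h$ is a $p$-equivariant bijection $\Z_n\to h\Z_{hn}$. Every $e_{O'}$ is therefore hit exactly once. Summing over all $O$ and using $\sum_O\widetilde e_O=1$ recovers $\rho(1)=\sum_{O'}e_{O'}=1$. Surjectivity of $\rho$ is also clear directly, since $x\mapsto x$.

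The main obstacle is rigor in the scalar extension: I must justify that $\rho\otimes K$ is compatible with the evaluation isomorphisms, i.e.\ that the square with $\operatorname{ev}$ on both rings and restriction to the subgrid commutes. I would verify this on monomials: $x^z\mapsto x^{z\bmod n}$ evaluates at $\beta^j$ to $\beta^{jz}=\alpha^{hjz}$, which is the value of $x^z$ at $\alpha^{hj}$. Linearity then extends it to all of $\widetilde R_K$. Finally, $\rho(\widetilde e_O)$ lies in $R$ rather than only in $R_K$ because $\rho$ is defined over $\F_p$, and descent (Lemma~\ref{lem:fixed-points-RK}) is then automatic.
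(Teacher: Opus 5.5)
Your proof is correct, but it takes a different (dual) route from the paper's. The paper works in the coefficient domain. It applies $\rho$ to the explicit Fourier expansion $\widetilde\varepsilon_j=\frac{1}{hn}\sum_z\zeta^{-jz}x^z$ and splits $z=cn+v$ into sheet and base coordinates. It then evaluates the geometric sum $\sum_{c=0}^{h-1}(\zeta^{-jn})^c$, which equals $h$ when $h\mid j$ and $0$ otherwise. The surviving factor $h$ cancels against the normalization $1/(hn)$ and leaves exactly $\varepsilon_{j/h}$.

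You instead observe that $\rho\otimes K$ is just restriction of functions from the grid $\{\alpha^u\}_{u\in\Z_{hn}}$ to the subgroup $\{\alpha^{hj}\}_{j\in\Z_n}$, which you correctly check on monomials. So $\rho(\widetilde e_O)$ is the restricted indicator function of $O$. The fold-away criterion, the bijection $O\mapsto\tfrac1h O$, and the sharper conclusion $w=1$ then follow with no normalizing constants to track.

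What each buys:
\begin{itemize}
\item The paper's computation shows how the $h$-fold folding over sheets is absorbed by the idempotent normalization. This sets the idempotents against the unnormalized orbit sums of Proposition~\ref{prop:orbitsum}, which keep a wrapping multiplicity $w$.
\item Your evaluation argument is shorter and needs only the evaluation isomorphism and its compatibility with $\rho$. It is also the same mechanism the paper uses later in the proof of Theorem~\ref{thm:lifted-bb-distance}. It therefore extends directly to the two-variable uniform cover $(\ell,m)\mapsto(s\ell,sm)$, with the subgrid $G_s$ in place of $h\Z_{hn}$.
\end{itemize}
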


\begin{proof}
    \emph{(1)} Since $\rho\colon \widetilde{R} \to R$ is an $\F_p$-algebra homomorphism, it preserves both multiplication and addition. Therefore, $\rho(\widetilde{e}_O)^2 = \rho(\widetilde{e}_O^2) = \rho(\widetilde{e}_O)$. Since the image is idempotent in $R$, and $R$ is semisimple, $\rho(\widetilde{e}_O)$ must uniquely decompose into a sum of primitive orbit idempotents of $R$. 
    It can never equal a scalar multiple $w e_s$ for $w \not\equiv 1 \pmod p$. The element $w e_s$ is not an idempotent over $\F_p$ unless $w \equiv 1 \pmod p$.

    \emph{(2) and (3)} To see exactly which idempotents survive, we pass to the splitting field $K$ containing a primitive $hn$-th root of unity, $\zeta$. The primitive idempotents of $\widetilde{R} \otimes K$ are indexed by frequencies $j \in \Z_{hn}$ via the discrete Fourier transform:
    \begin{equation*}
        \widetilde{\varepsilon}_j = \frac{1}{hn} \sum_{z=0}^{hn-1} \zeta^{-jz} x^z.
    \end{equation*}
    Applying the ring projection $\rho(x^z) = x^{z \bmod n}$, we evaluate the image of $\widetilde{\varepsilon}_j$. We can uniquely decompose the summation index as $z = c n + v$, where $0 \le v < n$ tracks the base coordinate and $0 \le c < h$ tracks the cover sheet. Substituting this yields:
    \begin{align*}
        \rho(\widetilde{\varepsilon}_j) 
        &= \frac{1}{hn} \sum_{v=0}^{n-1} \sum_{c=0}^{h-1} \zeta^{-j(cn+v)} x^v \\
        &= \frac{1}{hn} \sum_{v=0}^{n-1} \zeta^{-jv} x^v \left( \sum_{c=0}^{h-1} (\zeta^{-jn})^c \right).
    \end{align*}
    The inner sum over $c$ is a geometric series. Since $\zeta$ is a primitive $hn$-th root of unity, $\zeta^{-jn}$ is an $h$-th root of unity. Thus, the inner sum equals $h$ if $h \mid j$, and equals $0$ otherwise. 
    
    If $h \nmid j$, the entire expression vanishes, meaning $\rho(\widetilde{\varepsilon}_j) = 0$. If $h \mid j$, we can write $j = h \cdot i$ for some $i \in \Z_n$. The roots simplify since $\zeta^{-jv} = \zeta^{-hiv} = (\zeta^h)^{-iv}$. Noting that $\xi = \zeta^h$ is a primitive $n$-th root of unity, the projection collapses perfectly to the base idempotent:
    \begin{equation*}
        \rho(\widetilde{\varepsilon}_{hi}) = \frac{h}{hn} \sum_{v=0}^{n-1} \xi^{-iv} x^v = \frac{1}{n} \sum_{v=0}^{n-1} \xi^{-iv} x^v = \varepsilon_i.
    \end{equation*}
    Finally, we return to the base field $\F_p$. The primitive orbit idempotent $\widetilde{e}_O$ is the sum of the splitting-field idempotents over the $p$-Frobenius orbit $O \subseteq \Z_{hn}$. Since $\gcd(p, hn) = 1$, multiplication by $p$ preserves divisibility by $h$. Therefore, the orbit $O$ either consists entirely of multiples of $h$, or it contains no multiples of $h$ whatsoever. 
    
    If $O \not\subseteq h\Z_{hn}$, every frequency in the orbit folds away to zero, so $\rho(\widetilde{e}_O) = 0$. If $O \subseteq h\Z_{hn}$, the orbit corresponds exactly to a $p$-orbit $O/h$ in $\Z_n$, and $\rho(\widetilde{e}_O) = \sum_{i \in O/h} \varepsilon_i = e_{O/h}$. This establishes both the survival criterion (2) and the explicit bijection (3).
\end{proof}

Proposition~\ref{prop:orbitsum} and Proposition~\ref{prop:idem} together
give the algebraic mechanism behind the vanishing of projected logical operators. More precisely, 
they explain the situation that when a cover-code logical
operator is projected via the chain map to the base code~\cite{symons2025sequences}, an orbit-sum
representative is scaled by the wrapping multiplicity $w$ (vanishing if and only if $p \mid w$), while the
corresponding idempotent survives if and only if its spectral support is divisible by $h$.
\medskip

\subsection{Logical Dimension of Covering Codes}
\label{subsec:cover-dimension-resolution}

In recent work, Symons, Rajput, and Browne~\cite{symons2025sequences} explored infinite sequences of bivariate bicycle codes generated by covering graphs. Given a base BB code $Q = \mathrm{BB}(a, b)$ with Tanner graph $T(Q)$, an $h$-cover code $\tilde{Q} = \mathrm{BB}(\tilde{a}, \tilde{b})$ is a BB code whose Tanner graph $T(\tilde{Q})$ is an $h$-sheeted covering graph of $T(Q)$. 

In the language of group algebras, if the base code $Q$ is defined over $R = \F_q[x, y] / \idlGens{x^\ell - 1, y^m - 1}$, then an $h$-cover code $\tilde{Q}$ is defined over an enlarged ring $\tilde{R} = \F_q[x, y] / \idlGens{x^{\tilde{\ell}} - 1, y^{\tilde{m}} - 1}$ where $\tilde{\ell} = u\ell$, $\tilde{m} = tm$, and the covering degree is $h = ut$. For $T(\tilde{Q})$ to be a valid cover of $T(Q)$, the defining check polynomials must satisfy
\begin{equation}
    \tilde{a} \equiv a \pmod{\idlGens{x^\ell - 1, y^m - 1}}, \qquad \tilde{b} \equiv b \pmod{\idlGens{x^\ell - 1, y^m - 1}}.
\end{equation}
In other words, the monomials of $\tilde{a}$ and $\tilde{b}$ are obtained from those of $a$ and $b$ by adding multiples of $\ell$ and $m$ to the $x$- and $y$-exponents, respectively.

To relate the parameters of the cover code $\tilde{Q}$ to those of the base code $Q$, the authors of~\cite{symons2025sequences} initially constructed projection and lifting chain maps. 
This established $k_h \ge k$ for covering degrees coprime to the field characteristic. They recently extended this result to all covering degrees $h$ using a quotient-ring surjection. 
We present Theorem~\ref{thm:unconditional-cover-dimension} as an independent algebraic reproof of this fact. 
We analyze the commutative algebra of the check ideals directly. Our formulation yields a new refinement through a short exact sequence and an explicit kernel $K_{\mathrm{new}}$. 
This approach requires no semisimplicity assumption on $R$ or $\tilde{R}$. It also establishes an exact equality criterion for the dimension change.

\begin{theorem}
    \label{thm:unconditional-cover-dimension}
    Let $Q = \mathrm{BB}(a, b)$ be a base BB code over $R = \F_q[x, y] / \idlGens{x^\ell - 1, y^m - 1}$ with logical dimension $k$, and let $\tilde{Q} = \mathrm{BB}(\tilde{a}, \tilde{b})$ be any $h$-cover BB code over $\tilde{R} = \F_q[x, y] / \idlGens{x^{u\ell} - 1, y^{tm} - 1}$ with logical dimension $k_h$, where $h = ut \ge 1$. Let $I = \idlGens{a, b}_R \subseteq R$ and $\tilde{I} = \idlGens{\tilde{a}, \tilde{b}}_{\tilde{R}} \subseteq \tilde{R}$ denote the ideals generated by the check polynomials. 
    
    Then the canonical covering projection induces a surjective $\F_q$-linear map between the quotient spaces, $\bar{\rho}\colon \tilde{R}/\tilde{I} \to R/I$. Defining $K_{\mathrm{new}} = \ker \bar{\rho}$ yields the short exact sequence of $\F_q$-vector spaces
    \begin{equation}
        0 \longrightarrow K_{\mathrm{new}} \longrightarrow \tilde{R}/\tilde{I} \xrightarrow{\ \bar{\rho}\ } R/I \longrightarrow 0,
    \end{equation}
    and the logical dimensions satisfy the exact relation
    \begin{equation}
        k_h = k + 2\dim_{\F_q}(K_{\mathrm{new}}) \ge k.
    \end{equation}
    In particular, the logical dimension of an $h$-cover BB code is never smaller than that of its base code, for any covering degree $h \ge 1$ and over any field characteristic $p$.
\end{theorem}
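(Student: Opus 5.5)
The plan is to reduce the logical dimension to the dimension of the quotient $R/I$ without using semisimplicity, and then compare quotients via the covering ring map. First, in the proof of Theorem~\ref{thm:BB-dimension}, the step before the Wedderburn decomposition is valid over any $R$: $\rank(H_X)=\dim_{\F_q}\idlGens{a,b}$, $\rank(H_Z)=\dim_{\F_q}\iota(\idlGens{a,b})=\dim_{\F_q}\idlGens{a,b}$, and the CSS count gives $k = 2\dim R - 2\dim I = 2\dim_{\F_q}(R/I)$. This uses only that $\im H_X=I$ and $\im H_Z=\iota(I)$, which holds for any group algebra. Applying the same argument to $\tilde Q$ gives $k_h = 2\dim_{\F_q}(\tilde R/\tilde I)$.

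Next I construct $\bar\rho$. Since $u\ell$ and $tm$ are multiples of $\ell$ and $m$, $x^\ell-1\mid x^{u\ell}-1$ and $y^m-1\mid y^{tm}-1$. The identity on $\F_q[x,y]$ therefore descends to a surjective ring homomorphism $\rho\colon\tilde R\to R$, given by reducing exponents modulo $(\ell,m)$. By the covering hypothesis, $\rho(\tilde a)=a$ and $\rho(\tilde b)=b$. Hence $\rho(\tilde I)=\rho(\tilde R\tilde a+\tilde R\tilde b)=R a+R b=I$, where surjectivity of $\rho$ is used in the middle equality. In particular $\rho(\tilde I)\subseteq I$, so $\rho$ induces a ring map $\bar\rho\colon \tilde R/\tilde I\to R/I$. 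This map is surjective because $\rho$ is, and $K_{\mathrm{new}}:=\ker\bar\rho$ yields the claimed short exact sequence.

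Finally, rank--nullity on the sequence gives $\dim(\tilde R/\tilde I)=\dim(R/I)+\dim K_{\mathrm{new}}$. Doubling, $k_h=k+2\dim K_{\mathrm{new}}\ge k$. I could also identify $K_{\mathrm{new}}=(\ker\rho+\tilde I)/\tilde I$, with $\ker\rho=\idlGens{x^\ell-1,y^m-1}\tilde R$, which makes the equality criterion $k_h=k \iff \ker\rho\subseteq\tilde I$ explicit.

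The main point needing care is the first step. The paper's dimension formula was stated under semisimplicity, and Proposition~\ref{prop:algebraic-properties}(1) was used to pass to annihilators. I would avoid that and use only the quotient dimension $\dim R-\dim I$, checking that $\iota$ is a bijective linear map so that $\rank H_Z=\rank H_X$ in any characteristic. The remaining steps are formal.
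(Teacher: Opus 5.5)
Your proposal is correct and follows essentially the same route as the paper. Both reduce each logical dimension to $2\dim_{\F_q}$ of the quotient by the check ideal, using the CSS count with $\rank H_Z=\dim\iota(I)=\dim I$ and no semisimplicity. Both then show $\rho(\tilde I)=I$ via surjectivity of $\rho$ and apply rank--nullity to the induced surjection $\bar\rho$. Your extra identification $K_{\mathrm{new}}\cong(\ker\rho+\tilde I)/\tilde I$ is exactly the content of the paper's subsequent Corollary~\ref{cor:strict-growth-condition}.
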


\begin{proof}
    Because $x^{u\ell} - 1 = (x^\ell - 1)(1 + x^\ell + \dots + x^{(u-1)\ell})$ and $y^{tm} - 1 = (y^m - 1)(1 + y^m + \dots + y^{(t-1)m})$, the base lattice relations generate a subideal of the covering relations in $\F_q[x, y]$:
    \begin{equation*}
        \idlGens{x^{u\ell} - 1, y^{tm} - 1} \subseteq \idlGens{x^\ell - 1, y^m - 1}.
    \end{equation*}
    Therefore, canonical reduction modulo $\idlGens{x^\ell - 1, y^m - 1}$ is a well-defined, surjective ring homomorphism
    \begin{equation*}
        \rho\colon \tilde{R} \to R, \qquad \ker \rho = \idlGens{x^\ell - 1, y^m - 1}_{\tilde{R}}.
    \end{equation*}
    By the defining condition of a covering code, we have $\rho(\tilde{a}) = a$ and $\rho(\tilde{b}) = b$. This immediately implies the inclusion $\rho(\tilde{I}) \subseteq I$. Conversely, for any element $r a + s b \in I$, we may choose preimages $\tilde{r}, \tilde{s} \in \tilde{R}$ such that $\rho(\tilde{r}) = r$ and $\rho(\tilde{s}) = s$, since $\rho$ is surjective. It follows that $r a + s b = \rho(\tilde{r} \tilde{a} + \tilde{s} \tilde{b}) \in \rho(\tilde{I})$, proving the ideal equality $\rho(\tilde{I}) = I$. By the Third Isomorphism Theorem for rings and vector spaces, $\rho$ descends to a well-defined, surjective $\F_q$-linear map between the quotient spaces: $\bar{\rho}\colon \tilde{R}/\tilde{I} \to R/I$. 
    
    By the definition of $K_{\mathrm{new}}$ from the theorem statement, this establishes the short exact sequence of $\F_q$-vector spaces. By the rank-nullity theorem, the dimensions of these vector spaces satisfy
    \begin{equation}
        \label{eq:quotient-dim-split}
        \dim_{\F_q}(\tilde{R}/\tilde{I}) = \dim_{\F_q}(R/I) + \dim_{\F_q}(K_{\mathrm{new}}).
    \end{equation}

    For the base code $Q$, the $\F_q$-linear syndrome map $\Psi_X\colon R^2 \to R$ defined by $(u, v) \mapsto au + bv$ has matrix representation $H_X = \begin{pmatrix} A & B \end{pmatrix}$. Its image is precisely the ideal $I = \idlGens{a, b}_R$, so $\rank(H_X) = \dim_{\F_q} I$. Symmetrically, for the $Z$-checks, we have $\im(H_Z) = \idlGens{\iota(a), \iota(b)}_R = \iota(I)$. Since the coordinate involution $\iota$ is an $\F_q$-linear automorphism of $R$, it preserves dimension, giving $\rank(H_Z) = \dim_{\F_q} I$. 
    
    Applying the standard CSS dimension formula to $Q$ (where $n = 2\dim_{\F_q} R$) yields
    \begin{equation*}
        k = 2\dim_{\F_q} R - \rank(H_X) - \rank(H_Z) = 2\dim_{\F_q} R - 2\dim_{\F_q} I = 2\dim_{\F_q}(R/I).
    \end{equation*}
    By the identical algebraic argument applied to the covering code $\tilde{Q}$ over $\tilde{R}$, we have
    \begin{equation*}k        k_h = 2\dim_{\F_q}(\tilde{R}/\tilde{I}).
    \end{equation*}
    Substituting these two logical dimension identities into Equation~\eqref{eq:quotient-dim-split} and multiplying by $2$, we obtain
    \begin{equation*}
        k_h = k + 2\dim_{\F_q}(K_{\mathrm{new}}) \ge k,
    \end{equation*}
    which completes the proof.
\end{proof}

Our algebraic description also provides an exact, computable criterion for when an $h$-cover code strictly increases the number of logical qubits ($k_h > k$) versus when the logical dimension remains static ($k_h = k$):

\begin{corollary}
    \label{cor:strict-growth-condition}
    With the notation and assumptions of Theorem~\ref{thm:unconditional-cover-dimension}, the change in logical dimension is given by
    \begin{equation}
        k_h - k = 2\dim_{\F_q}\!\left( \frac{\ker \rho}{\ker \rho \cap \tilde{I}} \right).
    \end{equation}
    Consequently, the cover code has the same logical dimension as the base code ($k_h = k$) if and only if the base lattice relations are already absorbed by the covering check ideal:
    \begin{equation}
        k_h = k \iff \ker \rho \subseteq \tilde{I} \iff x^\ell - 1, \; y^m - 1 \in \idlGens{\tilde{a}, \tilde{b}}_{\tilde{R}}.
    \end{equation}
\end{corollary}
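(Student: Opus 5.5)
The plan is to compute $K_{\mathrm{new}}$ directly from the short exact sequence in Theorem~\ref{thm:unconditional-cover-dimension}, using the ideal-theoretic description of $\bar\rho$. Write $\tilde R/\tilde I \to R/I$ as the map induced by $\rho$. We already know $\rho(\tilde I) = I$, so the preimage of $I$ under $\rho$ is $\tilde I + \ker\rho$. Hence
\begin{equation*}
K_{\mathrm{new}} = \ker\bar\rho = \rho^{-1}(I)/\tilde I = (\tilde I + \ker\rho)/\tilde I .
\end{equation*}
By the second isomorphism theorem for $\F_q$-vector spaces (or $\tilde R$-modules), this quotient is isomorphic to $\ker\rho/(\ker\rho\cap\tilde I)$. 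Substituting into $k_h = k + 2\dim_{\F_q} K_{\mathrm{new}}$ gives the displayed formula for $k_h - k$.

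For the equivalences, $k_h = k$ holds iff $\dim K_{\mathrm{new}} = 0$. This is the case iff $\ker\rho \cap \tilde I = \ker\rho$, i.e.\ iff $\ker\rho\subseteq\tilde I$. Finally, the proof of Theorem~\ref{thm:unconditional-cover-dimension} identifies $\ker\rho = \idlGens{x^\ell-1,\,y^m-1}_{\tilde R}$. An ideal generated by two elements lies in the ideal $\tilde I$ iff both generators do. This gives the last condition $x^\ell-1,\ y^m-1\in\idlGens{\tilde a,\tilde b}_{\tilde R}$.

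The only step needing care is the identification $\rho^{-1}(I)=\tilde I+\ker\rho$. The inclusion $\supseteq$ follows from $\rho(\tilde I)\subseteq I$. For $\subseteq$: if $\rho(z)\in I=\rho(\tilde I)$, pick $\tilde z\in\tilde I$ with $\rho(\tilde z)=\rho(z)$; then $z-\tilde z\in\ker\rho$. Everything else is routine linear algebra, and no semisimplicity is used, matching the generality of the theorem.
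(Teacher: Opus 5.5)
Your proposal is correct and follows essentially the same route as the paper: identify $K_{\mathrm{new}} = (\ker\rho + \tilde I)/\tilde I$, apply the second isomorphism theorem to get $\ker\rho/(\ker\rho\cap\tilde I)$, and use that $\ker\rho$ is generated by $x^\ell-1$ and $y^m-1$. Your explicit verification that $\rho^{-1}(I) = \tilde I + \ker\rho$ via $\rho(\tilde I) = I$ fills in a step that the paper asserts without comment.
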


\begin{proof}
    By the Second Isomorphism Theorem for vector spaces, the kernel of the quotient surjection $\bar{\rho}$ is naturally isomorphic to
    \begin{equation*}
        K_{\mathrm{new}} = \ker \bar{\rho} = \frac{\ker \rho + \tilde{I}}{\tilde{I}} \cong \frac{\ker \rho}{\ker \rho \cap \tilde{I}}.
    \end{equation*}
    Substituting this isomorphism into $k_h - k = 2\dim_{\F_q}(K_{\mathrm{new}})$ yields the stated formula. We have $k_h = k$ if and only if $K_{\mathrm{new}} = 0$, which occurs if and only if $\ker \rho \subseteq \tilde{I}$. Since $\ker \rho$ is generated as a $\tilde{R}$-ideal by $x^\ell - 1$ and $y^m - 1$, this containment holds if and only if $x^\ell - 1, y^m - 1 \in \idlGens{\tilde{a}, \tilde{b}}_{\tilde{R}}$.
\end{proof}

\begin{remark}
    \label{rem:semisimple-cover-spectrum}
    When both $R$ and $\tilde{R}$ are semisimple (when  $\gcd(p, hn) = 1$), the quotient $\tilde{R}/\tilde{I}$ decomposes via Wedderburn into a direct product of the residue fields corresponding to the common spectral zeros of $\tilde{a}$ and $\tilde{b}$. Let $\tilde{\mathcal{G}} = \mu_{\tilde{\ell}} \times \mu_{\tilde{m}} \subset \overline{\F_q}^\times \times \overline{\F_q}^\times$ denote the full covering evaluation grid, and let $\mathcal{G} = \mu_\ell \times \mu_m \subseteq \tilde{\mathcal{G}}$ denote the embedded base grid. 
    
    The spectral zeros of $\tilde{a}$ and $\tilde{b}$ restricted to the base grid $\mathcal{G}$ coincide exactly with the zeros of $a$ and $b$, accounting for the base quotient $R/I$. The kernel $K_{\mathrm{new}}$ is therefore the direct sum of the simple field components supported on the \emph{strictly new} common zeros, $\Delta \mathcal{Z} = (\mathcal{Z}(\tilde{a}) \cap \mathcal{Z}(\tilde{b})) \setminus \mathcal{G}$. Summing the $\F_q$-dimensions of these field components over all new Frobenius orbits $O \subseteq \Delta \mathcal{Z}$ yields $\dim_{\F_q}(K_{\mathrm{new}}) = \sum_{O \subseteq \Delta \mathcal{Z}} |O| = |\Delta \mathcal{Z}|$. Thus, in the semisimple cases, Theorem~\ref{thm:unconditional-cover-dimension} reduces to the explicit root-counting formula $k_h = k + 2|\Delta \mathcal{Z}|$, where $|\Delta \mathcal{Z}|$ counts the number of new common roots of $\tilde{a}$ and $\tilde{b}$ in $\tilde{\mathcal{G}} \setminus \mathcal{G}$.
\end{remark}

\begin{remark}
    \label{rem:qudit-dimension}
    Symons et al.~\cite{symons2025sequences} extended their covering graph constructions to qudit BB codes over $\F_q$. 
    They recently established the dimension bound $k_h \ge k$ for all covering degrees. 
    Theorem~\ref{thm:unconditional-cover-dimension} provides an independent algebraic proof of this logical dimension lower bound. 
    This holds without restriction on the characteristic $p$ or the covering degree $h$. 
    The behavior of the exact minimum distance remains open; Symons et al.\ conjectured that $d \le d_h \le hd$.
\end{remark}

\subsection{Distance Certificates for Covering Codes}
\label{subsec:cover-distance}

The preceding dimension result applies to arbitrary covers. We now briefly examine how the spectral certificates appearing in the colon-ideal distance bound behave under a uniform cover $(\ell,m)\mapsto(s\ell,sm)$. The total covering degree in this case is $h=s^2$. We assume throughout this subsection that $\gcd(p,s\ell m)=1$, so that both the base and covering rings are semisimple. Our results address distance certificates rather than exact distances.

Let $\rho:\widetilde R\to R$ be the covering projection that reduces coefficient exponents, $\rho(x^{\tilde z}y^{\tilde w})=x^{\tilde z\bmod\ell}y^{\tilde w\bmod m}$, extended $\F_p$-linearly. In the frequency domain, the corresponding map is the injection $\phi:\Z_\ell\times\Z_m\to\Z_{s\ell}\times\Z_{sm}$ given by $\phi(i,j)=(si,sj)$. Its image is the embedded base subgroup $G_s=s\Z_{s\ell}\times s\Z_{sm}$.

The following result records exactly which spectral information is forced by the covering condition.

\begin{theorem} 
\label{thm:lifted-bb-distance}
Let $Q=\BB(a,b)$ be a bivariate bicycle code over $R=\F_p[x,y]/\langle x^\ell-1,\ y^m-1\rangle$, and let $s\geq 2$ satisfy $\gcd(p,s\ell m)=1$. Let $\widetilde Q=\BB(\widetilde a,\widetilde b)$ be a covering code over $\widetilde R=\F_p[x,y]/\langle x^{s\ell}-1,\ y^{sm}-1\rangle$ such that $\rho(\widetilde a)=a$ and $\rho(\widetilde b)=b$. Then $\mathcal Z_{\widetilde a}\cap G_s=\phi(\mathcal Z_a)$ and $\mathcal Z_{\widetilde b}\cap G_s=\phi(\mathcal Z_b)$.

Consequently, each zero-set region of the cover restricts to the corresponding injected base region: $\mathcal T_{\widetilde a,\widetilde b}\cap G_s=\phi(\mathcal T_{a,b})$, $\mathcal U_{\widetilde a,\widetilde b}\cap G_s=\phi(\mathcal U_{a,b})$, $\mathcal U_{\widetilde b,\widetilde a}\cap G_s=\phi(\mathcal U_{b,a})$, and $\mathcal F_{\widetilde a,\widetilde b}\cap G_s=\phi(\mathcal F_{a,b})$. The covering condition imposes no restriction on the cover zero sets outside $G_s$.
\end{theorem}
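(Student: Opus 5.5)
The plan is to reduce everything to one evaluation identity: for every $f\in\widetilde R$ and every base frequency $(i,j)\in\Z_\ell\times\Z_m$,
\[
f(\tilde\alpha^{si},\tilde\beta^{sj})=\rho(f)(\alpha^i,\beta^j),
\]
where $\tilde\alpha,\tilde\beta$ are primitive $s\ell$-th and $sm$-th roots of unity and $\alpha=\tilde\alpha^s$, $\beta=\tilde\beta^s$ are the induced primitive $\ell$-th and $m$-th roots. I fix this compatible choice of roots at the outset. Any other choice only relabels the root grid by a multiplier, and the paper's conventions for $\mathcal Z$ implicitly assume such a compatible choice.

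To prove the identity, I check it on monomials. We have
\[
(\tilde\alpha^{si})^{\tilde z}=\alpha^{i\tilde z}=\alpha^{i(\tilde z\bmod \ell)},
\]
since $\alpha^\ell=1$, and similarly in $y$. Extending $\F_p$-linearly gives the identity for all $f$. Equivalently, the point $(\tilde\alpha^{si},\tilde\beta^{sj})$ lies on the base grid, and $\ker\rho=\langle x^\ell-1,y^m-1\rangle$ vanishes there. This is the same folding computation already used in Proposition~\ref{prop:idem}, now read as an evaluation statement rather than an idempotent statement.

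Applying the identity with $f=\widetilde a$ and $\rho(\widetilde a)=a$ shows that $\phi(i,j)=(si,sj)\in\mathcal Z_{\widetilde a}$ if and only if $(i,j)\in\mathcal Z_a$. Since $\phi$ is injective with image $G_s$, this gives $\mathcal Z_{\widetilde a}\cap G_s=\phi(\mathcal Z_a)$, and the same argument gives $\mathcal Z_{\widetilde b}\cap G_s=\phi(\mathcal Z_b)$. Since $\gcd(p,s)=1$, multiplication by $p$ preserves $G_s$, and $\phi$ intertwines the Frobenius actions. So these equalities hold equally at the level of orbits, consistent with the paper's identification of Frobenius-closed point sets with orbit sets.

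The four region statements then follow formally. Intersection with $G_s$ commutes with $\cap$, $\setminus$, and complement relative to $\Omega$. For the complement, I use $\widetilde\Omega\cap G_s=\phi(\Omega)$, i.e.\ the full grid restricts to the injected base grid. Injectivity of $\phi$ lets it commute with these Boolean operations. For example,
\[
\mathcal U_{\widetilde a,\widetilde b}\cap G_s=(\mathcal Z_{\widetilde a}\cap G_s)\setminus(\mathcal Z_{\widetilde b}\cap G_s)=\phi(\mathcal Z_a)\setminus\phi(\mathcal Z_b)=\phi(\mathcal U_{a,b}),
\]
and the free region is handled the same way, as $G_s\setminus(\ldots)$.

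The main obstacle is the final claim, that the covering condition imposes no restriction off $G_s$. I would make it precise as follows: for any prescribed Frobenius-closed sets $Z_a', Z_b'\subseteq(\Z_{s\ell}\times\Z_{sm})\setminus G_s$, there exist lifts $\widetilde a,\widetilde b$ with $\rho(\widetilde a)=a$ whose zero sets off $G_s$ are exactly $Z_a'$ and $Z_b'$. For the construction, start with any lift $\widetilde a_0$ of $a$, for instance the one with the same exponents. Then add an element $k\in\ker\rho$. By the Wedderburn decomposition of $\widetilde R$, $\ker\rho$ is the ideal $e_{\mathcal S}\widetilde R$ with $\mathcal S$ the set of orbits outside $G_s$: its zero set is exactly $G_s$, by the identity above together with Theorem~\ref{thm:ideals-by-orbits}. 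Hence on each orbit $O\not\subseteq G_s$, the component $e_O k$ ranges over the whole field $e_O\widetilde R$. Choosing $e_Ok=-e_O\widetilde a_0$ on the orbits to be zeros, and so that $e_O(\widetilde a_0+k)\neq0$ on the others, realizes any prescribed zero pattern off $G_s$ while keeping $\rho(\widetilde a)=a$. The same construction applies to $\widetilde b$. If the intended reading also requires the lift to be supported on lifted monomials of $a$ with the same weight, this freedom is weaker; I would note that the theorem as stated concerns only $\rho(\widetilde a)=a$.
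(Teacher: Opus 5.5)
Your proof of the two zero-set identities and the four region identities is the paper's argument: fix compatible roots $\alpha=\widetilde\alpha^{\,s}$, $\beta=\widetilde\beta^{\,s}$, verify $\widetilde a(\widetilde\alpha^{\,si},\widetilde\beta^{\,sj})=\rho(\widetilde a)(\alpha^i,\beta^j)$ on monomials, and push the Boolean operations through the injective map $\phi$. The paper states the final sentence about no restriction off $G_s$ as an unproved remark. Your construction $\widetilde a=\widetilde a_0+k$ with $k\in\ker\rho=e_{\mathcal S}\widetilde R$ correctly makes it precise under the literal hypothesis $\rho(\widetilde a)=a$; note that the inclusion $\mathcal Z(\ker\rho)\subseteq G_s$ comes from the generators $x^\ell-1,y^m-1$ themselves, not from the evaluation identity. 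Your caveat is also apt: for the fixed-weight monomial lifts of the Tanner-graph definition the claim fails, for instance when $a$ is a monomial, since every such lift is then a monomial with no zeros anywhere.
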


\begin{proof}
Fix primitive roots $\widetilde\alpha$ and $\widetilde\beta$ of orders $s\ell$ and $sm$, respectively, and set $\alpha=\widetilde\alpha^{\,s}$ and $\beta=\widetilde\beta^{\,s}$. Then $\alpha$ and $\beta$ have orders $\ell$ and $m$. For every coefficient exponent $\tilde z$, we have $\widetilde\alpha^{\,si\tilde z}=\alpha^{i\tilde z}=\alpha^{i(\tilde z\bmod\ell)}$, and similarly $\widetilde\beta^{\,sj\tilde w}=\beta^{j\tilde w}=\beta^{j(\tilde w\bmod m)}$. It follows that $\widetilde a(\widetilde\alpha^{\,si},\widetilde\beta^{\,sj})=\rho(\widetilde a)(\alpha^i,\beta^j)=a(\alpha^i,\beta^j)$. Hence $(i,j)\in\mathcal Z_a$ if and only if $\phi(i,j)\in\mathcal Z_{\widetilde a}$. The argument for $\widetilde b$ is identical.

The identities for $\mathcal T$, $\mathcal U$, and $\mathcal F$ follow by applying the corresponding set operations to these two equalities. For example, $(\mathcal Z_{\widetilde b}\setminus\mathcal Z_{\widetilde a})\cap G_s=\phi(\mathcal Z_b)\setminus\phi(\mathcal Z_a)=\phi(\mathcal Z_b\setminus\mathcal Z_a)=\phi(\mathcal U_{b,a})$. The other cases follow in the same way.
\end{proof}

Theorem~\ref{thm:lifted-bb-distance} is a restriction theorem and does not imply that minimum-distance certificates are preserved. 
The colon-ideal bound for the cover has the form $d(\widetilde Q)\geq\min\{\widetilde E_a,\widetilde E_b,\widetilde N_{a,b}\}$, so all three terms must be controlled before obtaining a lower bound on the cover distance.
Moreover, the 2D BCH bound is stated in terms of unions of consecutive full zero strips. 
The injection map $\phi(i,j)=(si,sj)$ spaces out the inherited base roots by exactly $s-1$ gaps. 
Even if a full base strip union maps to the cover, consecutive base indices are mapped to indices spaced by $s$. 
Thus, the covering condition alone does not preserve contiguous BCH certificates. 
Preservation occurs only if the strictly new roots of the cover code fill these induced gaps to form consecutive full zero strips. 

While it is possible to explicitly prescribe the cover zero sets to form these consecutive full strips to guarantee a high minimum distance (as demonstrated in the product constructions of Appendix~\ref{sec:bch_product_codes}), enforcing such zero-strip structures in the frequency domain forces the inverse discrete Fourier transform to produce dense polynomials in the spatial domain. 
Similarly, in Example~\ref{subsec:bb510}, achieving a high certified classical distance requires dense zero sets, resulting in check weights of 53 and 54. By contrast, useful BB codes require sparse checks, which lack the degrees of freedom necessary to generate these consecutive full zero strips. 
Therefore, Theorem~\ref{thm:lifted-bb-distance} highlights a tension in covering sequences that a lift cannot preserve both the low-density parity-check structure and the full zero strips required for apparent-distance certificates simultaneously. 

The following example illustrates this phenomenon for a classical 2D cyclic ideal. 
Such ideals are precisely the objects whose distances occur in the annihilator and colon-ideal bounds.

\begin{example}
\label{ex-lifted-bb-distance}
Let $R=\F_2[x,y]/\langle x^3-1,\ y^3-1\rangle$ and consider the 2D cyclic code $\mathcal C=\langle g\rangle$, where $g=1+x+x^2$. Its defining set is $\mathcal Z(\mathcal C)=\{1,2\}\times\Z_3$. Thus, the defining set contains two consecutive full zero strip unions in the $x$ direction. Taking $\Gamma=\{1\}$, $J_1=\{1,2\}$, and $\delta_1=3$ in Theorem~\ref{thm:2D-BCH-bound} gives $d(\mathcal C)\geq 3$. Since the generator $g$ has weight $3$, this bound is exact and $d(\mathcal C)=3$.

Now consider the literal $3$-fold lift $\widetilde{\mathcal C}=\langle\widetilde g\rangle$ in $\widetilde R=\F_2[x,y]/\langle x^9-1,\ y^9-1\rangle$, where $\widetilde g=g$. The covering projection satisfies $\rho(\widetilde g)=g$, and the full cover defining set is $\mathcal Z(\widetilde{\mathcal C})=\{3,6\}\times\Z_9$. In particular, for $G_3=\{0,3,6\}\times\{0,3,6\}$, we have $\mathcal Z(\widetilde{\mathcal C})\cap G_3=\{3,6\}\times\{0,3,6\}=\phi(\mathcal Z(\mathcal C))$, as required by Theorem~\ref{thm:lifted-bb-distance}.

The cover contains full zero strip unions at the $x$-indices $3$ and $6$, but these indices are not consecutive. Consequently, the base certificate with designed distance $3$ is not inherited by the cover. Indeed, $1+x^3=(1+x)(1+x+x^2)=(1+x)\widetilde g$ belongs to $\widetilde{\mathcal C}$, so $d(\widetilde{\mathcal C})\leq 2$. On the other hand, the nonempty defining set rules out a weight-one codeword, since a nonzero scalar multiple of a monomial evaluates nontrivially at every point of the root grid. Hence $d(\widetilde{\mathcal C})=2$.

This example shows that the distance of an individual classical ideal appearing in the colon-ideal bound can decrease under a lift. It does not, by itself, imply that the colon-ideal bound of a BB code decreases since this involves the distance certificates $\widetilde E_a$, $\widetilde E_b$, and $\widetilde N_{a,b}$ from multiple ideals. 
\end{example}

\section{Examples}
\label{sec:examples}
The main formulas from the idempotent framework are easily validated against the existing literature. We gathered all the previously known BB codes and selected three at random for demonstration.

\begin{example}
    \label{ex:bb90}
    Reference~\cite{EberhardtSteffan2025} contains a $[\![90, 8, \leq 10]\!]$ code with $\ell = 3$, $m = 15$, $a(x, y) = 1 + y + y^5$, and $b(x, y) = x^2 + x + y^3$. The zero orbits are $\{(0, 5), (1, 5), (1, 10)\}$ for $a(x, y)$ and $\{(1, 0), (1, 5), (1, 10)\}$ for $b(x, y)$. The common-zero region has size $|\{(1, 5), (1, 10)\}| = 4$. Hence $n = 90$ and $k = 8$. For the distance, we compute $\mathcal{U}_{a, b}$, $\mathcal{U}_{b, a}$, and $\mathcal{F}_{a, b}$. Evaluating the single-block annihilator codes $\ann\idlGens{a} = \mathcal{C}(\mathcal{U}_{b, a} \cup \mathcal{F}_{a, b})$ and $\ann\idlGens{b} = \mathcal{C}(\mathcal{U}_{a, b} \cup \mathcal{F}_{a, b})$ yields $E_a = 10$ and $E_b = 10$. The mixed-block colon-ideal sum evaluates to $N_{a, b} = d_R(\idlGens{b:a}) + d_R(\idlGens{a:b}) = 2 + 2 = 4$. The guaranteed minimum distance lower bound is therefore $\min\{E_a, E_b, N_{a, b}\} = \min\{10, 10, 4\} = 4$. 
\end{example}

\begin{example}
    \label{ex:bb162}
    Reference~\cite{EberhardtSteffan2025} contains a $[\![162, 24, \leq 6]\!]$ code with $\ell = 9$, $m = 9$, $a(x, y) = 1 + y + y^2$, and $b(x, y) = y^3 + x^3 + x^6$. The zero orbits are $\{(0, 3), (1, 3), (1, 6), (3, 3), (3, 6)\}$ for $a(x, y)$ and $\{(1, 0), (1, 3), (1, 6)\}$ for $b(x, y)$. The common-zero region has size $|\{(1, 3), (1, 6)\}| = 12$. Hence $n = 162$ and $k = 24$. For the distance, we compute $\mathcal{U}_{a, b}$, $\mathcal{U}_{b, a}$, and $\mathcal{F}_{a, b}$. The single-block uncoupled distances evaluate to $E_a = 6$ and $E_b = 6$, while the mixed-block colon-ideal bound gives $N_{a, b} = 4$. The guaranteed lower bound is therefore $\min\{E_a, E_b, N_{a, b}\} = \min\{6, 6, 4\} = 4$.
\end{example}

\begin{example}
    Reference~\cite{symons2025sequences} contains a $[\![434, 10, \leq 26]\!]$ code with $\ell = 31$, $m = 7$, $a(x, y) = 1 + x^6 y + x^{27} y^4$, and $b(x, y) = 1 + x^{15} y^6 + x^{24} y^3$. 
    The zero orbits are $\{(1, 0)\}$ for $a(x, y)$ and $\{(1, 0)\}$ for $b(x, y)$. 
    The common-zero region has size $|\{(1, 0)\}| = 5$. Hence $n = 434$ and $k = 10$. 
    
    For the distance, we compute $\mathcal{U}_{a, b}$, $\mathcal{U}_{b, a}$, and $\mathcal{F}_{a, b}$. 
    The single-block distance $E_a$ is determined by the annihilator ideal $\ann\langle a \rangle = \mathcal{C}(\mathcal{U}_{b, a} \cup \mathcal{F}_{a, b})$ and gives a bound of 112. 
    The distance $E_b$ is determined by the ideal $\ann\langle b \rangle = \mathcal{C}(\mathcal{U}_{a, b} \cup \mathcal{F}_{a, b})$ and also gives a bound of 112. 
    The mixed-block term $N_{a, b}$ evaluates to 2.\footnote{The cyclic code corresponding to $\emptyset$ is the entire ring $R$ which contains weight-1 elements, providing one for each colon ideal.} 
    The colon-ideal minimum distance bound is therefore only 2 in this case.

    This example illustrates an interesting extreme case so we make some further remarks about it. 
    We have $\mathcal{Z}_a = \mathcal{Z}_b$ and that $\mathcal{U}_{a, b} = \mathcal{U}_{b, a} = \emptyset$ but $a \neq b$. 
    This implies that there exists a unit $c$ of $R$ such that $a = b c$, in other words $\idlGens{a}=\idlGens{b}$. 
    Note that $c$ is not unique unless $k=0.$
   
    Despite the fact that $\idlGens{a} = \idlGens{b}$, the distance does not drop to 2 as it would if $a = b$. 
    We show the distance must be at least 3 by considering two cases. 
    The case $\wt(u)=2$ covers $\wt(v)=2$ without loss of generality.
\begin{description}
    \item[Case 1] $\wt(u) = \wt(v) = 1$ \\
    If $u, v$ have weight 1, then they are monomials in $R$. 
    Thus $au + bv = 0$ implies $a$ is a monomial translate of $b$. 
    Comparison of the exponent supports of $a$ and $b$ shows that no such translation exists. 
    Hence no weight-two kernel vector exists in this case.

    \item[Case 2] $\wt(u) = 2$ \\
    In this case the logical has no support on the second block. 
    However the single-block distances $E_a = E_b = 112$ exclude a weight-two logical supported entirely in one block so this is not possible.
\end{description}
    Hence no weight-two kernel exists in either case, so $d \ge 3$.
    
    By fixing a particular unit $c$ so that $a=bc$ we can understand more about how low-weight logicals arise.
    Since $b$ has zeros, it is not invertible, so we cannot multiply by the inverse. 
    We must construct $c$ such that it has no zeros in the ring.
    Let $e_O$ be the idempotent corresponding to the common-zero orbits. 
    By~\eqref{2d_idemp_prop}, multiplying $c e_O$ isolates the zero orbits. 
    The choice of $c$ we consider is the one that acts as the identity on the common-zero region, that is $c e_O = e_O$. 
    Adding this to both sides gives $c b + c e_O = a + c e_O$. 
    Factoring out $c$ on the left and applying our design choice to the right gives $c(b + e_O) = a + e_O$. 
    Since $b + e_O$ is nonzero everywhere, it is invertible and we can multiply by the inverse to get $c = (a + e_O)(b + e_O)^{-1}$. 
    Distributing and simplifying with $e_O^2 = e_O$ gives $c = a(b + e_O)^{-1} + e_O$. 
    For this example, we find $c$ has Hamming weight 97.
    
    In general, if $(u, v)$ is a mixed-block logical, then $b (cu + v) = 0$ implies that $cu + v \in \ann \langle b \rangle$. 
    Let $f \in \ann \langle b \rangle$ be a codeword of the classical 2D cyclic code defined by $b$. 
    Then $(u, cu + f)$ is a valid logical operator with weight $\wt(u) + \wt(cu + f)$. 
    If $u$ is low-weight and $cu$ is close to a codeword of $\ann\langle b \rangle$ and $(u,v)$ is not a stabilizer then $(u,v)$ gives us a low-weight logical.
\end{example}

\section{Conclusion}
We have developed a spectral framework for bivariate
bicycle codes in the semisimple regime. Casting the defining ring
through its Wedderburn decomposition into finite-field components
indexed by $p$-Frobenius orbits, we obtained a root-counting formula for
the dimension, a colon-ideal lower bound on the minimum distance that
captures mixed-block logical operators, a check-module description of
the permutation automorphisms and $ZX$-dualities, and an account of how
these data transform under lifts and projections. Classical tools from cyclic-code theory appear throughout.

We aimed to keep the discussion within commutative ring theory, but
found it occasionally necessary to borrow from adjacent areas. Further
results are available from a purely module-theoretic viewpoint, although
it was not clear to us at the time of writing whether these yield
statements stronger than those presented here.

From a purely mathematical standpoint, most of the framework extends to
an arbitrary finite field $\mathbb{F}_q$ with little modification.
Stabilizer codes, however, are only additive, $\mathbb{F}_p$-linear
rather than $\mathbb{F}_q$-linear subspaces, and this submodule
structure requires care, particularly in classifying automorphisms. Extending the definition of BB codes to the linear closure may be an interesting direction of research.

Our definition of BB codes extends past the trinomial ansatz of the
original construction, and although we report results only for
untwisted tori, the spectral framework   carries over to
twisted tori~\cite{liang2025generalized}. Combining group-algebra codes with twists is an interesting new direction, which we will report in an upcoming paper. k
There are several additional open problems that follow directly from this work. The most immediate is the extension of the idempotent framework to the non-semisimple case. On the zero-set side this means tracking zeros with multiplicity. On the algebraic side, the Wedderburn field factors are replaced by local Artinian rings containing nilpotents. Central idempotents still separate the factors but the simple field structure is lost. The resulting analysis is likely to be less computationally friendly, relying on the Gr\"obner-basis computations the semisimple theory is designed to avoid.

A second major open direction is stabilizer weight. We have made no attempt to
reduce the check weights of the example codes, and controlling the
Hamming weight is not a natural operation in the algebraic language.
Practical implementations of QLDPC codes are often hindered by the requirement of non-local entanglement across many qubits. Photonic platforms natively support arbitrary connectivity between qubits and are ideally suited for these architectures. While some quantum architectures like Xanadu's may be able to handle higher-weight stabilizers, the stabilizer weights in the examples are likely out of reach for the majority of near-term technologies.

The freedom to replace generators by independent unit multiples moves within a family with the same logical dimension and colon-ideal certificates, while a common unit multiple leaves the check-row module unchanged. We expect that the most useful codes will emerge from combining the constructions given here with numerical data on the weights obtained by grouping orbits while sweeping over semisimple grid sizes.

While we did not frame it this way, the spectral framework and most of our results hold for abelian two-block group-algebra codes, including multivariate extensions of BB codes~\cite{voss_multivariate_2025}. The non-abelian case should be relatively straightforward given the framework established here. Codes with additional blocks~\cite{jacob2025single} are more complicated and represent an interesting and nontrivial extension.

Finally, it remains to better understand how the structure of BB codes influences decoding.

\section{Acknowledgments}
Some of this work dates back to the publication of~\cite{bravyi2024}. Much of this work existed before the advent of LLMs capable of helping. AI was not used to pursue new ideas or directions for this work. 
AI was used in the late stages of writing to verify results and consistency throughout. Suggestions were merged into the introduction and discussion. Numerical computations were performed with the coding theory library~\cite{Sabo_2026}.

\appendix
\section{BCH-Based Constructions \& Product Codes}
\label{sec:bch_product_codes}

Following the path established by the classical literature on distance lower bounds for 2D cyclic codes, this section highlights several of our results on the minimum distances of BB codes, obtained by analyzing their defining data either directly through generating polynomials or by prescribing their annihilator ideals.

We maintain the notation of the main text. In particular, let $R_x = \F_p[x]/\idlGens{x^\ell - 1}$, $R_y = \F_p[y]/\idlGens{y^m - 1}$, and $R=\F_p[x,y]/\langle x^\ell-1,\ y^m-1\rangle$.
Via the natural embeddings $R_x\hookrightarrow R$ and $R_y\hookrightarrow R$, we view $g_x(x)\in R_x$ and $g_y(y)\in R_y$ as elements of $R$. Let $C_x=\langle g_x\rangle \subseteq R_x$ and $C_y=\langle g_y\rangle \subseteq R_y$ be cyclic codes of lengths $\ell$ and $m$ respectively.

\begin{definition}
    The \emph{2D product-code ideal determined by $g_x$ and $g_y$}, denoted $\mathsf{Prod}(C_x, C_y)$, is the 2D cyclic code defined by the ideal
    \begin{equation}
        \label{eq:prod-ideal}
        \mathsf{Prod}(C_x, C_y) = \langle g_x(x) g_y(y) \rangle \subseteq R.
    \end{equation}
\end{definition}

As before, we fix a splitting field $K/ \F_p$ containing primitive $\ell$-th and $m$-th roots $\alpha$ and $\beta$. For $f\in R$, we define its zero set $\mathcal{Z}(f)\subseteq \Z_\ell \times \Z_m$ as the set of coordinates $(i,j)$ where $f(\alpha^i,\beta^j)=0$.

\begin{lemma}
\label{lem:zeros_prod_strips}
Let $g_x\in R_x$ and $g_y\in R_y$, with 1D zero sets $\mathcal{Z}_x\subseteq\Z_\ell$ and $\mathcal{Z}_y\subseteq\Z_m$. The zero set of their product in $R$ is a union of strips:
\[
\mathcal{Z}\bigl(g_x(x)g_y(y)\bigr) = \bigl(\mathcal{Z}_x\times \Z_m\bigr)\ \cup\ \bigl(\Z_\ell\times \mathcal{Z}_y\bigr).
\]
\end{lemma}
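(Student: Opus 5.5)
The plan is to evaluate the product pointwise on the root grid and use the fact that $K$ is a field, so it has no zero divisors. Regard $g_x(x)$ and $g_y(y)$ as elements of $R$ through the natural embeddings $R_x\hookrightarrow R$ and $R_y\hookrightarrow R$. Evaluation at a grid point $(\alpha^i,\beta^j)$ is a ring homomorphism $R\to K$; it is well defined because $(\alpha^i)^\ell=1$ and $(\beta^j)^m=1$, and it is the $(i,j)$ coordinate of the isomorphism $\ev$ introduced earlier. Applying this homomorphism gives the factorization
\[
\bigl(g_x(x)g_y(y)\bigr)(\alpha^i,\beta^j)=g_x(\alpha^i)\,g_y(\beta^j).
\]
The key point here is that $g_x$ involves only $x$, so its value at $(\alpha^i,\beta^j)$ equals its one-variable value $g_x(\alpha^i)$ and does not depend on $j$. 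The same holds for $g_y$, whose value does not depend on $i$.

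The product on the right-hand side is zero exactly when $g_x(\alpha^i)=0$ or $g_y(\beta^j)=0$, that is, when $i\in\mathcal{Z}_x$ or $j\in\mathcal{Z}_y$. We need to fix the convention that the 1D zero sets are taken with respect to the same roots $\alpha$ and $\beta$. This characterizes the set of zeros as $(\mathcal{Z}_x\times\Z_m)\cup(\Z_\ell\times\mathcal{Z}_y)$, which establishes both inclusions at once.

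I do not expect a real obstacle. The one point to check is that the embedding $R_x\hookrightarrow R$ is compatible with evaluation, meaning that a representative of $g_x$ modulo $x^\ell-1$ gives the same value at $\alpha^i$ whichever representative we choose. This holds because $\alpha^i$ is a root of $x^\ell-1$.
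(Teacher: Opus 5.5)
Your proposal is correct and takes the same approach as the paper: evaluation at $(\alpha^i,\beta^j)$ is multiplicative and $K$ has no zero divisors, so the product vanishes exactly when $i\in\mathcal{Z}_x$ or $j\in\mathcal{Z}_y$. Your additional checks, that evaluation is well defined on the quotient rings and that the 1D zero sets use the same roots $\alpha,\beta$, are details the paper leaves implicit.
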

The result follows directly from the multiplicative property of the evaluation map. A point $(\alpha^i, \beta^j)$ is a zero of the product if and only if it is a root of $g_x$ or a root of $g_y$, which geometrically traces out the horizontal and vertical strips corresponding to $Z_x$ and $Z_y$.

Since $\mathsf{Prod}(C_x,C_y)$ is the tensor product $C_x \otimes C_y$, its parameters follow from standard results for tensor products of linear codes.

\begin{proposition}
    \label{prop:prod-distance}
    Let $\mathcal{C}$ be the product-code $\mathcal{C} := \mathsf{Prod}(C_x,C_y) \subseteq R$. 
    Then its dimension is $\dim_{\F_q} \mathcal{C} = (\dim C_x)(\dim C_y)$, 
    and its minimum distance is $d_{R_x}(C_x)\,d_{R_y}(C_y)$. 
    Any distance lower bounds for the $x$ and $y$ codes immediately carry over to the product. In particular, for BCH design distances $\delta_x$ and $\delta_y$ we have
    \[d_R(\mathcal{C}) \geq \delta_x \cdot \delta_y.\]
\end{proposition}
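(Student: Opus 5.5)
The plan is to prove everything from an explicit identification of $\mathsf{Prod}(C_x,C_y)$ with the tensor product $C_x\otimes C_y$ inside $R\cong R_x\otimes_{\F_q}R_y$. The isomorphism $R_x\otimes R_y\to R$, $f\otimes g\mapsto f(x)g(y)$, sends the monomial basis to the monomial basis, since $x^iy^j\leftrightarrow x^i\otimes y^j$. Under it, a codeword of $R$ is an $\ell\times m$ array $(c_{ij})$.

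\textbf{Step 1 (identification).} I will show $\langle g_x g_y\rangle_R = C_x\otimes C_y$. For the inclusion $\supseteq$: $C_x\otimes C_y$ is spanned by products $r_x(x)g_x(x)\,r_y(y)g_y(y) = (r_xr_y)\,g_xg_y$. For $\subseteq$: every $r\in R$ is a sum of products $f(x)h(y)$, so $r\,g_xg_y$ is a sum of elements $(fg_x)\otimes(hg_y)\in C_x\otimes C_y$. The dimension formula $\dim C_x\cdot\dim C_y$ is then the standard dimension of a tensor product. As a cross-check, in the semisimple case the zero set from Lemma~\ref{lem:zeros_prod_strips} has complement $(\Z_\ell\setminus\mathcal Z_x)\times(\Z_m\setminus\mathcal Z_y)$, whose size is the same product.

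\textbf{Step 2 (distance).} I will use the classical product-code argument. A codeword of $C_x\otimes C_y$, viewed as an array, has every column (fixed $j$, varying $i$) in $C_x$ and every row in $C_y$. The reason is that, writing $c=\sum_k a_k\otimes b_k$ with $a_k\in C_x$ and $b_k\in C_y$, the $j$-th column is $\sum_k (b_k)_j a_k\in C_x$, and rows are handled symmetrically.

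For the lower bound, take $c\neq0$. It has some nonzero column, and that column lies in $C_x$, so it has at least $d(C_x)$ nonzero entries; hence at least $d(C_x)$ rows are nonzero. Each nonzero row lies in $C_y$ and has weight at least $d(C_y)$, so $\wt_R(c)\ge d(C_x)d(C_y)$.

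For the upper bound, take minimum-weight $a\in C_x$ and $b\in C_y$. Then $a(x)b(y)\in C_x\otimes C_y$, and its weight is exactly $\wt(a)\wt(b)$ because the supports multiply as a Cartesian product.

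\textbf{Step 3 (BCH consequence).} If $\delta_x,\delta_y$ are BCH designed distances, Theorem~\ref{thm:BCH-bound} gives $d(C_x)\ge\delta_x$ and $d(C_y)\ge\delta_y$, and monotonicity of the product gives $d_R(\mathcal C)\ge\delta_x\delta_y$. The same monotonicity shows that any other lower bounds on the factors carry over.

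The only genuinely delicate point is Step 1: the statement must be the equality of the ideal $\langle g_xg_y\rangle$ in $R$ with the tensor product, not merely an inclusion, and this needs the basis-preserving isomorphism $R\cong R_x\otimes R_y$. Everything after that is routine. Semisimplicity is not needed for any step.
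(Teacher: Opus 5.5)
Your proposal is correct. Its core is the same classical row/column product-code argument the paper sketches after the statement: you start from a nonzero column in $C_x$ and count nonzero rows in $C_y$, while the paper starts from a nonzero row and counts columns, which is the mirror image.

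The differences lie in what surrounds that argument:
\begin{itemize}
\item You prove the identification $\langle g_x(x)g_y(y)\rangle_R = C_x\otimes C_y$ through the basis-preserving isomorphism $R\cong R_x\otimes R_y$. The paper simply asserts it.
\item You supply the matching upper bound through $a(x)b(y)$. The paper's sketch omits it, even though the proposition claims the equality $d_R(\mathcal{C}) = d_{R_x}(C_x)\,d_{R_y}(C_y)$.
\item For the BCH inequality you apply Theorem~\ref{thm:BCH-bound} to each factor and use monotonicity of the exact product formula. The paper instead invokes the 2D BCH bound (Theorem~\ref{thm:2D-BCH-bound}) on the strip union of Lemma~\ref{lem:zeros_prod_strips}.
\end{itemize}

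Your route is more elementary, and it makes ``any lower bound carries over'' an immediate corollary, covering the Hartmann--Tzeng, Roos, and shift bounds as well. The paper's route ties the statement to the spectral strip picture used elsewhere, at the cost of relying on the heavier multivariate theorem.

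One small caveat concerns your closing remark that semisimplicity is never needed. That is right for Steps 1 and 2. Step 3, however, uses BCH designed distances, which the paper defines through cyclotomic defining sets. These presuppose $\gcd(q,\ell)=\gcd(q,m)=1$, the same hypothesis the paper's 2D BCH route requires.
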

The distance identity of the proposition is established by viewing any non-zero codeword in the product code as an $\ell \times m$ matrix. 
Every non-zero row belongs to $C_y$ and thus contains at least $d_{R_y}(C_y)$ non-zero entries. 
By column-wise linearity, each of these entries belongs to a column vector in $C_x$ weighing at least $d_{R_x}(C_x)$, yielding a total weight of at least the product of the two distances. 
Furthermore, the specific BCH lower bound $d_R(\mathcal{C}) \ge \delta_x \delta_y$ is an immediate consequence of the 2D BCH bound (Theorem~\ref{thm:2D-BCH-bound}), as the union of strips established in the preceding lemma provides the exact contiguous zero columns required by the theorem.

\begin{example}
\label{ex:designdistance}
In this example, 
by using Proposition~\ref{prop:prod-distance}, we  construct a BB code with a partial guarantee on its minimum distance.

Recall the notation of the colon-ideal distance lower bound given in Theorem~\ref{thm:orbit-partition-colon-bound}, where we denoted the distances of these annihilators by $E_a$ and $E_b$ respectively. 
To break the symmetry between the blocks while maintaining control over the mixed-block distance, we pair a strong 1D cyclic code on one axis with a weak 1D cyclic code on the other. We also require the zero sets of the weak codes to be strictly nested inside the strong codes to perfectly isolate the uncoupled regions.

Let $\ell=7$, $m=15$, and $q=2$. We require the weak codes to have minimum distance $\gamma_x=\gamma_y=2$. Over $\mathbb{F}_2$, the only cyclic code of length $7$ with minimum distance exactly $2$ is the single-parity-check code. Its zero set is exactly $\{0\}$.\footnote{A binary cyclic code contains a weight-two word $x^{s_1}+x^{s_2}$ if and only if $\beta^{i(s_1-s_2)}=1$ for every $i$ in its zero set, where $\beta$ is a primitive $\ell$-th root of unity. For $\ell=7$ prime this forces the zero set to be $\{0\}$. Consequently a \emph{narrow-sense} strong code, whose zero set omits $0$, can never contain a distance-$2$ subcode, and the nesting hypothesis would be vacuous.} Since the weak zero sets must be nested inside the strong zero sets, the strong zero sets must also contain the root $0$. We  take the strong codes to be the even-weight subcodes of the respective Hamming codes:
\[
\begin{aligned}
\mathcal{Z}_x^{\mathrm{str}} &= \{0,1,2,4\},   &\quad C_x^{\mathrm{str}} &= [7,3,4],
&\quad \delta_x &= 4, \\
\mathcal{Z}_x^{\mathrm{wk}}  &= \{0\},         &\quad C_x^{\mathrm{wk}}  &= [7,6,2],
&\quad \gamma_x &= 2, \\
\mathcal{Z}_y^{\mathrm{str}} &= \{0,1,2,4,8\}, &\quad C_y^{\mathrm{str}} &= [15,10,4],
&\quad \delta_y &= 4, \\
\mathcal{Z}_y^{\mathrm{wk}}  &= \{0\},         &\quad C_y^{\mathrm{wk}}  &= [15,14,2],
&\quad \gamma_y &= 2 .
\end{aligned}
\]
Each strong zero set contains a run of three consecutive residues $\{0,1,2\}$. 
The BCH bound certifies the designed distances $\delta_x=\delta_y=4$, and both are attained. The nesting $\mathcal{Z}_x^{\mathrm{wk}} \subset \mathcal{Z}_x^{\mathrm{str}}$ and $\mathcal{Z}_y^{\mathrm{wk}} \subset \mathcal{Z}_y^{\mathrm{str}}$ now holds by construction.

We prescribe the annihilator ideals using the 2D product-code construction of Eq.~\ref{eq:prod-ideal}:
\[
\ann \idlGens{a} = \mathsf{Prod}(C_x^{\mathrm{str}}, C_y^{\mathrm{wk}}), \qquad
\ann \idlGens{b} = \mathsf{Prod}(C_x^{\mathrm{wk}}, C_y^{\mathrm{str}}).
\]

By Proposition~\ref{prop:prod-distance} the single-block distances $E_a$ and $E_b$
are lower bounded by the BCH designed distances of these annihilator ideals. 
By counting consecutive runs in the four zero sets above we see 
$\delta_x \gamma_y = 4 \cdot 2 = 8$ and $\gamma_x \delta_y = 2 \cdot 4 = 8$, yielding
\[
E_a \ge 8, \qquad E_b \ge 8 .
\]

Next we compute the mixed-block term $N_{a,b}=d_R(\idlGens{b:a})+d_R(\idlGens{a:b})$.
By Lemma~\ref{lem:zeros_prod_strips}, the zero sets of the prescribed
annihilator ideals are
\[
\begin{aligned}
\mathcal Z(\ann\idlGens{a})
&=
(\mathcal Z_x^{\mathrm{str}}\times\Z_m)
\cup
(\Z_\ell\times\mathcal Z_y^{\mathrm{wk}}),\\
\mathcal Z(\ann\idlGens{b})
&=
(\mathcal Z_x^{\mathrm{wk}}\times\Z_m)
\cup
(\Z_\ell\times\mathcal Z_y^{\mathrm{str}}).
\end{aligned}
\]
The zero set of an annihilator is the active support of the
annihilated element. Hence
\[
\begin{aligned}
\mathcal Z_a
&=
(\Z_\ell\setminus\mathcal Z_x^{\mathrm{str}})
\times
(\Z_m\setminus\mathcal Z_y^{\mathrm{wk}}),\\
\mathcal Z_b
&=
(\Z_\ell\setminus\mathcal Z_x^{\mathrm{wk}})
\times
(\Z_m\setminus\mathcal Z_y^{\mathrm{str}}).
\end{aligned}
\]
Using the nesting of the constituent zero sets, we therefore obtain zero sets for the ideals $\idlGens{b:a}$ and $\idlGens{a:b}$
\[
\begin{aligned}
\mathcal U_{b,a}
&=\mathcal Z_b\setminus\mathcal Z_a\\
&=
(\mathcal Z_x^{\mathrm{str}}
 \setminus\mathcal Z_x^{\mathrm{wk}})
\times
(\Z_m\setminus\mathcal Z_y^{\mathrm{str}}),\\
\mathcal U_{a,b}
&=\mathcal Z_a\setminus\mathcal Z_b\\
&=
(\Z_\ell\setminus\mathcal Z_x^{\mathrm{str}})
\times
(\mathcal Z_y^{\mathrm{str}}
 \setminus\mathcal Z_y^{\mathrm{wk}}).
\end{aligned}
\]
Explicitly, $\mathcal{U}_{b,a} = \{1,2,4\} \times \{3,5,6,7,9,10,11,12,13,14\}$ and $\mathcal{U}_{a,b} = \{3,5,6\} \times \{1,2,4,8\}$. 
The ideals $\idlGens{b:a}$ and $\idlGens{a:b}$ define 2D cyclic codes in $R$.
A general fact about 2D cyclic codes defined as a Cartesian product is their distance equals the minimum of the distances of its two 1D components. 
The $x$-factors $\{1,2,4\}$ and $\{3,5,6\}$ both have design distance 3 and the $y$-factors 
$\{3,5,6,7,9,10,11,12,13,14\}$ and $\{1,2,4,8\}$ have design distance 7 and 3 respectively\footnote{It happens that in this particular example the design distances for each of the four codes equal the code's true distance.}.
Evaluating the corresponding 1D polynomials gives
\[d_R(\idlGens{b:a}) = \min(3,7) = 3, \qquad d_R(\idlGens{a:b}) = \min(3,3) = 3,\]
Consequently, applying Theorem~\ref{thm:colon-lower-bound}, the resulting code $Q=\mathrm{BB}(a,b)$ satisfies $d_Z \ge \min(E_a, E_b, N_{a,b}) \ge \min(8, 8, 3+3) = 6.$
Finally, due to the equality of $X$ and $Z$ distances we have
\[
d(Q) \ge 6. 
\]

We note that this distance lower bound holds for any choice of stabilizer generator polynomials $a$ and $b$. 
By Theorem~\ref{thm:ideals-by-orbits}, the prescribed annihilator
ideals have the orbit-idempotent representations
\[
\ann\idlGens{a}=e_{\mathcal Z_a}R,
\qquad
\ann\idlGens{b}=e_{\mathcal Z_b}R.
\]
One pair realizing these annihilators is
\[
a=1-e_{\mathcal Z_a}=e_{\mathcal S_a},
\qquad
b=1-e_{\mathcal Z_b}=e_{\mathcal S_b}.
\]
As discussed below, multiplying $a$ and $b$ by a unit yields other choices.
\end{example}

\begin{remark}
    \label{rem:slack-orbits}
    In Example~\ref{ex:designdistance}, the prescribed strips do not exhaust the spectral grid $\Z_\ell \times \Z_m$. Using $\mathcal{Z}_a$ and the definition of $S_a$ it is not difficult to compute $S_a$. Doing this for $b$ as well we find that $\lvert S_a \cup S_b \rvert = 75$ of the $105$ points, so $\mathcal{T}_{a,b}$ contains $30$ points. These points are tiled by exactly three $p$-Frobenius orbits, namely
    \[
    \{3,5,6\}\times\{3,6,9,12\}, \qquad
    \{3,5,6\}\times\{5,10\}, \qquad
    \{3,5,6\}\times\{7,11,13,14\},
    \]
    of sizes $12$, $6$ and $12$. By selectively including or excluding these slack orbits from the zero sets $\mathcal{Z}_a$ and $\mathcal{Z}_b$ we can explore adjustments to the logical dimension $k.$
    The colon-ideal distance lower bound can be applied to the codes generated this way as a fast test to see if the distance is sufficiently high.  
\end{remark}

Constructions that start by prescribing annihilator ideals allow control over aspects of the code's distance as we have seen. 
It remains to find stabilizer generators $a$ and $b$ of low weight. If the canonical idempotent-complement constructors are too dense, we may replace them with sparser elements sharing the same annihilators without altering the logical dimension or the colon-ideal distance guarantees.
A naive brute force search is time-consuming. 
Our algebraic approach suggests restricting the search to the unit group of $R.$
The BB codes defined by generators $\{\mathrm{BB}(ua,ub): u \in R^{\times}\}$ are all equal since multiplication by $u$ is an invertible transformation on $R$.

Alternatively, using distinct unit multiples for each generator 
we may produce new codes with the same $n$, $k$, and lower bounded distance. 
Theorem~\ref{thm:annihilator-preserving-replacement} considers this in general.
\begin{theorem}
    \label{thm:annihilator-preserving-replacement}
    Assume $\gcd(p,\ell m)=1$. If two BB codes $Q=\mathrm{BB}(a,b)$ and $Q^\prime=\mathrm{BB}(a^\prime,b^\prime)$ over $R$ satisfy $\ann\idlGens{a^\prime}=\ann\idlGens{a}$ and $\ann\idlGens{b^\prime}=\ann\idlGens{b}$, then $k(Q^\prime)=k(Q)$ and $Q^\prime$ obeys the same colon-ideal minimum distance bounds as $Q$.
\end{theorem}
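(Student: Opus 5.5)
The plan is to show that all the spectral data entering both the dimension formula and the distance bounds depend on $a$ and $b$ only through their zero sets $\mathcal Z_a$ and $\mathcal Z_b$. Then we show that these zero sets are determined by the annihilators. Semisimplicity ($\gcd(p,\ell m)=1$) makes this work, through Theorem~\ref{thm:ideals-by-orbits} and the region description of Theorem~\ref{thm:orbit-partition-colon-bound}.

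\emph{Step 1: from annihilators to zero sets.} By the proof of Theorem~\ref{thm:orbit-partition-colon-bound}, $\ann\idlGens{a}=\mathcal C(\Omega\setminus\mathcal Z_a)$, i.e.\ $\ann\idlGens{a}=e_{\mathcal Z_a}R$. Theorem~\ref{thm:ideals-by-orbits} says the correspondence between orbit sets and ideals is a bijection. Hence $\ann\idlGens{a'}=\ann\idlGens{a}$ forces $e_{\mathcal Z_{a'}}R=e_{\mathcal Z_a}R$, and so $\mathcal Z_{a'}=\mathcal Z_a$. The same argument gives $\mathcal Z_{b'}=\mathcal Z_b$. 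Equivalently, $\idlGens{a'}=\ann\ann\idlGens{a'}=\idlGens{a}$ by Proposition~\ref{prop:algebraic-properties}(1), so $a'$ and $a$ generate the same principal ideal.

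\emph{Step 2: regions and dimension.} The regions $\mathcal T,\mathcal U_{a,b},\mathcal U_{b,a},\mathcal F$ are Boolean combinations of $\mathcal Z_a$ and $\mathcal Z_b$, so they coincide for $(a,b)$ and $(a',b')$. Theorem~\ref{thm:BB-dimension} then gives $k(Q')=2|\mathcal T_{a',b'}|=2|\mathcal T_{a,b}|=k(Q)$.

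\emph{Step 3: distance bounds.} By Theorem~\ref{thm:orbit-partition-colon-bound} every ideal appearing in $E_a$, $E_b$ and $N_{a,b}$ is a code $\mathcal C(S)$ with $S$ a union of regions. These ideals are $\ann\idlGens{a}$, $b\ann\idlGens{a}$, $\ann\idlGens{b}$, $a\ann\idlGens{b}$, $\idlGens{b:a}$ and $\idlGens{a:b}$. The sets entering the refined term $N^*_{a,b}$, namely $\idlGens{a}=\mathcal C(\mathcal T\cup\mathcal U_{a,b})$ and $\idlGens{b}$, are likewise unions of regions by Corollary~\ref{cor:spectral-one-sided-colon}. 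Since the regions agree, each ideal for $(a',b')$ equals the corresponding ideal for $(a,b)$ as a subset of $R$, not merely up to isomorphism. Hence the quantities $E_{a'}=E_a$, $E_{b'}=E_b$, $N_{a',b'}=N_{a,b}$ and $N^*_{a',b'}=N^*_{a,b}$ agree, because $d_R$ depends only on the sets. The bounds of Theorems~\ref{thm:colon-lower-bound} and~\ref{thm:refined-colon-bound} therefore give the same certificates. The $X$ side follows after applying $\iota$ to both pairs, since the hypothesis is preserved: $\ann\idlGens{\iota(a)}=\iota(\ann\idlGens{a})$.

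The main point needing care is the product ideals $b\ann\idlGens{a}$ and $a\ann\idlGens{b}$. A priori they depend on the element $b$ itself and not just on $\idlGens{b}$. I would check this directly: $b'\ann\idlGens{a}$ is spectrally supported exactly on $\mathcal Z_a\cap S_{b'}=\mathcal Z_a\cap S_b$, because each field component $e_OR$ is a field, so multiplication by a nonzero component is bijective there. This gives equality as ideals, as in the proof of Theorem~\ref{thm:orbit-partition-colon-bound}. It is also worth noting that the conclusion concerns the bounds only; the true distances of $Q$ and $Q'$ may differ.
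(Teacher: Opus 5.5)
Your proposal is correct and follows essentially the paper's route: equality of the annihilators forces $\idlGens{a'}=\idlGens{a}$ and $\idlGens{b'}=\idlGens{b}$ (which you obtain from the orbit--ideal bijection or the double annihilator), and then every ideal entering the dimension formula and the colon-ideal bounds coincides. You phrase this through the zero-set regions, and your explicit checks of the product ideals $b\ann\idlGens{a}$, the refined term $N^*_{a,b}$, and the $X$ side supply details that the paper's brief sketch leaves implicit.
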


The proof relies on the fact that, in a semisimple ring, equality of the annihilator ideals forces the ideals $\idlGens{a}$ and $\idlGens{a^{\prime}}$ themselves to be equal, which holds respectively for $b$ and $b^{\prime}.$
From this we have $\idlGens{a:b}=\idlGens{a^{\prime}:b^{\prime}}$ (and the same holds for the reversed order). 
This establishes the colon-ideal minimum distance bound statement. 

As $\idlGens{a} = \idlGens{a^{\prime}}$ the constructors $a^\prime$ and $b^\prime$ are merely unit multiples of $a$ and $b$. 
Let $u,u^\prime\in R^{\times}$ be any units and set $a^\prime=ua$, $b^\prime=u^\prime b$. 
Then $\langle a^\prime\rangle=\langle a\rangle$ and $\langle b^\prime\rangle=\langle b\rangle$, so $\ann\idlGens{a^\prime}=\ann\idlGens{a}$ and $\ann\idlGens{b^\prime}=\ann\idlGens{b}$.
By Theorem~\ref{thm:BB-dimension} we get the equality of $k(Q)$ and $k(Q^\prime).$

Theorem~\ref{thm:annihilator-preserving-replacement} shows
choosing distinct units  
$u \neq u^\prime$ produces generators that define a code that may differ from the original but have  
the same logical dimension and satisfy the same minimum distance bounds as $\mathrm{BB}(a,b)$\footnote{The question of when codes generated this way are distinct has to do with the slope and addressing it is beyond the current discussion. We also stress that this theorem only establishes preservation of the distance lower bound, and gives no information about the exact distance.}.
 
\begin{example}
\label{ex:sparsify-replacement}

We illustrate this on the $[\![90,8,10]\!]$ code of~\cite{bravyi2024}, which lies over $R = \F_2[x,y]/\langle x^{15}-1,\ y^3-1\rangle$ which has two weight three generators $a = x^{9}+y+y^{2}$ and $b = 1+x^{2}+x^{7}.$
Since $\gcd(2,\,15\cdot 3)=1$, the ring $R$ is semisimple and the hypothesis of Theorem~\ref{thm:annihilator-preserving-replacement} holds. 
For example, the following generator pairs define codes with parameters $[\![90, 8]\!]$ and the same colon-ideal distance lower bound as the original code.  

\begin{align*}
    a^\prime_1 &= x^{3}+x^{3}y+x^{6}+x^{6}y^{2}+x^{12}, \\
    b^\prime_1 &= y+xy+x^{2}y+x^{5}y+x^{9}y+x^{11}y+x^{12}y; \\[4pt]
    a^\prime_2 &= x^{2}y+x^{5}+x^{5}y+x^{8}+x^{8}y^{2}+x^{11}+x^{11}y, \\
    b^\prime_2 &= y+x^{3}y+x^{4}y+x^{5}y+x^{8}y+x^{12}y+x^{14}y; \\[4pt]
    a^\prime_3 &= x+xy^{2}+x^{2}y^{2}+x^{4}y^{2}+x^{5}+x^{5}y^{2}+x^{8}y+x^{10}+x^{14}y^{2}, \\
    b^\prime_3 &= 1+xy+x^{5}+x^{7}+x^{9}+x^{9}y+x^{11}y+x^{13}+x^{14}.
\end{align*}
We verify $\dim_{\F_2}\ann\langle a,b\rangle = 4$, hence the dimension is $k=8.$ 

The weights for the three codes above: $(5,7)$, $(7,7)$, and $(9,9)$ are actually somewhat low weight considering the size of $\ell$ and $m$. 
Generally, the polynomials produced this way are high weight.
As discussed above, we may fix one of these codes with generators $(a,b)$ and search for a lower weight pair in $(ua, ub)$ for $u$ a unit. 
 
\end{example}

\bibliographystyle{plain}
\bibliography{refs}

\end{document}